\pgfplotsset{compat=1.10}
\newtheorem{theorem}{Theorem}[section]
\newtheorem{proposition}[theorem]{Proposition}
\newtheorem{lemma}[theorem]{Lemma}
\newtheorem{corollary}[theorem]{Corollary}
\newtheorem{remark}[theorem]{Remark}
\newtheorem{example}[theorem]{Example}
\newtheorem{definition}[theorem]{Definition}
\newtheorem{corollaries}[theorem]{Corollaries}
\newtheorem{theoremf}{Folk Formula}[section]
\newtheorem{maintheorem}{Main Theorem}[section]
\newcommand{\bth}{\begin{theorem}}
\newcommand{\bpr}{\begin{proposition}}
\newcommand{\epr}{\end{proposition}}
\newcommand{\bco}{\begin{corollary}}
\newcommand{\eco}{\end{corollary}}
\newcommand{\ble}{\begin{lemma}}
\newcommand{\ele}{\end{lemma}}
\newcommand{\bde}{\begin{definition}\rm}
\newcommand{\ede}{\end{definition}\rm}
\newcommand{\bre}{\begin{remark}\rm}
\newcommand{\ere}{\end{remark}}
\newcommand{\bex}{\begin{example}\rm}
\newcommand{\eex}{\end{example}}
\newcommand{\bcors}{\begin{corollaries}\rm}
\newcommand{\ecors}{\end{corollaries}}
\newcommand{\bthf}{\begin{theoremf}}
\newcommand{\bmain}{\begin{maintheorem}}
\newcommand{\emain}{\end{maintheorem}}
\def\la#1{\hbox to #1pc{\leftarrowfill}}
\def\ra#1{\hbox to #1pc{\rightarrowfill}}
\def\lrar{{\ra 2}}
\def\tensor{\otimes}
\def\sp#1{\hbox{SP}^{#1}}
\def\bbz{{\mathbb Z}}
\def\bbf{{\mathbb F}}
\def\bbr{{\mathbb R}}
\def\bbc{{\mathbb C}}
\def\bbp{{\mathbb P}}
\def\bbn{{\mathbb N}}
\def\ds{\displaystyle}
\def\conf#1{\hbox{Conf}_{#1}}
\def\bbq{{\mathbb Q}}
\def\G{{\Gamma}}
\def\F{{\mathcal F}}
\newcommand{\boldpi}{\mbox{$\pi$\hspace{-6.5pt}$\pi$}}
\newcommand{\tree}[2]{ \ensuremath{  
 \begin{xy}                          %
   (0,1); (1,0)**\dir{-};            
   (2,1)**\dir{-},                   
   (1,-1); (1,0)**\dir{-},           
   (0,2.2)*{\scriptstyle #1},        %
   (2,2.2)*{\scriptstyle #2},        %
 \end{xy}  } } 
\def\FM{\hbox{\bf FM}}
\begin{document}

\title[Configuration Spaces]{Configuration Spaces of Points:\\  A User's Guide}

\author{Sadok Kallel}\thanks{*American University of Sharjah, UAE, and Laboratoire Painlev\'e, Universit\'e de Lille, France.}
 
\email{sadok.kallel@univ-lille.fr}

\maketitle

\begin{abstract} This user's guide is divided into two parts. The first part is an extensive survey contributed to the Encyclopedia of Mathematical Physics, 2nd edition. It covers many of the main constructions, definitions, and applications of the classical configuration spaces of points. The second part delves into the geometry of chromatic configuration spaces, giving a detailed proof of the remarkable result that the Poincar\'e polynomial of the chromatic configuration spaces of $\bbr^N$, associated to a finite simple graph $\Gamma$, corresponds to the reciprocal of the chromatic polynomial of the graph (with signs). Further applications and a stable splitting are given.  \\
\end{abstract}

\centerline{To Fred Cohen, in fond memory.}

\tableofcontents

\clearpage
.\vskip 200pt
\centerline{\bf\Huge Part I: A User's Guide}

\vskip 30pt
\begin{quote}\normalsize
This extensive survey is an invited contribution to the Encyclopedia of Mathematical Physics, 2nd edition. It covers both classical and more modern aspects of configuration spaces of points on a ``ground space'' $M$. Most results pertain to $M$ a manifold. Configuration spaces of points have become so omnipresent in so many areas of mathematics, physics, and even the applied sciences, that a survey can only cover a selection of topics. We review key ideas, constructions, and results. 
\end{quote}

\clearpage
\part{ A User's Guide}

\section{Introduction}

If configuration spaces were like a magnificent city to discover,  this survey will be a visit to its most important landmarks. Most remarkably in the last two decades, configuration spaces of distinct points have seen an explosion in interest and the number of publications. Applications have long gone beyond algebraic topology to encompass all modern aspects of geometry, analysis, and mathematical physics, further making incursions into the applied science fields. This survey is longer than initially intended for this reason, but also because we strove to explain the motivations and ideas behind many of the results presented, beyond listing them.

Given a topological space $X$, which we will always assume to be path-connected, locally compact, and Hausdorff, we define
\begin{equation}\label{config}
\conf{k}(X):= \{(x_1,\ldots, x_k)\in X^k\ |\ x_i\neq x_j, i\neq j\}
\end{equation}
This is an open subspace of $X^k$ obtained by removing the closed diagonal subspaces
\begin{equation}\label{deltaij}\Delta_{i,j}:= \{(x_1,\ldots, x_k)\in X^k, x_i=x_j\}\end{equation} Each tuple $(x_1,\ldots, x_k)\in \conf{k}(X)$ is called an \textit{ordered configuration}, and each entry $x_i$ is called \textit{a point} of the configuration. 
Very conveniently, we view an element of 
$\conf{k}(X)$ as an ordered collection of $k$ pairwise distinct points or \textit{particles} in $X$, and in fact it can be defined as Emb$([k],X)$, the space of all embeddings of $[k]=\{1,2,\ldots, k\}$ into $X$. When $k=2$, $\conf{2}(X)$ is referred to in the classical literature as the \textit{deleted product} of $X$. Another common notation for $\conf{k}(X)$ is $F(X,k)$\footnote{The author was told that the letter``$F$'' in the notation historically refers to Ed Fadell.}, and there are half a dozen different other notations used, including $\underset{{}^{\{1,\cdots,k\}}
    }{\mathrm{Conf}}(X)$ in the physics literature \cite{ss,ss2}. The mathematical community seems to have finally adopted the notation \eqref{config}.

The symmetric group on $k$-letters $\mathfrak S_k$ acts on $\conf{k}(X)$ by permuting the points of the ordered configuration. More precisely, if $\sigma\in\mathfrak S_k$, and $(x_1,\ldots, x_k)\in\conf{k}(X)$ then $\sigma (x_1,\ldots, x_k) = (x_{\sigma (1)}, \cdots , x_{\sigma (k)})$.  The orbit space is denoted by
$$C_k(X) := \{ \{x_1,\ldots, x_k\}\ |\ x_i\in X, x_i\neq x_j\}$$
The symmetric group action being free and properly discontinuous, the quotient map $\conf{k} (X)\rightarrow C_k(X)$ is a regular covering space of degree $k!=|\mathfrak S_k|$. Here too, numerous other notations for $C_k(X)$ can be found in the literature, in particular $U\conf{k}(X)$,  $X\choose k$ or $B(X,k)$ (``$B$'' standing for ``Braid space'').
Points in $C_k(X)$ are called \textit{indistinguishable} or \textit{identical particles} in the physics literature and we adopt this terminology here as well. 

Configuration spaces of points are very natural and basic objects in physics since they offer the mathematical framework to describe and study particles, distinguishable or not. The space where the particles live is their ``manifold ground state'', the way they travel forms ``worldlines'' which are \textit{braid-like} when the ground state is two-dimensional, and their dynamics involve bundle theory and connections. In the Schrodinger representation, for example, the wave functions are sections of a Hermitian line bundle with connection over configuration spaces. 
In geometric quantization and quantum statistics\footnote{According to \cite{baez}, ``statistics'' refers to ``the behavior of quantum systems under the interchange of identical particles''.} \cite{sawiki}, isomorphism classes of principal $U(1)$-bundles over the space of unordered particles in space (i.e. $C_m(\bbr^3)$) classify these quantum statistics, and there are two of them, the trivial bundle corresponds to Bose statistics and the non-trivial bundle to Fermi statistics \cite{souriau, bbs}. 
Leinaas and Myrheim \cite{leinmyr} studied particles restricted to move in a 2-dimensional quantum material, giving a new form of quantum statistics, called anyons which are now a crucial concept in condensed matter physics. As explained in \cite{myershishurs}, ``Anyon braiding is given by the holonomy of the $su_2$-KZ connection arising as the Gauss-Manin connections on bundles of twisted generalized cohomology groups over configuration spaces of points''. 
In Euclidean field theory, propagators (resp. correlators) can be viewed as functions (resp. differential forms) on the Fulton-MacPherson compactifications of configuration spaces of points (see \S\ref{compactification}).
Configuration spaces also appear crucially in the study of \textit{instantons} and \textit{monopoles} \cite{aj, bhmm}. 
The list goes on, and in such a short account we can only scratch the surface of the relevant physics literature (a  more exhaustive account can be found on nLab).

Configuration spaces permeate most of modern-day geometry and topology. After all, Weirstrass points on an algebraic curve, or marked points on such curves (see \S\ref{geometry}), or critical points of a Morse function on a smooth manifold, or an N-body system (see \S\ref{nbody}) form a point configuration. If $X$ is a quasiprojective variety over a field, then the configuration spaces carry an algebraic structure as well. In this case, $C_k(X)$ is the scheme-theoretic quotient of $\conf{k}(X)$ by the natural action of $\mathfrak S_k$, and both $\conf{k}(X)$ and $C_k(X)$ are quasiprojective varieties. 

This survey attempts to summarize and explain the mathematics surrounding configuration spaces of manifolds, ordered and unordered, and discuss many of their applications. The primary tools, constructions, and viewpoints used are those of algebraic and differential topology. Homology and cohomology are discussed extensively since they are the most accessible algebraic, functorial, and often as well, combinatorial objects that one can use to probe deep into the geometry of spaces. The symmetric group action on $\conf{k}(X)$ extends to an action on its rational cohomology, and the study of the associated representation theory has been an important tool in describing geometric invariants of various moduli spaces (see \S\ref{geometry}).

The case of Euclidean configurations, i.e. when $X=\bbr^n$, is sufficiently rich to fill in hundreds of pages \cite{fh}. We spend a good first half of this survey dissecting both $\conf{k}(\bbr^n)$ and $C_k(\bbr^n)$, with $k\geq 2$, $n\geq 1$, starting with foundational work of Arnold and Cohen (see \S\ref{homology}, \S\ref{cohomology}). 
Observe that in the Euclidean case, $\conf{k}(\bbr^n)$ is the complement of a \textit{subspace arrangement} in $(\bbr^n)^k$ known as the ``braid arrangement'' or the  \textit{Coxeter arrangement of type $A$}. The study of configuration spaces from this point of view has inspired much of the later development of the vast theory of hyperplane and subspace arrangements \cite{orlikterao}.

Remarkably, knowledge of the (co)homology of Euclidean configuration spaces has served well to give, using diverse and often sophisticated ideas and techniques, a good understanding of the (co)homology of configuration spaces of finite-dimensional manifolds. This connection is illustrated throughout the second half of the survey (See \S\ref{labeled}, \S\ref{computational}, \S\ref{unorderedconfigs}). 
A higher level view of this local to global connection is provided by \textit{factorization homology}, a theory we will barely touch upon (see {\bf 4.2.3}), but which provides a means of assembling the rich local structure of configuration spaces across coordinate patches of a general manifold, globalizing the calculation of Arnold and Cohen \cite{af, knudsen2}.

Very early on, it was observed that configuration spaces can be used to give criteria for embedding a manifold in Euclidean space (see \S\ref{embeddingtheory}), and as the theory developed, it helped solve many longstanding problems. The problem of the equipartition of convex polygons was solved using exclusively the homotopy theory of configuration spaces (see \S\ref{coincidences}), while other open problems like Vassiliev knot invariants (see \S\ref{integrals}), the square peg problem or Malle's conjecture in arithmetic galois theory (see \S\ref{geometry}) were satisfactorily tackled using these ideas. In early times, configuration spaces entered homotopy theory in full force through the work of May and Milgram on iterated loop spaces \cite{may1}, providing a great tool for computing the homology of loop spaces (see \S\ref{computational}), and more importantly, being at the origin of operad theory, with immense impact on algebraic topology (see \S\ref{operads}). The idea of decorating the points of a configuration with labels emerged as well in the '90s in connection with the study of poles and zeros of holomorphic maps, and served to model loop spaces and mapping spaces, with deep theorems in the field (see \S\ref{labeled}). Labels can serve to encode interactions between particles, with possible applications to physics (perhaps somewhat under-exploited as of now).

There are numerous variants and extensions of configuration spaces in topology that take the names of chromatic configuration spaces, orbit configuration spaces, colored configuration spaces, cyclic configuration spaces, labeled configuration spaces, generalized configuration spaces, partial configuration spaces, hard-disks configuration spaces, and more. We discuss almost all of those at the end of the survey (see \S\ref{variants}). 

There is one textbook written on configuration spaces \cite{fh} (see \S{\bf 8.5}) and a number of surveys covering various aspects \cite{fred2, fredanthology, idrissi2, knudsen, dev, volic, volic2, westerland}. We have benefited from all these references, and foremost from the original sources. The references \cite{fred2, idrissi2} can serve as excellent introductions to the subject as well. 

\noindent{\sc Acknowledgment}: We thank Najib Idrissi, Paolo Salvatore, Florian Kranhold and Urs Schreiber for reading parts of this survey and making suggestions. We thank Hisham Sati for his invitation to contribute to this volume.

\noindent{\sc Terminology and convention}: A based space $X$ is a space with a preferred basepoint $x_0\in X$. A based map between based spaces $X,Y$ is a continuous map $f$ such that $f(x_0)=y_0$. The $n$-th iterated loop space  $\Omega^nY$ is the space of all based maps from $S^n$ to $Y$. A ``DGA'' means a differential graded algebra, and ``GC'' means graded commutative. The notation $\cong$ means homeomorphism, while $\simeq$ means homotopy equivalence. The letter $\bbf$ refers to (any) field.


\section{Classical theory of ordered  configuration spaces in $\bbr^n$}\label{classical}

\subsection{Early work}\label{early} The first systematic study of the algebraic topology of configuration spaces goes back to Fadell and Neuwirth \cite{fn}, Fox and Neuwirth \cite{fox} and to work of Arnold \cite{arnold} and Cohen \cite{fredbible}. Since we will mostly restrict to $\conf{k}(M)$ when $M$ is a manifold (i.e. a locally Euclidean connected Hausdorff space), we can completely describe the case when $\dim M=1$ and $M$ without boundary. In that case, $M$ is either $\bbr$ or $S^1$. If $M=\bbr$, then $\conf{k}(\bbr)$ consists of $k!$ components, each homeomorphic to $\mathring{\Delta}_{k}$, the interior of the $k$-dimensional simplex $\Delta_{k}$, 
and one sees that $C_k(\bbr)\cong\bbr^k$. If $M=S^1$, then $\conf{k}(S^1)$ is homeomorphic to
$S^1\times\conf{k-1}((0,1))$. This space has $(k-1)!$ components and configurations that can be obtained from each other by a cyclic permutation make up a single component. The description of $\conf{k}(S^1)$ as a $\mathfrak S_k$-space is  more subtle (see \cite{madsenbodig}, \S4). 
In the unordered case, the multiplication map $C_k(S^1)\rightarrow S^1$ is a bundle map over $S^1$ with fiber $\mathring \Delta_{k-1}$. The bundle is trivial only if $n$ is odd \cite{morton}. In particular, $C_k(S^1)\simeq S^1$, $\forall k\geq 1$.

\noindent{\bf 2.1.1.} In their early study, \cite{fn} established the first theorems on the existence of fibrations relating to configuration spaces on manifolds (now called Fadell-Neuwirth fibrations). In its simplest form, this result states that if $M$ is an arbitrary manifold of dimension $n\geq 2$, then the projection onto any of the points of the configuration $\conf{k} (M)\rightarrow M$ is a locally trivial bundle projection with fiber $\conf{k-1} (M-\{p\})$, where $M-\{p\}$ is the once punctured manifold. Here, we convene that $\conf{0}(M)$ is a point. 
The projection $\pi$ is not, in general, a bundle, nor a fibration, if $M$ is not a manifold.
The Fadell-Neuwirth bundle construction extends to the  projection maps
\begin{equation}\label{iteratedbundle}
\pi : \conf{k} (M) \longrightarrow \conf{k-r}(M)\ \ \ ,\ \ \ \ 0<r<k
\end{equation}
where $\pi$ forgets $r$ points of the configuration with fixed coordinates. This is a locally trivial bundle with fiber $\conf{r}(M-Q_{k-r})$, where
$Q_{k-r}$ is a fixed set of $k-r$ distinct points in $M$. As an immediate useful byproduct, and when these projections have sections, one can describe the homotopy groups of $\conf{k}(M)$ in terms of the homotopy groups of the punctured manifold \cite{fh2}. This is the case of $M=\bbr^n$ where one obtains, in the case $k>1$,
$$\pi_*(\conf{k}(\bbr^n))\cong \bigoplus_{r=1}^{k-1}\pi_*(\bbr^n\setminus Q_r)\cong \bigoplus_{r=1}^{k-1}\pi_*((S^{n-1})^{\bigvee r})$$
Here $(S^{n-1})^{\bigvee r}$ is a wedge of $r$ copies of the sphere \cite{fh}. In particular, $\conf{k}(\bbr^n)$ is $(n-2)$-connected, $n\geq 2$. The structure of $\pi_*(\conf{k}(\bbr^n))$ as a graded Lie algebra, with the Whitehead
product providing the multiplication, is however non-trivial and is a good measure
of the twisting of the fibrations in \eqref{iteratedbundle} (see \S\ref{cellularmodels}). 

\noindent{\bf 2.1.2.} (Fundamental Group). When $\dim M=2$, and $M$ a positive genus surface, it is a remarkable fact that the fundamental group determines completely the homotopy type of the configuration spaces (see \S\ref{braidgroup}).
This is no longer true in higher dimensions $\dim M\geq 3$. Note that in this case, $\pi_1(\conf{k}(M))\cong (\pi_1(M))^k$, since removing submanifolds of codimension three or higher in a manifold does not affect fundamental groups. In this case also, $\pi_1 (C_k(M))\cong \pi_1(M)\wr \mathcal S_k$ (the \textit{wreath product} of $\pi_1(M)$ with $\mathfrak S_k$). This is by definition the semidirect product $\pi_1(M)^k\rtimes\mathfrak S_k$, where $\mathfrak S_k$ acts on $\pi_1(M)^k$ by permuting the factors. The split map $\mathfrak S_k\rightarrow\pi_1(C_k(M))$ comes from taking the induced map, at the level of $\pi_1$, of the inclusion $C_k(U)\hookrightarrow C_k(M)$, where $U\cong\bbr^n$ is a chart of $M$ \cite{imbo, kallel1}.

\subsection{The braid groups}\label{braidgroup} This section is about configuration spaces of two dimensional manifolds. Fadell and Neuwirth noticed early on the asphericity property\footnote{An aspherical space $X$ is a space with vanishing higher homotopy groups; $\pi_k(X)=0, k>1$. It is also called a $K(\pi,1)$ space, with the understanding that $\pi=\pi_1(X)$.} of configuration spaces of the plane and most topological surfaces. More precisely, if $M$ is a compact topological surface, then $\conf{k} (M-Q_m)$ are aspherical spaces if $m\geq 1$ and $k\geq 1$.
 If the surface is neither $S^2$ nor $\bbr P^2$, this result remains valid for $m=0$, that is for $\conf{k} (M)$. For instance, configuration spaces of positive genus surfaces $S$ are aspherical, and so are their unordered analogs $C_k(S)$, $g(S)\neq 0$. 
 
\noindent{\bf 2.2.1.} A path in $C_k(S)$ is a $k$-tuple of “worldlines” of identical particles that may move around each other but never coincide (at any given instant of time).  This is like a ``braid'' with $k$ strands in $S$. The \textit{surface braid groups} \cite{fox} are
\begin{equation}\label{braids}
B_k(S) := \pi_1(C_k(S))\ \ ,\ \ PB_k(S) =\pi_1(\conf{k} (S))
\end{equation}
When one further assumes that particles have internal structure, as in quantum statistics (e.g. solitons), one must also keep track of the rotation of the particles as one interchanges them. In this case, it is natural to work with the group $FB_k(S)$ of \textit{framed} braids on $S$ \cite{baez}. In all cases, the asphericity result of \cite{fn} is that $C_{k} (S)$ is a model for the classifying space $BB_k(S)$,
when $S$ has genus $g >0$ (i.e. other than $S^2$ and $\bbr P^2$). 
Since for any finite $K(\pi,1)$, $\pi$ cannot have elements of finite order by a classical result of P.A.Smith, it follows that for positive genus $S$, $B_k(S)$ is torsion-free. This is no longer the case for $\bbr P^2$ since $B_2(\bbr P^2)$ is the generalized quaternion group $Q_{16}$ which is a non-trivial extension of $\bbz_2$ by $D_8$ \cite{wang}. 

The group $B_k:=B_k(\bbr^2)=\pi_1(C_k(\bbr^2))$ is the very well-studied \textit{Artin's braid group} introduced by the named author in 1925 \cite{artin0} who then gave his famous presentation
$$B_k =\left\langle \sigma_1,\ldots, \sigma_k\ |\
\sigma_i\sigma_{i+1}\sigma_i = \sigma_{i+1}\sigma_i\sigma_{i+1}\ \hbox{and}\ \sigma_i\sigma_j=\sigma_j\sigma_i, |i-j|>1\right\rangle$$
The methods of \cite{artin0} were ``entirely intuitive'' (by the words of the author himself), but were later verified rigorously in \cite{artin} and
\cite{bohnenblust}.
A presentation of the torus braid group $B_k(T)$, $T\cong S^1\times S^1$, was given by Zariski \cite{zariski}. The ``simplest'' currently known presentation of surface braid groups for larger genus is probably due to P. Bellingeri \cite{bellingeri}, refining earlier presentations by González-Meneses, Scott, and Kulikov-Shimada.
The ``pure braid groups'', on the other hand, are the fundamental groups of the ordered configuration spaces $\conf{k} (S)$. Artin's pure braid group $PB_k$ is the fundamental group of $\conf{k}(\bbr^2)$ and fits in a group extension $0\rightarrow PB_k\rightarrow B_k\rightarrow\mathfrak S_k\rightarrow 0$, where the map $B_k\rightarrow\mathfrak S_k$ sends a braid to the permutation $\sigma$ of the set $\underline k$ that corresponds to sending the starting point $i$ of a strand to the endpoint of that strand $\sigma (i)$. We can describe the first groups. Here $PB_2=\bbz$ since $\conf{2}(\bbc)\simeq S^1$. For $PB_3$, we can use the following useful result: let $G$ be a Lie group with identity $e$, then the map $(g_1,\ldots, g_n)\mapsto (g_1, g_2g_1^{-1},\ldots , g_ng_1^{-1})$ gives a homeomorphism $\conf{k} (G)\cong G\times\conf{k-1} (G^*)$, where $G^*=G-\{e\}$. Iterating this twice in the case of $G=\bbc$ and $G=\bbc^*$ we obtain that
\begin{equation}\label{conf3}
\conf{3} (\bbc)\simeq S^1\times (S^1\vee S^1)
\end{equation}
and so $PB_3=\pi_1(\conf{3}(\bbc))\cong\bbz\times F_2$, where $F_2$ is the free group on $2$-letters. This group is non-abelian as are all $PB_k$, for $k\geq 3$.  
The homology of the braid group $B_k$ is worked out in {\bf 3.3}.

\noindent{\bf 2.2.2.} The study of the algebraic properties of the braid groups occupies a vast literature. 
It can be noted at this stage that if $G\subset\bbr^2$ is a connected planar graph with \textit{at least one trivalent vertex}, then the natural inclusion
$\conf{k} (G)\hookrightarrow\conf{k} (\bbr^2)$
induces a surjection at the level of fundamental groups, thus giving alternative presentations for Artin's pure braid group \cite{fredanthology}. It turns out that configuration spaces of graphs are aspherical and their fundamental groups $\pi_1$ are torsion-free (see \cite{ghrist1}, Corollary 2.4 and Theorem 2.5). These fundamental groups are the \textit{graph braid groups} which form a very active area of investigation (for a sampling of results see \cite{ghrist1}).

\noindent{\bf 2.2.3.} The representation theory of the braid groups $B_k$ and $PB_k$ is another major field of study that contributes fundamentally to quantum field theory via the work of T. Kohno, V. Drinfel'd, I. Marin, R. Lawrence, D. Krammer, S. Bigelow, A. Varchenko, D. Cohen, A. Suciu, and others (see \cite{abad}). Many such representations arise as \textit{monodromy representations} of bundles over configuration spaces. One  - faithful- such representation 
$B_k \longrightarrow \hbox{Aut}(F_k)$, with $F_k$ the free group on $k$-letters, was given by Artin in the 1924 Hamburg Abhandlung. It sends the generator $\sigma_i$ to the automorphism $\sigma_i(x_i) = x_{i+1}$ and $\sigma_i(x_{i+1}) = x^{-1}_{i+1}x_ix_{i+1}$.  
This can be seen as the monodromy representation of the bundle projection map
$\conf{k+1}(\bbc)/\mathfrak S_k\rightarrow C_k(\bbc)$, where the symmetric group acts freely by permuting the first $k$ points of a configuration in $\conf{k+1} (X)$. The fiber of this projection is $\bbc-Q_k$ and its monodromy homomorphism $\pi_1(C_k(\bbr^2))=B_k\rightarrow \hbox{Aut}(\pi_1(\bbr^2-Q_k)) = \hbox{Aut}(F_k)$ is Artin's representation \cite{fred2}.
Representations of $B_k$ in $\hbox{Aut}(F_k)$ provide group invariants of links \cite{wada}.
Other monodromy representations of the braid group include Hurwitz's representation which comes from Hurwitz covers (see \S\ref{geometry}, also \cite{westerland}),
unitary representations of the braid group that arise via the Hodge theory of cyclic branched coverings of $\bbp^1$ \cite{mcmullen}, or the Burau representation $B_n\rightarrow GL_n(\bbz [t,t^{-1}])$. For $n = 2, 3$ the Burau representation is known to
be faithful, for $n\geq 5$ it has non-trivial kernel (Bigelow), and for $n = 4$ the question of its faithfulness is still open. 

\noindent{\sc Example}: Artin and Magnus identified $B_k$ with the \textit{mapping class group} of all homeomorphisms of the punctured closed disk $D_k:=D-\{p_1,\ldots, p_k\}$, up to isotopies, fixing the boundary $\partial D$ pointwise \cite{meneses}. Its induced action on $\pi_1(D_k)\cong F_k$ is  the aforementioned Artin's representation.

Braid groups enter many areas of mathematics. 
A famous theorem of Alexander states that every classical link is equivalent to the closure of some braid, establishing a key link with knot theory. Important work of Fred Cohen and Jie Wu \cite{fredwu} relates the descending central series in $PB_k$ to the homotopy groups of $S^2$, thus linking with homotopy theory. The way braid groups and unordered configuration spaces intervene in homotopy theory is discussed in \S\ref{homotopytheory}. Finally Braid groups have been important as platforms for various protocols developed in the context of group-based cryptography. This discipline has gained significant relevance recently due to the need to design new crypto schemes that are invulnerable to quantum attacks \cite{ramon}.

\subsection{Homology  and Planetary Systems}\label{homology}
The homology of $\conf{k}(\bbr^n)$ is torsion free and finite dimensional, with non-zero homology in degrees multiple of $n-1$, $n>1$, and vanishing beyond degree $(k-1)(n-1)$.
Its Poincar\'e polynomial listed below gives immediately full knowledge of this graded abelian group 
\begin{equation}\label{poincseries}
P_{\hbox{\tiny Conf}_{k}(\bbr^n)}(t) = \prod_{1\leq j\leq k-1}(1+jt^{n-1})\in \bbn [t]\  \ ,\ \ \ n\geq 2
\end{equation}
The non-zero Betti numbers $\beta_{j(n-1)} = \hbox{rank}H_{j(n-1)}(\conf{k}(\bbr^n),\bbz)$ are the unsigned Stirling numbers of the first kind $\displaystyle \beta_{j(n-1)}= \begin{bmatrix}k\\ k-j\end{bmatrix}$. These are polynomials in $k$ of degree $2j$. In particular
$$\beta_{n-1} = {k(k-1)\over 2}\ \ ,\ \ \beta_{2(n-1)} = {(3k-1)k(k-1)(k-2)\over 24}\ \ ,\ \ 
\beta_{3(n-1)} = {k^2(k-1)^2(k-2)(k-3)\over 48}
$$
The first non-trivial homology group in degree $n-1$ is free of rank $\displaystyle k\choose 2$, while the
top homology is in degree $(n-1)(k-1)$ of rank $\beta_{(n-1)(k-1)} = (k-1)!$. The \textit{total rank} of this graded homology
is $\sum \beta_i = k!$.

\noindent{\bf 2.3.1.} (LCS formula) When $n=2$, $\conf{k}(\bbr^2)=\conf{k}(\bbc)$ is the complement of a complex hyperplane arrangement, and its Poincar\'e series is related to the descending central series of its fundamental group $PB_k$ \cite{falk, kohno2}. Given a finitely generated 
group $G$, we can set $G_1=G$ and inducively
$G_i = [G_1,G_{i-1}]$ the subgroup generated by the commutators of elements in $G$ and $G_{j-1}$. Then $G(i):= G_i/G_{i+1}$ is a finitely generated abelian group, so we denote $\phi (i) = \hbox{rank}\ G(i)$. We let
$G=PB_k$. In (\cite{falk}, Theorem 4.1, \cite{kohno2}), the authors establish the so-called \textit{LCS formula}
$$P_{\hbox{\tiny Conf}_{k}(\bbc)}(-t) = \prod_{j=1}^\infty (1-t^j)^{\phi (j)}$$
Similar results are true for fundamental groups of complements of other ``fiber-type'' arrangements (more next paragraph). The number $\phi(1)$ is the number of hyperplanes in the arrangement, which is ${k\choose 2}$ in our case. Techniques of \cite{falk} enabled the authors to show subsequently that the pure braid group is residually nilpotent, an important result for the theory of knot invariants of finite type (see \S\ref{finitetypeinv}).

\noindent{\bf 2.3.2.} The simplest way to determine 
 $H_*(\conf{k}(\bbr^n);\bbz)$, and its Poincar\'e series \eqref{poincseries}, is through iterated use of the \textit{Serre spectral sequence}\footnote{More on spectral sequences in \S\ref{spectral}, with a glimpse at what they actually are. To every fibration, there is associated a spectral sequence that helps obtain the homology of the total space from that of the base and fiber.} of the Fadell-Neuwirth fibration $\conf{k}(\bbr^n)\rightarrow\conf{k-1}(\bbr^n)$ \eqref{iteratedbundle}, whose fiber has the homotopy type of a bouquet (i.e. one point union) of spheres $\left(S^{n-1}\right)^{\vee {k-1} }$. For $n\geq 3$, the base is simply connected and the Serre spectral sequence degenerates for dimensional reasons (both base and fiber have homology in degrees a multiple of $n-1$, starting with $\conf{2}(\bbr^n)\simeq S^{n-1}$). This yields \eqref{poincseries} immediately and inductively. For $n=2$, the action of $\pi_1$ on the homology of the fiber is trivial, and the spectral sequence degenerates at $E^2$ as well \cite{fredbible}. The existence of these iterated fibrations, with fibers homeomorphic to punctured complex vector spaces is what it means to be a ``fiber-type'' arrangement \cite{falk}. Notice that the Fadell-Neuwirth fibrations \eqref{iteratedbundle} are nontrivial, even though the additive structure of the homology does not see the twisting (but the cohomology ring and the homotopy group Lie algebra do, see \S\ref{cohomology} and \S\ref{loopspace}). Another beautiful way to get to $H_*(\conf{k}(\bbr^n);\bbz)$, without spectral sequences, is via the Goresky-MacPherson formula for the complement of subspace arrangements (see their remarkable book, Part III, Chapter 1 \cite{goreskymacpherson}). An approach to the Goresky-MacPherson computation, using the very elegant theory of posets \cite{wachs}, is detailed in the second part of this user's guide and applied to a more general class of configuration spaces.

\noindent{\bf 2.3.3.} (Planetary System) The additive generators of the homology groups of $\conf{k}(\bbr^n)$ have a beautiful and handy geometric description in terms of \textit{toric} classes. This is to say that there is a set of generators of $H_{j(n-1)}(\conf{k}(\bbr^n);\bbz)$, for all $1\leq j<k$, such that any such generator is the top homology class of an embedded torus $(S^{n-1})^j\hookrightarrow\conf{k}(\bbr^n)$. This description takes the name of \textit{planetary system} \cite{fred2, paolo1, dev}. 
We explain this and indicate some  consequences. 

The starting point are the spherical generators $a_{ij}\in H_{n-1}(\conf{k}(\bbr^n),\bbz)$ which are obtained as follows. Fix $(q_1,\ldots, q_k)\in\conf{k}(\bbr^n)$, with $|q_i-q_j|>1$, and consider for each pair $r,s$ the map 
\begin{eqnarray}\label{aijmap}
a_{rs} : S^{n-1}\lrar \conf{k}(\bbr^n)\ \ ,\ \ 
\zeta\longmapsto (q_1, q_2,\cdots, q_{r-1}, q_s+\zeta,q_{r+1},\ldots, q_{k})
\end{eqnarray}
This is a free map of the sphere, but it can be made based. It describes the point $q_r$ of the configuration as a point orbiting around the point $q_s$, the other points being fixed away from this sphere \cite{fh, paolo1}.
The image of the fundamental class $(a_{rs})_*[S^{n-1}]$ (also denoted by $a_{rs}$ simply) is a non-trivial class. To better understand \eqref{aijmap}, we can replace it up to homotopy by another map $$\phi_{rs} (x) = (q_1,\ldots, q_r=-x,\ldots, q_s=x,\ldots, q_k),\ x\in S^{n-1}$$ 
where the $q_i$'s, $i\neq r,s$ are fixed, pairwise distinct and lying outside the unit sphere. This describes the configuration's $r$-th and $s$ points as antipodal points rotating around a fixed center, thus describing an embedded sphere \cite{fred5, dev}. Since $\phi_{rs} (-x) = (r,s)\phi_{rs} (x)$, where $(r,s)\in\mathfrak S_k$ is the transposition interchanging $r$ and $s$, it follows that in homology,
$a_{rs} = (-1)^na_{sr}$, where $(-1)^n$ is the degree of the antipodal map of $S^{n-1}$. It turns out that the generators 
$a_{rs}$, for $k\geq r>s\geq 1$, form a basis of $H_{n-1}(\conf{k}(\bbr^n),\bbz)$. 

Next, one starts bootstrapping and constructing higher dimensional classes by considering the locus of all configurations $(x_1,\ldots, x_k)$ where some particle $x_i$ in the configuration rotates around another $x_j$, and maybe the cluster $\{x_i,x_j\}$ rotates around $x_s$, etc. Such loci are always products of spheres (i.e. higher dimensional tori). A representation of these points rotating around each other, and generating spheres, is given in terms of arrows in \cite{paolo1} (see Fig. \ref{fig:4T}) or in terms of \textit{rooted planar trees} as done in \cite{dev}. In this latter handy description, the class $a_{12}$, for example, is represented by a labeled rooted binary tree $(\tree{1}{2})$, and the forest with two binary trees $(\tree{1}{2})(\tree{3}{4})$ represents the class of an embedded $S^{n-1}\times S^{n-1}\lrar\conf{4}(\bbr^n)$, describing the locus of particle $1$ rotating around $2$, and $3$ around $4$.
So to each tree $T$, or to a forest, corresponds a submanifold of $\conf{k}(\bbr^n)$ which is the homeomorphic image of a product of spheres $(S^{n-1})^{|T|}\hookrightarrow\conf{k}(\bbr^n)$, where $|T|$ is the number of internal vertices of $T$, and whose top orientation class is a non-zero class in $H_{|T|(n-1)}(\conf{k}(\bbr^n);\bbz )$. Fig. \ref{orbit} represents the torus whose top class is described by the tree 
$T =
\begin{xy}
  (1.5,1.5); (3,3)**\dir{-}, 
  (0,3); (3,0)**\dir{-}; 
  (7.5,4.5)**\dir{-},        
  (3,0); (3,-1.5)**\dir{-}, 
  (3,6); (6,3)**\dir{-},
  (6,6); (4.5,4.5)**\dir{-},
  (0,4.2)*{\scriptstyle 2},
  (3,4.2)*{\scriptstyle 6},
  (3,7.2)*{\scriptstyle 1},
  (6,7.2)*{\scriptstyle 7},
  (7.5,5.7)*{\scriptstyle 3},
\end{xy}$ (as in \cite{dev}, Fig.1).
Pair of points make up a sphere (in this tree notation, on should picture any pair of points as a pair of antipodal points orbiting around a center), and the disposition of points resembles a planetary system indeed.

\begin{figure}[htb]
\begin{center}
\epsfig{file=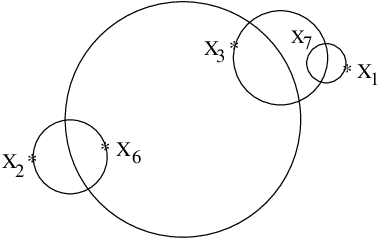,height=1in,width=1.6in,angle=0.0}
\caption{A submanifold homeomorphic to $(S^{n-1})^4$ representing the class of $T$ above.}
\label{orbit}
\end{center}
\end{figure}
Every $k$-forest $F$ (i.e. a forest on $k$ leaves) contributes a homology generator which is the image of the fundamental class $(S^{n-1})^{|F|}\hookrightarrow\conf{k}(\bbr^n)$.
The classes obtained from these rooted binary trees and forests with at most $k$ leaves (where each integer from $1$ to $k$ labels exactly one leaf) exhaust all of the homology of $\conf{k}(\bbr^n)$. 
A most convenient way to express this is to associate to any homology generator, given by a forest on $k$ leaves, an expression in variables $x_1, \ldots, x_k$, so that rotating particles around each other is depicted by bracketing the variables, and the bracket has homological degree $n-1$.  For example, the generator $a_{12}=\tree{1}{2}$ is represented by the class $\left[\tree{1}{2}\right]:=[x_1,x_2]\in H_{n-1}(\conf{2}(\bbr^n))$, 
while the generator $T =
\begin{xy}
  (1.5,1.5); (3,3)**\dir{-}, 
  (0,3); (3,0)**\dir{-}; 
  (7.5,4.5)**\dir{-},        
  (3,0); (3,-1.5)**\dir{-}, 
  (3,6); (6,3)**\dir{-},
  (6,6); (4.5,4.5)**\dir{-},
  (0,4.2)*{\scriptstyle 2},
  (3,4.2)*{\scriptstyle 6},
  (3,7.2)*{\scriptstyle 1},
  (6,7.2)*{\scriptstyle 7},
  (7.5,5.7)*{\scriptstyle 3},
\end{xy}$ gives rise to the class
 $[T]=[[x_2,x_6], [[x_1,x_7],x_3]]$ in $H_{4(n-1)}\conf{7}(\bbr^n)$. This class is an example of an \textit{iterated bracket}.   Concatenating trees comes down to \textit{multiplying} such bracketed expressions. For example
$
\tree{4}{5}
 \begin{xy}
  (1.5,1.5); (3,3)**\dir{-}, 
  (0,3); (3,0)**\dir{-}; 
  (7.5,4.5)**\dir{-},        
  (3,0); (3,-1.5)**\dir{-}, 
  (3,6); (6,3)**\dir{-},
  (6,6); (4.5,4.5)**\dir{-},
  (0,4.2)*{\scriptstyle 2},
  (3,4.2)*{\scriptstyle 6},
  (3,7.2)*{\scriptstyle 1},
  (6,7.2)*{\scriptstyle 7},
  (7.5,5.7)*{\scriptstyle 3},
\end{xy}$ corresponds to the class $[x_4,x_5]\cdot [[x_2,x_6], [[x_1,x_7],x_3]]\in H_{5(n-1)}(\conf{7}(\bbr^n),\bbz)$. 
This multiplication is defined for factors with different labels, and it is graded commutative, in the sense that if $[T_1\cdot T_2]$ is a toric class in $H_*(\conf{k}(\bbr^n))$, then
$[T_1\cdot T_2]=[T_1]\cdot [T_2] = (-1)^{|T_1||T_2|}[T_2]\cdot [T_1]$.
The homology can be stated succinctly as follows.

\begin{quote}\textit{{\bf Theorem 2.1}: $H_*(\conf{k}(\bbr^n);\bbz)$ is the free graded $\bbz$-module spanned by products of iterated brackets on
the $k$ variables $x_1,\ldots, x_k$ (of degree $0$), each appearing
exactly once,
subject to the
Jacobi relations, antisymmetry relations $[x_i,x_j]= (-1)^n[x_j,x_i]$, and to graded commutativity. In particular, we have the relation
$$\hbox{(Jacobi)}\ \ \ \ \  \ \ \ \ \ [[x_i,x_j], x_k]+ [[x_k,x_i], x_j]+ [[x_j,x_k], x_i]=0\ ,\ i\neq j\neq k\neq i$$ 
The homological degree of a homogeneous term with $j$ left (or right) brackets is $j(n-1)$.}
\end{quote}
\vskip 5pt

In this theorem, no multiplication occurs inside an iterated bracket, and no $x_i$ can be repeated.  The displayed Jacobi relation derives from the geometric fact that there is an explicit manifold bounding the three tori
 \begin{xy}   
   (1.5,1.5); (3,3)**\dir{-}, 
   (0,3); (3,0)**\dir{-};
   (6,3)**\dir{-},   
   (3,0); (3,-1.5)**\dir{-}, 
   (-.4,4.2)*{\scriptstyle i}, 
   (3.2,4.2)*{\scriptstyle j},
   (6.8,4.2)*{\scriptstyle k}
 \end{xy}, \begin{xy}   
   (1.5,1.5); (3,3)**\dir{-}, 
   (0,3); (3,0)**\dir{-};
   (6,3)**\dir{-},   
   (3,0); (3,-1.5)**\dir{-}, 
   (-.4,4.2)*{\scriptstyle j}, 
   (3.2,4.2)*{\scriptstyle k},
   (6.8,4.2)*{\scriptstyle i}
 \end{xy} and \begin{xy}   
   (1.5,1.5); (3,3)**\dir{-}, 
   (0,3); (3,0)**\dir{-};
   (6,3)**\dir{-},   
   (3,0); (3,-1.5)**\dir{-}, 
   (-.4,4.2)*{\scriptstyle k}, 
   (3.2,4.2)*{\scriptstyle i},
   (6.8,4.2)*{\scriptstyle j}
 \end{xy}
\cite{dev}. It is direct to write a homology basis (see \cite{paolo1}, Lemma 6). We write the dual basis in \eqref{admissible}. In Part II of this work, we give an alterate derivation of Theorem {\bf 2.1} using poset topology, and explain why a basis of generators is given by all spanning forests in the complete graph $K_k$ with "no broken circuits" (see Theorem \ref{generators}, Part II).

\noindent{\sc Example}: $H_*(\conf{2}(\bbr^n);\bbz)$ has generator $x_1x_2$ in degree $0$ and $[x_1,x_2]$ in degree $n-1$. The generators of $H_*(\conf{3}(\bbr^n))$ are $x_1x_2x_3$ in degree $0$, $[x_1,x_2], [x_1,x_3]$ and $x_1[x_2,x_3]$ in degree $(n-1)$, and $[x_1, [x_2,x_3]]$, $[x_3,[x_1,x_2]]$ in degree $2(n-1)$.

When assembled together over all non-negative integers $k$, the homology groups $H_*(\conf{k}(\bbr^n),\bbz)$ in Theorem 2.1 form an \textit{algebraic operad}, and concatenating trees becomes an actual associative product (see \S\ref{operads} next). Under this product, $x_{k+1}\cdot H_*(\conf{k}(\bbr^n))$ is the image of the split embedding $H_*(\conf{k}(\bbr^n))\hookrightarrow H_*(\conf{k+1}(\bbr^n))$.
 
\noindent{\sc Example}: The top homology $H_{(k-1)(n-1)}(\conf{k}(\bbr^n),\bbz )$ is generated by trees with $k$ leaves, or by the classes $[[\cdots [x_{\sigma (1)},x_{\sigma (2)}] \cdots ] x_{\sigma (k)}]$, where $\sigma$ runs over the elements of the symmetric group $\mathfrak S_k$. This is by definition, the abelian subgroup Lie$(k)$ of the free Lie algebra $L(V_k)$ on $V_k=\bigoplus_k\bbz$, the free abelian group with basis $\{x_1,\ldots, x_k\}$. Lie$(k)$ is naturally a module over the group ring $\bbz [\mathfrak S_k]$ and, as an abelian group, it is isomorphic to $\bigoplus_{(k-1)!}\bbz$. It is well-known (see \cite{fred5}, Theorem 6.1) that \textit{as $\bbz[\mathfrak S_k]$-modules}: if $n$ is odd, $H_{(k-1)(n-1)}(\conf{k}(\bbr^n),\bbz)$ is isomorphic to Lie$(k)$, and if $n>0$, $n$ even, it is isomorphic to Lie$(k)\otimes_\bbz\bbz(-1)$. It turns out that these modules Lie$(j)$ for
$j\leq k$, induced up, describe the entire structure of $H_*(\conf{k}(\bbr^n))$ as a graded $\mathfrak S_k$- representation (see \S\ref{stability}). 

\noindent{\bf 2.3.4.} (Stable Splitting) The geometric fact that the homology of $\conf{k}(\bbr^n)$ is generated by toric classes (or fundamental classes of embedded tori) implies that $\conf{k}(\bbr^n)$ must split into a product of wedges of spheres after a single suspension.  This splitting takes the form\begin{equation}\label{singlesplit}
\Sigma \conf{k}(\bbr^n)\simeq  \bigvee_{j=1}^{k-1}
\left(S^{j(n-1)+1}\right)^{\vee \left[{k\atop k-j}\right]}\ \ \ \ ,\ \ \ n\geq 2
\end{equation}
The term $\left(S^{r}\right)^{\vee \left[{k\atop k-j}\right]}$ means a wedge of $\left[{k\atop k-j}\right]$ (stirling number) copies of this sphere $S^r$.
A generalization of this splitting to the case of no-$\ell$ equal configuration is in \cite{dobtur}, and to the case of \textit{chromatic configuration spaces}, as defined in \S\ref{variants}, is worked out in part II of this user's guide, where the number of spheres in the splitting is given in terms of the chromatic numbers of the graph (see Part II, Theorem \ref{main}).
Note that more generally, $\conf{k}(M\times\bbr)$, for $M$ a connected manifold, also splits after one suspension as a wedge of suspended Thom spaces \cite{taylor}. Manifolds of the form $M\times\bbr$ are special, so-called ``i-acyclic'', and their configuration spaces are relatively well-understood (see \S\ref{compactsupport}).

\subsection{Operads and homology operations}\label{operads}

An \textit{operad} in any symmetric monoidal category is, by definition, a collection of objects $\mathcal O:=\{\mathcal O(k)\}_{k\geq 0}$, together with ``structure morphisms'' \begin{equation}\label{structuremaps}\mathcal O_n(r)\times \mathcal O_n(k_1)\times\cdots\times\mathcal O_n(k_r)\rightarrow\mathcal O_n(k_1+\cdots+k_r),
\end{equation}
which satisfy a short list of associativity and equivariance relations (see \cite{fresse} or any reference on operads).
An action of an operad on an object $A$ consists of a family of maps
$\mathcal O(k)\times A^k\lrar A$ which again verify certain compatibility conditions with the structure morphisms \eqref{structuremaps} \cite{fresse, idrissi2}. When such an action exists, we say that $X$ is a $\mathcal O$-space (or an \textit{$\mathcal O$-algebra}). As it is for groups, an operad is understood by the way it acts on objects.

\noindent{\bf 2.4.1.} Historically, the starting point of operads begins with Euclidean configuration spaces and loop spaces. More precisely, it was a simple but impactful idea of Boardman and Vogt to thicken the points of configurations to closed balls in $\bbr^n$, whose interiors are pairwise disjoint, and obtain this way a collection ${\mathcal D}_n = \{\mathcal D_n(k)\}_{k\geq 1}$ of ``little disks'' in $\bbr^n$, where $\mathcal D_n(k)$ is the collection of all $k$ such disks; a space homotopy equivalent to $\conf{k}(\bbr^n)$. The little disks can be composed by shrinking their size and inserting them into other disks (an operation that cannot obviously be done with points).  Insertions of little disks into other little disks, after resizing, give rise to  the operad structure maps
$$\mathcal D_n(r)\times \mathcal D_n(k_1)\times \cdots \times \mathcal D_n(k_r)\longrightarrow \mathcal D_n(k_1+\cdots +k_r)$$

By passing to homology, we obtain the collection
 $\{H_*(\conf{k}(\bbr^n)\}_{k\geq 0}$ which is an operad in the category of  
 $\bbz$-graded abelian groups (this is the case because the homology is torsion-free). We will describe this operad, which is also written $e_n:=H_*(\mathcal D_n)$ in the literature, by describing the algebras over it. The generators of $H_0\mathcal D_n(0)$,
$H_0\mathcal D_n(2)$, and of $H_{n-1}(\mathcal D_n(2))$ are, respectively, the elements $1$, $x_1\cdot x_2$, and $[x_1, x_2]$ of the operad. If $A$ is an algebra over $e_n$, then the structure map
$H_0(\mathcal D_n(2))\otimes A\otimes A\rightarrow A$ sends $(x_1\cdot x_2 ,\alpha,\beta)\mapsto \alpha\cdot\beta$ (the multiplication), and sends $([x_1, x_2],\alpha,\beta)\mapsto [\alpha,\beta]$. This operation of replacing entries is called ``grafting''. We thus see that
$A$ is a graded associative algebra with bilinear pairing $\cdot : A_p\otimes A_q\rightarrow A_{p+q}$ of degree $0$ and a Lie bracket $[-]: A_p\otimes A_q\rightarrow A_{p+q+n-1}$, of degree $n-1$, satisfying the following relations
\begin{eqnarray*}
&&xy = (-1)^{|x||y|}yx\\&&[x,y] = -(-1)^{(|x|+n-1)(|y|+n-1)}[y,x]\\
&&[x,yz] = [x,y]z + (-1)^{(|x|+n-1)|y|}y[x,z]\ \ \ \hbox{Leibniz rule}
\end{eqnarray*}
Bracket expressions which may include multiplications $\cdot$ within brackets, such as $[x_1,[x_2,x_3\cdot x_4]]$ for example, can always be reduced to expressions associated to forests after Leibniz rule. The reason of why this rule exists in $e_n$ is simple to explain: $[x_1, x_2\cdot x_3]$ means, in our planetary system description, that planet $1$ rotates around $2$ and $3$ simultaneously, thus implying that the class of such a map at the level of homology or homotopy group, is the class of $1$ rotating around $2$ summed up with the class obtained by rotating $1$ around $3$ (see Fig. \ref{fig:4T} and explanation therein).
The algebra $A$ we just presented is named \textit{a Poisson $n$-algebra} (or also a Gerstenhaber $n$-algebra \cite{berglund}).
Poisson $2$-algebras are the \textit{Gerstenhaber algebras}\footnote{In physics references, the degree is taken in negative cohomological degree, so the Poisson bracket has degree $1-n$, and this is $-1$ for Gerstenhaber \cite{kontsevich}.}.

\begin{quote}\textit{{\bf Theorem} 2.2 \cite{fredbible}. Algebras over the operad $e_n=\bigoplus_{k\geq 0}H_*(\mathcal D_n(k))$,  $n>1$, are $n$-Poisson algebras. An algebra over $e_1$ is just an associative algebra. }
\end{quote}
\vskip 5pt

In particular, $e_n$ itself is $n$-Poisson
and $e_2$ is Gerstenhaber.
Poisson algebras appear naturally in Hamiltonian mechanics and in deformation quantization \cite{kontsevich}.

\noindent{\bf 2.4.2.}  A paramount example of a $\mathcal D_n$-space (equivalently a $\mathcal D_n$-algebra) is the n-fold loop space $\Omega^nX$ of a based space $X$.
 The converse is almost true. A major \textit{recognition principle} of Peter May asserts that a connected $\mathcal D_n$-space whose $\pi_0$ is a group is weakly homotopy equivalent to an $n$-fold loop space \cite{may1}. By passing to homology \textit{with field coefficients}, we see that $e_n$ acts on $H_*(\Omega^nX)$, and thus parameterizes homology operations on these loop spaces. For example, the bracket $[x_1,x_2]$ generates an operation
$H_i(\Omega^nX)\otimes H_j(\Omega^nX)\rightarrow H_{i+j+n-1}(\Omega^nX)$; the Poisson bracket which is known in this context as the \textit{Browder bracket}.  The structure of the operations with mod-$2$ coefficients was first studied by Kudo and Araki, and for odd $p$ by Dyer and Lashoff. Browder used them effectively to compute the homology of $n$-fold loop spaces of spheres $\Omega^nS^{n+k}$ modulo $2$ \cite{browder}. Rationally, there is a very aesthetic answer for $n$-connected spaces spelled out in \cite{fredbible, berglund}. It takes the form of an isomorphism of $n$-Poisson algebras
$$H_*(\Omega^nX,\bbq) = \Lambda (s^{-n}\pi_*(X)\otimes\bbq)$$
where $s^{-n}\pi_*(X)\cong \pi_*(\Omega^nX)$ is the $n$-fold desuspension which shifts down the degree by $n$, and the Lie bracket on the right hand side is induced by Whitehead products on the homotopy groups of $X$. Here $\Lambda V$ means the free commutative graded algebra on the graded vector space $V$ (see {\bf 7.0.1}).

\noindent{\bf 2.4.3.} The next development (Deligne conjecture) is very well explained in \cite{kontsevich}, with major expansions in \cite{fresse}. It is about operadic structures on chains of $\mathcal D_n$ and their algebras. If $\mathcal O=\{O(k)\}_{k\geq 0}$ is a topological operad, then the collection of complexes $\{\hbox{Chains}(O(k))\}_{k\geq 0}$ has a natural operad structure in the category of complexes of abelian groups. The compositions in this operad are defined using the external tensor product of \textit{cubical chains}. A main result of Gerstenhaber is that for an associative algebra $A$, the Hochschild cohomology $H^*(A,A)$ over a field carries the structure of Poisson $2$-algebra (i.e. a Gerstenhaber algebra), so it is an $e_2$ algebra.  In 1993, Pierre Deligne conjectured that there is an algebra structure at the level of cochains already; i.e. there exists a natural action of the operad $\hbox{Chain}(\mathcal D_2)$ on the Hochschild complex $C^*(A,A)$ for any associative algebra $A$. As stated in \cite{kontsevich}, ``The story of this conjecture is quite dramatic'' with several proofs (and corrected proofs) in the literature.  



\subsection{The Arnold-Cohen computation}\label{cohomology}
The cohomology \textit{ring structure} of Euclidean configuration spaces is  undoubtedly one of the most fundamental and recognizable results in the field. V.A. Arnold \cite{arnold} was first to compute this cohomology ring in the planar case. 
In 1970,  one year after Arnold and independently, Fred Cohen computed in his thesis \cite{fredbible} the cohomology ring $H^*(\conf{k}(\bbr^n),\bbz)$ for all $n\geq 2$. The Arnold-Cohen result takes the following form. First, we identify the generators of this cohomology algebra. For each pair $(i,j), i\neq j$, consider the map \begin{equation}\label{alphaij}
\alpha_{ij}: \conf{k}(\bbr^n)\rightarrow S^{n-1}\ \ \hbox{sending}\ \ \displaystyle (x_1,\ldots, x_k)\mapsto {x_i-x_j\over |x_i-x_j|}
\end{equation}
The pullback of the orientation class in degree $n-1$ is a class also denoted $\alpha_{ij}\in H^{n-1}(\conf{k}(\bbr^n),\bbz)$. One can check that the composite $\xymatrix{ S^{n-1}\ar[r]^{a_{ij}\ \ \ }&\conf{k}(\bbr^n)\ar[r]^{\ \ \ \alpha_{rs}}&S^{n-1}}$, for $i>j$, $r>s$, has degree $1$ if $i=r,j=s$, and is otherwise of degree $0$. This means that the class $\alpha_{ij}$ is dual to the basis generator $a_{ij}$ defined in \eqref{aijmap}.  Remarkably, these classes turn out to generate all of the cohomology ring under cup product. As a graded additive group, one starts by showing that $H^*(\conf{k}(\bbr^n),\bbz)$ is the free graded abelian group generated by \textit{admissible} monomials 
\begin{equation}\label{admissible}\alpha_{i_1j_1}\alpha_{i_2j_2}\cdots\alpha_{i_rj_r}\ \ \ ,\ \ i_s>j_s\ \ \hbox{for}\ \ 1\leq s\leq r \ \  \hbox{and}\ i_1<\cdots < i_r
\end{equation}
The ring structure has obvious relations. Since we are pulling back the class of a sphere, $\alpha_{ij}^2=0$ for all $i\neq j$, and the same orientation argument as in homology gives that $\alpha_{ij} = (-1)^{n}\alpha_{ji}$. The last key set of relations consists of three-term relations of the form $\alpha_{rt}\alpha_{rs} = \alpha_{st}(\alpha_{rs} - \alpha_{rt})$ for $r>s>t$.
This three-term relation (or ``Arnold-Cohen relation'') can be rewritten in the following \textit{cyclic} notation
\begin{equation}\label{arnold}
\alpha_{ij}\alpha_{jk}+\alpha_{ki}\alpha_{ij} + \alpha_{jk}\alpha_{ki} = 0\ \ \ \ \hbox{for}\ \ i\neq j\neq k\neq i
\end{equation}
It is a beautiful fact that the dual to the Arnold-Cohen relation given in cohomology is the Jacobi relation in homology. 
The main theorem can now be stated below:

\begin{quote}\textit{{\bf Theorem 2.3} \cite{fredbible}: As an algebra, $H^*(\conf{k}(\bbr^n),\bbz)$ is generated by the $\alpha_{ij}$, $1\leq j<i\leq k$, $\deg\alpha_{ij}=n-1$, subject to the following complete set of relations: associativity, graded commutativity, and the quadratic relations
\begin{equation}\label{quadratic1}
\left\{
\begin{array}{l}
\alpha_{ij}^2=0\\
\alpha_{ij}\alpha_{jk}+\alpha_{jk}\alpha_{ki}+\alpha_{ki}\alpha_{ij}=0 \ \ \ \ \ \ \ \hbox{(Arnold-Cohen relation)}
\end{array}
\right.
\end{equation}
The action of the symmetric group $\mathfrak S_k$ on this cohomology is generated by $$\sigma(\alpha_{ij}) = \alpha_{\sigma (i)\sigma (j)}\ 
\hbox{if}\ \sigma(i) > \sigma (j)\ \ \hbox{and}\ \
\sigma(\alpha_{ij}) =(-1)^n\alpha_{\sigma(j)\sigma(i)}\ \hbox{if}\ \sigma (i) < \sigma (j)$$
Finally, $H^*(\conf{k}(\bbr^n);\bbq)=H^*(\conf{k}(\bbr^n);\bbz)\tensor\bbq$ is a Koszul algebra \cite{priddy}.}
\end{quote}
\vskip 5pt

We explain the last statement: a graded commutative algebra $A$ is \textit{Koszul} if it is a quadratic algebra (that is generated by elements $x_i$ modulo certain quadratic
relations $\sum_{i,j}c_{ij} x_ix_j = 0$) such that $Tor^A_{i,j}(\bbq,\bbq)=0$ for $i\neq j$ \cite{berglund, priddy}. Because $H^*(\conf{k}(\bbr^n);\bbq)$ is quadratic and has a PBW-basis consisting of all monomials $\alpha_{i_1j_1}\cdots 
\alpha_{i_rj_r}$ where $i_1<\ldots < i_r$ and $i_p < j_p$, 
for all $p$, it is Koszul (\cite{priddy}, Theorem 5.3). For interesting consequences, see \S\ref{loopspace}. 

The generators $\alpha_{ij}$ are referred to as ``tautological classes'' in the physics literature  \cite{bott}. Based on Theorem 2.1, the Arnold-Cohen relation \eqref{quadratic1} is quickly derived as follows. The homology being torsion-free \S\ref{homology}, its Hom-dual is $H^*(\conf{k}(\bbr^n);\bbz )$.  Assume wlog that $n=2$ with one dimensional generators $\alpha_{21}, \alpha_{31}$ and $\alpha_{32}$. By \eqref{conf3}, $H_2(\conf{3}(\bbr^2))\cong\bbz^2$ has only two generators, and so does the cohomology. This means that
$\alpha_{31}\alpha_{32} = a\alpha_{21}\alpha_{31}+b\alpha_{21}\alpha_{32}$, for some $a,b\in\bbz$. But there is exactly one choice of $a,b$ which is consistent with the $\mathfrak S_3$-action: $a=-1$ and $b=1$. 

Note that the action of the Steenrod algebra on $H^*(\conf{k}(\bbr^n),\bbf_p)$ is trivial (\cite{fredbible}, Proposition 7.8). 
We record as well that $\bigoplus_{k\geq 0} H^*(\conf{k}(\bbr^n))=H^*(\mathcal D_n)$ has a graphical interpretation from which the structure of a \textit{cooeperad} with cocomposition maps can be made explicit (\cite{idrissi}, Theorem 5.69). 

\noindent{\sc Example}: As a simple illustrative application of Theorem 2.3, show that the Fadell-Neuwirth fibration $\conf{3}(\bbr^n)\rightarrow\conf{2}(\bbr^n)$, with fiber $S^{n-1}\vee S^{n-1}$ cannot be trivial if $n$ is odd. Indeed, the two \textit{algebras} given by
$H^*(S^{n-1}\times (S^{n-1}\vee S^{n-1});\bbz)=\bbz[a,b,c]/_{a^2,b^2,c^2,bc}$ and $H^*(\conf{3}(\bbr^n);\bbz) = \bbz[a,b,c]/_{a^2,b^2,c^2,ab+bc+ca}
$, $|a|=|b|=|c|=n-1$, are not isomorphic if $n$ odd (algebra exercise). A much harder result of Massey \cite{massey1} shows that this bundle is trivial, i.e. 
$\conf{3}(\bbr^n)\simeq S^{n-1}\times (S^{n-1}\vee S^{n-1})$, if and only if $n=1,2, 4$ or $8$ (see also \cite{fh}).

\subsection{Formality}\label{formality} There is a very important nuance between the computations of Cohen and those of Arnold of the cohomology ring of Euclidean configuration spaces. Arnold's computation in the case of $\bbr^2$ has deeper implications since it works at the chain level and constructs a quasi-isomorphism between $H^*(\conf{k}(\bbc),\bbc))$ and the algebra of complex differential forms $\Omega^*_\bbc (\conf{k}(\bbc))$ sending $a_{ij}$ to $d(z_i-z_j)/(z_i-z_j)$. 
This result was extended to complements $\mathcal M(\mathcal A)$ of any complex hyperplane arrangement $\mathcal A$ by Brieskorn who gives an embedding of $H^*(\mathcal M(\mathcal A))$ into the de Rham complex of $\mathcal M(\mathcal A)$.

Constructing a quasi-isomorphism of differential graded algebras (dga's) between cochains and cohomology (with zero differential) is a very strong property and is referred to as (strong) formality.
In general, one defines formality over any coefficient ring $R$  by requiring the algebra of $R$-valued singular cochains $C^*(X, R)$ to be connected to its cohomology $H^*(X, R)$ (with trivial differential) by a zig-zag of homomorphisms of differential graded associative $R$-algebras 
$$H^*(X, R)\longleftarrow \cdots \longrightarrow C^*(X,R) $$
inducing isomorphisms in cohomology (i.e. quasi-isomorphisms). If $R=\bbf$ is a field then this property depends only on the characteristic of $k$. For spaces of finite type with torsion-free homology, $\bbz$-formality is universal as
it implies $R$-formality for any ring $R$ \cite{paolo2}. 

\noindent{\bf 2.6.1.} In characteristic $0$, a manifold $X$ is formal if there exists \textit{a zig-zag } of quasi-isomorphisms between the algebra of de-Rham forms $\Omega^*_{dR}(X)$ and its real cohomology $H^*(X, \bbr )$ with trivial differential. For a general (nilpotent) topological space $X$, rational formality means a zigzag of quasi-isomorphisms with a commutative differential algebra model of $X$, like the Sullivan-deRham algebra $A_{PL}(X)$ or $\Omega_{PL}(X)$ of \textit{piecewise linear differential forms} on $X$, which one can view as the ``de Rham algebra for non-manifolds'' (also over $\bbq$ instead of $\bbr$). Rational formality implies that it is possible
to extract any rational homotopy invariant, such as the rational homotopy groups,
from the cohomology algebra. However, doing this in practice entails the non-trivial algebraic problem of constructing a minimal model for the cohomology (see \S\ref{models}, and \cite{voronov, idrissi}). 

\noindent{\sc Example}: Spheres $S^n$ are formal (\cite{idrissi2}, Example 2.80). If $n$ is odd, choose a closed representative $\hbox{vol}_{S^n}\in\Omega^n_{PL}(S^n)$ of the volume form of $S^n$. The square of this form is $0$, because $n$ is odd, so the map $H^*(S^n)\rightarrow\Omega^n_{PL}(S^n)$ which maps $[S^n]^*\mapsto \hbox{vol}_{S^n}$ is a quasi-isomorphism. If $n$ is even, the square of $\hbox{vol}_{S^n}$ is non-zero. It must however be a boundary, since $H^{2n}(S^n)=0$, so $\hbox{vol}_{S^n}=d\alpha$. One then builds a zigzag with $A=(\Lambda (x_n,y_{2n-1}), dy=x^2)$ (for $\Lambda$ notation see {\bf 7.0.1}). Here 
$A\rightarrow\Omega^*_{PL}(S^n)$ sends $x\mapsto \hbox{vol}_{S^n}$ and $y\mapsto \alpha$. On the other hand $A\rightarrow H^*(S^n)$ sends $x\mapsto [S^n]^*$ and $y$ to $0$. Both maps are CDGA maps and quasi-isomorphisms. Note that the CDGA $A$ is a \textit{minimal model} for $S^n$, $n$ even.

\noindent{\bf 2.6.2.} It turns out that configuration spaces $\conf{k}(\bbr^n)$ are formal under some (weak) conditions, and these are hard theorems to prove. Kontsevich \cite{kontsevich} and Lambrechts-Volic \cite{pascal}  proved that the configuration
space $\conf{k}(\bbr^n)$ is formal over $\bbr$ for any $k$ and $n$, by using graph complexes (see \S\ref{graphhom}).
Over the integers, the situation is much more complicated. Salvatore in \cite{paolo2} shows that $\conf{k}(\bbr^n)$ is formal over $\bbz$ as long as the dimension is larger than the number of points $n\geq k$.  Surprisingly however, he shows that $\conf{k}(\bbr^2)$ is not formal over $\bbz_2$ for $k\geq 4$, and therefore it is not formal over $\bbz$. 
Formality fails globally for configuration spaces other than Euclidean space. Bezrukvnikov \cite{bezru} shows that if $X$ is a smooth complex projective variety, then $\conf{n}(X)$ is not formal as soon as $n>2$. 

Formality has a rare occurrence among spaces. It is a strong property that can dramatically simplify complex calculations with chain complexes (bar, cobar, and (co)Hochschild complexes), e.g. \S\ref{nbody}. 
The formality of configuration spaces can further be enhanced to the formality of the little disks \textit{as an operad} (over $\bbr$ or over $\bbq$). This was shown to be the case through the work of Kontsevich, Tamarkin, Lambrechts and Voli\'c,
Fresse-Willwacher, Boavida de Brito and more (for definitions and references, see \cite{idrissi}, chapter 5).


\subsection{Homotopy, Koszulness and the loop space}\label{loopspace}
A fundamental fact in this section is that relations in the homotopy groups of $\conf{k}(\bbr^n)$ are ``orthogonal'' to the Arnold-Cohen relations in cohomology. This leads to  interesting consequences. 

\noindent{\bf 2.7.1.} (The Homotopy Groups) The bottom spherical generators $a_{rs}$ in the homology of $\conf{k}(\bbr^n)$ \eqref{aijmap} give rise to homotopy classes $a_{rs}$ (also of the same name) in $\pi_{n-1}(\conf{k}(\bbr^n))$, $k\geq 2$. These classes $a_{rs}$, for $1\leq s<r\leq k$, generate the first non-zero homotopy group in dimension $n-1$, and the rest of the rational homotopy groups are generated systematically as follows. Recall that the Whitehead product $[-]$, of degree $-1$, turns the graded homotopy groups into a graded Lie algebra. In $\conf{k}(\bbr^n)$ one has the following relations (\cite{fh}, Theorem 3.1, ChapterII)
$$
\begin{array}{cl}
& \,\,\quad a_{ji}=(-1)^n a_{ij}\\
\begin{array}{c}
\text{quadratic}\\
\text{relations}
\end{array}
&
\left\{
 \begin{array}{l}
 \left[ a_{ij},a_{kl}\right]=0, \quad\text{ if  $\#\{i,j,k,l\}=4$}\\
 \left[ a_{ij},a_{ik}+a_{jk}\right]=0
 \end{array}
\right.
\end{array}
$$

It is easy and instructive to see how these relations come about. They are a direct consequence of the planetary generators described earlier, and of the elementary fact that the Whitehead product of two classes $\alpha,\beta: S^m\rightarrow X$ vanishes if and only if the map $\alpha\vee \beta : S^m\vee S^m\rightarrow X$ extends to a map $S^m\times S^m\rightarrow X$. Figure \ref{fig:4T} explains it clearly: by $1,2$ and $3$ we mean the first, second and third entry of a configuration in $\conf{3}(\bbr^n)$. An arrow between $i$ and $j$ means that $j$ rotates around $i$. The figure on the left is a snapshot of the image of $f_1: S^{n-1}\times S^{n-1}\rightarrow \conf{3}(\bbr^n)$, $(u,v)\mapsto (0,u, u+{v\over 2})$, and the part on the right is a similar snapshot of the image of $f_2(u,v)= (0,u,2v)$. The solid sphere is the restriction of the map to the first factor, and the dashed sphere is its restriction to the second factor. Notice that $f_2$ restricted to the second sphere factor (the dashed sphere on the right) is the map describing $3$ rotating around $2$ and $1$ simultaneously, so its homotopy class is $a_{31}+a_{32}$. This map $f_2$ restricted to the first copy is obviously $a_{21}$ ($2$ rotating around $1$). Since $f_2$ extends these two homotopy classes to the product, necessarily $[a_{21}, a_{31}+a_{31}]=0$, which is one of the claimed relations. Same for $f_1$. Notice in this case, that the restriction of $f_1$ to the first factor (solid sphere) is homotopically $a_{21}+a_{31}$;
the moral being that when $2$ rotates around $1$, $3$ also does!

\begin{figure}[htb]
\centering
\includegraphics[width=3.5in]{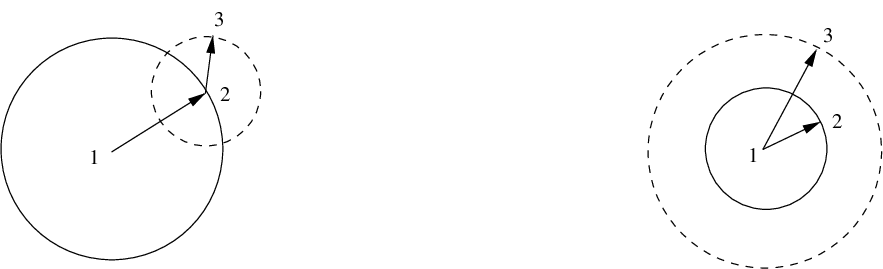}
\caption{The homotopy 4T relations illustrated. Each figure represents a map of a torus $S^{n-1}\times S^{n-1}$ into configuration space. In the figure on the left, the solid sphere represents the map $a_{21}+a_{31}$, and the dotted sphere is the map $a_{32}$. The corresponding relation is $[a_{32}, a_{21} + a_{31}]=0$.}
\label{fig:4T}
\end{figure}

\noindent{\bf 2.7.2.} (The Loop Space) Write $B_{ij}= ad_*(a_{ij})$, where 
$ad(a_{ij}): S^{n-2}\rightarrow \Omega\conf{k}(\bbr^n)$ is the adjoint map to $a_{ij}$, $n>2$.
By a Theorem of Samelson \cite{samelson}, $ad_*([a_{ij}, a_{rs}]) = B_{ij}B_{rs}- (-1)^{n}B_{rs}B_{ij}$. Consider the graded free Lie algebra $L_k(n)$ over $\bbz$ generated by the classes $B_{ij}$, for $1\leq i\neq j\leq k$, and let $\mathcal I$ denote the ideal of relations generated by
\begin{enumerate}
    \item $B_{ij} = (-1)^nB_{ji}$
    \item $[B_{ij}, B_{ik} + B_{jk}]$ for $i< j< k$ (or equivalently 
    $[B_{ij}, B_{jk}] = [B_{ki}, B_{ij}]$)
    \item $[B_{ij}, B_{rt}]=0$ when $i,j,r,t$\ distinct. 
\end{enumerate}
with bracket taken in the graded sense. Define the \textit{Yang-Baxter} algebra $YB_k^{(n)}$ to be the \textit{enveloping algebra} of the Lie algebra $L_k(n)/\mathcal I$. This is also called the \textit{infinitesimal braid Lie algebra}.

\begin{quote}\textit{{\bf Theorem 2.4} \cite{{fh, cg, kohno1}}: (i) If $n>2$, there is an isomorphism Prim$H_*(\Omega\conf{k}(\bbr^n))\cong L_k(n)/I$ as Lie algebras, where Prim are the primitives. Furthermore, (ii) there is an isomorphism of
graded Hopf algebras
\begin{equation*}\label{ybkn}
YB_k^{(n)}\cong H_*(\Omega\conf{k}(\bbr^n);\bbz) \ \ \ 
\end{equation*} (iii) This loop algebra is torsion-free and has a graded basis
$\{B_{i_1j_1}\cdots B_{i_\ell j_\ell}\}$ with $\ j_1< \cdots < j_\ell, i_t<j_t\ , \forall\ t$. (iv) The Poincar\'e series for the based loop space for $n>2$ is
\begin{equation*}\label{poincloop}
P_{\Omega\hbox{\tiny Conf}_{k}(\bbr^n)}(t) = \prod_{j=1}^{k-1}(1-jt^{n-1})^{-1}
\end{equation*}
Finally, \hbox{(v)} $H_*(\Omega\conf{k}(\bbr^n))$ is Koszul-dual  to $H^*(\conf{k}(\bbr^n))$ (in the sense of \cite{priddy}).}
\end{quote}
 
\noindent{\bf 2.7.3.} (Koszul Duality) 
We explain the last statement of Theorem 2.4, and the various steps to prove the theorem. 
The Lie algebra
$L_k:=\pi_*(\conf{k}(\bbr^n))\otimes\bbq$ is \textit{ Koszul dual} to $A_k:=H^*(\conf{k}(\bbr^n);\bbq)$. This means that the ${k\choose 2}$-dimensional space $V_k$ of generators of $A_k$ is dual to the space $V^k$  of generators of $L_k$, and the space $R_k\subset S^2V_k$ spanned
by the quadratic relations \eqref{quadratic1} is orthogonal to the space $R^k\subset S^2L_k$ of quadratic relations of $L_k$. Why such duality holds is clarified by the following general formalism in \cite{berglund}:
\begin{itemize}
\item In the graded world,
if $A$ is a graded commutative Koszul algebra, its \emph{Lie Koszul dual} $A^{!L}$ is the free graded Lie algebra on dual generators desuspended once, modulo the \textit{orthogonal relations}.
\item If a space is formal and $H^*(X,\bbq)$ is Koszul, then homotopy and cohomology are Koszul dual in the sense that there is an isomorphism of graded Lie algebras
$\pi_*(\Omega X)\tensor \bbq = H^*(X;\bbq)^{!L}$, where the left-hand side is
equipped with the Samelson product (assuming $\pi_1(X)=0$)."
\end{itemize}
Theorem 2.4 (ii) is the consequence of the following facts: $\conf{k}(\bbr^n)$ is rationally formal, $H^*(\conf{k}(\bbr^n))$ is torsion-free, it is Koszul by Theorem 2.3, 
$L_k(n)/\mathcal I\tensor\bbq=\pi_*(\Omega\conf{k}(\bbr^n))\tensor_\bbz\bbq$
and the rational loop space homology is the enveloping algebra of its loop lie algebra (Milnor-Moore Theorem). 

When $n=2$, $\conf{k}(\bbc)$ is the classifying space of the pure braid group $PB_k$, and $H^*(\conf{k}(\bbc ))$ is a Koszul quadratic algebra. Its Koszul dual $YB_k^{(2)}$ can be identified with the enveloping algebra of the Lie algebra $\mathcal L_k$  obtained from the descending central series of $PB_k$, which we recall, is obtained by setting $G_1^k=PB_k$, $G_i^k = [G_1^k, G_{i-1}^k]$ and  $\displaystyle \mathcal L_k = \bigoplus_i(G_i^k/G_{i+1}^k)$ with bracket induced by taking commutators \cite{falk, kohno2}. An extension of these results to an arbitrary Riemann surface is in \cite{bezru}.

\subsection{The loop space and finite type invariants}\label{finitetypeinv}

It turns out that the universal enveloping of the infinitesimal braid Lie algebra has a faithful description in terms of \textit{horizontal chord diagrams}, and the ``Yang-Baxter relations'' become equivalently 2T and 4T relations.
This is quite relevant to Physics since it provides a direct link between $\Omega\conf{k}(\bbr^n)$ and \textit{finite type invariants of knots} \cite{kohno1} or \textit{higher observables on brane configurations} \cite{ss}. In this context [of physics/branes], based loop spaces of configurations appear as spaces of 'vaccum scattering processes' of 'topological solitons' (centered at the points) and as such connect the mathematical representation theory of braid groups, and of higher homotopy groups of configuration spaces, to the physical dynamics of such topological physical objects (still largely hypothetical, but at the focus of much attention in contemporary quantum materials research).

Define $D_n^k$ to be the set of all $n$ vertical directed lines (‘strands’) with $k$ horizontal lines (‘chords’) connecting them.
The {\it monoid of horizontal chord diagrams} on $n$-strands is the free monoid
on the set of pairs of distinct strands
\begin{equation}
\label{freemonoid}
\mathcal{D}^{{}^{\mathrm{pb}}}_{\!n}
  \;:=\;
  \mathrm{FreeMonoid}
  \Big(
    \big\{
      t^{ij}
      \,\vert\,
      1 \leq i \neq j \leq n
    \big\}
  \Big)
  \,= \bigoplus_{k\geq 0} D_n^k,
\end{equation}
where the generator $t^{ij}=t^{ji}$ is called the
{\it chord} connecting the $i$th and $j$th strand.
Hence a general horizontal chord diagram is a finite list of chords
$t^{i_1 j_1}r^{i_2 j_2}\cdots t^{i_d j_d}$, possibly empty, and the product operation in $\mathcal D_n^{{}^{\mathrm{pb}}}$ is the concatenation of these lists, the neutral element is given by the empty list. 

\begin{center}
\begin{tikzpicture}
 \draw (-.23, 0.35) node {\tiny $i$};
 \draw (-.23+1.2, 0.35) node {\tiny $j$};
 \draw (-.23+2.4, 0.35) node {\tiny $n$};
 \draw (-.23-0.9, 0.35) node {\tiny $1$};
 \clip (-1,+.5-1.5) rectangle (3.5,+.5);
 \draw[thick]
   (0,+1)
   to
   node{\colorbox{white}{\hspace{-2pt}}}
   (2.4,+1);
 \begin{scope}[shift={(0,+1)}]
   \clip (0,-.1) rectangle (.1,+.1);
   \draw (0,0) circle (.07);
 \end{scope}
 \begin{scope}[shift={(2.4,+1)}]
   \clip (-.1,-.1) rectangle (0,+.1);
   \draw (0,0) circle (.07);
 \end{scope}

 \draw[thick] (0,0) to (1.2,0);
 \begin{scope}[shift={(0,0)}]
   \clip (0,-.1) rectangle (.1,+.1);
   \draw (0,0) circle (.07);
 \end{scope}
 \begin{scope}[shift={(1.2,0)}]
   \clip (-.1,-.1) rectangle (0,+.1);
   \draw (0,0) circle (.07);
 \end{scope}
 \draw[thick,blue] (0,2) to (0,-1);
 \draw[thick,blue] (1.2,2) to (1.2,-1);
 \draw[thick,blue] (2.4,2) to (2.4,-1);\draw[thick,blue] (-.9,2) to (-.9,-1);
 \draw (-.6,1.5) node {$\cdots$};
 \draw (0.6,1.5) node {$\cdots$};
 \draw (1.2+0.6,1.5) node {$\cdots$};
 \draw (2.4+0.6,1.5) node {$\cdots$};

 \draw (-.6,-.5) node {$\cdots$};
 \draw (0.6,-.5) node {$\cdots$};
 \draw (1.2+0.6,-.5) node {$\cdots$};
\end{tikzpicture}
\captionof{figure}{The degree $1$ generator $t^{ij}$. This is an element of $D_n^1$, and in this representation,  the product is stacking vertically chord diagrams.} 
\end{center}

The {\it algebra of horizontal chord diagrams} 
$\mathcal{A}^{{}^{\mathrm{pb}}}_{n}
  \;:=\;
  \mathbb{Z}[\mathcal{D}^{{}^{\mathrm{pb}}}_{\!n}]/(\mathrm{2T}, \mathrm{4T})
$
is the associative unital algebra (non-commutative), graded by number of chords,
which is spanned by the monoid of horizontal chord diagrams \eqref{freemonoid}
and then quotiented by the degree $n$ piece of the double-sided ideal denoted $(2T,4T)$ generated by the relations
\begin{eqnarray*}
 t^{ij}t^{kl}=t^{kl}t^{ij}	& \text{when} &	|\{i,j,k,l\}|=4,
    \label{ijkl}	\\
  \left[t^{ik}+t^{jk},t^{ij}\right]=0	& \text{when} &	|\{i,j,k\}|=3.
    \label{ijk}
\end{eqnarray*}
Note that when $m=3$,
$H_*(\Omega\conf{n}(\bbr^m))$ is
canonically isomorphic, as a graded associative algebra, to $\mathcal{A}^{{}^{\mathrm{pb}}}_{n}$
by Theorem {2.3}. This is now related to finite type invariants of the pure braid groups, and to weight systems by taking the duals \cite{kohno1, dror, ss}.
To explain, start with $P_n:=\pi_1(\conf{n}(\bbr^2))$ the pure braid group on $n$-strands, and $\mathcal I$ the augmentation ideal of the group ring $\bbz P_n$. An invariant $v: P_n\rightarrow\bbz$ is said to be of order $k$ if the induced map $v:\bbz P_n\rightarrow\bbz$ factors through $\mathcal I^{k+1}$. The set of order $k$ invariants for $P_n$ with values in $\bbz$ has  structure of a $\bbz$-module and is identified with $$W_k(P_n) := \hbox{Hom}_\bbz (\bbz P_n/\mathcal I^{k+1},\bbz )$$
There are natural inclusions
$W_k(P_n)\subset W_{k+1}(P_n)$, and one sets
$W(P_n) = \bigcup_{k\geq 0}W_k(P_n)$. This is referred to as the space of \textit{finite type
invariants} for $P_n$ with values in $\bbz$.  There is also a natural ${\bf Z}$-module homomorphism
$$
w : W_k(P_n) \rightarrow {\rm Hom}_{\bf Z}(D_n^k, {\bf Z})
$$
where
$w(v)$ is called the weight system for $v \in W_k(P_n)$.
It is shown in \cite{kohno1} (and references therein)  that $w(v)$ vanishes on the ideal
$(2T,4T)$ in $\mathcal D^{pb}_n$, and that $w$ induces (formally) an isomorphism of groups for $m \geq 3$ 
$$
H^{k(m-2)}(\Omega {\rm Conf}_n({\bf R}^m); {\bf Z})
\cong W_k(P_n) / W_{k-1}(P_n).
$$
When $m$ is even, there is an isomorphism of Hopf
algebras $
H^*(\Omega {\rm Conf}_n({\bf R}^m); {\bf Z})
\cong W(P_n)
$ (\cite{kohno1}, Theorem 4.1).


\subsection{Cellular models}\label{cellularmodels}

The space $\conf{k}(\bbr^n)$ is not a CW-complex but it affords various homotopy equivalent CW models, some being deformation retracts. We describe the first such which stems from the existence of these tori inside the configuration space that generate its homology \S\ref{homology}.  The existence of such tori produces a \textit{small CW-model} (see \cite{hatcher}, 4.C) for $\conf{k}(\bbr^n)$ which is a union of products of spheres which can be mapped skeleton by skeleton to the configuration space.

\begin{quote}\textit{{\bf Theorem 2.5} \cite{fh, paolo1}: 
Let $n,k\geq 2$. Then $\conf{k}(\bbr^n)$ is homotopy equivalent to a minimal CW-complex
$Y$ with cells only in dimensions $q(n-1)$ for $q\in\{0, 1, \ldots , k-1\}$. The cells are attached
via generalized Whitehead products\footnote{The attaching map of the top cell of a product of spheres represents the higher order Whitehead product of
the embeddings of the factors}.}
\end{quote}
\vskip 5pt

In the case of three points, this is an earlier result of Massey \cite{massey1} who constructs a CW-deformation retract $Y$ of $\conf{3}(\bbr^n)$ with one cell in dimension $0$, cells $a_{1}, a_{2}, a_{3}$ in dimension $n-1$ and cells $e,f$ in dimension $2n-2$
with $\partial e = [\iota_1,\iota_2-\iota_1]$ and $\partial f = [\iota_3,,\iota_2-\iota_3]$. The class $\iota_j$ is the homotopy class of the $j$-th sphere in the $n-1$ skeleton, $j=1,2,3$. Massey deduces an explicit description of the homotopy type: this is the union of two copies of $S^{n-1}\times S^{n-1}$ along their diagonal (\cite{massey1}, Theorem III).

\noindent{\bf 2.9.1.} (FN complex) The earliest known ``cell complex'' for $\conf{k}(\bbr^n)$ is due to Fox and Neuwirth \cite{fn} for $n=2$, and it is a cell decomposition of the one-point compactification of $\conf{k}(\bbr^2)$. In a nutshell, 
this cell complex, dubbed ``FN-complex'', is based on the simple idea of applying the projection $(x,y)\mapsto x$ to configurations in $\bbr^2$, then ``filtering vertically'' by taking preimages. This gives a so-called \textit{stratification} of $\conf{k}(\bbc)$ by open cells. This is not a CW-decomposition, but instead, this gives the 1-point compactification of $\conf{k}(\bbc)^+\subset S^{2k}$ the structure of a CW complex where the unattached boundaries are now glued to the point at infinity. 
This cell structure has been extended to $\conf{k}(\bbr^n)^+$ for $n\geq 2$ by Nakamura \cite{nakamura} and Vassiliev \cite{vassiliev}. Vassiliev gave an explicit description of the chain boundaries mod-$2$, while a thorough treatment integrally is given in \cite{gs}. 
The cell structure is based on the lexicographic ordering of points in $\bbr^m$ using standard coordinates. This ordering gives rise to an ordering of points in a configuration. The resulting FN cell complex is equivariant with respect to the action of the symmetric group and has dimension $(n-1)(k-1)$. It induces a CW-decomposition of the one-point compactification of the unordered configuration space $C_k(\bbr^n)$ \cite{gs}. As a direct consequence of the existence of the FN equivariant complex, the homology of both $\conf{k}(\bbr^n)$ and $C_k(\bbr^n)$ vanishes beyond the bound  $(n-1)(k-1)$  (as already noted in \S\ref{homology}). The FN complex is used directly to settle the Nandakumar-Ramana-Rao conjecture in \cite{kha} (see \S\ref{coincidences}). An explicit description of this complex for $n=2$ is described in detail in \S\ref{cellunordered} in the unordered case. 

A convenient and attractive look at the Fox-Neuwirth cell decomposition in the case $n\geq 2$ is through the lens of posets \cite{ah}. We pointed out the FN decomposition of $\conf{k}(\bbr^n)$ (without joining the point at $\infty$) is not a CW-decomposition, but still, the topological boundary of each cell
meets only cells of lower dimension. This gives a stratification having the \textit{frontier condition}, meaning if a stratum $S_\alpha$ meets the closure
$\overline{S_\beta}$, then $S_\alpha\subset\overline{S_\beta}$. When this happens, one gets a partial ordering on strata $\alpha\leq\beta$. Consequently, the cells/strata of the Fox-Neuwirth decomposition form a poset which is indexed by planar rooted trees. The \textit{order complex} of this poset, or equivalently the realization of this poset (viewed as a category in which for each pair of objects $c, d$ there is at most one morphism $c\rightarrow d$, and in which the only isomorphisms are the
identity morphisms) is homotopy-equivalent to $\conf{k}(\bbr^n)$. A good description of the poset and a short proof of this realization result is in \cite{ah} and \cite{bfsv}. There is as well a clear description of this decomposition
in terms of trees in (\cite{kha}, \S5).

\noindent{\bf 2.9.2.} Other known CW models for configuration spaces are listed below: 
\begin{itemize}
\item From the theory of complements of subspace arrangements, the \textit{Salvetti complex} gives a CW model for complements of complexified arrangements. It has been used to give extensive computations of the homology of braid groups with coefficients \cite{callegaro}. 
\item The Bjorner-Ziegler complex gives models for more general subspace arrangements, and the special case of $\conf{k}(\bbr^n)$ is worked out explicitly in \cite{blz} who describe in great detail a regular CW complex model sitting inside the configuration space as a deformation retract.  This complex was better suited for the proof of the NRR-conjecture (see \S\ref{coincidences}), as opposed to the use of the FN-complex in \cite{kha}.
\item D. Tamaki \cite{tamaki} develops CW models for complements of subspace arrangements, like a real version of the Salvetti complex, given in terms of the ordered complex of suitable posets.
\item An attractive (yet unexploited) approach using Morse theory is discussed in (\cite{ryan}, chapter 3). It is explained that the ``electrostatic potential'' function $V: C_k(M)\rightarrow \bbr$, for $M$ smooth and closed, induced from the $\mathfrak S_k$-equivariant map $\conf{k}(M)\rightarrow \bbr$
$$V(x_1,\ldots, x_k) = \sum_{1\leq i<j\leq k}{1\over |x_i-x_j|^2}$$  
is a proper map, and its critical values form a bounded subset of $\bbr$. Consequently, $C_k(M)$ has a compact deformation retract given by the flow of $-\nabla V$ (see also \cite{carlsson}).
\item In \cite{wilshire}, the author constructs an explicit simplicial
complex structure for $\conf{k}(X)$, when $X$ is a finite simplicial complex to start with. Simplicial and multisimplicial models that work in the operadic context are given in \cite{berger, kashi} and \cite{paolomedina} (see references therein). This is a long story whose starting point is {\bf 2.4.3}.
\end{itemize}


\section{Identical particles in Euclidean space}\label{cellunordered}

We discuss unordered configuration spaces of $\bbr^n$ here, and in \S\ref{unorderedconfigs} we extend to general manifods.

\noindent{\bf 3.1.} The first useful observation about the $C_k(\bbr^n)$ is that they classify finite covering spaces over finite CW complexes. Let $X$ be any such complex, and embed it in $\bbr^N$ for some $N$. Let $\pi : E\rightarrow X$ be a degree $k$ covering space.
By viewing $X\subset\bbr^N$, we  associate to $x\in X$ the set of preimages $\pi^{-1}(x)\in C_k(\bbr^N)$. This gives a continuous map $X\rightarrow C_k(\bbr^N)$. There are inclusions $C_k(\bbr^N)\hookrightarrow C_{k}(\bbr^{N+1})$ and the direct limit is denoted by $C_k(\bbr^\infty)$. The homotopy class of the composite map $X\rightarrow C_k(\bbr^\infty)$ determines the isomorphism class of the covering over $X$. This gives a meaning to the statement that $C_k(\bbr^\infty)$ is the classifying space $B\mathfrak S_k$.

\noindent{\bf 3.1.} The (co)homology of $C_k(\bbr^n)$ is of paramount importance in various physical applications as discussed in the introduction. In the particular case $k=n=3$, an amusing computation of \cite{bbs} shows that
$$H^*(C_3(\bbr^3),\bbz) = \bbz, 0,\bbz_2, 0,\bbz_3\ ,\ \ \hbox{for $*=0$ to $4$ respectively, and zero for $*>4$}$$
The authors obtained this result in their study of systems of identical spinless particles moving in $\bbr^3$ possessing an $SU(n)$ gauge symmetry. They give a fairly picturesque description of the dual homology class in dimension $3$ which we want to explain, despite the narrow margin we have in this survey. First, by the universal coefficients theorem, one can show that $H_3(C_3(\bbr^3),\bbz)=\bbz_3$ and $H_q=0$ for $q>3$. This cyclic generator in degree $3$ exists as follows: view $C_3(\bbr^3)$ as the space of all triangles in $\bbr^3$, including collinear ones. One can then realize the generator of $H_3$ as the $3$-manifold of
equilateral triangles of unit side and fixed centroid, $0$ say. Let $W$ is the $4$-manifold with boundary, consisting
of isosceles triangles of unit base,  centroid $0$, and height $h, 0 \leq h \leq \sqrt{3} / 2$,
then as $h$ tends to $\sqrt{3} / 2$, the three isosceles triangles lying in the same plane and having bases at $60^o$ to each other all approach the same equilateral triangle (see figure) and so, in homology, $\partial W=3D$. In fact, \cite{bbs} give an explicit strong deformation retraction of $C_3(\bbr^3)$ onto $W\cup D$, a CW-complex. Note that \cite{aj} analyze $H^*(C_3(\bbr^n);\bbf_p)$ directly through the $3!$-covering
$\conf{3}(\bbr^n)\rightarrow C_3(\bbr^n)$. This method however has major limitations for more points. 

\begin{figure}[htb]
\centering
\includegraphics[width=2.5in]{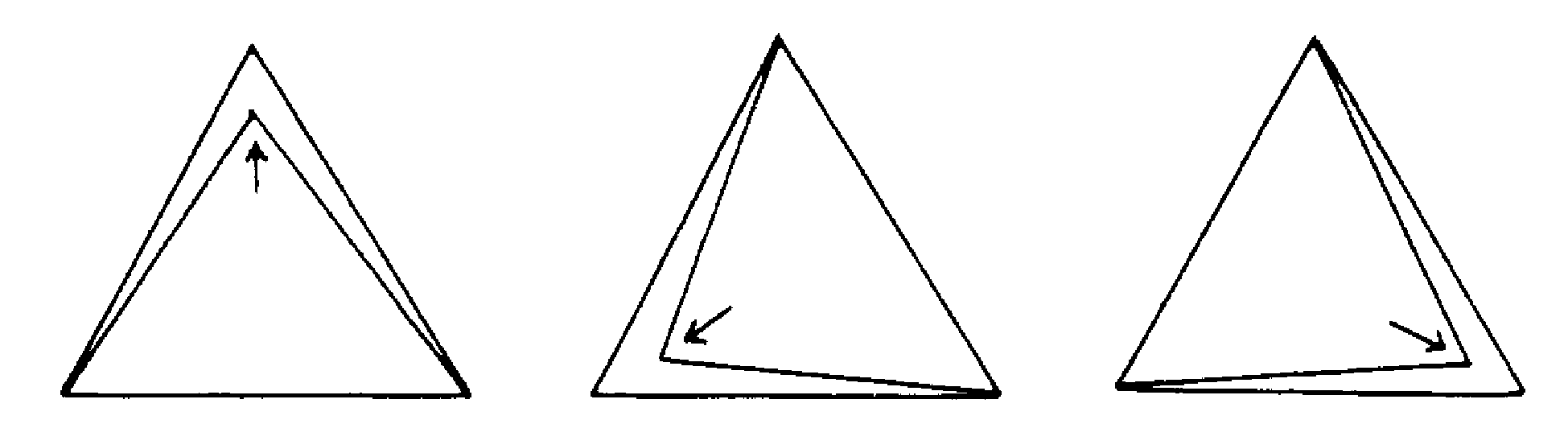}
\caption{Three isosceles triangles approach the same equilateral triangle \cite{bbs}.}
\label{triangle}
\end{figure}

\noindent{\bf 3.2.}  Whereas $H_*(\conf{k}(\bbr^n);\bbz)$ is torsion-free, $H_*(C_k(\bbr^n);\bbz)$ is almost all torsion! One proves
$$H_i(C_k(\bbr^n);\bbq) = \begin{cases} 
\bbq& \hbox{if}\ i=0\ \hbox{or}\ i=n-1\ \hbox{is odd}\\
0&\hbox{otherwise}
\end{cases}$$
This can be seen using Theorem 2.1 and formula \eqref{invariants}: the non-torsion class when $n$ is even is the image of the (only) $\mathfrak S_k$-invariant class $\sum_{i<j} [x_i,x_j]\in H_{n-1}(\conf{k}(\bbr^n))$. Geometrically, this class is the image of the orientation class of $C_2(\bbr^n)\simeq\bbr P^{n-1}$ embedded in $H_{n-1}(C_{k}(\bbr^n))$ for $k\geq 2$, $n$ even. 

For more general field coefficients,
the optimal approach to the homology of $C_k(\bbr^n)$ is to observe that $\coprod_{k\geq 0}C_k(\bbr^n)$ is an algebra over the operad of the little disks $\mathcal D_n$.
It follows that for any field coefficients,
$\displaystyle H_*\left(\coprod C_{k}(\bbr^n);\bbf \right)= \displaystyle \bigoplus_{k\geq 1} H_*(C_k(\bbr^n);\bbf)$ is an algebra over the Poisson operad $e_n=H_*(\mathcal D_n)$. It is in particular a bigraded ring. The product map, which is induced from $x_1x_2$, as explained in \S\ref{operads}, is the same as the induced map in homology of the standard concatenation product
\begin{equation}\label{product}
C_{k_1}(\bbr^n)\times C_{k_2}(\bbr^n)\longrightarrow C_{k_1+k_2}(\bbr^n)  
\end{equation}
which pushes the first configuration into the half-space $\bbr^n_+:=\{(x_1,\ldots, x_n)\in\bbr^n, x_n>0\}$, the second configuration into $\bbr^n_-$, and then concatenates them. Further to this multiplicative structure, there are homology operations (see \S\ref{operads}), which are enough to compute these groups. A full answer with mod-$p$ coefficients is in \cite{fredbible}. 

\noindent{\bf 3.3.} We discuss $H_*(C_k(\bbr^2);\bbf)$ in detail (i.e. the case $n=2$), as this is the homology of Artin's braid group $B_k$. This homology informs in particular and completely on the homology of second fold loop spaces of spheres (see Example {\bf 5.5.1}). An effective method to compute these groups is by means of a symmetrized version of the FN (Fox-Neuwirth) cell complex of \S\ref{cellularmodels} given by Fuks. The starting point is the projection $\bbr^2\rightarrow\bbr$ onto the first coordinate which maps any configuration in $C_n(\bbr^2)$ to a finite set of points in $\bbr$. The points in $\bbr$ are then ordered, and by counting the number of preimages of each of these points we get an \textit{ordered partition} (or \textit{composition}) of
$n$ (see Fig. \ref{confpicture}). 
\begin{figure}[htb]
\[ \vcenter{	\scalebox{0.8}{
	\xy
		(-30,14)*{\bullet}; (-30,-8)*{\bullet}; (-30,-4)*{x_1}; (-30,0)*{|};
		(-10,12)*{\bullet}; (-10,-17)*{\bullet}; (-10,17)*{\bullet}; (-10,-4)*{x_2}; (-10,0)*+{|};
		(5,10)*{\bullet}; (5,-10)*{\bullet}; (5,15)*{\bullet}; (5,-14)*{\bullet}; (5,5)*{\bullet}; (5,-4)*{x_3}; (5,0)*+{|};
		(25,10)*{\bullet}; (25,-10)*{\bullet}; (25,17)*{\bullet}; (25,-4)*{x_4}; (25,0)*+{|};
		{\ar@{-} (-50,0)*{}; (50,0)*{}};
		{\ar@{-} (0,-20)*{}; (0,20)*{}};
		(40,18)*{\mathbb{R}^2};
	\endxy}
	}
\]
	\caption{An element of $C_{13}(\bbr^2)$ corresponding to the composition $[2,3,5,3]$. The set of all such configurations, with the same composition, is a cell of dimension $13+4=17$.} \label{confpicture}
\end{figure}
The set of all points in $C_n(\bbr^2)$ mapping to the same composition $[n_1, \ldots, n_s]$, $n = n_1 + \ldots + n_s$, together with the point at $\infty$, is an $n + s$-dimensional cell in the one-point compactification $C_n(\bbr^2)^+$. This cell is also denoted by $[n_1, \ldots, n_s]$ \cite{schiessl}. All such cells, together with the vertex at $\infty$, make up a cellular decomposition of $C_n(\bbr^2)^+$. Using Poincar\'e-Lefschetz duality 
$H_i(C_n(\bbr^2)) = \tilde H^{2n-i}(C_n(\bbr^2)^+)$, this cell complex can be used to compute the cohomology of the unordered configuration spaces.
This was precisely the method used successfully by Fuks   \cite{fuks} to compute $H^*(C_n(\bbr^2),\bbz_2)$ (mod-$2$ coefficients), and immediately afterwards by Vainshtein with mod-$p$ coefficients \cite{vainshtein}. 
The nicest way to express that computation is to describe the bigraded ring $\bigoplus_{d,n}H_d(C_n(\bbr^2);\bbf )$. The product \eqref{product} in this case can be seen to be homotopic to the map induced at the level of classifying spaces from the map $B_{k_1}\times B_{k_2}\rightarrow B_{k_1+k_2}$ which juxtaposes braids. The computation of Fuks can now be stated as follows \cite{filippo, fredbible} (with a somewhat shorter proof in \cite{callegaro}).

\begin{quote}\textit{{\bf Theorem 3.1} \cite{fuks}:
$\bigoplus_{d,k} H_d(C_k (\bbr^2),\bbf_2)$ is a polynomial ring on generators $x_i$ of homological degree $\deg x_i = 2^i-1$ and internal degree $\dim x_i=2^i$, $i\geq 0$. In particular
$$H_d(C_{k}(\bbr^2);\bbf_2)\cong \bbf_2[x_0,x_1, x_2,\ldots ]_{\dim = k, \deg =d}$$}
\end{quote}
\vskip 5pt

This computation has numerous applications to geometry at large, including geometric analysis (see \cite{malchiodi}).
The mod-$p$ computation \cite{vainshtein}, for odd $p$, takes the form
$$H_d(C_{k}(\bbr^2);\bbf_p)\cong \left(\bbf_p[h,y_1,y_2,\ldots, ] \otimes \Lambda  (x_0,x_1, x_2,\ldots )\right)_{\dim = k, \deg =d}$$
where the second factor in the tensor product is the exterior algebra over the
field $\bbf_p$ with generators $x_i$, $i\geq 0$. The generator $h$ has $\dim h = 1$ and homological degree $\deg h = 0$. The generator $y_i$ has $\dim y_i = 2p^i$
and homological $\deg y_i = 2p^i - 2$, and finally
$\dim x_i = 2p^i$ and $\deg x_i = 2p^i - 1$. Over a field $\bbf$ in general, the following concise form is stated and proved in (\cite{etw}, Corollary 3.9): there is an isomorphism of $\bbf$-algebras
$$H_*\left(\coprod_k C_k(\bbr^2),\bbf \right)\cong Ext^*_{\Lambda (x_1)\otimes\Gamma (y_2)}(\bbf ,\bbf )$$
where $\Lambda (x_1)$ is an exterior algebra on a $1$-dimensional generator, and $\Gamma (y_2)$ is a divided power algebra on a $2$ dimensional generator. One recovers Theorem 3.1 this way \cite{etw} and  $H_*(\Omega^2S^2,\bbf_2)$ (see {\bf 4.2.3}). 

\noindent{\bf 3.3.} The FN complex, in the unordered case, is used to determine the explicit cohomology groups for configuration spaces of spheres for example \cite{schiessl, napolitano}. Furthermore, this FN-complex offers an effective way to compute the cohomology of the symmetric group $\mathfrak S_k$. By taking inverse limits, \cite{gs, gss} construct an FN complex for $C_k(\bbr^\infty)\simeq B\mathfrak S_k$, and
develop the theory of ``basic Fox-Neuwirth classes'' whose cup and transfer products yield all mod-two cohomology of symmetric groups. More precisely, they succeed in describing $H^*(\coprod B\mathfrak S_n,\bbf_2)$ as a \textit{Hopf ring}, a notion introduced in \cite{st}.


\section{Particles and labels}
\label{labeled} 

The study of (unordered) configuration spaces in topology is motivated in large part by the fact that they appear as approximations of certain function spaces with source a manifold. Two ingenious ideas that happen to work very well in tandem consist of adding \textit{labels} to the point configurations and then \textit{scanning}\footnote{Terminology due to G. Segal \cite{segalacta}.} the manifold to obtain sections of bundles of ``local configurations'' over the manifold. This ``local to global'' construction produces \textit{scanning type} maps from configuration spaces with labels to section spaces over $M$. Remarkably such maps are almost always homology equivalences through a range (in other words, they induce isomorphisms on homology groups up to a maximal degree that depends on the number of points in configurations and the dimension of the manifold). The germ of this idea goes back to the work of R.J. Milgram and P. May \cite{may1} who show that the unordered configuration spaces of $\bbr^n$ with labels in a pointed connected space $X$ is a model for the iterated loop space $\Omega^n\Sigma^nX$  (more below).  It is no exaggeration to say that this construction and this theorem have been among the most impactful in algebraic topology. Extension of the work of May and Milgram was given by Segal primarily, McDuff and B\"odigheimer \cite{segal, mcduff, bodig2} who related configuration spaces on general manifolds to section spaces of some bundles associated to their tangent spaces (see \S\ref{scanning}).

Taking more exotic configuration spaces led to defining spaces of particles on manifolds, where each particle carries a label in an abelian (or homotopy abelian) \textit{partial monoid}. The structure of the labels characterizes the topology of the space in the following descriptive sense: two particles are allowed to coalesce and take the same position if and only if their labels are allowed to add up in the partial monoid. Scanning such labeled configuration spaces leads to more general section spaces over $M$ of bundles with fibers dubbed the ``classifying space of the monoid'' (more below), and the same homology equivalences are again found to hold (special care is needed when the monoid is disconnected, see {\bf 4.2.3}). From this point of view, the theory gives a mathematical framework for studying interacting particles, a situation relevant to physics. The germ of this idea can be found in \cite{segalacta, guest} and was essential in understanding the topology of some spaces of rational maps from Riemann surfaces. A first global treatment appeared in \cite{kallel2}, with a connection to Poincar\'e duality, and in \cite{paolo0}, with a connection to operads. Both points of view got integrated into the \textit{topological chiral homology} and non-abelian Poincar\'e duality of Jacob Lurie (\cite{lurie}, Chapter 5).
Lurie's construction is general and categorical, and is a homotopy-theoretic analogue of the Beilinson-Drinfeld theory of chiral homology for chiral algebras \cite{beidrin}. The Lurie-Salvatore theory now takes the name of \textit{factorization algebra} as developed by Ayala and Francis \cite{af} (see references therein), and exploited in \cite{miller}. Further development  in the equivariant setting or in the stable category is in work of  Zou et al (eg. \cite{zou}).

\subsection{Scanning maps and loop spaces}\label{scanning} To the following question the author asked Graeme Segal one day at the IHES: "What is a favorite result of yours?'', Segal answered: ``Possibly the scanning map!''.  This is discussed in {\bf 4.2.2}, but an ancestor of this construction is the \textit{electric field} map \cite{segal, aj} which takes the form
\begin{equation}\label{scanningmap}
\phi^n_k : C_k(\bbr^n)\lrar\Omega^n_kS^n
\end{equation}
where $\Omega^n_kS^n$ is the space of all based maps
$f: S^n\rightarrow S^n$ having topological (Brouwer) degree $k\in\bbz$. The space $\Omega^n_kS^n$ is a connected component of the space of all maps $\Omega^nS^n$, and $\Omega_0^nS^n$ consists of all maps homotopic to the constant map. The construction of $\phi^n_k$ in \eqref{scanningmap} goes as follows \cite{aj}: 
view a configuration $\{x_1,\ldots, x_k\}$ of $k$ points in $\bbr^n$ as $k$ ``electrically charged'' particles and 
associate to them the corresponding ``electric field'' $E$ which is a function on $\bbr^n$
taking values in the one-point compactification $\bbr^n\cup\{\infty\}=S^n$. Since $|E|\rightarrow 0$, at $\infty$ the electric field extends to a map between the compactification
 $\overline{E}: S^n \rightarrow S^n$ with $\overline{E}(\infty) = 0$, so it is based.
It has degree $k$ because the preimage of $\infty$ (a regular value) consists of precisely the points of the configuration $\{x_1,\ldots, x_k\}$.
In this construction it is immaterial
(up to homotopy) what law of force or potential one takes as long as it is linear and
the field of a single charge has the properties: (i) $E\rightarrow 0$ as $\infty$, $E\rightarrow\infty$ at the source, and $E$ is spherically symmetric. 
Thus we could take the potential to be $\log r$, of $1\over r^m$, $m\geq 1$.
If $n= 1, 2$ or $4$ so that $\bbr^n$ can be identified with one of the basic division algebras,
the scanning map can be defined by the function $\displaystyle \phi^n_k(x) = \sum_{i=1}^k{1\over x-x_i}$,
where $(x-x_i)^{-1}$ is the inverse in the appropriate field. 
The map \eqref{scanningmap} is referred to as the \textit{cohomotopy charge map} in the physics literature \cite{ss}.
Its topological significance is made clear by the following fundamental result of G. Segal \cite{segal}.

\begin{quote}\textit{{\bf Theorem 4.1} \cite{segal}: The map $\phi_k^n: C_k(\bbr^n)\rightarrow\Omega^n_kS^n$ is an embedding in homology, and induces an isomorphism in homology in degrees $q$, where $0\leq q\leq {k\over 2}$.}\end{quote}
\vskip 5pt

Note that all connected components of $\Omega^nS^n$ are homotopy equivalent. Let's check the theorem when $k=2$ and $n> 2$: in that case, $C_2(\bbr^n)\simeq\bbr P^{n-1}$ with $H_1=\bbz_2$, and $H_1(\Omega^n_0S^n,\bbz)=\pi_1(\Omega^n_0S^n)\cong \pi_{n+1}S^n\cong\bbz_2$ as well since $n> 2$. Observe that since we have inclusions $C_k(\bbr^n)\rightarrow C_{k+1}(\bbr^n)$, the direct limit $C_\infty(\bbr^n)$ has the homology of $\Omega^n_0S^n$. In the case $n=2$, $\Omega^2_0S^2\simeq\Omega^2S^3$, $\displaystyle BB_\infty = \lim_{\rightarrow}BB_n$ and $H_*(BB_\infty)\cong H_*(\Omega^2S^3)$; this latter isomorphism being a famous instance of a theorem of Barratt and Priddy. A notable restriction of Segal's theorem is that one cannot replace homology by homotopy. This is clear by looking at fundamental groups, since
$\pi_1(C_k(\bbr^n))\cong\mathfrak S_k$  while $\pi_1(\Omega^n_kS^n)\cong\bbz_2$ for $k\geq 3$ and $n> 2$ (more in {\bf 4.2.3}).
The physics significance of the electric field map was investigated by \cite{aj} in relation to some spaces of selfdual connections on principal $SU (2)$-bundles on $S^4$ (the ``instantons''). 

\subsection{Labels and interactions}\label{maymilgram}
Let $M$ be a background space and $X$ a CW-complex with basepoint $*$. The
\textit{labelled configuration} space $C(M;X)$ is defined as the quotient space
\begin{equation}\label{maymilgrammodel1}
C(M;X) = \coprod_{k=0}^\infty \conf{k}(M)\times_{\mathfrak S_k}X^k/_\sim
\end{equation}
where the symmetric group on $k$-letters, $\mathfrak S_k$ acts diagonally on both factors, and
where the relation $\sim$ is generated by
$$(m_1, \cdots ,m_j)\times_{\mathfrak S_j}(x_1,\cdots, x_j) \sim (m_1, \cdots ,m_{j-1})\times_{\mathfrak S_{j-1}} (x_1,\cdots, x_{j-1}) $$
if $x_j=*$. Here $k=0$ corresponds to point. For example, if $X=S^0$, then $C(M;S^0)= \coprod_{k\geq 0}C_k(M)$. This space is filtered by the $C_n(M;X)$, where $n$ is the maximum size of a configuration, with $C_0(M;X)$ being basepoint, and $C_1(M;X) = M\times X/M\times *$ (this is the half-smash product often denoted $M\ltimes X$). Its homotopy type depends only on the proper homotopy type of $M$ and the homotopy type of $X$. In practice, the background space $M$ is always assumed to be a smooth (connected) manifold, and $X$ a based CW complex. 
We now consider the fiberwise one-point compactification $\dot{TM}$ of the tangent bundle $TM$ of $M$ and we write $\dot{TM}\wedge_M X$ the \textit{fiberwise smash product} of $\dot{TM}$ with $X$. There is a bundle projection $\pi: \dot{TM}\wedge_M X\rightarrow M$, with fiber the $n$-th suspension $\Sigma^nX=S^n\wedge X$ where $\dim M=n$. Note that this bundle has a preferred section ``at $\infty$'' $s_\infty$ sending $x\in M$ to the compactification point at $\infty$ of its fiber. Write $\Gamma^c_M( \dot{TM}\wedge_M X)$ the subspace of all \textit{compactly supported sections} of this bundle  consisting, by definition, of all sections which, outside a compact set of $M$,  agree with $s_\infty$. The following result was a real breakthrough.

\begin{quote}\textit{{\bf Theorem 4.2} (Approximation Theorem) \cite{mcduff, bodig}: Let $M$ be a connected smooth manifold $M$, and let $X$ be a pointed path-connected CW complex. Then there is a weak
homotopy equivalence
$$C(M; X)\lrar \Gamma^c_M({\dot TM}\wedge_MX)$$
}\end{quote}
\vskip 5pt

This is referred to as an \textit{approximation} because configuration spaces provide a ``small combinatorial'' model for the homotopy type of section (or mapping) spaces, and have been used as such to understand their homology and stable homotopy type (see \S\ref{stability}).
A very useful situation is when  $M$ is the interior of a manifold with boundary $\overline{M}$, which is assumed to be \textit{parallelizable} (i.e. to have a trivial tangent bundle). In that case $\dot{T M}\wedge_MX$ is the trivial bundle over $M$ with fiber $\Sigma^nX$, and sections become maps into the fiber. In that case, Theorem 4.2 yields a homotopy equivalence
\begin{equation}\label{maymilgram2}\xymatrix{C(M;X)\simeq C(\overline{M}; X)\ar[r]^-\simeq &\hbox{Map}(\overline{M}/\partial \overline{M},\Sigma^nX)}\ \ ,\ \dim M=n\ ,\ X\ \hbox{connected CW-complex}
\end{equation}
The special case of the open disk (homeomorphic to Euclidean space), with a boundary sphere, takes the form $C(\bbr^n; X)\simeq \Omega^n\Sigma^nX$ and is known as the \textit{May-Milgram model for iterated loop spaces}. 

\noindent{\sc Example} (from physics):  For any natural numbers $d< p\in\bbn$, the labeled configuration space $C(\bbr^d , D^{p-d})$ of points in $\bbr^d$ with labels in the disk $D^{p-d}$ has the homotopy type of the \textit{$p$-cohomotopy cocycle} space
$\boldpi^p((\bbr^d)^{cpt})$ of the one-point compactified $d$-dimensional Euclidean space $(\bbr^d)^{cpt}$ \cite{ss}).

Well-known applications of Theorem 4.2 and \eqref{maymilgram} in topology include the stable splitting of mapping spaces \cite{bodig} (see {\bf 5.5.1}), or Fred Cohen's computation of the homology of iterated loop spaces $\Omega^n\Sigma^nX$ mod-$p$, for $X$ a Hausdorff, locally compact, well-based space \cite{fredbible}. Huỳnh Mùi carried out similar computations by means of Nakamura's decomposition (see \S\ref{cellularmodels})  \cite{muiagt}. Applications to the cohomology of the varieties of regular semisimple elements in the simple complex Lie algebras of classical type are in \cite{lehseg}. Theorem 4.2 is also used to compute the Pontryagin ring structure of loop spaces on polyhedral products \cite{natalya}, and the homology of \textit{free loop spaces} of spheres \cite{paolo3}. These models provide key geometrical information that traditional methods to compute these loop invariants (cobar construction, Hochschild cohomology) often lack.

\noindent{\bf 4.2.1.} (Topology of Interactions). A natural way to extend the above constructions and results is to label points of a configuration not simply by a space $X$, but by a \textit{ partial abelian monoid} $P$ or ``PAM'' (generally disconnected or discrete). One should think of the label space as \textit{internal states} that can interact if particles come together. To be more precise, define a topological abelian partial monoid, or PAM $P$, to be any subset of a topological abelian monoid. As in \cite{segal} where the notion of a PAM is introduced, $P$ is always assumed to have a $0$. To $P$, and any closed pair $(M,M_0)$, we can associate the space $C(M, M_0; P)$ of configurations of distinct points in $M$ with labels in $P$ and \textit{vanishing} on $M_0$. The way to view an element of this space is as a finite set of unordered particles on $M$, carrying labels, thus of the form $\{(x_1,p_1),\ldots, (x_k,p_k)\}, x_i\in M, p_i\in P$, with some identifications. The resulting topology on this set of labeled configurations is such that if a labeled particle in a configuration moves into $M_0$, it is suppressed, and when several particles with labels $p_1, . . . , p_k$ approach one point in $M$, they either collide and form one particle with label $p_1+p_2+\cdots +p_k$ if $(p_1, \cdots, p_k)$ are composable in $P$, or never reach this point simultaneously. Additionally, points of the form $(x,0)$ are identified (i.e. $0$ is a \textit{sink}). Here are some main examples of $P$ and $C(M,P)$  \cite{cnossen, dobrinskaya, kallel2, mcduff, segalacta, shimakawa}
\begin{itemize}
    \item We can view a based space $X=P$ as a trivial partial monoid (no compositions but with $0$=basepoint); $C(M,P)=C(M,X)$ is the space \eqref{maymilgrammodel1}.
\item If $P$ is taken to be the abelian monoid of positive integers $\bbn$, then $C(M,P)=\bigsqcup_{n\geq 0}\sp{n}(M)$, the abelian monoid  generated by the points of $M_+$, with $+$ being an added basepoint.
\item If $P = \{0,1\}=S^0$, then
$C(M;P) = \bigsqcup_{k\geq 0}C_k(M)$. 
\item Rational maps of the Riemann sphere $f: \bbp^1\rightarrow\bbp^1$, sending $\infty\in\bbp^1=\bbc\cup\{\infty\}$ to $1$, are  of the form $f(z) = {p(z)\over q(z)}$, where $p,q$ are monic relatively prime polynomial of same degree. Any such maps is then entirely determined by the zero set of $p$ (zeros) and the zero set of $q$ (poles). Such a map can then be viewed as an element in $C(\bbc;\bbn\vee\bbn)$, where $P=\bbn\vee\bbn = \{(n,m)\in\bbn\times\bbn\ |\ nm=0\}$ \cite{segalacta}. In this $P$, the left copy of $\bbn$ keeps track of multiplicities of poles, while the right copy of $\bbn$ keeps track of multiplicities of zeros, with the understanding  that zeros can coalesce, poles can do as well, but a pole and a zero cannot coincide, so $(n,0)$ and $(0,m)$ cannot add up in $P$ if $nm\neq 0$. 
\end{itemize}

Care has to be taken in defining partial monoids since different papers have different conditions \cite{cnossen, okuyama, shimakawa}. A common requirement is the following: $(a+b)+c$ exists in $P$ if and only if $a+(b+c)$ exits, and they sum to the common $a+b+c$. The category of PAM's is discussed in the aforementioned references, and the construction $C(M,P)$ is preserved under homotopies in the category of PAMs (\cite{okuyama}, Lemma 2.5). More on this in {\bf 4.2.3}.

\noindent{\bf 4.2.2.}\ (Scanning). One now relates these labeled configurations to section or mapping spaces via scanning, as in Theorem 4.2. We describe the main idea behind this construction \cite{segalacta, kallel2} (see the appendix of \cite{zou} for a thorough treatment, including the equivariant case). Start with a configuration $\zeta\in C(M,P)$. Scanning means that at every $x\in M$, we zoom in to see a disk-like neighborhood $U_x\cong D^n$, $n=\dim M$, and part of the configuration of $\zeta$ (a subconfiguration) that lies in this disk. As these disk neighborhoods change (as if scanning with a microscope!), points in subconfigurations vanish or appear at the boundaries of the disks. Thus, with a fixed $\zeta\in C(M;P)$, and a variable $x$ with a disk neighborhood $U_x\cong D^n$, we associate the subconfiguration $\zeta\cap D_x$, viewed as an element of $C(D^n,\partial D^n,P)$. This latter space is also denoted by $B_n(P)$; the \textit{$n$-fold deloop} of $P$ \cite{dobrinskaya}. As a result, we produce a map from $C(M,P)$ to the space of sections of a bundle associated to $TM$ over $M$, with fiber $B_n(P)$. The crux of the theory asserts that this correspondence is a (weak) homotopy equivalence under connectedness conditions, or a homology equivalence after ``group completion''. More next. 

\noindent{\bf 4.2.3.} (Approximation and Group Completion). We discuss the case of a (partial) abelian monoid $P$. When $P=X$ is the trivial monoid, $0=x_0$ the basepoint, then Theorem 4.2 is the optimal result. When $P$ is non-trivial connected, this is treated in Theorem 4.3 next. The interesting case here is when $P$ is not connected, eg. discrete (see {\bf 4.2.1}). In this case, $C(M,P)$ falls generally into components, of different homotopy types (unless $P$ has homotopy inverses), and an analog of Theorem 4.2 is not true as is. One can approach the problem by stabilizing or by the ``group completion'' theorem. The illustrative fundamental example is $A:=C(\bbr^n,S^0)=\coprod_{k\geq 0} C_k(\bbr^n)$, i.e.  $M=\bbr^n, P=S^0$ \cite{segal}. The scanning map
$\displaystyle A\rightarrow\Omega^nS^n$ is not a homotopy equivalence since $\pi_0(A)=\bbn$ and $\pi_0(\Omega^nS^n)=\bbz$. At the level of $\pi_0$, this is the inclusion of a commutative monoid $\bbn$ into its (Grothendieck) group completion $\bbz$. Segal's theorem 4.1 indicates that scanning after stabilizing, $st: \bbz\times \displaystyle C_\infty (\bbr^n):= \bbz\times\underset{{}^{\longrightarrow k}}{\mathrm{\smaller colim}} C_k(\bbr^n)\rightarrow\Omega^nS^n$ is a homology equivalence, and in fact we see that, in this case, we have a localization
\begin{equation}\label{firsttgc}
H_*(A)[\pi^{-1}]\cong H_*(\bbz\times C_\infty(\bbr^n))\cong  H_*(\Omega^nS^n,\bbz)\ \ \ \ ,\ \ \ \ \pi=\pi_0(A)
\end{equation}
This is an instance of what is called a ``topological group completion" (TGC for short) summarized as follows (\cite{may2}, \S4): A TGC of an $H$-space $Y$ (eg. topological monoid) is an $H$-space $Z$ and a $H$-map $g:  Y\rightarrow Z$ such that $\pi_0Z$ is a group and $g_*: H_*(Y; k)\rightarrow H_*(Z; k)$ is a localization of the Pontryagin ring $H_*(Y; k)$ at its multiplicative submonoid $\pi_0(Y)$ for any commutative coefficient ring $k$. Intuitively, $Z$ is obtained from $Y$ by adjoining inverses to components. 

\noindent{\sc Example}: By Theorem 3.1 and \eqref{firsttgc}, we see that 
$H_*(\Omega^2S^2; \bbf_2) \cong \bbf_2[x_0, x_2, . . . , x_{2^i},\cdots ][x_0^{-1}]$, $\deg x_i=2^i-1$.
According to \cite{fredbible}, $x_{2^i} = Q_1(x_{2^i-1} )$, where $Q_1$ and its iterates are the (only) Kudo-Araki operations applied to $x_0$ (see {\bf 2.4.2}).

The fact that $\bbz\times C_\infty(\bbr^n)$ is a TGC for $\bigsqcup_{k\geq 0} C_k(\bbr^n)$ has an explanation and a generalization in \cite{segaldusa}. Recall that if $A$ is a monoid, it has a classifying space $BA$ and there is always a natural map $\psi : A\rightarrow\Omega BA$. McDuff and Segal prove that if $\pi=\pi_0(A)$ in the center of $H_*(A)$, then $\psi$ is a TGC. Moreover, if $\{a_i\}$ are representatives of generators of $\pi_0(A)$, $m_i: A\rightarrow A$ multiplication by $a_i$ maps, then there is an induced map $\hbox{hocolim}_{m_i}A\rightarrow \Omega BA$ which is a homology equivalence.
This can be applied to $A=C(M,P)$ when $M=\bbr^n$ since this labeled configuration space tends to have an action of the little disk operad $\mathcal D_n$, therefore it is a monoid up to homotopy (in fact it is much more, an $E_n$-space). Here too, the group completion of $C(\bbr^n,P)$ is an $n$-fold loop space on an explicit space depending on $P$. When $M=\bbr^\infty$, one obtains infinite loop spaces, and thus homology theories \cite{shimakawa, shimakawa2}.

For more general non-compact manifolds $M$ of dimension $n$, stability isomorphisms can be stated and proven (see {\bf 4.2.4}, also \cite{cnossen}). We illustrate with the case $P=S^0$, $C(M,S^0)=\bigsqcup_{k\geq 0} C_k(M)$, and $M$ parallelizable, which is the interior of a manifold with boundary. Scanning yields a map $C_k(M)\rightarrow \hbox{Map}_k^*({\overline{M}/\partial \overline{M}},S^n)$, where the right hand side consists of all \textit{based, degree $k$} maps into the sphere $S^n$, $n=\dim M$. This induces a homology isomorphism in a range. As explained in \S\ref{stability}, there are \textit{stabilization} maps $C_k(M)\rightarrow C_{k+1}(M)$, and the main result is that the homology of $\displaystyle \underset{{}^{\longrightarrow k} }{\mathrm{\smaller colim}} C_k(M) =: C_\infty (M)$ 
is the homology of (any) component of $\hbox{Map}^*({\overline{M}/\partial \overline{M}},S^n)$. Similar results hold for other discrete labels $P$ like wedges of $S^0$ or for $P_m=\{0,1,\ldots, m\}$ (the sum is partial since it does not exist in $P_m$ if it is bigger than $m$), see \cite{dobrinskaya, kallel2}. Interesting applications to loop space homology is in \cite{dobrinskaya}.

\noindent
{\bf 4.2.4} (Non-abelian case, factorization homology). When $P$ is abelian, two labeled particles  can approach each other in any direction, the end result, if they can collide, is a new particle with the labels summing up. In the non-abelian case, working on a manifold, one must keep track of all ways particles can collide and the ways labels can combine. The different ways of colliding are homotopic, but there can be several essentially different homotopies between two ways of colliding. Keeping track of this hierarchy of homotopies is the essence of operads. Involving the operad of compactified configuration spaces gives also an organized way to describe collisions of more than two particles. This discovery by Salvatore, and later independently by Lurie, takes nowadays the name of \textit{factorization homology}. 
We need define $C(M,P)$ in this context. A framed $n$-monoid $P$ means precisely an algebra over the Fulton-MacPherson operad $\FM_n$ defined and discussed in \S\ref{compactification}. A $1$-monoid turns out to coincide with an $A_\infty$-space, and $B_1(P)$ is Stasheff's classifying space of $P$. If $M$ is a closed $n$-dimensional framed manifold, then $\FM_M:=\{\conf{k}[M]\}_{k\geq 0}$ form a right module over the operad $\FM_n$.
If $A$ is an $\FM_n$-algebra, then the \textit{factorization homology of $M$ with coefficients in $A$} is the topological space given by a suitable tensor product in this category (\cite{af, miller, paolo0}, and \cite{ginot, idrissi} for expository presentation)
$$C(M,P)=\int_MP := \FM_M\circ_{\tiny \FM_n} P$$
If $P$ is discrete \textit{abelian} monoid, we recover our earlier definition of $C(M,P)$ in {\bf 4.2.1}, up to homotopy. 
The first general and elegant result in the theory is proved in \cite{paolo0} (Theorem 7.6).

\begin{quote}\textit{{\bf Theorem 4.3} \cite{paolo0}: If $M$ is a compact closed parallelizable $n$-manifold and $P$ is a path connected partial framed $n$-monoid, then there is a weak equivalence $C(M; P)\rightarrow Map(M; B_n(P))$.}\end{quote}

The disconnected case is treated in \cite{miller}. Stabilization maps are defined in \cite{miller}, Definition 3.17), and their homotopy colimit is shown to be homology equivalent to the space of compactly supported sections of a bundle over $M$. 

In her widely cited paper \cite{mcduff}, McDuff lays out the fundamental quasi-fibration techniques needed for proving most of the results listed above. The approximation theorems discussed above are considered to be a manifestation of the $h$-principle (Gromov), and are proved using similar techniques.


\section{Topological and algebraic invariants}

\subsection{Homotopy Invariance}\label{invariance} Let $M$ be a connected closed smooth manifold. It was conjectured for many years that the homotopy type of $\conf{k}(M)$ depended only on the homotopy type of $M$. Initial meaningful results in this direction have been obtained by \cite{levitt} who showed that the homotopy type of the based loop space $\Omega\conf{k}(M)$ only depended on the homotopy type of $(M,\partial M)$, for connected compact manifolds. However, in 2004, Longoni and Salvatore \cite{ls} found a striking counterexample using the non-homeomorphic but homotopy equivalent Lens spaces $L(7,1)$ and $L(7,2)$.  
They proved that the configuration
spaces $\conf{k}(L(7; 1))$ and $\conf{k}(L(7; 2))$ are not homotopy equivalent for any $k\geq 2$, by
showing that their universal covers have different Massey products: all of
the Massey products vanish for the former but not for the latter. The failure of the conjecture also applies to unordered configuration spaces on these Lens spaces.

At this time, the homotopy invariance of configuration spaces is not known for simply connected spaces. It is not known as well if there are non-homeomorphic but homotopy equivalent closed manifolds $M,N$ for which the conjecture holds; that is $\conf{k}(M)\simeq \conf{k}(N)$ for all $k\geq 1$. Note that \textit{stable homotopy} invariance is known, that is invariance after a certain number of suspensions \cite{ak, malin}. Cohen and Taylor (unpublished manuscript) prove early on that the configuration spaces of \textit{smooth} closed manifolds are stable homotopy invariant.

\subsection{Algebraic Invariance} On can ask: how does the (co)homology of $M$ affect the (co)homology of $\conf{k}(M)$ for $k>1$? Several important results in this direction are known and are listed below.
\begin{enumerate}
\item Fulton and MacPherson \cite{fm} show that the Betti numbers of $\conf{3}(M)$ are not determined by the Betti numbers of $M$ by checking this on the particular pair of closed manifolds $M_1 = \bbc P^1\times\bbc P^2$ and $M_2$ the nontrivial $\bbc P^1$-bundle $P(O(1)\oplus O(-1))$ over $\bbc P^2$. For this $M_1$ and $M_2$, Totaro proves that the same is true for the unordered configuration spaces using his spectral sequence (see \S\ref{spectral}), namely $M_1$ and $M_2$ have the same betti numbers, but $C_3(M_1)$ and $C_3(M_2)$ do not. 
\item The homology groups of $C_k(M)$ and $\conf{k}(M)$ are homotopy invariants of closed oriented manifolds, by work of Bendersky and Gitler \cite{bg} (see \S\ref{spectral}, and also \S\ref{compactsupport}). For rationally formal manifolds $M$ such as smooth complex projective varieties, they compute the rational cohomology groups of $\conf{k}(M)$ and $C_k(M)$ explicitly in terms of the rational cohomology ring of $M$. 
\item Over a field of characteristic zero, if $M$ is a smooth projective complex manifold, then the rational homotopy type of $\conf{k}(M)$ only depends on the one of $M$ and an explicit model is given in \cite{kriz} (see \S\ref{models}). The same result holds for $k=2$ when $M$ is closed manifold which is either $2$-connected \cite{ls} or simply connected and even dimensional \cite{cordova}. 
    \item The real homotopy type of $M$ when the manifold is closed, smooth, simply connected, and of dimension at least $4$ only depends on the real homotopy type of $M$ \cite{idrissi}
and \cite{camposwill} (see \S\ref{models}). 
\item With mod-$2$ coefficients, the
ranks of $H_*(C_k(M),\bbf_2)$ are fully determined by the $\bbf_2$-Betti numbers of $M$, the dimension of $M$, and the integer k \cite{ml}. In the case $k=2$ and $M$ compact,
Totaro describes explicitly the
generators of the $\bbf_2$-cohomology of
$C_2(M)$ in terms of the cohomology
generators of $M$ \cite{totaro2}.
\item The same result holds mod-$p$, for $p$ odd: the $\bbf_p$ homology of $C_k(M)$, $M$ smooth and compact, depends only on the $\bbf_p$ homology of $M$ if $n$ is odd \cite{bct} (compare with Totaro's computation in first bullet point).
\item By contrast, the result in (6) is no longer true for even dimensional manifolds as shown in various places \cite{bc, bcm, ft2, zhang, totaro}, one also needs the cup product structure on $H^*(M)$.
F\'elix and Thomas \cite{ft2} prove that the rational Betti numbers of $\conf{k}(M)$ are determined by the graded algebra $H^*(M,\bbq)$. They use Theorem 4.2 of \S\ref{maymilgram} and the Haefliger rational model for section spaces to derive this result. 
\item Relating the cohomology of configuration spaces on punctured manifolds to the configurations of the manifold themselves possibly goes back to Gorjunov and Fuks \cite{gorjunov}. This point of view is developed in \cite{kallel1}, in \cite{napolitano} for surfaces, and in \cite{huang} in the general motivic context. For example, the following is established in \cite{kallel1}, and again in \cite{huang} for $X$ 
obtained from removing $r \geq 1$ points
from a connected closed orientable manifold of even dimension $2n$
$$\sum_{i,k\geq 0} \beta_i(C_k(X-p))(-u)^it^k = {1\over
1 + u^{2n-1}t}\sum_{i,k\geq 0}
\beta_i(C_k(X))(-u)^it^k.$$
where $\beta_i(X)$ is the i-th (rational) Betti number.
\item In this long list of invariants, it is fitting to include the calculation of the \textit{Lusternik-Schnirelmann category} of the unordered configuration spaces, proved in \cite{blz} and conjectured in \cite{roth}: $$cat(C_n(\bbr^d)) = (d-1)(n-1)\ \ , \ p\ \hbox{odd prime and}\ n = p^k\  \hbox{for some}\ k\geq 1$$
For spheres, there are bounds: $(d-1)(n-1)\leq cat(C_n(S^d)) \leq (d-1)(n-1) + 1$.
\end{enumerate}

The next sections explain how most of these results are obtained, and what the techniques involved are. We  point out that \textit{integral cohomology ring} computations for $\conf{k}(X)$ are very few and hard to obtain, even more so in the unordered case $C_k(X)$. The case of $S^n$ is worked out in \cite{fz}, and \cite{dominguez} give the full computation for the two point configuration spaces of real projective space $X=\bbr P^n$.



\subsection{Euler characteristic} The most basic homological invariant of any reasonable space is of course its Euler characteristic $\chi$. We have already given the formula for Euclidean configurations, corresponding to $P_t(-1)$ in the formula \eqref{poincseries}. Generally, for even dimensional manifolds, the following formula was obtained in 2000 by \cite{ft}
\begin{equation}\label{eulerchar}
1+\sum_{n=1}^\infty \chi (C_n(M)) t^n = (1+t)^{\chi (M)}, \ \ \dim (M)\ \hbox{is even}
\end{equation}
Using compactly supported Euler characteristic, Getzler \cite{getzler} establishes a formula for  the ordered configuration spaces of
any complex quasi-projective variety, which is in fact valid for any locally compact Hausdorff space (see \cite{hainaut} or \cite{walid})
$$\sum_{n=0}^\infty\chi_c(\conf{n}(X))\cdot {t^n\over n!} = (1+t)^{\chi_c(M)}$$
This specializes to the formula in \eqref{eulerchar} in this case since $\chi (C_k(M)) = {1\over n!}\chi (\conf{k}(M))$ and for even-dimensional
manifolds, $\chi_c(M) = \chi(M)$. The idea of using $\chi_c$ to get to $\chi$ of configuration spaces and their variants is an idea that has been expanded in all of \cite{arabia, barysh, walid}.
In the case $X$ is a finite simplicial complex, not necessarily a manifold, the following formula (Gal's formula) can be found in \cite{gal}
$$\sum_{n=0}^\infty\chi (\conf{n}(X))\cdot {t^n\over n!} = \prod_\sigma (1+(-1)^{\dim (\sigma)}(1-\chi (L_\sigma))t)^{(-1)^{\dim (\sigma)}}$$
where the product runs over all the cells $\sigma$ of $X$ and $L_\sigma$ is the normal link of 
the closed cell $\sigma$. The righthand term is known as the Euler-Gal series $\mathfrak{eu} (X)$. It has a pleasant form when $X$ is a graph \cite{bf}. A generalization of Gal's formula is given in \cite{hainaut} who computes the generating function of $\chi (\conf{n}(X))$ for a topologically stratified space $X=\bigsqcup X_\alpha$, such as a pseudo-manifold. Here the collection of strata is finite and every stratum is (homeomorphic to) the
interior of a compact manifold with boundary


\subsection{Real and rational models}\label{models}
Sullivan's minimal model theory states that one can entirely capture the rational homotopy type of nilpotent spaces (that is the non-torsion part of both homology and homotopy groups of the space) via algebraic models which are differential graded commutative $\bbf$-algebras (cdga), where $\bbf$ is a field of characteristic zero. A \textit{rational model} of $X$
is any cdga quasi-isomorphic 
to the Sullivan-deRham cdga $A_{PL}(X)$ (see \S\ref{formality}) or
to its minimal model. Note that with this definition, $H^*(X)$ being a model for $X$ is equivalent to the formality of $X$ (see \S\ref{formality}). If given a (real or rational) model of $M$, one can construct a small model
of $\conf{k}(M)$ which only depends on the model of $M$, then this will establish the (real or rational) homotopy invariance of $\conf{k}(M)$ on that class of manifolds. Such a model can be used to perform computations, e.g. the cohomology ring of $\conf{k}(M )$ for those coefficients. Recommendable introductions to this theory are \cite{idrissi2} or \cite{voronov} in this volume.

\noindent{\sc Example:} A clear and pedagogical step-by-step construction of a model for two point configurations on a Lens space is given in \cite{calimici}.
 
\noindent{\bf 5.4.1.} The configuration space $\conf{k}(M)$, for any $k\geq 2$ and $M$ a smooth complex projective manifold, has a remarkable rational model discovered by I. Kriz \cite{kriz} as a simplified version of a more complicated model introduced earlier by Fulton-MacPherson \cite{fm}.
The model depends only on $k$, the graded algebra $H^*(M,\bbq)$
together with its canonical orientation class. More precisely, for this class of spaces,  $E_k(M)$ as an algebra is a free graded commutative $H^*(M^k)=H^*(M)^{\otimes k}$-algebra with generators $G_{ij}$ of degree $2\dim_\bbc M-1$, $1\leq j<i\leq \ell$, divided out by the ideal generated by the following relations
\begin{eqnarray*}
    G_{ij}\cdot G_{i\ell } &=& G_{j\ell}\cdot (G_{i\ell }-G_{ij})\ \ i > j > \ell\\
    G_{ij}\cdot p_i^*(x) &=& G_{ij}\cdot p_j^*(x)
\end{eqnarray*}
where $x\in H^*(M,\bbq)$ and $p_i: M^k\rightarrow M$. The differential $d$ on $E_k(X)$ is given by $dp_i^*(x) = 0$ and $dG_{ij}=\Delta_{ij}$; where $\Delta_{ij}\in H^2(X^k)$ is the class of the diagonal $x_i=x_j$. We should explain this class. 
Suppose $M$ is closed oriented, $\dim_\bbc M=m$. Then $\Delta\in H^{2m}(M\times M)$ is a class Poincar\'e dual to the embedded diagonal $M$. If $\omega= [M]^*$ is a fixed generator of $H^{2m}(M)$, then 
$\Delta$ has the expression 
$\Delta = \sum (-1)^{|\beta_i|}\beta_i\times\beta_i^*$, where $\beta_i$ and $\beta_i^*$ are dual bases for $H^*(M)$ satisfying 
$\beta_i\cup \beta_j^* = \delta_{i,j}\omega$,
$\delta_{i,j}$ being Kronecker symbol. Now the class $\Delta_{ij}=p_{ij}^*(\Delta)$, where $p_{ij}: M^k\rightarrow M^2$ is projection onto the $i$-th and $j$-th components. Kriz proves his result by constructing a quasi-isomorphism into the model constructed by Fulton and MacPherson, who in turn obtain their model by constructing a quasi-isomorphism with an earlier more general rational model given by Morgan for any complement of a union of smooth divisors with normal crossings in a smooth, compact, complex, algebraic variety \cite{morgan} (see \S\ref{spectral}).

The simplicity of the model has prompted Kriz to write: ``What happens could be described by saying that $H^*(\conf{k}(M);\bbq)$ for a smooth projective variety $M$ of dimension $m$ is
related to $H^*(\conf{n}(\bbr^{2m});\bbq)$ in the ``simplest possible way".'' 
The reason this model exists for smooth complex projective varieties, and not for general manifolds, is because these spaces are rationally formal (see \S\ref{formality}). The situation is in general more complicated even for oriented closed manifolds. The cdga $E_k(M)$ appeared earlier (not as a model) as the $E^1$-term of the Cohen-Taylor-Totaro spectral sequence (see \S\ref{spectral}). .  

\noindent{\sc Example}: Using Kriz's model, we compute $H^*(\conf{2}(\bbp^m),\bbq)$, where $\bbp^m$ is complex projective space \cite{sohail}. Conveniently write $E^*[k]\subset E_2(X)$ the homogeneous component of total degree $*$ and degree $k$ in the exterior generators $G_{ij}$. One has $E^*[0]=H^{\otimes n}$ and $E^{2m-1}[1]$ is generated by the $G_{ij}$. A canonical basis of $H^{2i}(\bbp^m\times \bbp^m)$ is $x^i\otimes 1$, $x^{i-1}\otimes x, \ldots, 1\otimes x^i$, where $x$ is the algebra generator in $H^2(\bbp^m)$. For $\conf{2}(\bbp^m)$, there is only one $G_{12}$, and 
$dG_{12}= x^m\otimes 1 + x^{m-1}\otimes x + \cdots + 1\otimes x^m$. Clearly $H^i(\conf{2}(\bbp^m))\cong E^{i}[0]$ for $0\leq i\leq m-1$, so $b_{odd}=0$ and $b_{2i}=i+1$. On the other hand, there is only one non-trivial differential in the model $d: E^{2i+2m-1}[1]\rightarrow E^{2i+2m}[0]$ given by 
$d((x^i\otimes 1)G_{12}) = d((1\otimes x^i)G_{12}) = x^m\otimes x^i+x^{m-1}\otimes x^{i+1}+\cdots + x^i\otimes x^m$. This differential is injective, which implies that for degrees $\geq 2m-1$, the betti numbers are
$b_{odd}=0$ and $b_{2m+2i}= \dim E^{2m+2i}-1=\dim H^{2m+2i}(\bbp^m\times\bbp^m)-1= m-i$. Adding it all up and factoring gives the Poincar\'e polynomial: $P_{\tiny \conf{2}(\bbp^m)}(t) = (1+t^2+\cdots + t^{2(m-1)})(1+t^2+\cdots + t^{2(m-1)} + t^{2m})$. Note that $\conf{2}(\bbp^m)$ is  homogeneous space, and its homology mod-$2$ is computed in \cite{handel}. More generally, the integral cohomology \textit{ring} of $\conf{2}(\bbp^m)$ is determined in (\cite{yasui}, Theorem 3.1).


\noindent{\bf 5.4.2.} Real models for configuration spaces on manifolds have attracted much attention and work in the last few years. 
If $M$ is a smooth manifold, then a real model $A$ of $M$ is a cdga which is quasi-isomorphic to the cdga of de Rham forms $\Omega^*_{dR}(M)$ or to Sullivan's $A_{PL}(M)$ (see \S\ref{formality}). 
It turns out that closed, smooth and simply connected manifolds $M$ have convenient models $A$ satisfying a chain-level Poincar\'e duality. Lambrechts and Stanley \cite{lamstan} construct a cdga $\mathcal G_A(k)$ which is a dg (only) model for $\conf{k}(M)$. If we
view $H^*(\conf{k}(\bbr^n))$ as spanned by graphs modulo Arnold relations (see Theorem 1.1), then $\mathcal G_A(k)$ consists of similar graphs with connected components labeled by $A$, and the differential ``splits edges''. From this point of view, Kriz's model discussed at the beginning of the section corresponds to $\mathcal G_{H^*(M )}(k)$. Lambrechts and Stanley conjectured that their model is a rational model, and Idrissi \cite{idrissi} proved it is a real model for manifolds as
above of dimension
at least $4$, thus extending in the real case work of \cite{kriz, totaro, cohtay, bg} (see \S\ref{compactification}). A similar model is constructed in \cite{camposwill}, although the Idrissi-Lambrechts-Stanley model has the advantage of being finite-dimensional
and more computable.  


\subsection{Stability and Splittings}\label{stability}
A remarkable property of the homology of \textit{unordered} configuration spaces on non-compact manifolds $M$ is its \textit{homological stability}. For those manifolds, there are stabilization maps $st: C_k(M)\rightarrow C_{k+1}(M)$ which consist of ``pushing an additional point near infinity'' \cite{church}. For example, if $M=\bbr^n$, $st(x_1,\ldots, x_k)=(x_1,\ldots, x_k, 1+\sum |x_i|)$. 
It can be shown, by transfer arguments of Dold \cite{dold}, that $st_*$ embeds $H_*(C_k(M),R)$ into $H_*(C_{k+1}(M),R)$, for any commutative ring $R$, and thus we can view these homology groups as part of a ``stable'' group, which by \S\ref{labeled} must be the homology of a section space over $M$. The slogan is that by adding points to the configurations, we stabilize the homology. This phenomenon was already detected by Arnold who showed that, once we fix the homological degree $*$, the maps
$st_*: H_*(B_k;\bbz)\rightarrow H_*(B_{k+1};\bbz)$ induced by the inclusions of braid groups $B_k \subset B_{k+1}$ are isomorphisms for all $*\leq k/2$. G. Segal extended this theorem from $M = \mathbb R^2$ to non-compact connected manifolds $M$ and showed that stablization $st_*$ is also an isomorphism for $*\leq f(k)=k/2$ (\cite{segalacta}, appendix). This $f(k)$ is called the \textit{stable range}. Randal-Williams proves that this stable range is not optimal with rational coefficients and that for manifolds of dimension at least $3$, $st_*$ is an isomorphism for $*\leq f(k)=k$. 
Arabia \cite{arabia} extends Church's and Randal-Williams' stability results to other configuration spaces (see \S\ref{variants}) and to complex algebraic varieties that are not necessarily smooth (see \S\ref{compactsupport}).

\noindent{\bf 5.5.1.} The homological stability result can be upgraded to a stable homotopy splitting known as the \textit{Snaith splitting}. Recall the definition of $C(M,X)$ in (\S\ref{labeled}, \eqref{maymilgrammodel1}).
There is a natural filtration $C_k(M,X)$ of $C(M,X)$, and we write
$D_k(M,X):= C_k(M,X)/C_{k+1}(M,X)\simeq \conf{k}(M)\ltimes_{\Sigma_k}X^{\wedge k}$, where $\ltimes$ means the half smash product of based spaces, i.e. $X\ltimes Y= X\times Y/X\times y_0$, and where
$X^{\wedge k}$ is the smash product of $X$ with itself $k$ times. This is called a \textit{Snaith summand}. Theorem 5.1 below has a proof for $N=\infty$ and connected $X$ in \cite{snaith, ralph}. In \cite{fredsplitting}, the number of suspensions was shown to be finite (see also \cite{stylian}). 

\begin{quote}\textit{{\bf Theorem 5.1} (Stable splitting). Let $M$ be a manifold. Suppose  $X$ is a connected finite CW complex. Then there is $N>0$ such that
$\displaystyle\Sigma^{N} C_k(M,X)\simeq \Sigma^{N}\bigvee_{j=1}^k D_j(M,X)$.}\end{quote}
\vskip 5pt

When $X=S^0$, and $M$ is non-compact, we can let $D_j(M)$ be the cofiber of the ``adding a point'' stabilization map $C_{j-1}(M))\hookrightarrow C_j(M)$. Then a similar stable splitting holds $\displaystyle\Sigma^{N} C_k(M)\simeq \Sigma^{N}\bigvee_{j=1}^k D_j(M)$ for some $N>0$, by \cite{fredsplitting} (appendix) and \cite{fredmaytaylor}.
The splitting of the configuration spaces of non-compact manifolds yields, by the approximation theorem,  stable splittings for mapping spaces from $M$ into spheres or suspensions \cite{bodig, westerland}. 

\noindent{\sc Example}: For $n > 2$, one has the stable homotopy equivalence
$\displaystyle\Omega^2S^{2n}_+\simeq_s\bigvee_{k\geq 0}\Sigma^{(2n-2)k}C_k(\bbr^2)_+$ (an attractive result!).
The reasons why this is true are: for $n>1$, $C(\bbr^2,S^{2n-2})\simeq \Omega^2S^{2n}$ by the Approximation Theorem 4.2. Now $C(\bbr^2,S^{2n-2})$ splits into  wedge terms $D_k(S^{2n-2})$ which are Thom spaces of certain vector bundles over the configuration
spaces of $\bbr^2$. These bundles are discussed in {\bf 7.0.2}.

Stability  results are not generally true for closed manifolds. A simple example is provided by $M=S^2$, with $H_1(C_k(S^2);\bbz) = \bbz/(2k-2)\bbz$, which means that $H_1(C_k(S^2);\bbz)$ is never isomorphic to $H_1(C_{k+1}(S^2);\bbz)$. Nonetheless, some modified stability can be stated for closed manifolds, using \textit{replication maps} (see \cite{martin} and references therein). 

\noindent{\bf 5.5.2.} (Representation Stability). No homological stability exists for ordered configuration spaces in general. For example, the homology groups of $\{\conf{k}(\bbr^2)\}_{k\geq 1}$ never stabilize since $H_1(\conf{k}(\bbr^2)) \cong\bbz^{k\choose 2}$. 
There are however projections $\pi_{k}: \conf{k+1}(M)\rightarrow\conf{k}(M)$ which forget the last point, and so maps in cohomology $\pi^*_k: H^*(\conf{k}(M))\rightarrow H^*(\conf{k+1}(M))$. These are never isomorphisms with any coefficients in general, even for $k>>*$. Nonetheless, Church and Farb uncovered a new form of stability, called
``representation stability'' or (RS) \cite{churchfarb} which is about the pattern of the decomposition of $H^*(\conf{k}(M);\bbq)$, as a $\bbq\mathfrak S_k$-module, into irreducible $\mathfrak S_k$-representations, as $k$ grows. The idea is this: each 
$V_k:=H^*(\conf{k}(M);\bbq)$ decomposes into irreducible $\mathcal S_k$-representations  $V_k = \bigoplus_\lambda c_{\lambda,k}V(\lambda)_k$, where $V(\lambda)_k$ is the irreducible representation of $\mathfrak S_k$ corresponding to the partition $\lambda$, and $c_{\lambda,k}$ is its multiplicity\footnote{The irreducible representations
of $\mathcal S_k$ are in bijection with the partitions of $k$.}. The maps $\pi_k^*: V_k\rightarrow V_{k+1}$ are injections and they are $\mathfrak S_k$-equivariant. They induce morphisms of $\bbq\mathfrak S_k$-modules. Representation stability of this sequence of maps $\pi_k^*$, with stable range $N$, means that for all $k\geq N$, and each partition $\lambda$, the multiplicities $c_{\lambda,k}$ are independent of $k$ for all $k\geq N$. This is discussed, with many examples, in \cite{farber}. 
The main theorem of \cite{church} is that the cohomology groups $\{H^i(\conf{k}(M);\bbq)\}$ are representation stable, with stable range $k\geq 2i$, if $M$ is connected orientable with $\dim M\geq 3$. 
A very pedestrian account of this theory, and its explosive development is in \cite{wilson}.
The case of Euclidean configuration spaces is treated in details in \cite{farb}.

\noindent{\bf 5.5.3.}  The
representation theory of the symmetric group $\mathfrak S_k$ acting on $H^*(\hbox{Conf}_{k}({\mathbb R}^n))$  appears in several references, with integral or field coefficients. The theory has qualitative differences between the case when $n$ even and the case when $n$ is odd. It was noticed early on by Lehrer,  F. Cohen and L. Taylor that in the case $n$ is odd, the \textit{total} cohomology $H^*(\hbox{Conf}_{k}({\mathbb R}^n),{\mathbb C})$ is the regular representation ${\mathbb C}(\mathfrak S_k)$. In the case $n$ even, $H^*(\hbox{Conf}_{k}({\mathbb R}^n),{\mathbb C}) = 2\hbox{Ind}_{\mathfrak S_2}^{\mathfrak S_k}1$ since $H^*(\hbox{Conf}_{2}({\mathbb R}^n),{\mathbb C})$ consists of two copies of the trivial representation of $\mathfrak S_2$ (\cite{lehrer} for the case $n=2$, see \cite{gaiffi} for the general case $n$ even). 
When working over the integers, Cohen and Taylor \cite{fredrep} decompose the $\mathfrak S_k$-module structure of $H^*(\hbox{Conf}_{k}({\mathbb R}^n);{\mathbb Z})$ into sums of representations induced from Young subgroups; i.e. subgroups of the form
$\mathfrak S_{k_1}\times\cdots\times\mathfrak S_{k_r}\subset\mathfrak S_{k}$, where $(k_1,\ldots, k_r)$ is an integer partition of $k$. The following statement in (\cite{aj}, Lemma 5.2) gives perhaps the most streamlined proof of this fact. Set the $\mathfrak S_k$-module $M_k:=H^{(k-1)(n-1)}(\hbox{Conf}_{k}({\mathbb R}^n);{\mathbb Z})$ to be the top cohomology group (see {\bf 2.3.3}), and let $\alpha$ be an ordered \textit{integer partition} of the integer $k$, that is  
$\alpha = (k_1,\ldots, k_r)$, $k_1 + \ldots +k_r = k$. Define $M_\alpha$ to be $M_{k_1}\otimes\cdots\otimes M_{k_r}$. Then $M_\alpha$ is a $\mathfrak S_\alpha := \mathfrak S_{k_1}\times\cdots\times\mathfrak S_{k_r}$-module. Write
$\hbox{Ind}_{\mathfrak S_\alpha}^{\mathfrak S_k}M_\alpha = M_\alpha\otimes_{{\mathbb Z} [\mathfrak S_\alpha]}  {\mathbb Z}[\mathfrak S_k] $ the induced module. There is a decomposition as $\mathfrak S_k$-modules (Atiyah-Jones 1993, $n=3$)
$$H^*(\hbox{Conf}_{k}({\mathbb R}^n),{\mathbb Z}) = \sum_\alpha \hbox{Ind}_{\mathfrak S_\alpha}^{\mathfrak S_k}M_\alpha\ \ \ \ \ \  \ (\hbox{the sum is over the integer partitions $\alpha$ of $k$})$$
More delicate analysis is needed to understand the $\mathfrak S_k$ representations \textit{in each degree}, i.e. for each homogeneous term $H^{j(n-1)}(\hbox{Conf}_{k}({\mathbb R}^n))$.
For example, for $k\geq 3$ and $n$ odd, there is exactly one copy of the standard representation in degree $j(n-1)$, for each $1\leq j\leq n-1$ \cite{gaiffi}. On the other hand, and as pointed out in \S{\bf 3.2}, the trivial representation only appears in degree $0$ if $n$ is odd, and appears in higher degrees when $n$ is even.
A general analysis of the representation theory of $\mathfrak S_k$ on the homogeneous terms appears in \cite{blz} in terms of the lower intervals of the partition lattice $\Pi_k$ of $[k]=\{1,\ldots, k\}$, and in \cite{hersh}, with rational coefficients, in terms of \textit{higher Lie characters} Lie$_\lambda$ when $n$ is odd, and the \textit{Whitney homology} of $\Pi_k$ when $n$ is even. This literature is not exhaustive and one can pose many questions on the representation theory of $\mathfrak S_k$ acting on $H^*(\hbox{Conf}_{k}(M),{\mathbb Q})$ for more general manifolds $M$ and general coefficients $R$ (more in \cite{farb}).

A surprising link between the $\mathfrak S_n$-character of
$H^*(\conf{k}(\bbr^3))$ and counts of polynomials over finite fields was discovered in
\cite{hyde}. 


\section{Computational tools}
\label{computational}

\subsection{Spectral sequences}\label{spectral} Computing $H^*(\conf{k}(X);\bbf)$ and $H^*(C_k(X);\bbf)$ for general $X$ is difficult but this has been done with relative success for manifolds and graphs. We only discuss the case of manifolds. This section deals with the ordered case. In this situation, several naturally associated spectral sequences can help with the task  \cite{bg, radmila, cohtay, totaro, petersen}, and our objective here is to sort them out. We recall that a (cohomological) spectral sequence is the
data consisting of ``pages'' one for each $r=1,2,...$, each page is a differential graded algebra $(E_r,d_r)$, and $E_{r+1} = H_*(E_r,d_r)$. The spectral sequence \textit{converges} to $H^*(\conf{k}(M);\bbf)$ means that
$E_\infty=H^*(\conf{k}(M);\bbf)$ as vector spaces (here we are working over a field $\bbf$). In the case of the configuration spaces at hand, $E_\infty = E_N$ for some finite $N$.

\noindent{\bf 6.1.1.} (Totaro and Cohen-Taylor)
The most well-known spectral sequence in the field is the one due to Cohen-Taylor \cite{cohtay} and to Totaro \cite{totaro}. Both describe the same spectral sequence, conveniently labeled CTT-spectral sequence, but they derive it using very different techniques. 
Totaro's approach turns out to be most impactful and succinct. He considers the Leray
spectral sequence for the inclusion $\conf{k}(M)\hookrightarrow M^k$ which converges as an algebra to
$H^*(\conf{k}(M))$, for any $M$. Its $E_2$-term and first non-trivial differential $d_m$, $m=\dim M$, can be described for any real oriented manifold. The DGA $(E_m,d_m)$ is precisely the Kriz complex described in \S\ref{models}.
Totaro proves that for a smooth complex projective variety $M$, the differential $d_{2m}$, with $m=\dim_\bbc(M)$, of his spectral sequence is the only nontrivial differential and that $H(E_{2m}, d_{2m})\cong H^*(\conf{k}(M),\bbf)$ as an algebra, with $\hbox{char}\ \bbf = 0$. This collapse result came as a surprise, and it is still not clear how to deduce it using the Cohen-Taylor approach. Beyond this result, Totaro's construction via the Leray spectral sequence has provided a very successful approach to the cohomology of various types of configuration spaces  (eg. \cite{farber}, see \S\ref{variants}). Finally we should point out a difference between the spectral sequence constructed by Morgan \cite{morgan}, which also uses the Leray spectral sequence, and that of Totaro. Morgan's result applies to spaces that are complements of normal crossing divisors. This is not the case of $\conf{k}(X)\subset X^n$, unless $\dim_{\bbc} X=1$ \cite{totaro}. For good $X$, the collapse of Morgan's spectral sequence occurs at $E_3$ (a result of Deligne), while the collapse of Totaro is at $E_{2m+1}$.

\noindent{\bf 6.1.2.} The second widely used spectral sequence is the Bendersky-Gitler spectral sequence which is obtained by considering the nerve of the covering of the fat diagonal
$\displaystyle \Delta_{fat}^k(M)\subset M^k$, which is the subspace where any two entries coincide. This defines a double cochain complex
and the resulting spectral sequence of the bicomplex is a spectral sequence 
converging to the relative group $H^*(M^k, \Delta_{fat}^k)$. By Lefschetz duality, we have $H^*(M^k, \Delta_{fat}^k,R)\cong H_{mk-*}(\conf{k}(M); R)$, $m=\dim M$, $M$ compact $R$-oriented. 
It is no surprise that the BG-spectral sequence has isomorphic $E_2$-term, after $mk$-suspension, with the $E_2$-term of the CTT-spectral sequence \cite{ft2} (i.e. ``they agree up to suspension at the $E_2$-page''). This is also a result obtained in \cite{bermarpap}. If one spectral sequence collapses at $E_2$, then so does the other. Bendersky and Gitler conjecture that higher differentials ($d_2,d_3,\ldots$) are determined by higher-order Massey products. Although \cite{ft2} fail to verify this conjecture, they nonetheless find higher differentials using Massey products indeed, and give interesting examples when both spectral sequences fail to collapse at $E_2$. This is the case for the spectral sequence of $\conf{4}(T(S^2\times S^2))$, the tangent bundle of $S^2\times S^2$ (\cite{ft2}, section 5). Note that in practice, intersection theory
has also been used to evaluate Massey products on classical configuration spaces \cite{ls}.

\noindent{\bf 6.1.3.} Baranovsky and Sazdanovic \cite{radmila} introduce their ``BS''-spectral sequence which converges to the homology of the so-called \textit{chromatic configuration spaces} $\conf{G}(M)\subset M^{|V|}$ of a manifold $M$, $\bbf$-oriented if the homology is with $\bbf$-coefficients, and $G=(V,E)$ is a simple graph on a finite vertex set $V$ and edge set $E$ (for a definition, see {\bf 10.2}). In the case $G$ is the complete graph, one recovers the standard configuration space of points $\conf{G}(M)=\conf{n}(M), n=|V|$. The construction and approach of \cite{radmila} is reminiscent of Kontsevich's construction of his graph complex. The $E_1$ term of the BS-spectral sequence is  given by a \textit{graph complex}
$\mathcal G_A(G)$ of the graded commutative algebra $A = H^*(M, \bbf)$. This is a bigraded complex and the authors verify that its higher differentials
are obtained from the Massey products of $A$, as conjectured by Bendersky and Gitler for the special case of the classical configuration space, thus verifying their conjecture. Note that, by construction, the BS-spectral sequence converges to the cohomology of the pair $(M^{|V|},\Delta)$, where $\Delta = \bigcup\Delta_{i,j}, \{i,j\}\in E$, see \eqref{deltaij}, thus converges to the cohomology with compact supports of $\conf{G}(M)$, with suitable coefficients, as explained in \S\ref{compactsupport} next. Petersen \cite{petersen} expands on this observation and uses it to compute models for compactly supported cohomology for these configuration spaces.

\subsection{Compactly supported cohomology}\label{compactsupport}
Cohomology with compact supports is intrinsically related to duality and leads to interesting developments \cite{arabia, petersen}. Arabia in his monograph \cite{arabia} makes the justified claim that cohomology with compact supports is a most suitable theory to study configuration spaces. 
This was already apparent in an early appendix by Segal \cite{segalacta} who uses $H^*_c$ to prove the stability of the homology of configuration spaces (see \S\ref{stability}). 
The starting point is the isomorphism valid for any compact Hausdorff space $M$
\begin{equation}\label{dualgroup}
H^*_c(\conf{k}(M);\bbf)\cong H^*(M^k, \Delta_{fat}(M);\bbf)
\end{equation}
The above can be taken as a definition of $H^*_c$ on this category of spaces, and it is related via Poincar\'e duality to the homology of $\conf{n}(M)$ (this is discussed in \S\ref{duality}).
Generally, $H^*_c$ is defined \textit{sheaf-theoretically} on spaces that are paracompact locally compact Hausdorff spaces. The discussion below applies to those spaces, and we do not need to restrict to manifolds anymore. In this case,
a proper homotopy equivalence $f: X\rightarrow Y$ induces an isomorphism $H^*_c(\conf{k}(Y),\bbf )\rightarrow H^*_c(\conf{k}(X),\bbf )$ in each degree, and if $X$ and $Y$ are compact, the
same follows for any homotopy equivalence \cite{arabia, gadishhainaut}.

A main property enjoyed by $H^*_c$ on configuration spaces of a class of spaces is that it \textit{splits}. More precisely, we can use the following example from \cite{petersen} for $k=2$ to explain the phenomenon (this is expanded in \cite{arabia}, Chapter 3). There is a Gysin long exact sequence
$$ \ldots \to H^k_c(\conf{2}(X)) \to H^k_c(X^2) \to H^k_c(X) \to H^{k+1}_c(\conf{2}(X)) \to \ldots $$
associated to the closed diagonal inclusion of $X$ into $X^2$. If we work with field coefficients, 
then the map $H^k_c(X^2) \to H^k_c(X)$ is given by multiplication in the cohomology ring $H^*_c(X)$; consequently, the compactly supported cohomology groups of $\conf{2}(X)$ are completely determined by the compactly supported cohomology of $X$, with its ring structure. No such simple statement is true for the usual cohomology or homology. For $k > 2$, it is no longer true that $H^*_c(\conf{n}(X))$ depends only on the ring structure on $H^*_c(X)$, not even in cases where the ring structure is
identically zero. However, Arabia
has introduced the class of \textit{i-acyclic
spaces} for which the cup-product
on $H^*_c(X)$ vanishes in a very
strong sense, and for i-acyclic spaces
$H^*_c(\conf{n}(X))$ depends only on
$H^*_c(X)$. A topological space $X$ is $i$-acyclic\footnote{Here $i$ stands for ``interior cohomology'', a term presumably due to Karoubi.} over a commutative ring $R$ if $H^k_c(X,R) \to H^k(X,R)$ is the zero map for all $k$. This condition is in fact satisfied in many cases of interest: for example, any open subset of Euclidean space is $i$-acyclic, and the product of any space with an $i$-acyclic space is $i$-acyclic. As pointed out in \cite{petersen}, the remarkable fact about $i$-acyclicity is that it is exactly the right hypothesis to ensure that the compactly supported cohomology of configuration spaces of points on $X$ depends in the simplest possible way on the compactly supported cohomology of $X$ itself. This fact was already observed by Cohen-Taylor and Jie Wu \cite{wu} who obtained near-complete calculations of $H_*(\conf{k}(M\times\bbr))$, for $M$ a manifold. 

\begin{quote}\textit{{\bf Theorem 6.1} \cite{arabia}: Let $X$ be an $i$-acyclic paracompact locally compact Hausdorff space over a field $\bbf$. Then $H^*_c(\conf{k}(X),\bbf )$ depends only on the graded vector space $H^*_c(X,\bbf )$.}\end{quote}
\vskip 5pt

Petersen \cite{petersen} shows that $H^*_c(\conf{n}(X),\bbq)$ depends only on the choice of a cdga model for the compactly supported cochains $C_c^*(X,\bbq)$, thus refining Theorem 6.1. We note finally that when $X$ is a simplicial set with finitely many non-degenerate simplices, and geometric realization $|X|$, then $H^*_c (\conf{n}(|X|, n), \bbq)$ can be computed from the \textit{Hochschild–Pirashvili} cohomology of $X_+ = X\sqcup \{*\}$ with coefficients in an explicit $\bbq$ coalgebra (definition and detail in 
\cite{gadishhainaut},\S3).


\section{Identical particles on manifolds}\label{unorderedconfigs}

\noindent{\bf 7.0.1.} We summarize what is known about the (co)homology of $C_k(M)$ for manifolds $M$, and various ring coefficients. The starting point is the isomorphism valid in characteristic zero
\begin{equation}\label{invariants}
H^*(C_k(M);\bbq)\cong H^*(\conf{k}(M);\bbq)^{\mathfrak S_k}
\end{equation}
where the righthand term is the subring of invariant classes under the action of the symmetric group.
This identity is used for example in \cite{bc} to compute the Betti numbers of $C_k(X)$ for $X$ a Riemann surface. More importantly, this is the method used by F\'elix and Tanr\'e \cite{feltan} and Church \cite{church} to obtain the following general result. We write $Sym$ for the symmetric algebra,  $\Lambda$ for the exterior algebra, and respectively $\hbox{Sym}^i$, $\Lambda^i$ the submodules generated by the monomials of length $i$.

\begin{quote}\textit{{\bf Theorem 7.1} \cite{bct, feltan, church}: Let $M$ be a compact odd-dimensional manifold, and let $\bbf=\bbq$ or $\bbf_p$, with $p>k$. Then 
$$H^*(C_k(M);\bbf)\cong (H^*(M)^{\otimes k};\bbf)^{\mathfrak S_k}\cong \bigoplus_{i+j=k}Sym^i H^{even}(M)\otimes\Lambda^j H^{odd}(M)$$}
\end{quote}

In the world of \textit{graded vector spaces},  one writes $\wedge V=\hbox{Sym}(V) = \hbox{Poly}(V^{\tiny even})\otimes\hbox{Ext}(V^{\tiny odd})$, 
the free commutative graded algebra on the graded vector space $V$. The 
right-hand side of Theorem 7.1 becomes $\hbox{Sym}^k H^*(M)$,  and  succinctly, the Theorem takes the form
$$\bigoplus_{k\geq 0}H_*(C_k(M);\mathbb Q)\cong \hbox{Sym}(H_*(M;\mathbb Q))\ \ \ ,\ \ \ \hbox{$\dim M$ odd}$$
To prove this result, the method of \cite{bct} is explained in the next paragraph. The method used by  \cite{feltan, church} is to observe that the Cohen-Taylor-Totaro spectral sequence (see \S\ref{spectral}) has an action of  $\mathfrak S_k$, 
so it is in fact a spectral sequence of $\mathfrak S_k$-algebras, and consequently it induces a spectral sequence $E_r^{\mathfrak S_k}$ of differential graded commutative algebras
consisting of the $\mathfrak S_k$-invariants of each stage. 
Surprisingly, this spectral sequence of  $\mathfrak S_k$-invariants degenerates at $E_2$ for any compact manifold $M$ with rational or $\bbf_p$ coefficients, if $p>k$. This is in contrast with \cite{ft} which produces non-zero higher differentials in the Cohen-Taylor spectral sequence (i.e. before passing to invariants).
For even-dimensional manifolds, the answer is not as simple but is very explicit as well (\cite{feltan}, Theorem 1).

\noindent{\bf 7.0.2.} For general field coefficients, one can use an alternative potent observation which is that $H_*(C_k(M))$ can be read off directly from the homology of the much larger
space $C(M;X)$ introduced in \S\ref{maymilgram}, with $X$ a sphere \cite{bct}. This approach enables complete calculations of the homology of the unordered configuration spaces with mod-$2$ and mod-$p$ ($p$ odd) coefficients, albeit it is not suitable for either the product structure or the Steenrod operations.  The following is the main calculation of \cite{bct}. Again, $\Omega^mX$  denotes the $m$-fold based loop space of $X$.

\begin{quote}\textit{{\bf Theorem 7.2} \cite{bct}: 
Assume $n\geq 2$ and $n+m$ odd if $\bbf\neq\bbf_2$. There is an isomorphism of graded vector spaces
$\displaystyle H_*(C(M,S^n);\bbf)\cong \bigotimes_{q=0}^m H_*(\Omega^{m-q}S^{m+n},\bbf)^{\otimes\beta_q}$,
where $\beta_q=dim H_q(M;\bbf)$ is q-th Betti number of $M$.}
\end{quote}
\vskip 5pt

To see how to read off the homology of the configuration space from here, one considers the cofiber of the inclusion of $C_{k-1}(M;X)$ into $C_{k}(M;X)$, this is $D_k(M;X):= \conf{k}(M)\rtimes_{\mathfrak S_k}X^{\wedge k}$, where $X^{\wedge k}$ means the $k$-fold smash product of $X$. The homology of $D_k(M;X)$ \textit{embeds} in that of $C(M, X)$ \cite{snaith}. On the other hand, each factor $H_*(\Omega^{m-q}S^{m+n})$ is an algebra with weights associated to its generators, and the homology of $D_k(M, S^n)$ is precisely the vector subspace generated by the elements of weight $k$. Since one understands the homology of this ``Snaith summand'' $D_k(M;S^n)$, it remains to relate it to that of $C_k(M)$, and one main observation in \cite{bct} is that $D_k(M;S^n)$ is the Thom space of the $n$-fold Whitney sum $n\eta_k$ of the bundle
$\eta_k: \bbr^k\rightarrow \conf{k}(M)\times_{\mathfrak S_k}\bbr^k\longrightarrow C_k(M)$. By the Thom isomorphism, its homology, is up to a shift, that of the unordered configuration spaces (with twisted coefficients if $n$ is odd).
As a consequence of Theorem 7.2,  one proves bullet point (6) in \S\ref{invariance} (\cite{bct}, Theorem C).

\subsection{Duality}\label{duality}
This covers primarily work in \cite{bcm, kallel1, oscar}. This approach has been most successful for computations. The starting point is the observation that the one-point compactification $\conf{k}(M)^+$ is homeomorphic to $(M^+)^{\wedge k}/\Delta_{fat}((M ^+)^{\wedge k})$, where again $\Delta_{fat}$ is the fat diagonal where two entries coincide. The symmetric group $\mathfrak S_k$ acts on $M^k$, on $+$ and $\Delta_{fat}$, so acts on the quotient. Since the ``singular set'' of the action (where the action is not free) is contained in $\Delta_{fat}$, Poincar\'e-Lefshetz duality isomorphism holds, and  tells us that, for oriented $M$, the reduced homology of $\conf{k}(M)$ is the cohomology of $(M^+)^{\wedge k}/\Delta_{fat}((M ^+)^{\wedge k})$, up to a shift of $nk$ (as already discussed for the BG-spectral sequence in the compact case \S\ref{spectral}).   
In \cite{kallel1}, the following general homological dimension result was established. Let $M$ be compact, $r$ its connectivity if $\partial M=\emptyset$, or the connectivity of $M/\partial M$ otherwise. Suppose $M$ is even dimensional orientable, $0\leq r<\infty$ and $k\geq 2$. Then 
$$cohdim_{\bbz} (C_k(M)) \leq  \begin{cases} (d-1)k - r+1, &\hbox{if} \ \partial M=\emptyset\\
(d-1)k-r, &\hbox{if}\ \partial M\neq\emptyset
\end{cases}
$$
A similar statement is valid for $M$ odd-dimensional with mod-$2$ coefficients. 

In both \cite{bcm, oscar}, the authors make a strong connection with homological algebra. Here, the observation is that the spaces $C_k(M)^+$ assemble into an
associative, commutative, and unital monoid, with operation
$C_n(M )^+\wedge C_m (M )^+\rightarrow C_{n+m} (M )^+$ whose homology is that of a derived relative tensor product which may be computed by the two-sided
bar construction. By duality, one recovers the homology of $C_k(M)$. Good applications follow, and extensions to more general configuration spaces are possible. For instance \cite{oscar} recovers the computations of Farb, Wolfson and Wood for spaces of ``$0$-cycles'' and the present author's spectral sequence for ``divisor spaces''. 

\noindent{\bf 7.1.1.}  An important earlier result was given by Knudsen who describes the rational homology of the unordered configuration spaces of an arbitrary manifold $M$, possibly with boundary, as the homology of a Lie algebra constructed from the compactly supported cohomology of $M$. By locating the homology of each configuration space within the ``Chevalley-Eilenberg complex'' of this Lie algebra, he was able to extend the theorems of Bodigheimer-Cohen-Taylor and F\'elix-Thomas, presented in \S\ref{labeled}, reproducing as well the homological stability results of Church (see \S\ref{stability}) and Randal-Williams.

\begin{quote}\textit{{\bf Theorem 7.3} \cite{knudsen2}: 
Let M be an orientable $n$-manifold. There is an isomorphism
$$\bigoplus_{k\geq 0}H_*(C_k(M );\bbq )\cong H^{Lie}_*(\mathfrak g_M )$$
of bigraded vector spaces, where  
$\mathfrak g_M: H^{-*}_c
(M )\otimes \mathcal L(v_{n-1},1)$ is a Lie algebra with bracket determined up to sign by the cup product, and
$H^{Lie}_*(\mathfrak g_M )$ its Lie algebra homology.}
\end{quote}
\vskip 5pt

It takes room to decipher this theorem and we refer to \cite{knudsen} (\S9) for a clear explanation. Here 
$H^{Lie}(\mathfrak g_M)=Tor_*^{U(\mathfrak g_M)}(\bbq,\bbq)$, where $U$ stands for universal enveloping algebra, and $H^*_c$ stands for cohomology with compact supports concentrated in negative degrees and weight $0$. When $n$ is odd, the Lie bracket on $\mathfrak g_M$ vanishes, and one recovers immediately Theorem 7.1.
Theorem 7.3 is recovered in \cite{idrissi2} and more succinctly in \cite{oscar}. Note that the appearance of compactly supported cohomology is not at all surprising in light of \S\ref{compactsupport}. 

Finally, we mention that \cite{malin} uses another version of duality, that of \textit{Spanier-Whitehead}, to establish the proper homotopy invariance of the equivariant stable homotopy type of the configuration
space $\conf{k}(M)$ for a topological manifold $M$.


\section{Applications: a fundamental sample}\label{homotopytheory}

Configuration spaces intervene in fundamental ways in many fields of mathematics and are used in the solution of important and beautiful problems.

\subsection{Coincidences of maps}\label{coincidences}
Applications of configuration spaces to combinatorial topology are numerous.
An elementary application of the cohomology computations of \S\ref{homology} and \S\ref{cohomology}, and a precursor of many things to come, is the following slick proof of the Borsuk-Ulam theorem which we also include for its attractiveness. Start with a continuous function $f: S^n\rightarrow\bbr^n$. The classical Borsuk-Ulam theorem asserts that there is an antipodal pair $(p,-p)$ in $S^n$ such that $f(p) = f(-p)$. To show this, consider the extended mapping 
$F: S^n\rightarrow\bbr^n\times\bbr^n, F(p) = (f(p),f(-p))$, and suppose no antipodal pair as above exists. Then $F$ has image in $\conf{2} (\bbr^n)$, and it is $\bbz_2$-equivariant, so it induces a map between orbit spaces
$$\widehat F: \bbr P^n\longrightarrow C_2(\bbr^n)\simeq\bbr P^{n-1}$$
This induced map on cohomology with $\bbf_2$ coefficients gives an impossibility (using the cohomology ring structure of projective space), contradicting the claim that $f(p)\neq f(-p)$ for all $p\in S^n$. 
This circle of ideas has obvious generalizations. One can consider a connected Hausdorff space $X$ with free cyclic action by $\bbz_p$, $p$ a prime, and study conditions on maps $f: X\rightarrow Y$, $Y$ a manifold, such that there exists a point $x\in X$ with $f(x) = f(\sigma^ix)$ for some $i\neq 0$.
This question is addressed in \cite{fredlusk} using configuration spaces, with extensions in \cite{blz}.
 
The best-known application of configuration spaces to equivariant topology is related to the problem of equipartitions by convex bodies and to the resolution of the Nandakumar and Ramana Rao conjecture \cite{nrr, jelic}. The conjecture states the following:
\textit{Every convex polygon P in the plane can be partitioned
into any prescribed number n of convex pieces that have
equal area and equal perimeter.}
It was verified by the authors themselves for $n=2$ using the intermediate value theorem,
and then more generally for $n=2^k$. For general $n$,
Karasev, Hubbard and Aronov \cite{kha} showed that the conjecture, and its natural generalization to higher dimension, would follow from the non-existence
of a $\mathfrak S_n$-equivariant map out of configuration spaces, just like the Borsuk-Ulam earlier argument. They then proved that
this map did not exist for $n$ a prime power, thus proving the conjecture under
that condition.
More precisely, they consider the linear subspace
$W_n :=\{(x_1,\cdots, x_n)\in\bbr^n\ |  x_1+x_2+\cdots + x_n=0\}$.
The symmetric group $\mathfrak S_n$ acts on $W_n$, and on its unit sphere $S(W_n)$, by 
permutation of coordinates. The truth of the N-RR conjecture in partitioning convex body in $\bbr^d$
now reduces to the non-existence of
a $\mathfrak S_n$-equivariant map
\begin{equation}\label{themap}
f: \conf{n}(\bbr^d)\lrar S(W_n^{\oplus d-1})
\end{equation}
In verifying this conjecture for $n$ a prime power, \cite{kha}
establish similar results regarding equipartitions with respect to continuous functionals and absolutely continuous measures on convex bodies. These include a generalization of the ham-sandwich theorem to arbitrary number of convex pieces confirming a conjecture of Kaneko and Kano, a similar generalization of perfect partitions of a cake and its icing, and a generalization of the Gromov-Borsuk-Ulam theorem for convex sets in the model spaces of constant curvature.

The N-RR conjecture was alternatively approached by Blagovic and Ziegler \cite{blagozieg} who
give a simpler proof using equivariant obstruction theory, as developed by Tom Dieck. The key element is the construction of an explicit $\mathfrak S_n$-equivariant $(d-1)(n-1)$-dimensional cell complex that is a $\mathfrak S_n$-equivariant deformation retract of $\conf{n}(\bbr^d)$ (see \S\ref{cellularmodels}).  The obstruction technique approach has the advantage of providing a converse result, namely there exists a nontrivial equivariant map $f : \conf{n}(\bbr^d)\lrar S(W_n^{\oplus d-1})$ if and only if
$n$ is not a prime power, $n,d\geq 2$. Therefore, for $n$ a prime power, no such map exits, and this yields to the complete resolution of the conjecture, stated below as a theorem. 

\begin{quote}\textit{{\bf Theorem 8.1} \cite{blagozieg, kha}: Given a convex body $K$ in $\bbr^d$, a prime $p$ and a positive integer $k$, it is possible
to partition $K$ into $n = p^k$ convex bodies with equal $d$-dimensional volumes and equal $(d-1)$-
dimensional surface areas.}\end{quote}
\vskip 5pt

A different proof of his conjecture, using the Fadell-Husseini index, and an extension of the coincidence results related to the Borsuk-Ulam theorem are in \cite{blz}.

\subsection{Link homotopy} Using classical homotopy theory and the Fadell-Neuwirth fibrations, it is possible to construct invariants of links, living in homotopy groups of spheres \cite{massey}. A $3$-link of spheres in Euclidean space is an embedding of pairwise distinct spheres
$$L: S^{p_1}\cup S^{p_2}\cup S^{p_3}\lrar \bbr^m\ \hbox{or}\ S^m\ \ \ \ ,\ \ \ \ 
p_i<m-2$$
The map $L$ is continuous and $L(S^{p_i})\cap L(S^{p_j}) = \emptyset$ for $i\neq j$. 
In 1954, Milnor \cite{milnor} introduced the notion of link homotopy to classify circle links in three space.
Note that in codimension at least $3$ (i.e. $m-p_i>2$, $i=1,2,3$), the theories for links in $\bbr^m$ and $S^m$ are equivalent.

Any $3$-link $L$ as above induces a map
$$\tilde L: S^{p_1}\times S^{p_2}\times S^{p_3}\lrar\conf{3}(\bbr^m)\ \ ,\ \ 
\tilde L(x_1,x_2,x_3) = (L(x_1), L(x_2), L(x_3))$$
The homotopy class of $\tilde L$ is invariant under link homotopies of $L$ (\cite{massey, koshorke}). Since $\conf{3}(\bbr^m)$ is $m-2$-connected, the restriction of $\tilde L$ to each sphere factor is null-homotopic, and so by restricting to any product of two spheres, we get induced maps $S^{p_i+p_j}=S^{p_i}\times S^{p_j}/S^{p_i}\vee S^{p_j}$ (the smash product) into $\conf{2}(\bbr^m)\simeq S^{m-1}$ (the homotopy class of such a map is a $2$-link invariant). To get the $3$-link invariant, we factor $\tilde L$ up to homotopy through the quotient $W(p_1,p_2,p_3):={S^{p_1}\times S^{p_2}\times S^{p_3}/ S^{p_1}\vee S^{p_2}\vee S^{p_3}}$, $p_i\geq 1$. It can be checked that the set of based homotopy classes of maps
$[W(p_1,p_2,p_3),X]$ has a natural group structure, and is isomorphic to
$\pi_{p_1+p_2}(X)\oplus \pi_{p_1+p_3}(X)\oplus \pi_{p_2+p_3}(X)\oplus \pi_{p_1+p_2+p_3}(X)$. The map $\tilde L$ defines therefore a class $\gamma (L)\in\pi_{p_1+p_2+p_3}(\conf{3}(\bbr^m))$.
To get a homotopy class of a sphere, consider the combined Fadell-Neuwirth projections
$p: \conf{3}(\bbr^m)\lrar (\conf{2}(\bbr^m))^3$, sending $p(x,y,z) = ((x,y), (x,z), (y,z))$. This is not a fibration, but has homotopy fiber $F_m$ that is $2m-4$-connected. Moreover Massey shows that $p_*(\gamma(L))=0$. By the short exact sequence of homotopy groups for this homotopy fibration, with fiber $F_m$, there is a class $\beta (L)\in\pi_{p_1+p_2+p_3}(F_m)$ that lifts $\gamma (L)$. It can be checked as well that the groups $\pi_{*}(F_m)$ are isomorphic to $\pi_{*}(S^{2m-3})$ if $*< 3m-5$, and since $p_1+p_2+p_3<3m-5$, $\beta (L)$ lives in $\pi_{p_1+p_2+p_3}(S^{2m-3})$. This is Massey's $3$-link invariant. An obvious necessary condition that two such links $L$ and $L'$ should be link homotopic is that $\beta (L) = \beta ( L')$ and that each $2$-component sublink of $L$ should be link homotopic to the corresponding $2$-component sublink of $L'$. Massey shows that under certain restrictions on the dimensions $p_1,p_2,p_3$ and $m$, this necessary condition is also sufficient. 

\subsection{Embedding theory}\label{embeddingtheory} 
Configuration spaces are most naturally associated with embeddings. Indeed, if $f: M\rightarrow N$ is an embedding, it defines a map $\conf{k}(f): \conf{k}(M)\rightarrow\conf{k}(N)$, for every $k\geq 1$. Restrictions at the level of cohomology can lead to obstructions for the embedding of $M$ in $N$.

\noindent{\bf 8.3.1.} A beautiful and powerful necessary and sufficient condition for the PL-embeddings of finite simplicial complexes in Euclidean space was given surprisingly early in 1933 by VanKampen, anticipating by a few years the development of cohomology. We describe this \textit{obstruction} based on \cite{freedteich}. Write the \textit{discretized} configuration space \cite{abrams} (or also the \textit{simplicial configuration space})
$$D_k(X) = \bigcup_{{\sigma_i\ \hbox{\tiny pairwise disjoint}}\atop
{\hbox{\tiny closed cells in}\ X}}\sigma_1 \times\cdots \times \sigma_k$$
In the case $k=2$ for instance, $D_2(X)$ is the set of all pairs $(x_1,x_2)\in X^2$, with $x_1,x_2$ lying in disjoint closed simplices of $X$. This is a deformation retract of $\conf{2}(X)$ \cite{hu}, which implies the amusing little observation of \cite{copeland} that for $X$ a \textit{finite simplicial} complex, and $\Sigma$ the \textit{unreduced} suspension, there is a homotopy equivalence
$\displaystyle\Sigma\conf{2}(X) \simeq \conf{2}(\Sigma X)$.

For $k\geq 3$, $D_k(X)$ has different homotopy type than $\conf{k}(X)$ in general. This is however a simplicial complex, in fact the largest such complex that is contained in the product $X^k$ minus its thick diagonal. This complex has been used fundamentally to study configuration spaces of graphs $\conf{k}(\Gamma)$, and Abrams spells out in this case an explicit criterion as to when the two configuration spaces are equivalent \cite{abrams}. We mention (another) amusing result of Ummel \cite{ummel} who proves that $D_2(\Gamma)$ is a closed topological surface if and only if $\Gamma=K_5$ or $K_{3,3}$. 

If $C_*(X)$ is a cellular chain complex of a CW-complex $X$, and if $G$ acts freely on $X$, and $M$ is a $\bbz [G]$-module, then one defines
$H_G^q(X;M) := H_q(\hbox{Hom}_{\bbz [G]}(C_*(X),M)$. Note that if $G$ acts freely on $X$ and trivially on $M$, then $H_G^q(X;M)=H^q(X/G;M)$. Let $K$ be an $n$-dimensionaly complex, $n\geq 1$, and let $f: K\rightarrow\bbr^{2n}$ be a PL immersion which one can assume to be such that $f(\sigma)\cap f(\tau)=\emptyset$ if $\dim (\sigma) + \dim (\tau )\leq 2n-1$, for two open cells $\sigma$ and $\tau$ of $X$, while in the top dimension, the cells intersect each other transversally in at most a finite number of double points. Define the cochain $o_f$ on $D_2(K)$ by 
$o_f(\sigma\times\tau):= f(\sigma)\cdot f(\tau)\in\bbz$ (this is the intersection number, assuming a fixed orientation in $\bbr^{2n}$). The generator of $\bbz_2$ acts on $D_2(K)$ via $\iota (x,y)=(y,x)$, and one has
$$(o_f\circ \iota)(\sigma\times\tau) = o_f(\tau\times\sigma) = f(\tau)\cdot f(\sigma) = (-1)^nf(\sigma)\cdot f(\tau)$$
We can consider now $M=\bbz$ as a $\bbz [\bbz_2]$-module (the \textit{group ring}) with action of $\bbz_2$ on $\bbz$ by multiplication by $(-1)^n$ (where $n$ is half the ambiant dimension $\bbr^{2n}$).
The following theorem is attributed to Van-Kampen, Shapiro and Wu in the 1950's.

\begin{quote}\textit{{\bf Theorem 8.2} \cite{freedteich}: Given $K$ an $n$-dimensional simplicial complex, and a PL-immersion $f:K\rightarrow\bbr^{2n}$, the cohomology class $o(K)$ of $o_f$ is an element of $H_{\bbz/2}^{2n}(D_2(K),\bbz)$, it is independent of $f$, and when $n\geq 3$, its vanishing is a necessary and sufficient condition for the existence of an embedding $K\hookrightarrow\bbr^{2n}$.}
\end{quote}
\vskip 5pt

A detailed proof, with the relevant references, is in \cite{freedteich} who also show that this obstruction fails in dimension $2$ by exhibiting a two-dimensional simplicial complex $K$ whose $o_K$ is trivial but which does not admit a PL embedding in $\bbr^4$. Note that in the case $n=1$, Sarkaria shows that this obstruction provides a necessary and sufficient condition in that dimension, and is thus equivalent to Kuratowski's subgraph condition.

\noindent{\bf 8.3.2.} Haefliger  and Wu were the first to investigate the embeddings of compact differentiable manifolds in Euclidean spaces using their induced equivariant maps out of configuration spaces \cite{haefliger}. Suppose $f: M\hookrightarrow \bbr^m$ is a given embedding of a manifold $M$, then $f$ induces a map $\tilde f: \conf{2}(M)\rightarrow S^{m-1}$ sending $\displaystyle (x,y)\longmapsto {f(x)-f(y)\over |f(x)-f(y)|}$. This map is clearly equivariant with respect to the involution that interchanges the factors of $\conf{2}(M)$ and the antipodal map of $S^{m-1}$. Also, an isotopy $f_t$, $t\in [0,1]$, of two embeddings $f_0,f_1$ of $M$ in $\bbr^m$ induces an equivariant homotopy $\tilde f_t$. The following theorem is at the start of vast generalizations by work of Goodwillie, Klein, Weiss, Williams, and collaborators (more below).
Consider the correspondence that associates to an isotopy class of a differentiable embedding $f$ the equivariant homotopy class of the map $\tilde f$ above.

\begin{quote}\textit{{\bf Theorem 8.3} \cite{haefliger}: Let $M$ be $n$ dimensional compact differentiable manifold. Consider the Haefliger-Wu correspondence which associates to an isotopy class of a differentiable embedding $f: M\hookrightarrow\bbr^m$ the equivariant cohomotopy class $\tilde f$ defined above. Then this correspondence is surjective if $2m\geq 3(n+1)$ and bijective if $2m > 3(n+1)$.}
\end{quote}
\vskip 5pt

In particular,  there exists a differentiable embedding of $M$ in $\bbr^m$ provided $2m\geq  3(n + 1)$.
Computability of the cohomotopy class is possible in certain situations and leads to explicit conditions on the embeddability of projective spaces for example \cite{handel}. The idea is that, since $\tilde f$ above is $\bbz_2$-equivariant, one should consider the induced quotient map $C_2(M)\rightarrow\bbr P^{m-1}$ and then translate those conditions into the vanishing of the pullback of the first Stiefel-Whitney class of the double cover $\conf{2}(M)\rightarrow C_2(M)$.

\noindent{\sc Example}: Let
$n\geq 4$. There exists a unique isotopy class of embeddings of $\bbp^n$ (complex projective space) in $\bbr^{4n}$. There exists just two isotopy classes of embeddings of $\bbp^n$ in $\bbr^{4n-1}$ \cite{yasui}.

\noindent{\bf 8.3.3.} Embedding theory takes a sharp turn with the introduction of the Goodwillie-Weiss calculus \cite{aroneturchin, weiss}.
As in the Haefliger-Wu theory, the starting point is the space of immersions, which is quite well understood by Smale-Hirsch theory, then constructing interpolating spaces $T_k\hbox{Emb}(M,N)$, similar to the truncation of a ``Taylor series'', which enjoy some homotopy invariance property and ``converge'' to $\hbox{Emb}(M,N)$ through a tower
$$\hbox{Emb}(M,N)\rightarrow T_\infty\rightarrow\cdots\rightarrow T_k\rightarrow\cdots\rightarrow T_1=\hbox{Imm}(M,N)$$
It is a deep theorem of Goodwillie and Klein that the $k$-th approximation $ev_k: \hbox{Emb}(M,N)\rightarrow T_k(\hbox{Emb}(M,N))$ is $(3-n+(m+1)(n-m-2)$-connected, if $\dim M\leq \dim N-3$
(this excludes $m=1$ and $n=3$, so this excludes knot theory).
Haefliger’s theory of embeddings in
the stable range, which we already discussed, appears as a calculation by second-order Taylor approximation \cite{weiss}. 
The main connection with configuration spaces is that the homotopy fiber of the map $T_k\hbox{Emb}(M,N)\rightarrow T_{k-1}\hbox{Emb}(M,N)$ is a bundle over configuration space $C_k(M)$. That's not all, in fact, the wording ``calculus'' for this theory is quite well chosen since maps $\bbr\rightarrow\bbr$ have as analogy functors  $F(-)$ from $\mathcal O^{op}(M)\rightarrow \hbox{Top}$, where $\mathcal O(M)$ is the category of open sets of $M$ with inclusions. These functors send isotopy equivalences to homotopy equivalences, and the analog of continuity would state that for all series of embeddings $U_0\subset U_1\subset U_2\subset\cdots$ in $\mathcal O(M)$, the natural map $F(\bigcup_kU_k)\rightarrow \hbox{holim}_kF(U_k)$ is a homotopy equivalence. 
The analogs of real polynomials become the polynomial functors $T_k$, and the \textit{linear} part is $\hbox{Imm}(-,N)$ which has the characteristic property that for all
$U_0,U_1\in\mathcal{O}(M)$, the natural map
$$F(U_0\cup U_1)\longrightarrow \hbox{holim} (F(U_0)\leftarrow F(U_0\cap U_1)\rightarrow F(U_1))$$
is a weak equivalence.
Since polynomials are determined by their value at $0$ and $k$ distinct points, the polynomial functors are determined in a suitable sense by their value on $\emptyset$ and on $\leq k$ disjoint disks of the dimension of $M$ \cite{weiss}.  

Central to the theory is therefore understanding embeddings of a disjoint union of disks $\hbox{Emb}(\bigsqcup D^m,\bbr^n)$. 
The main result there is that if $L(\bbr^m,\bbr^n)$ is the space of injective linear maps from $\bbr^m$ to $\bbr^n$,  then there is a homotopy equivalence
$$\xymatrix{\hbox{Emb}( \bigsqcup_k D^m,\bbr^n)\ar[r]^-\cong& \conf{k}(\bbr^n)\times L(\bbr^m,\bbr^n)^k}$$
defined by evaluating an embedding at the centers of the discs, and also differentiating
at the centers of the discs.
The current state-of-the-art of embedding theory relates the space of embeddings $\hbox{Emb}(M,N)$ to ``derived mapping spaces'', framed configuration spaces and operad theory \cite{fresse}. 


\subsection{Marked curves and Hurwitz spaces}\label{geometry}
Configuration spaces appear in algebraic geometry as branched points (or Weirstrass points) of algebraic curves, and for special families of curves, they determine entirely the isomorphism class of the curve, up to an automorphism. More precisely, fix $p\geq 2$, $B = \{b_1,\ldots, b_n\} \in C_n(\bbc)$, and consider the  ``$p$-elliptic'' curve which is the projective compactification of the curve having affine equation 
\begin{equation}\label{curve}
S_B = \{(x,y)\in\bbc^2\ | \ y^p = (x-b_1)\cdots (y-b_n), b_i\in B\}
\end{equation}
The case $p=2$ corresponds to hyperelliptic curves.
 Notice that each surface
$S_B$ is a smooth affine complex curve and the projection 
$S_B\longrightarrow\bbc\ ,\ (x,y)\mapsto x$ is a $p$-fold branched cover that extends to the projectivized curve to give a branched covering over the Riemann sphere
$\bar\pi : \bar{S}_B\rightarrow\bbp^1$. This means that $\pi$ is a regular $p$-covering over $\bbc\setminus B$, and over each $B$, the leaves of the covering come together in a way codified by a monodromy map. The Riemann-Hurwitz formula for the Euler characteristic gives a relation between $p, n$ and the genus of $\bar{S}_B$.
A pleasant account of this very classical material is in \cite{bott}.  If we write $E = \{(x,y,B)\ |\ (x,y)\in S_B\}$, then the projection $(x,y,B)\mapsto B$ is a surface bundle over
$C_n(\bbc)$. The study of this single $S$-bundle is already very interesting, with connections to representations of braid groups (see {\bf 2.2.3}) and geometric structures on moduli spaces of Riemann surfaces \cite{mcmullen}. A very readable account is in \cite{salter}.

The branched covers $\bar\pi : \bar{S}_B\rightarrow\bbp^1$ totally determine the isomorphism type of the $p$-elliptic curve \eqref{curve}. If we focus on the case of hyperelliptic curves ($p=2$), with genus $g$, then the covering is branched over $2g+2$ points of $\bbp^1$. These Weirstrass points determine the curve uniquely up to an action of $Aut(\bbp^1)=PSL_2(\bbc)$. So if we denote by $M_{0,2g+2}:= \conf{2g+2}(\bbp^1)/Aut(\bbp^1)=\conf{2g+1}(\bbc)/\hbox{Aff}(\bbc)$ and by 
$\widetilde{M}_{0,2g+2}:= C_{2g+2}(\bbp^1)/Aut(\bbp^1)$, 
the map
$$\xymatrix{h: \widetilde{M}_{0,2g+2}\ar[r]^\cong& \mathcal H_g\subset \mathcal M_g}$$
given by taking the isomorphism class of the projectivized double cover branched at the marked points, is an isomorphism of varieties. 
 Here of course
$\mathcal H_g$ is the moduli space of hyperelliptic curves sitting inside $\mathcal M_g$  the moduli space of smooth curves of genus $g$, the variety whose points are in one-to-one correspondence with isomorphism classes of smooth curves of genus $g\geq 2$.
For $g=2$, $\mathcal H_2=\mathcal M_2$, for $g=3$ the image of $h$ has codimension one, and for $g\geq 4$ the image has a higher codimension and isn’t a divisor.
In all cases, configuration spaces are an essential ingredient in the study of the algebraic topology of the moduli space of hyperelliptic curves. For example, an application of the scanning (electrostatic) map \eqref{scanningmap} to these moduli is as follows: the map \eqref{scanningmap} descends to a map  $\widetilde{M}_{0,n}\rightarrow \left(\Omega^2_{n-1}S^2\right)/S^1$, into the based loop space of all degree $n-1$ maps, quotiented by the circle acting on functions by rotating the codomain. As in the case of the scanning map, this map is an isomorphism through a range of degrees. The homology of the loop space (modulo $S^1$) is entirely torsion \cite{westerland}. This recovers the surprising but well-known result that $\widetilde{M}_{0,n}$ has the rational homology of a point, i.e. $H_p(\widetilde{M}_{0,n};\bbq )=0$, if $p>0$.

The study of the moduli spaces $M_{g,n}$ of genus $g$ curves with $n$ marked points, is an important and hard problem in topology. These spaces have complicated and
mostly unknown rational cohomology. Considering the action of the symmetric group
$\mathfrak S_n$ by permuting marked points, this cohomology becomes a $\mathfrak S_n$-representation. Most recent results of \cite{gadishhainaut} shed light on the representations associated to $H^{n+*}_c(M_{2,n},\bbq)$.
Since $M_{g,n}$ is an algebraic variety, mixed Hodge theory produces a weight filtration
$gr_0^W=F_0\subset F_1\subset\cdots\subset F_i=H^i_c(M_{2,n};\bbq)$, and \cite{gadishhainaut} produce explicit subrepresentations of $F_0$, for small $i$, involving a number of copies of the representations
$sgn_n\otimes H_{even}(\conf{n}(S^3)/SU (2))$, where $sgn_n$ is the sign representation. 

\noindent{\bf 8.4.1.} Branched covers (and configuration spaces) are intimately related to \textit{Hurwitz spaces} and Hurwitz covers. Let $\pi: C\rightarrow\bbp^1$ be a non-constant holomorphic map into the Riemann sphere. Then it is a $d$ branched cover over $n$ branched points $b_i\in B\subset \bbp^1$. If we assume the cover is \textit{simple}, meaning that $\pi^{-1}(b_i)$ has
cardinality $d-1$ exactly, then
$n=2d+2g-2$ by the Riemann-Hurwitz formula (in particular $n$ is always even).
Two simple coverings $\pi: X\rightarrow\bbp^1$ and $\pi': Y\rightarrow\bbp^1$ are
equivalent means there is an isomorphism (of algebraic curves) $f: X\rightarrow Y$
over $\bbp^1$ (i.e. $\pi\circ f = \pi'$).  The Hurwitz scheme $\hbox{Hur}_{n,d}$
is the space of all equivalence classes of simple coverings of $\bbp^1$ (for a fixed $n$ and degree $d$). The spaces $H_{n,d}$ are connected (Hurwitz's) and in fact, irreducible smooth varieties (as proven by Fulton and Severi).
The projection $\hbox{Hur}_{n,d}\rightarrow C_n(S^2)$ is a covering whose degree is known as the \textit{Hurwitz number}. 
Currently, very little is known about the homology of the simple Hurwitz spaces, not even for small degrees. 

In applications \cite{bianchi, ellenberg}, one does not limit himself to simple coverings but considers  $(G,c)$-branched coverings $\pi: C\rightarrow \bbp^1$, where $G$ is a finite group and $c\subset G$ a union of conjugacy classes. Such a map $\pi$ is by definition, away from a set of $n$ points in $\bbp^1$, a regular covering with Galois group $G$, with the monodromy around these points prescribed to lie in $c$. The moduli of such maps is denoted Hur$^c_{G,n}$, and here too, the map which carries $\pi$ to its branch locus in $C_n(\bbp^1)$ is a covering map. These spaces have connected components and homology in a range that is in part computed in \cite{ellenberg}, with important applications to \textit{arithmetics}. The upshot consists in using these computations to produce an upper bound for the number of extensions with bounded discriminant and fixed Galois group of a rational function field $\bbf_q(t)$. The approach in \cite{ellenberg} is to restrict to coverings over $C_n(\bbc)\subset C_n(\bbp^1)$, whose monodromy is a Braid group representation $L_c$, and whose homology can be computed as $H_*(B_n,L_c) = H_*(\widetilde{C_*^{FN}}(\bbc)\otimes_{\bbz [B_n]}L_c)$, where $\widetilde{C_*^{FN}}(\bbc )$ the cellular chain complex of the universal cover of
$\conf{n}(\bbc )$, obtained by lifting the Fox-Neuwirth/Fuks cells (see section {\bf 3.3}).
More recently, Bianchi has probed deep into the geometry of Hurwitz spaces and used it to give a surprising new proof of the old-standing Mumford conjecture on the rational cohomology of $\mathcal M_g$ \cite{bianchi}. 

\subsection{The $N$-Body problem and free loop spaces}\label{nbody} In non-linear analysis, $\conf{N}(\bbr^3)$ is the natural topological receptacle for the $N$-body problem (see \cite{diacu} for an interesting history of this topic).  In \cite{bahrab}, Bahri and Rabinowitz studied Hamiltonian systems of $3$-body type of the form
\begin{equation}\label{bodytype}
m_i\ddot{q}_i + {\partial V\over\partial q_i}(t,q) = 0\ \ \ ,\ \ \ 1\leq i\leq 3
\end{equation}
for $q=(q_1,q_2,q_3)\in\conf{3}(\bbr^d)$ and
where the interaction potential $V:\bbr\times\conf{3}(\bbr^d)\rightarrow\bbr$ has the form
$\displaystyle V = \sum_{i,j=1\atop i\neq j}^3V_{ij}(t,q_i-q_j)$,
with a list of appropriate conditions including the usual ones:
$V_{ij}(t,q)\rightarrow -\infty$ as $q\rightarrow 0$, uniformly in $t$, and
${\partial V_{ij}\over\partial q_k}(t,q)\rightarrow 0$ as $|q|\rightarrow \infty$ \cite{riahi}.
They prove that the function corresponding to \eqref{bodytype} has an unbounded sequence of critical values in the subspace $L$ of the Sobolev space $E=W^{1,2}_T(\bbr,(\bbr^d)^3)$ of $T$-periodic maps, corresponding to non-collusion orbits.  It should  be noted that the usual symmetry assumptions $V_{ij} = V_{ji}$ was not made. The point of interest is that the space $L$ has the same homotopy type as the space of free loops $L\conf{3}(\bbr^d):=\hbox{Map}(S^1,\conf{3}(\bbr^d))$,
and the basic topological fact  needed in \cite{bahrab} to reach their conclusion was to show that the relative groups $H_*(L\conf{3}(\bbr^d), L\conf{2}(\bbr^d);\bbq )$ grow unbounded. 

For more general problems of $N$-body type, $N\geq 4$, obtaining an unbounded sequence of critical values requires showing an inequality of the form 
$$\hbox{rank} H_q(L\conf{N}(\bbr^d)) > f(q) + g(q)\hbox{rank}H_q (L\conf{N-1}(\bbr^d))$$
for $q$ sufficiently large, and $f,g$ are polynomials
 \cite{riahi}. This program was carried out in the book of Fadell and Husseini \cite{fh} using ``RPT-models'', and was actually the main motivation for writing that book. The author in \cite{benhamouda} treats the case of three points in $\bbr^d$, for $d\geq 3$, using a much more streamlined approach. The idea is to observe that $\conf{3}(\bbr^d)$ is formal over $\bbz$ for all $d\geq 2$, and so over any field $\bbf$,
 $$H_*(LX,\bbf) \cong HH_*(H^*(\conf{3}(\bbr^d),\bbf)$$
 where $HH_*$ stands for Hochschild homology. This is now amenable to explicit calculations. 

\section{Compactified configuration spaces}\label{compactificationsection}

\subsection{Fulton-MacPherson and Axelrod-Singer}\label{compactification} For any locally compact and Hausdorff space $X$, it is possible to compactify $\conf{k}(X)$ by taking its one point compactification or, in case $X$ is compact, by taking $X^k$. Both compactifications exhibit $\conf{k}(X)$ as an open dense subspace, but neither of them preserves the homology or homotopy type of the configuration space. Let's say property 1 is satisfied when this homotopy type is preserved. 
Moreover, in algebraic geometry one usually prefers to compactify a noncompact variety $M$ so that it becomes the complement of a divisor with normal crossings, meaning that the compactification $V$ contains $M$ as an open subset, so that the complement $D:=V\setminus M$ has every irreducible component smooth and any number of components of $D$ intersect transversely. Let's call this property 2.
Fulton and MacPherson manage to produce a compactification of $\conf{k}(M)$ which enjoys this second property when $M$ is a smooth complex projective variety \cite{fm}. 
Axelrod and Singer \cite{axelsing} adapted the work of Fulton and MacPherson \cite{fm}
from the algebro-geometric setting to the differential-geometric setting. This turns out to satisfy property 1. Their compactification is written conveniently as $\conf{k}[M]$ and it is simply defined to be the closure of the image of a suitable embedding
$$\conf{k}(M)\hookrightarrow M^k\times\prod_{S\subset \{1,\ldots, k\}\atop |S|\geq 2} Bl(M^S,\Delta_S)$$
Here $M^S$ is the space maps from $S$ to $M$, a finite product of $M$’s, and $\Delta_S$ is the thin
diagonal of all equal entries in $M^S$ (see \cite{cattaneo}). Here $Bl(X, Y )$ denotes the differential-geometric blowup of $X$ along $Y$, which consists in replacing $Y$ by the sphere bundle of its normal bundle in $X$. A good way to think of $\conf{k}[M]$ is as a manifold with boundary (and corners), with interior $\conf{k}(M)$, \textit{homeomorphic} to $M^k$ with an open neighborhood of the fat diagonal removed. 
As alluded to, while A.S. retains the homotopy type of the configuration space (which is its interior), F.M. does not in general. For example, 
$\conf{2}[X]$ is homeomorphic to the complement of the open tubular neighborhood of the diagonal in $X^2$, which indeed has the homotopy type of $\conf{2}(X)$, while the F.M. compactification of $\conf{2}(X)$ is $X\times X$ if $\dim_\bbc X=1$ (\footnote{Thanking Sasha Voronov for discussing this point. 
}).

\noindent{\bf 9.1.1.} The Axelrod-Singer compactification enjoys further remarkable properties: it has the natural structure of a manifold with corners, its boundary is conveniently stratified and it has equivariant functorial properties under embeddings. If $M$ is compact, then $\conf{k}[M]$ is compact. Axelrod and Singer used these compactifications to define invariants of three-manifolds coming from Chern–Simons theory and in so doing opened up a vast area of research (see \S\ref{integrals}).  
We summarize the main result   \cite{cattaneo}.

\begin{quote}\textit{{\bf Theorem 9.1}: 
Let $M$ be a (smooth) compact manifold. Then $\conf{k}[M]$ are (smooth) manifolds with corners, and
all the projections $\conf{k}(M)\rightarrow \conf{k-q}(M)$ extend to smooth projections on the
corresponding compactified spaces.}
\end{quote}

The boundaries of $\conf{k}[M]$ correspond to the “collision” of at least two of the
$k$ points of $M$. Boundaries are the union of different strata corresponding
to the different ways in which all the points may collide, keeping track of directions and relative rates of collisions. 
In other words, three points colliding at the same time gives a different point in the boundary than two colliding, then the third joining them. This boundary can be defined in terms of parentheses or trees. 
One calls hidden faces those corresponding to subsets S with $|S|\geq 3$ and principal faces those for which $|S|=2$. 
This structure defines a \textit{decomposition} of $\conf{k}[M]$ into a collection of closed faces (or strata) of various dimensions whose intersections are again strata. When $M=\bbr$, the components of $\conf{k}[\bbr]$ are homeomorphic to the \textit{associahedron}, a classical object from homotopy theory, while components of $\conf{k}[S^1]$ are homeomorphic to $S^1\times W_k$,  $W_k$ being the \textit{cyclohedron} (see \cite{bott}).


\noindent{\bf 9.1.2.} Kontsevich made similar constructions at about the same time as Fulton-MacPherson and Axelrod-Singer, and used the real variant to prove the absence of divergences in perturbative Chern-Simons theory \cite{kontsevich}. 
Applications of F.M. include work of Beilinson and Ginsburg who relate $\conf{k}[X]$ to the geometry of moduli spaces of holomorphic spaces of
principal bundles on a projective smooth curve $X$ of genus $g\geq 2$. 
Gaiffi generalized A.S. to arbitrary hyperplane arrangements over the real numbers, giving a  description of the category of strata using the language of blow-ups of posets.   Perhaps the most popular construction of configuration space compactifications  is the one given by Sinha \cite{dev2} (see also \cite{kontsevich}, \S3.3.1). He first constructs $\conf{k}[\bbr^n]$ as the closure of a natural inclusion 
$$\alpha_k : \conf{k}(\bbr^n)\hookrightarrow A_k[\bbr^n]:=(\bbr^n)^k\times (S^{n-1})^{\left[{k\atop 2}\right]}\times [0,\infty]^{\left[{k\atop 3}\right]}$$ Now, if $M$ is smoothly embedded in $\bbr^n$, $\conf{k}(M)$ embeds in $\conf{k}(\bbr^n)$, and so one defines $\conf{k}[M]$ to be the closure of $\alpha_k\left(\conf{k}(M)\right)$ in $A_k[\bbr^n]$.
Sinha's construction affords maps and boundary conditions needed for applications to knot theory \cite{dev3} or geometric problems \cite{mccleary}. It has a simplicial variant equipped with projection maps and diagonal maps which satisfy \textit{cosimplicial identities}. This is used by Sinha to give a  cosimplicial space model for the space of knots in $M$ \cite{dev3}.

\noindent{\bf 9.1.3.} (Affine action and operads) An important and useful property of compactified configurations in Euclidean space, after moding out by translations and scaling, is that they give rise to operads, as originally observed Getzler-Jones \cite{getjon}.  More precisely, one compactifies the quotient space $\conf{k}(\bbr^n)/G_n$, where $G_n = \{\phi: x\mapsto ax+b\ |\ a\in\bbr^{>0}, b\in\bbr^n\}\subset \hbox{Aff}(\bbr^n)$ is the subgroup of affine transformations acting freely on the configuration space. This action is by translation and dilation.  For example, any configuration of two points in $\conf{2}(\bbr^n)$ can be translated first so that the center of mass is at the origin, then dilated so that the points are at distance $1$ from the origin. This gives a diffeomorphism $\conf{2}(\bbr^n)/G_n\cong S^{n-1}$. Generally,  the quotient $\conf{k}(\bbr^n)/G_n$ is a smooth manifold of dimension $nk-(n+1)$, not always compact. Since  $G_n=\bbr^{>0}\ltimes\bbr^n$ is contractible, this quotient is homotopy equivalent to $\conf{k}(\bbr^n)$. By taking its closure again in a product of ``blowups'', one obtains a compactification $\hbox{\bf FM}_n(k)$ of $\conf{k}(\bbr^n)/G_n$, giving yet another compactified model \cite{idrissi2}. One has that $\FM_n(0)$ and $\FM_n(1)$ are singletons, while $\FM_n(2)\cong S^{n-1}$. 
The $\FM$ model is also conveniently stratified, with strata indexed by trees on $ k$ leaves. There are operations $o_i$ of grafting trees $\hbox{\bf FM}_n(k)\times \hbox{\bf FM}_n(\ell)\rightarrow \hbox{\bf FM}_n(k+\ell -1)$, one for each $1\leq i\leq k$, which are associative and compatible with permutations. They give rise to an operad $\FM_n$ which is weakly equivalent to the little disks operad \cite{pascal, paolo0}. This means that there exists a zigzag of operadic morphisms
$D_n\leftarrow\cdots \rightarrow \FM_n$ which are weak homotopy equivalences on each component.
Markl \cite{markl} extends this construction to arbitrary manifolds $M$ so that his compactification of $\conf{k}(M)$ is a ``partial module'' over $\FM_n$, $n=\dim M$. 
A similar construction of the compactified $\FM_n$ is in \cite{kontsevich2}, without the operad structure.

The operadic structure of configuration space compactifications plays a key role in knot theory \cite{dev3}, in the proof of the real formality of the little disks operad \cite{pascal}, in the proof of Theorem 4.3 of \S\ref{maymilgram} \cite{paolo0}, and in constructing a real model for configuration spaces on compact manifolds, with or without boundary \cite{idrissi, campos2} (see \S\ref{models}). A potent idea in this context is the fact that if $M$ \textit{is framed},
i.e. if one can coherently identify the tangent space at every point of $M$ with $\bbr^n$, $n=\dim M$, 
then the $\conf{k}[M]$ assemble to form a \textit{right module} over the operad $\FM_n$ (application to factorization homology in {\bf 4.2.2}). Let us mention that in \cite{merkulov}, various operads of compactified configuration spaces are discussed in the category of smooth manifolds with corners and their complexes of fundamental chains. 


\subsection{Knot theory and configuration space integrals}\label{integrals}  Main references for this section are the  excellent accounts by Bott \cite{bott}, Volic \cite{volic, volic2} and Alvarez \cite{alvarez}, see also \cite{alshuler, cattaneo}. Configuration spaces are used to study knot invariants, that is the zeroth cohomology of the space $\mathcal K$ of embeddings of $S^1$ into $\mathbb R^3$ i.e. $H^0(\hbox{Emb}(S^1, \bbr^3)) = \hbox{Hom}(\bbz[\pi_0(\hbox{Emb}(S^1, \bbr^3)], \bbz )$). 
This study originated with physicists, in perturbative Chern-Simons theory \cite{guada, axelsing}, and was pioneered by Bott and Taubes, Witten, Bar-Natan and Kontsevich. Given a compact Lie group $G$, a compact, oriented $3$-manifold $M$, a link $L\subset M$, and for each
component of $L$ a representation of $G$, this theory associates topological invariants to these data.  Guadagnini et al. define their invariants in the case $M=S^3$, $L\neq\emptyset$, using propagators and Feynman diagrams \cite{guada}. This approach was then elaborated upon by Bar-Natan (see \cite{alshuler} and references therein). The case $M=S^3$, $L=\emptyset$ was treated
by Axelrod and Singer. A common feature of all these works is the Feynman diagram expansion familiar in perturbative quantum field theory \cite{alshuler}.  

A more algebraic topological and quite successful approach to construct invariants of embeddings $S^{2k-1}\hookrightarrow\bbr^{2k+1}$, $k>1$, appeared in work of Bott and Taubes, paving the way for extensive research and applications in that direction \cite{bt}. The authors were motivated by their desire to understand Kontsevich's \textit{Fundamental Theorem
of Finite Type Invariants} \cite{kontsevich2} which roughly states
that for every knot one can compute an integral, now called the Kontsevich integral, which is a universal Vassiliev invariant, meaning that every Vassiliev invariant\footnote{A knot invariant $V$ is finite type $k$ (or Vassiliev of
type $k$) if it vanishes on singular knots with $k + 1$ self-intersections \cite{volic}.} can be obtained from it by an appropriate evaluation. 
The guiding idea of the work of Bott and Taubes was that the familiar linking number of two-component links, given by the
Gauss integral, should be adaptable to give an invariant (or a family of invariants) of knots. 
The ensuing developments led to many results which can be summarized as follows \cite{volic}. Let $\mathcal K_m$ be the space of smooth embeddings of $S^1$ in $\bbr^m$ and set $\mathcal K_3=\mathcal K$.

\begin{quote}\textit{{\bf Theorem 9.2} \cite{bt, cattaneo}: Bott-Taubes configuration space integrals combine to yield nontrivial cohomology
classes of $\mathcal K_m$. For $\mathcal K$, they represent a universal finite type knot invariant.}
\end{quote}
\vskip 5pt

The construction of non-trivial cohomology classes for higher dimensional knot spaces (i.e embeddings of higher dimensional spheres), using the same ideas, was carried out by \cite{cattaneo}. To see what these integrals are, and how configuration spaces intervene, start with the evaluation map $$ev(k): \conf{k}(S^1)\times\mathcal K\rightarrow \conf{k}(\bbr^3)\ \ ,\ \  ev_k((x_1,\ldots, x_k),f) = (f(x_1),\ldots, f(x_k))$$
One then takes pushouts of cohomology classes from $\conf{k}(\bbr^3)$ (as in Haefliger-Hu theory \S\ref{embeddingtheory}), and  ``integrates them over the fiber'' to obtain pushout differential forms on $\mathcal K$. The problem that is  encountered here, and in other related constructions for 3-manifold invariants, is the convergence of the above integrals, a nontrivial fact since the tautological forms are not compactly supported. The elegant solution to this problem relies on extending the map $ev(k)$ to a map $ev[k]$ between the compactifications of configuration spaces on which the tautological forms extend as ``smooth'' forms. One then considers the diagram of extended maps
$$\xymatrix{\mathcal K&\conf{k}[S^1]\times\mathcal K\ar[r]^{\ \ \tiny ev[k]}\ar[l]_{\pi\ \ \ \ \  }&\conf{k}[\bbr^3]}$$
By pulling back cohomology classes from $\conf{k}[\bbr^3]$, followed by a pushforward map $\pi_*: \Omega^*(\conf{k}[S^1]\times\mathcal K)\rightarrow\Omega^{*-k}(\mathcal K)$ (integration over the fiber) on cocycles, one obtains forms on $\mathcal K$. 
Since the boundary $\partial\conf{k}[S^1]$ is naturally stratified, as explained in \S\ref{compactification}, one needs to understand how the different strata (also called faces) contribute to this pushforward. It is shown that for all ``hidden'' faces (or higher codimension faces), the integral vanishes, but on the faces determined by two points colliding' at a time (so-called principal faces), it generally did not. To cancel this term out, Bott and Taubes introduced a novel construction which consists in compactifying the configuration spaces of points in $\bbr^3$ with some points lying on the image of the given embedding $K$ of $S^1$. 
We have no space to explain the theory, but we choose to illustrate this idea through a well-explained example from \cite{alvarez, bt}. Take $\alpha = \tau_{13}\wedge\tau_{24}$, where $\tau_{ij}=ev^*(\alpha_{ij})$ is the pullback of the tautological form \eqref{alphaij} (Theorem 2.3, \S\ref{cohomology}). This is a $4$-form on $\conf{4}[S^1]\times\mathcal K$ which one integrates along the four dimensional fibers of $\pi$ to get a $0$-form on $\mathcal K$. Its derivative is non-zero along the principal faces. Define $\conf{4,3}[\bbr^3,K]$ to be the ordered configuration space of $4$ points, $3$ of which are restricted to lie on the image of $K$. There is a manifold compactification of the space of such configurations, and maps as before:
$\widetilde{ev}: \conf{4,3}[S^1,\mathcal K]\rightarrow\conf{4}[\bbr^3]$ and
$\widetilde{\pi}: \conf{4,3}[S^1,\mathcal K]\rightarrow \mathcal K$. Write $\widetilde{\tau_{ij}} = {\widetilde{ev}}^*(\alpha_{ij})$, then the following result holds (\cite{bott}, Theorem 1, \cite{alvarez}, Theorem 2.3.9). 

\begin{quote}\textit{{\bf Theorem 9.3} \cite{guada, bt}: ${1\over 4}\pi_*(\tau_{13}\wedge\tau_{24})- {1\over 3}\widetilde{\pi_*}(\widetilde{\tau_{14}}\wedge \widetilde{\tau_{24}}\wedge \widetilde{\tau_{34}})$ is a (Vassiliev) knot invariant of type $2$.}
\end{quote}
\vskip 5pt

Building on these ideas, one can prove many more results. D. Thurston generalized the Bott-Taubes construction to obtain all finite-type invariants this way, thus validating this approach to the fundamental Theorem of finite-type invariants \cite{volic2}.
Graph homology also leads to invariants for framed knots \cite{alshuler} and invariants of immersions and higher embeddings  (see \S\ref{graphhom}). 
Further extensions of these ideas to invariants of links and 3-manifolds by means of ``graph
configurations'' is in work of Lescop \cite{lescop}. 
Finally, a purely homotopy theoretic approach to configuration space integrals is developed in \cite{koytcheff}.

\subsection{Graph cohomology} \label{graphhom} The discussion above shows the importance of integrating forms over the strata of the compactified configuration space $\conf{k}[\bbr^n]$. This is at the core of the idea of constructing ``graphical models'' for the de Rham cohomology of configuration spaces. That graph complexes may model the de Rham cohomology of configuration spaces of points is due to Kontsevich \cite{kontsevich3} as a key ingredient of his ground breaking proof of the rational formality of the little n-disks operad. This construction is explained in \cite{campos2} and we follow their presentation. 
The main idea is to build a Hopf cooperad $Graphs_n$ to connect $H^*(\FM_n)$ with the \textit{piecewise semi-algebraic forms} on $\conf{k}[M]$, denoted $\Omega^*_{PA}(\FM_n)$, through quasi-isomorphisms of 
differential graded-commutative algebra (in fact of homotopy Hopf cooperads) 
$$H^*(\FM_n) \longleftarrow \hbox{Graphs}_n\longrightarrow \Omega^*_{PA}(\FM_n)$$
Here $\FM_n$ is a defined in \S\ref{compactification}, the middle graph complex $\hbox{Graphs}_n$ (described in \cite{campos2} for example) is denoted in Nlab\cite{nlabgraphs} by the
``Graph complex of n-point Feynman diagrams for Chern-Simons theory''.
As we know, $H^*(\FM_n)$ is the cohomology of the configuration spaces $\coprod_k\conf{k}(\bbr^n)$. The semi-algebraic structure of $\FM_n$ is explained in \cite{pascal, camposwill}, and the need for resorting to these piecewise semi-algebraic forms is
because the 
projections $\FM_n(k + l)\rightarrow \FM_n(k)$ are not submersions in general (so not convenient for working with usual deRham forms), but are semi-algebraic bundles, in some well-defined sense \cite{pascal}.
The morphism Graphs$_n\rightarrow H^*(\FM_n)$ is given by sending an edge between $i$ and $j$ to the tautological generator $\alpha_{ij}$, and any graph with internal vertices to zero. The other morphism
$\hbox{Graphs}_n\longrightarrow \Omega^*_{PA}(\FM_n)$ is even more intricate and is obtained using integrals by ``regarding a graph as a Feynman diagram for Chern-Simons theory on $\bbr^n$ and sending it to its corresponding Feynman amplitude, namely to the configuration space-integral of the wedge product of Chern-Simons propagators associated to the edges, regarding Feynman amplitudes as differential forms on configuration spaces of points'' \cite{nlabgraphs}. Both morphisms, as mentioned, are quasi-isomorphisms.

The idea of constructing graph complexes turns out to be deep and powerful, with many emulations. In \cite{cattaneo}, the authors construct a cochain map $D_n\rightarrow \Omega^*(\mathcal K_n)$  
between a certain diagram complex $D_n$ generalizing trivalent diagrams of \cite{bt} and  the deRham complex of $\mathcal K_n$, where $\mathcal K_n$ is the space of knots in $\bbr^n, n>3$ (see \S\ref{integrals}). 
They use this map to show that spaces of knots have cohomology in arbitrarily high degrees. 
Other applications include the real invariance of configuration spaces on manifolds as discussed in \S\ref{models}. The proof of \cite{idrissi} relies on constructing a ``labeled'' graph complex Graphs$_R(*)$ 
that connects $\mathcal G_A$ (the Lambrechts-Stanley model \S\ref{models}) on the one hand,  and the piecewise semi-algebraic deRham complex $\Omega^*_{PA}(\conf{*}[M])$ on the other.  A nice discussion of graph cohomology is in \cite{idrissi2}.

\section{Many variants}\label{variants}

Configuration spaces of points can come in many variants, often labeled in the literature as ``generalized'' or ``colored" configuration spaces. In \S\ref{labeled}, we have already discussed one important such variant consisting of ``labeled particles''. 

\noindent{\bf 10.1}. A most natural generalization comes from the theory of hyperplane arrangements.
More precisely,  since $\conf{k}(X)$ is obtained from $X^k$ by removing the fat diagonal, one can immediately think of taking an intermediate space which we can write $\conf{k}^\ell (X)$, $2\leq \ell\leq k$, consisting of the complement of all diagonal subspaces of the form
$$\Delta_{i_1,\ldots, i_\ell}(X) := \{(x_1,\ldots, x_k)\in X^k\ |\ x_{i_1} = x_{i_2}=\cdots = x_{i_\ell}\}$$
This space is dubbed the \textbf{no-$\ell$ equal configuration space} of points \cite{bjowel}.  These spaces interpolate between $\conf{k}(X)$ when $\ell=2$, and $X^k$. The symmetric group acts on this space by permuting coordinates, we can similarly write $C_k^\ell (X)$ the orbit space, and we have the sequence of open embeddings
$$C_k(X)=:C_k^2(X)\subset \cdots\subset
C_k^k(X)\subset \sp{k}(X)$$
The space $\conf{k}^\ell(\bbr^n)$ is generally well-understood, and as in the case $\ell=2$ (Theorem 1.1, \S\ref{homology}) its homology is torsion free.
The homology and cohomology ring of $\conf{k}^\ell(\bbr^n)$, for $n\geq 2$, is given in \cite{dobtur}, using earlier ideas of Y. Baryshnikov, where it is related to the $k$-non overlapping disks, in the same way configuration spaces relate to the little disks. Dobrinskaya and Turchin do not get an operad in this case, but a bimodule over the operad of little disks. They give a complete (co)homological answer which is a fluid generalization of Theorem 2.1, \S\ref{homology}. A rational model is known in the case $n=2$ (S. Yuzvinski), and based on this model, some Massey products are shown to be non-trivial for $\ell= 3, k\geq 7$ (M. Miller), thus showing that the space is non-formal, unlike the case of the classical configuration spaces $\ell=2$ (Arnold, \S\ref{cohomology}). 
The unordered no-$\ell$ equal configurations are studied in \cite{kalsai} where their connectivity and fundamental groups and some homology groups are determined for finite simplicial complexes. General results on $\conf{k}^\ell(\bbr^n)$ and $C_k^\ell(\bbr^n)$ are relevant to the study of $\ell$-immersions \cite{ssv} or to \textit{linear decision trees} as in work of Bj\"orner and Lovasz. Further generalizations to configuration spaces that are allowed to collide in clusters or with multiplicity are described as
\textit{polychromatic configuration spaces} in \cite{kosar}, or as  \textit{spaces of $0$-cycles} in \cite{fww}.

\noindent{\bf 10.2}. \textit{Chromatic configuration spaces} form the next family of interesting and natural extensions of the configuration spaces of points. They have been first introduced as the complement of ``graphic subspace arrangements'' and only relatively recently have they been systematically studied from the algebraic topological point of view \cite{radmila, bmp, eastwood}. In \cite{radmila}, they were dubbed ``graph configuration spaces'', and in \cite{bmp}, ``partial configuration spaces''. To define these spaces, we let $G$ be a simple
graph (no loops and no multiple edges),  $V(G) = \{v_1,\ldots, v_n\}$ denote its set of vertices, and $E(G)$ the  set of edges of $G$.  The chromatic configuration space of $X$ is defined to be
\begin{equation}\label{configG}\conf{G} (X)=\{(x_1,\cdots,x_n)\in X^{|V|}\;|\; x_i\neq x_j \;\text{if}\;\{i,j\}\in E(G)\}
\end{equation}
Clearly, different labeling of vertices induces homeomorphic spaces. When the graph is complete $G=K_n$ (i.e. any two vertices are adjacent), one recovers the classical configuration space of pairwise distinct points, i.e.  $\conf{K_n} (X ) = \conf{n}(X)$. Note that $\conf{G}(X)$ can be viewed as the space of colorings of the graph $G$, having colors in $X$. The following ``categorification'' result is the most attractive in the theory \cite{eastwood}. Eastwood and Huggett formulate it for $M=\bbc P^{n-1}$, and the statement below is in \cite{walid}, with a different proof.

\begin{quote}\textit{{\bf Theorem 10.1 \cite{eastwood}} (see also \cite{walid}): 
Let $M$ be an $n$-dimensional topological manifold (with or without boundary), let $G$ be a simple finite connected graph on the vertex set $V$, and let $ch (G,t)$ the chromatic polynomial of $G$. Then\ 
$$\chi (\conf{G}(M)) = (-1)^{n|V|}ch(G, (-1)^n\chi (M))$$} \end{quote}

In \cite{radmila}, the authors addressed the problem of computing the homology of $\conf{G}(X)$ for general $X$. They constructed an explicit graph complex for that purpose, generalizing the complex earlier given by Bendersky and Gitler (see {\bf 6.1.3}). In the case $X=\bbr^n$, $\conf{G}(\bbr^n)$ is a complement of a subspace arrangement, and these have been studied under the name of ``graphic arrangements''. Longueville and Schultz have computed the cohomology ring, as a consequence of their general study of the cohomology of complements of $n$-arrangements, and an aesthetic derivation of this computation has been given by Bockstedt and Minuz \cite{bm}, in the spirit of the Arnold-Cohen computation. From the stable homotopy point of view, these spaces split after one suspension as a bouquet of spheres (in particular, their homology is torsion-free), with the number of the spheres given by the positive part of the coefficients of the chromatic polynomial of $G$ (see Part II, Theorem \ref{poinc}). This is essentially extracted from combined work of Goresky-MacPherson and Orlik-Solomon. Remarkably,
the Betti number
$\beta_{(|V|-k)(n-1)}$ of
$\conf{G}(\bbr^n)$
is the number of
``spanning forests on k
trees with no broken
cycles'' of $G$, and the end result is a direct generalization of \eqref{singlesplit}.
Note that \textit{generalized configuration spaces} coming from partitions  \cite{petersen} coincide with chromatic configuration spaces as well. 

A special case of a graph configuration space is the \textit{cyclic configuration} space studied by Farber and Tabachnikov \cite{farber}, in relation to Billiard-type problems. This is precisely $\conf{C_k}(X)$, where $C_k$ is the cyclic graph with $k$ vertices. The spaces $\conf{C_k}(X)$ consist of tuples $(x_1,\ldots, x_k)\in X^k$ such that $x_i\neq x_{i+1}$, indexes taken modulo $k$. The question to be answered in that paper is this: how many periodic billiard trajectories are there
in a smooth strictly convex domain in $\bbr^{m+1}$? The authors give lower bounds, and their approach involves the explicit computation of the integral cohomology ring structure of $\conf{C_k}(\bbr^{m})$. This is done using the exact same approach inaugurated by Totaro (see \S\ref{spectral}) which consists in analyzing the Leray spectral sequence of the inclusion $\conf{C_k}(\bbr^{m+1})\hookrightarrow (\bbr^{m+1})^k$.

\noindent{\bf 10.3}. \textit{Orbit configuration spaces} form another family of configuration spaces introduced in the work of Dung \cite{dung} and Xicotencatl \cite{xico}. Let $G$ be a group, which we assume finite, acting properly on $X$, and define
$$\conf{n}^{G}(X) = \{(x_1,\ldots, x_n)\in X^n\ |\ Gx_i\cap Gx_j=\emptyset\ ,
\ \hbox{if}\ i\neq j\}$$
where $Gx$ is the orbit of $x$. If the action of $G$ is trivial (i.e. fixes every point), we recover $\conf{n}(X)$. F. Cohen, M. Xicotencatl have studied these spaces, and more extensively by Bibby and Gaddish in the context of smooth manifolds and varieties, whereby spectral sequences have been analyzed and representation stability results obtained (see \cite{bibby}, references therein and related work by the authors). The Euler characteristic of this space is computed in \cite{walid} in terms of the orbit stratification of the action of $G$ on $X$.

\noindent{\bf 10.4}. Last but not least,  \textit{configurations of hard balls} or \textit{thick configuration spaces} are among the most well-studied models of matter in statistical mechanics where phase transition turns out to depend on the change of topology of the underlying configuration space. To define the space in question, let $\mathcal B$ be a bounded region in $\bbr^d$, then
$\hbox{Conf}_n(\mathcal B, r)$ is the space of  $n$-tuples of non-overlapping balls of radius $r$ in $\mathcal B$. If $\mathcal B$ has boundary, then the balls have to avoid the boundary. One is interested here in understanding when the topology changes if $n$ is fixed and $r$ is varying.
Expanding the particles to have positive thickness complicates the topology of the underlying configuration space significantly. However, there is real interest in studying these spaces. Hard disks systems are often considered prototypes for simple fluids. A set of hard disks inhabiting a bounded area may be regarded as a model thermodynamic system. From this standpoint, a topological property such as connectivity of the configuration space is a fundamental concern. 
One can prove sharp results using Morse theory \cite{carlsson, bbk} or posets and nerve theorem \cite{alpert}. The theory has some suprising and remarkable results like this one from \cite{carlsson} which applies in dimension two (hard disks): consider five disks
in the unit square $[0,1]^2$. Then the topology of $\conf{5}(\mathcal B,r)$ changes at least $20$ times as
the disk radius varies and for radius $0.1686 < r < 0.1692$, the configuration space has the first betti number $\beta_1$ = 2176.

\clearpage


.\vskip 200pt
\centerline{\bf\Huge Part II:}
\vskip 20pt

\centerline{\bf\Huge Configuration Spaces}
\vskip 5pt

\centerline{\bf\Huge and the Chromatic Polynomial}


\vskip 40pt

\begin{quote}
We study the chromatic configuration space $\conf{\Gamma}(\bbr^N)$ associated to a simple finite graph $\Gamma$. This is the complement of the so-called graphic subspace arrangement associated to $\Gamma$.
Using poset topology, we show  that the Poincar\'e polynomial of the chromatic configuration space is the reciprocal of the chromatic polynomial of $\Gamma$ (with signs).
We further show that these spaces split after a single suspension as a wedge of spheres, the number of wedge summands being given in terms of the coefficients of the chromatic polynomial. This splitting is deduced from the description of the homology generators in terms of ``forests of spanning trees with no-broken cycles (NBC)''. This description generalizes the theory of classical configuration spaces. As a good application, we deduce the homology of spaces of configurations consisting of ``$n$ moving objects in $\bbr^N$, distinct or not, each avoiding a given subset of $r$  fixed obstacles''.\footnote{\noindent{\sc Acknowledgment}: The author is grateful to Moez Bouzouita for many discussions related to this chaper, and to In\`es Saihi for her support. He is grateful to Pavle Blagojevi\'c for helpful comments. } 
\end{quote}

\clearpage

\part{Configuration Spaces and the Chromatic Polynomial}
\setcounter{section}{0}
\section{Introduction}

Chromatic configuration spaces appear to have been first introduced as the complement of ``graphic arrangements'' in \cite{gz}, and investigated in the more general algebraic topological context more recently in \cite{radmila, bmp, bm, eastwood, walid, zakharov}. These spaces offer both a natural and elegant extension of the pairwise distinct point configuration spaces studied in the first part of this user's guide. This extension is not ``esoteric'', it has some real applications to billiard-type problems \cite{farber} or to the study of moving objects in $\bbr^N$ avoiding some fixed set of obstacles as discussed in this work. 

We first review the definition. Let $\Gamma = (V(\G), E(\Gamma ))$ be an abstract graph on vertices labeled $v_1,\ldots, v_m$. The set of vertices is conveniently written $V(\G) = \{1,\ldots, m\}$, where $i$ refers to vertex $v_i$. An element of $E(\G)$ is an edge of the form $\{v_i,v_j\}$ (or $\{i,j\})$, $i\neq j$.  Two vertices are said to be \textit{adjacent} if they form an edge. All graphs in this paper will be simple, meaning they have no loops and no multiple edges. We write $|\Gamma|:=|E(\Gamma)|$ the number of edges of $\Gamma$ and call it, as is customary, the \textit{size} of the graph. The number of vertices is $|V|=|V(\G)|$, and this will be $m$.

Let $X$ be a path-connected topological space. The chromatic configuration space associated to a simple graph $\G$ was defined in Part I \S\ref{variants}, as follows
\begin{equation}\conf{\G} (X)=\{(x_1,\cdots,x_{|V|})\in X^{|V|}\;|\; x_i\neq x_j \;\text{if}\;\{i,j\}\in E(\G)\}
\end{equation}
This is the complement in $X^{|V|}$ of some diagonal subspaces where $x_i=x_j$ if $\{i,j\}$ is an edge. Different labeling of the vertices produce homeomorphic spaces. When the graph is complete $\G=K_m$ (i.e. any two vertices are adjacent), then  $$\conf{K_m}(X) = \conf{m} (X )$$ is the space we studied extensively in Part I.
Note that if $\Gamma'$ is obtained from $\Gamma$ by removing an edge (keeping the same number of vertices), then $\conf{\Gamma} (X )$ is an open subspace of
$\conf{\Gamma'} (X)$. In particular, $\conf{K_m}(X)$ is a dense open subspace of $\conf{\G}(X)$, where $\Gamma$ has $m$ vertices. 

All graphs $\Gamma$ in this paper are assumed to be connected, there is no loss of generality in assuming them to be so since for disjoint graphs $\Gamma_1$, $\Gamma_2$,
\begin{equation*}\label{disjoint}
\conf{\Gamma_1\sqcup \Gamma_2}(X)\cong
\conf{\Gamma_1}(X)\times\conf{\Gamma_2}(X)
\end{equation*}
If $\G$ is a connected simple graph on $m$ vertices, we write $\chi_\Gamma (\lambda)$ its chromatic polynomial (see \S\ref{linearterm}).
The  Poincar\'e polynomial for $\conf{\G}$ is on the other hand written as
$$P_t(\conf{\Gamma}(\bbr^N)) := \sum_{i\geq 0}\hbox{rank} (H_i(\conf{\G} (\bbr^N ),\bbz )t^i$$
with $i$-th betti number 
$\beta_i := \hbox{rank} (H_i(\conf{\G} (\bbr^N ),\bbz)$. Our main objective is to prove the following beautiful result.

\begin{theorem}\label{poinc} Let $\Gamma$ be a connected simple graph on $m$ vertices with chromatic polynomial $\chi_\G$, and $N\geq 2$. The homology of $\conf{\Gamma}(\bbr^N)$ is torsion free and its Poincar\'e polynomial is given as follows\
\begin{eqnarray*}
P_t(\conf{\Gamma}(\bbr^N)) &=& (-1)^mt^{m(N-1)}\chi_\Gamma\left(-t^{(1-N)}\right)
\end{eqnarray*}
\end{theorem}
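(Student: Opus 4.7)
The plan is to realize $\conf{\Gamma}(\bbr^N)$ as the complement of the \emph{graphic subspace arrangement}
$$\mathcal{A}_\Gamma = \bigl\{ \Delta_{ij} \subset (\bbr^N)^m \;\bigm|\; \{i,j\} \in E(\Gamma)\bigr\},$$
where each $\Delta_{ij}$ has real codimension $N$, and then apply the Goresky--MacPherson formula to compute its Poincar\'e polynomial from the combinatorics of the intersection lattice. The intersection lattice $L(\Gamma)$ of $\mathcal{A}_\Gamma$ is canonically identified with the lattice of \emph{$\Gamma$-admissible partitions} of $V(\Gamma)$, namely those set partitions $\pi$ of $V(\Gamma)$ in which each block spans a connected subgraph of $\Gamma$, ordered by refinement. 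For such a $\pi$, the corresponding intersection subspace has real codimension $N(m-|\pi|)$, where $|\pi|$ is the number of blocks.

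Applying the Goresky--MacPherson formula (Part I, \S\ref{homology}) gives, for $N\geq 2$ and each $i\geq 0$,
$$\tilde{H}^i(\conf{\Gamma}(\bbr^N);\bbz) \;\cong\; \bigoplus_{\pi\in L(\Gamma)\setminus\{\hat{0}\}} \tilde{H}_{N(m-|\pi|)-2-i}\bigl(\Delta(\hat{0},\pi);\bbz\bigr).$$
Here the key combinatorial input is that $L(\Gamma)$ is a \emph{geometric lattice} (it is a subposet of the partition lattice $\Pi_m$ arising from the graphic matroid of $\Gamma$), so by Bj\"orner's theorem on geometric lattices each open interval $(\hat{0},\pi)$ has an order complex with the homotopy type of a wedge of spheres of the single dimension $r(\pi)-2 = (m-|\pi|)-2$, and the number of spheres equals $|\mu(\hat{0},\pi)|$, the absolute value of the M\"obius function. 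Only the top homology of each $\Delta(\hat{0},\pi)$ therefore contributes, and this forces the index relation
$$N(m-|\pi|)-2-i \;=\; (m-|\pi|)-2 \qquad\Longleftrightarrow\qquad i \;=\; (N-1)(m-|\pi|).$$
In particular, the cohomology (hence homology) is torsion-free and concentrated in degrees that are multiples of $N-1$, and
$$P_t(\conf{\Gamma}(\bbr^N)) \;=\; \sum_{\pi\in L(\Gamma)} |\mu(\hat{0},\pi)|\, t^{(N-1)(m-|\pi|)}.$$

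To match this with the chromatic polynomial, invoke Whitney's classical identity
$$\chi_\Gamma(\lambda) \;=\; \sum_{\pi\in L(\Gamma)} \mu(\hat{0},\pi)\, \lambda^{|\pi|}, \qquad \mu(\hat{0},\pi) = (-1)^{m-|\pi|}|\mu(\hat{0},\pi)|.$$
Substituting $\lambda = -t^{1-N}$ and multiplying by $(-1)^m t^{m(N-1)}$ yields
$$(-1)^m t^{m(N-1)}\chi_\Gamma(-t^{1-N}) \;=\; \sum_{\pi} (-1)^{m+(m-|\pi|)+|\pi|}\,|\mu(\hat{0},\pi)|\, t^{m(N-1)+(1-N)|\pi|} \;=\; \sum_\pi |\mu(\hat{0},\pi)|\, t^{(N-1)(m-|\pi|)},$$
which is exactly the expression for $P_t(\conf{\Gamma}(\bbr^N))$. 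This proves the theorem.

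The main technical obstacle is the careful identification of $L(\Gamma)$ with the graphical partition lattice, together with the verification that this lattice is geometric (so that Bj\"orner's wedge-of-spheres result applies); the sign bookkeeping is then forced by the rank function $r(\pi)=m-|\pi|$. As a bonus, the homology generators can be made explicit: by the Whitney/NBC theorem, $|\mu(\hat{0},\pi)|$ counts the spanning forests of the block graph with no broken circuits (with respect to any fixed linear order on $E(\Gamma)$), which gives the basis described in Theorem~\ref{generators} and lines up with the planetary-system/toric description of $H_*(\conf{m}(\bbr^N))$ in the complete-graph case.
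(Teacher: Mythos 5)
Your proposal is correct and follows essentially the same route as the paper: realize $\conf{\Gamma}(\bbr^N)$ as the complement of the graphic $c$-arrangement, identify the intersection lattice with the bond (connected-partition) lattice, use that it is geometric so each open interval $(\hat 0,\pi)$ is a wedge of $|\mu(\hat 0,\pi)|$ spheres of dimension $(m-|\pi|)-2$, plug into the Goresky--MacPherson formula, and convert via Rota's identity $\chi_\Gamma(\lambda)=\sum_\pi \mu(\hat 0,\pi)\lambda^{|\pi|}$. The only (cosmetic) difference is that the paper evaluates $|\mu(\hat 0,\pi)|$ by factoring each interval as a product of smaller bond lattices and rephrasing the count in terms of the Whitney/NBC coefficients $a_k(\Gamma)$, whereas you invoke the M\"obius function and Rota's formula directly; the sign and degree bookkeeping in your substitution is exactly right.
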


The author knows of no reference to this result stated explicitly as is in the literature. As pointed out to us by Victor Reiner, this result can however be obtained by combining  known results of Goresky-MacPherson and Orlik-Terao. More precisely, \cite{orlikterao} (see \S2.4) prove this same result in the complex codimension $1$ case,  and \cite{goreskymacpherson} (Part III, Chapter 4, Theorem B) shows that the Poincar\'e polynomial of the corresponding chromatic $c$-arrangement has the same form, with powers of $t$ appended. Here $c$-arrangements are subspace arrangements in which every intersection of the subspaces has codimension equal to a multiple of some fixed $c$, so $c=1$ corresponds to real hyperplane arrangements
(see \S\ref{bondlat}, \cite{schaper} Definition 2 or \cite{goreskymacpherson}). Since a chromatic configuration space $\conf{\Gamma} (\bbr^N)$, $|V(\Gamma )|=k$, is
the complement in $(\bbr^N)^k$ of codimension $N$ subspaces defined by $x_i=x_j$
if $\{i,j\}$ is an edge, it is an example of a $c$-arrangement with $c=N$, and the stated result follows. 

Our approach to Theorem \ref{poinc}\footnote{All numbered references in the text are from PartII. A result from Part I will be preceded by an explicit "Part I". } uses ``Poset Topology'', as presented in \cite{wachs} for example, and is completely self-contained. It still relies on the Goresky-MacPherson formula for the homology of the complement of an arrangement. It uses the combinatorics of the poset of the arrangement (the so-called \textit{bond poset}), its Mobius function, and some interesting relations between the coefficients of the chromatic polynomial. Our treatment of this Theorem here nicely highlights the connection between configuration spaces and graph invariants. 

Theorem \ref{poinc} can be formulated in slightly different ways. Whitney's broken cycle theorem \cite{whitney} expresses the chromatic polynomial as an alternating polynomial
\begin{equation}\label{whitney1}
\chi_\Gamma (\lambda) = \sum_{i=1}^m(-1)^{m-i}a_i(\Gamma )\lambda^i
\end{equation}
where the coefficient $a_i(\Gamma )\geq 0$ counts the number of spanning forests $F$ of $\Gamma$ that have exactly $m-i$ edges and that contain \textit{no broken cycles} (i.e. ``NBC''). 
We recall that a forest $F$ is spanning in $\G$ if $F$ is a forest and $V(F)=V(\G)$. The definition of NBC is given shortly below. When $i=1$, a subgraph with $m-1$ edges is a spanning tree, and the term $a_1(\Gamma )$ therefore counts all NBC spanning trees. The term $a_m(\Gamma)=1$ always (\cite{read}, Theorem 8).

\begin{corollary}\label{poinc2} Let $\Gamma$ be a connected simple graph on $m$ vertices with chromatic polynomial $\eqref{whitney1}$, and $N\geq 2$. Then
\begin{eqnarray*}
P_t(\conf{\Gamma}(\bbr^N)) &=& \sum_{i=1}^ma_i(\Gamma)t^{(m-i)(N-1)}
\end{eqnarray*}
\end{corollary}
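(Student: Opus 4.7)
The plan is to derive Corollary 1.2 directly from Theorem 1.1 by substituting Whitney's broken cycle expansion of $\chi_\Gamma$ into the formula of Theorem 1.1 and simplifying. Since Theorem 1.1 expresses the Poincaré polynomial of $\conf{\Gamma}(\bbr^N)$ as
$$P_t(\conf{\Gamma}(\bbr^N)) = (-1)^m t^{m(N-1)}\chi_\Gamma(-t^{1-N}),$$
and since Whitney's theorem gives $\chi_\Gamma(\lambda)=\sum_{i=1}^m(-1)^{m-i}a_i(\Gamma)\lambda^i$, the corollary should follow purely from bookkeeping of signs and exponents. No further topological input is needed once Theorem 1.1 is in hand.

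The key steps, in order, are as follows. First, evaluate the chromatic polynomial at $\lambda=-t^{1-N}$: using Whitney's formula,
$$\chi_\Gamma(-t^{1-N}) = \sum_{i=1}^m(-1)^{m-i}a_i(\Gamma)(-t^{1-N})^i = \sum_{i=1}^m(-1)^{m-i}(-1)^i a_i(\Gamma) t^{i(1-N)} = (-1)^m\sum_{i=1}^m a_i(\Gamma) t^{-i(N-1)}.$$
Second, multiply by the prefactor $(-1)^m t^{m(N-1)}$ from Theorem 1.1; the two $(-1)^m$ factors cancel, and one is left with
$$P_t(\conf{\Gamma}(\bbr^N)) = \sum_{i=1}^m a_i(\Gamma)\, t^{m(N-1)-i(N-1)} = \sum_{i=1}^m a_i(\Gamma)\, t^{(m-i)(N-1)},$$
which is precisely the stated formula.

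Third, a small sanity check confirms the formula: the constant term corresponds to $i=m$, with $a_m(\Gamma)=1$, giving $\beta_0=1$ (consistent with $\conf{\Gamma}(\bbr^N)$ being path-connected for $N\geq 2$), and the top term $i=1$ has exponent $(m-1)(N-1)$ and coefficient $a_1(\Gamma)$, the number of NBC spanning trees, matching the expected top-degree behaviour suggested by the generators discussion around Theorem \ref{generators}.

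There is no real obstacle here: the corollary is essentially a substitution, and the only thing to be careful about is correctly tracking the signs $(-1)^{m-i}(-1)^i=(-1)^m$ so that all sign ambiguities cancel cleanly. The substantive content lies entirely in Theorem 1.1 together with Whitney's broken cycle theorem, both of which are taken as inputs. The main interest of the corollary is conceptual rather than technical, since it exhibits the Betti numbers of $\conf{\Gamma}(\bbr^N)$ as non-negative integers with a transparent combinatorial meaning, namely the counts of spanning forests of $\Gamma$ with no broken cycles; this matches the geometric description of homology generators given later in Theorem \ref{generators}.
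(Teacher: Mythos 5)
Your derivation is correct and is exactly the intended one: the corollary is an immediate reformulation of Theorem \ref{poinc} via Whitney's expansion \eqref{whitney1}, with the signs $(-1)^{m-i}(-1)^i=(-1)^m$ cancelling against the prefactor as you show. In fact the paper's proof of Theorem \ref{poinc} in \S\ref{proofmain} arrives at the formula in precisely the corollary's form (via Proposition \ref{aktoa1}) before restating it in terms of $\chi_\Gamma$, so the two statements are interderivable by the same substitution you perform.
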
 

This expression of $P_t$ seems optimal, nonetheless we can further reformulate it as follows. We recall that a \textit{cycle} in a graph is a closed path in a graph with no repeated vertices except the beginning and ending vertices. Fix an ordering on the edges of the graph $\Gamma$. An NBC (i.e. a \textit{broken cycle or circuit}) in $\Gamma$ is obtained by taking the edges of any cycle in the graph and removing its largest edge, in the given ordering. An NBC forest in $\Gamma$ is any set of edges $\mathcal F\subset E$ that make up a forest and that does not contain a broken cycle. In other words, $\mathcal F$ is a collection of disconnected subtrees of $\Gamma$, and for every edge $e\in E\setminus \mathcal F$, if $\mathcal F\cup e$ contains a cycle, then $e$ is not the largest edge of that cycle.
As in \cite{jenssen}, let $\mathcal F^{nbc}(\Gamma)$ be the set of all NBC forests in $\Gamma$ (including the empty set), and write $|\mathcal F|$  the size of $\mathcal F\subset E$, i.e. the number of edges in the forest. The following is a consequence of Whitney's description of $\chi_\Gamma$ (see \cite{jenssen}, Theorem 1.4). The result does not depend on the ordering of the edges of $\Gamma$.

\begin{corollary}\label{coro} For $N\geq 2$,  $\Gamma$ a simple graph, 
$\displaystyle P_t(\conf{\Gamma}(\bbr^N))=\sum_{\F\in\mathcal F^{nbc}(\Gamma)}t^{|\F |(N-1)}$.
\end{corollary}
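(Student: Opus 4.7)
The plan is to deduce Corollary \ref{coro} as a direct re-indexing of Corollary \ref{poinc2}, using Whitney's broken cycle theorem. By \eqref{whitney1}, the coefficient $a_i(\Gamma)\geq 0$ in the chromatic polynomial equals the number of spanning forests of $\Gamma$ having exactly $m-i$ edges and containing no broken cycle (NBC). Corollary \ref{poinc2} therefore reads
$$P_t(\conf{\Gamma}(\bbr^N))=\sum_{i=1}^m a_i(\Gamma)\,t^{(m-i)(N-1)},$$
and the task reduces to matching this sum against the sum over $\mathcal F\in\mathcal F^{nbc}(\Gamma)$.

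First I would observe the standard forest identity: any edge subset $\mathcal F\subset E(\Gamma)$ that is acyclic defines a spanning subgraph $(V(\Gamma),\mathcal F)$ whose number of connected components equals $m-|\mathcal F|$. Hence an NBC forest $\mathcal F$ with $|\mathcal F|=m-i$ is the same datum as an NBC spanning forest of $\Gamma$ with $m-i$ edges and $i$ components; these are precisely the objects counted by $a_i(\Gamma)$. Let $\mathcal F^{nbc}_j(\Gamma)$ denote the subset of NBC forests of cardinality $j$, so that $|\mathcal F^{nbc}_j(\Gamma)|=a_{m-j}(\Gamma)$ for $j=0,1,\ldots,m-1$. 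The value $j=0$ corresponds to the empty forest, consistent with $a_m(\Gamma)=1$.

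Setting $j=m-i$ in the above display and re-indexing gives
$$P_t(\conf{\Gamma}(\bbr^N))=\sum_{j=0}^{m-1}|\mathcal F^{nbc}_j(\Gamma)|\,t^{j(N-1)}=\sum_{\mathcal F\in\mathcal F^{nbc}(\Gamma)}t^{|\mathcal F|(N-1)},$$
which is the stated formula. The only delicate point to flag, rather than a genuine obstacle, is that the NBC condition is defined relative to an ordering of $E(\Gamma)$, yet the cardinality $|\mathcal F^{nbc}_j(\Gamma)|$ is an invariant of $\Gamma$; this is exactly the content of Whitney's theorem and guarantees that the right-hand side above is well-defined independently of the chosen edge ordering.
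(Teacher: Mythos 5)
Your proposal is correct and matches the paper's intended argument: Corollary \ref{coro} is obtained from Corollary \ref{poinc2} exactly by the re-indexing you describe, using Whitney's broken cycle theorem to identify $a_{m-j}(\Gamma)$ with the number of NBC forests of cardinality $j$ (the paper simply cites this as "a consequence of Whitney's description of $\chi_\Gamma$"). Your remark on independence of the edge ordering is the same caveat the paper records.
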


The above corollary motivates Theorem \ref{generators} next, which describes precisely the generators in homology corresponding to these NBC-forests.

\bre\label{properties1} 
The first non-zero betti number of $\conf{\Gamma}(\bbr^N)$ is
$\ds \beta_{N-1}= |\Gamma |$ the size of the graph (i.e. its number of edges), which is also well-known to be the value of $a_{m-1}(\Gamma )$ (\cite{read}, Theorem 11). The top rank is
 $\beta_{(m-1)(N-1)} = a_1(\Gamma )$, and this is $a_1(\Gamma)$ the positive part of the linear coefficient of $\chi_\Gamma$.
The second non-zero betti number is
\begin{equation}\label{beta2}
    \ds \beta_{2(N-1)} = {|\Gamma|\choose 2} - s_3(\Gamma),
\end{equation}
where $s_3(\Gamma)$ is the number of ``three cycles'' of the graph (see \S\ref{proofmain}).  This last formula for $\beta_{2(N-1)}$ illustrates the important fact that cycles in the graph introduce relations, a well-known fact in the theory \cite{bm, longueville}. This is discussed in \S\ref{stable}.
\ere

\bex Consider the ``diamond graph'' $\Gamma$ which is the planar, undirected graph with $m=4$ vertices and $5$ edges. Its chromatic polynomial is $\chi_\Gamma (\lambda) = \lambda (\lambda -1)(\lambda - 2)^2 = \lambda^4-5\lambda^3+8\lambda^2-4\lambda$, therefore the Poincar\'e series of the associated chromatic configuration space is
$$P_t(\conf{\Gamma} (\bbr^N)) = 4t^{3(N-1)} + 8t^{2(N-1)} + 5t^{N-1} + 1$$
Here the top coefficient $\beta_{3(N-1)}=4$   is the number of NBC spanning trees which, for this diamond graph, is half the total of all spanning trees.
\eex

\section{Geometry and Stable Splitting} 

What Part II also accomplishes is to give the explicit construction of a basis of torsion-free generators for the homology using planetary systems as done in the complete graph case (see \S\ref{labeling} and of course Part I, \S\ref{homology}), and a proof of the stable splitting of $\conf{\G}(\bbr^N)$ (see \S\ref{stable}). We explain the ideas in this section.

A \textit{toric homology class} of $X$ is the \textit{injective} image of the orientation class of an embedded torus $(S^{k})^r\hookrightarrow X$. The homology generators for $\conf{\Gamma}(\bbr^N)$ turn out to be all toric. If $\Gamma=T$ is a tree, the configuration space itself is a product of spheres (Proposition \ref{treedecomposition}) and the claim is trivial. 
In the general case, we explain in \S\ref{labeling} how every  subtree $T$, in an edge ordered graph $\Gamma$, gives rise to a map $(S^{N-1})^{|T|}\hookrightarrow\conf{ \Gamma}(\bbr^N)$, with $|T|=|V(T)|-1$ the size of the tree. Similarly every forest $\mathcal F$ on disjoint trees $T_1,\ldots, T_k$ of $\Gamma$ (i.e. \textit{a forest in $\G$}) gives rise to an embedded torus 
\begin{eqnarray}\label{toric}
\alpha_{ F}&:& (S^{N-1})^{|T_1|}\times\cdots\times (S^{N-1})^{|T_k|}\
\xymatrix{\ar@{^(->}[rr]&&\conf{\Gamma}(\bbr^N)}
\end{eqnarray}
and a non-zero toric homology class  
\begin{equation}\label{classF}
[ F] = \alpha_{ F_*}\left[(S^{N-1})^{\sum |T_i|}\right]\in H_{(N-1)(\sum |T_i|)}(\conf{\Gamma} (\bbr^N))
\end{equation}
That is $[ F]$ is the image under $\alpha_{ F_*}$ of the orientation class of the torus. 
A \textit{spanning forest} $ F$ is a collection of disjoint trees $T_1,T_2,\ldots, T_k$ of $\Gamma$ such that the sets of vertices $V(T_i)$, for $1\leq i\leq k$, form a set partition of $V(\Gamma)$, in other words, $\sum |T_i|=m-k$, $m=|V(\Gamma)|$.  

Let $\F^{nbc}_k(\Gamma)\subset\mathcal F^{nbc}(\Gamma)$ be the set of spanning forests in $\G$ with $k$ component trees and  no broken cycles (i.e. NBC). Note that some of the trees can be reduced to a single vertex. 
The following is a natural generalization of the classical theory \cite{fredbible, paolo1, dev} from the complete graphs to arbitrary simple graphs, and it is proven in Sections \ref{labeling}--\ref{stable}.

\begin{theorem}\label{generators} Let $\Gamma$ be a finite simple connected graph with $m$ vertices and edge set $E$ with a chosen linear ordering, and $N\geq 2$. The graded homology of $\conf{\Gamma}(\bbr^N)$ is generated by all toric homology classes $[ F]$ associated to forests $ F$ on $m$ vertices in $\Gamma$, with basis generators the forests with no broken cycles. The basis elements of $H_{(m-k)(N-1)}(\conf{\Gamma} (\bbr^N),\bbz )$ are given by the classes
$[F]$ corresponding to spanning forests with no broken cycles and $k$ component trees, i.e. for $F\in\mathcal F_k^{nbc}(\Gamma)$.
\end{theorem}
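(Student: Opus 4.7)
The plan is to produce the toric classes $[F]$ as explicit geometric cycles, use cohomology classes built from Gauss maps to detect them, propagate the Arnold-Cohen relations from three-cycles to all cycles of $\Gamma$, and conclude via the broken-circuit theorem matched against the Betti-number count of Corollary \ref{coro}.

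First I would construct the map $\alpha_F$ in \eqref{toric} explicitly. Given a forest $F\subset\Gamma$ with tree components $T_1,\ldots,T_k$, place the $T_i$ at macroscopically separated base points in $\bbr^N$; within each $T_i$, pick a root vertex and recursively let each child rotate around its parent at a scale that strictly decreases with depth. The $S^{N-1}$-parameter of each rotation is the factor of the target torus corresponding to the associated edge, giving the embedding $\alpha_F:\prod_{e\in F}S^{N-1}\hookrightarrow \conf{\Gamma}(\bbr^N)$ and hence the class $[F]\in H_{|F|(N-1)}(\conf{\Gamma}(\bbr^N);\bbz)$. The case $\Gamma=T$ of Proposition \ref{treedecomposition} reads that $\alpha_T$ is a homotopy equivalence $\prod_{e\in T}S^{N-1}\simeq\conf{T}(\bbr^N)$, with $[T]$ the fundamental class; the general forest case then follows from the product decomposition $\conf{F}(\bbr^N)\cong\prod_i\conf{T_i}(\bbr^N)$ applied through the open inclusion $\conf{\Gamma}(\bbr^N)\hookrightarrow\conf{F}(\bbr^N)$.

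For each edge $e=\{i,j\}\in E(\Gamma)$, let $\alpha_e\in H^{N-1}(\conf{\Gamma}(\bbr^N);\bbz)$ denote the pullback of a generator of $H^{N-1}(S^{N-1})$ under the Gauss map $(x_1,\ldots,x_m)\mapsto(x_i-x_j)/|x_i-x_j|$, and for a forest $F$ set $\alpha_F=\prod_{e\in F}\alpha_e$. A direct local computation in the spirit of Part I, \S\ref{cohomology}, shows $\langle \alpha_F,[F]\rangle=\pm 1$, so each $[F]$ is nonzero. Whenever $\{i,j\},\{i,k\},\{j,k\}$ form a triangle in $\Gamma$, the Arnold-Cohen three-term relation $\alpha_{ij}\alpha_{jk}+\alpha_{jk}\alpha_{ki}+\alpha_{ki}\alpha_{ij}=0$ holds in $H^*(\conf{\Gamma}(\bbr^N);\bbz)$, since it pulls back from the analogous identity on $\conf{3}(\bbr^N)$ of Part I, Theorem 2.3 along the forgetful map $\conf{\Gamma}(\bbr^N)\to\conf{3}(\bbr^N),\ (x_1,\ldots,x_m)\mapsto(x_i,x_j,x_k)$. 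Evaluating this cohomological relation on suitable tori via the perfect pairing $\langle \alpha_F,[F]\rangle=\pm 1$ yields its dual in homology: for any forest $F$ disjoint from the triangle such that each of the three enlargements $F\cup\{e\}$ is still a forest, the three toric classes $[F\cup\{e\}]$ satisfy a linear relation with coefficients $\pm 1$.

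The main obstacle is to promote these local triangle relations to arbitrary cycles of $\Gamma$ and thereby rewrite every toric class in the NBC basis. Following the classical broken-circuit argument of Bj\"orner for the Orlik-Solomon algebra of a matroid, one inducts on the length of a cycle: for a cycle $(e_0,e_1,\ldots,e_\ell)$ with smallest edge $e_0$ in the fixed edge ordering, combining the triangle relations applied to triangles spanned by $e_0$ together with successive pairs of edges along the cycle, and using the induction hypothesis on shorter cycles, expresses any toric class $[F]$ containing the broken cycle $\{e_1,\ldots,e_\ell\}$ as a signed $\bbz$-combination of toric classes of forests containing $e_0$. Iterating over all circuits of $\Gamma$ in the order dictated by the edge ordering terminates and writes every $[F]$ as a $\bbz$-linear combination of $[F']$ for $F'\in\mathcal F^{nbc}(\Gamma)$. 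Finally, by Corollary \ref{coro}, $\mathrm{rank}\,H_{r(N-1)}(\conf{\Gamma}(\bbr^N))=|\{F\in\mathcal F^{nbc}(\Gamma):|F|=r\}|$, and since spanning forests with $k$ component trees correspond bijectively to forests on $m$ vertices with $r=m-k$ edges (isolated vertices viewed as singleton trees), the NBC spanning-forest classes $\{[F]:F\in\mathcal F^{nbc}_k(\Gamma)\}$ form a spanning set whose size equals $\mathrm{rank}\,H_{(m-k)(N-1)}(\conf{\Gamma}(\bbr^N);\bbz)$ and must therefore be a $\bbz$-basis, simultaneously yielding integral torsion-freeness.
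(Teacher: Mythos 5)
Your construction of $\alpha_F$ and the duality pairing against the tautological classes match the paper's \S\ref{labeling}, but there are two genuine gaps. First, you never establish that the toric classes $[F]$, taken over \emph{all} forests $F$ in $\Gamma$, generate $H_*(\conf{\Gamma}(\bbr^N);\bbz)$; the pairing $\langle\alpha_F,[F]\rangle=\pm1$ only shows that each individual class is nonzero and indivisible. Since your closing argument is ``a spanning set whose size equals the rank must be a basis,'' the spanning statement is doing all the work and cannot be assumed. The paper obtains generation by proving that $(\pi_K^\Gamma)_*\colon H_*(\conf{m}(\bbr^N))\to H_*(\conf{\Gamma}(\bbr^N))$ is surjective, which rests on the naturality of the Goresky--MacPherson isomorphism (Proposition \ref{dualityGM}) together with the fact that each interval of the bond lattice of $\Gamma$ includes into the corresponding interval for $K_m$ as a wedge summand of spheres (Lemma \ref{fromPtoL}); the complete-graph case is then imported from \cite{paolo1}.

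Second, the rewriting of an arbitrary toric class into NBC ones cannot be run on triangle relations alone. For a chordless cycle of length at least $4$ --- already for $\Gamma=C_4$ --- none of the triangles you invoke are subgraphs of $\Gamma$, so the forgetful map to $\conf{3}(\bbr^N)$ does not land in the configuration space and there is no Arnold--Cohen relation to pull back; in fact $H^*(\conf{C_4}(\bbr^N))$ has \emph{no} quadratic relations at all (the Betti number $6$ in degree $2(N-1)$ equals $\binom{4}{2}$), and the unique Orlik--Solomon-type relation attached to the $4$-cycle is the cubic one of Theorem \ref{coho}. Bj\"orner's broken-circuit rewriting takes the circuit relation for \emph{every} circuit of the matroid as input; for graphic matroids these are not consequences of the $3$-circuit relations unless the graph happens to contain the needed chords. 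So the relation attached to each chordless cycle (the generalized Jacobi relation of \S\ref{stable}) must be established independently --- the paper deduces it a posteriori from the rank count and cyclic symmetry, after generation is already known.
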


The extreme cases $k=m=|V(\Gamma)|$ and $k=1$ are worth being discussed. When $k=m$, $\mathcal F_m^{abc}(\Gamma)$ is reduced to a singleton corresponding to the forest having each component consisting of a vertex. This contributes the single generating class in $H_0$. When $k=1$, we get the top homology generator in degree $(m-1)(N-1)$. An element in $\mathcal F_1^{nbc}(\Gamma)$ is necessarily a spanning tree, and the number of such NBC trees is the positive linear term of the chromatic polynomial $a_1(\Gamma)$. This term enters fundamentally in all calculations and is discussed in \S\ref{linearterm}.

As is implicit in this Theorem, cycles in the graph impose relations among generators, and we describe those in \S\ref{stable} in the form of ``generalized Jacobi relations''.


\vskip 5pt\noindent{1.1. {\bf Stable Splitting.}} 
Many spaces have the homology of a wedge of spheres but are stably not such a wedge (eg. complex projective space). Complements of subspace or hyperplane arrangements tend to decompose stably. Exceptionally, some arrangement complements split unstably into a bouquet of spheres \cite{gt}, but in general one needs at least one suspension \cite{schaper}.  For two spaces $X$ and $Y$, write $X\simeq_sY$ if $\Sigma X\simeq \Sigma Y$, where $\Sigma$ is unreduced suspension. For a based space $X$, we write $X^{\vee n}$ the $n$-fold wedge of $X$ with itself, so for example $X^{\vee 3}=X\vee X\vee X$. 
As is customary, we denote by $X_+$ the disjoint union of a space $X$ with a basepoint. 

\begin{theorem}\label{main}
Let $\Gamma$ be a finite simple graph, $N\geq 2$ and $m=|V(\Gamma )|$. If $\Gamma= T$ is a tree,
then $\conf{ T}(\bbr^N)\simeq \left(S^{N-1}\right)^{m-1}$ (a torus). In general, we have a stable splitting 
\begin{equation}\label{formula}\displaystyle\conf{\Gamma} (\bbr^N)_+\simeq_s 
\bigvee_{i=1}^m\left(S^{(m-i)(N-1)}\right)^{\vee\ a_i(\Gamma)}
\end{equation}
where $a_i(\G)$ are the Whitney coefficients \eqref{whitney1}.
\end{theorem}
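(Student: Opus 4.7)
The two assertions are handled separately.

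\emph{Tree case.} I proceed by induction on $m = |V(T)|$, the base $m = 1$ being trivial. Given $T$ with $m \geq 2$ vertices, pick a leaf $v$ with unique neighbor $u$. The projection $\conf{T}(\bbr^N) \to \conf{T \setminus v}(\bbr^N)$ forgetting $x_v$ is a fiber bundle with fiber $\bbr^N \setminus \{x_u\} \simeq S^{N-1}$, trivialized globally by $y \mapsto y - x_u$. Since $T \setminus v$ is a tree on $m-1$ vertices, induction gives $\conf{T}(\bbr^N) \simeq (S^{N-1})^{m-2} \times S^{N-1} = (S^{N-1})^{m-1}$.

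\emph{General case: construction of the splitting map.} Fix a basepoint $q = (q_1, \ldots, q_m) \in \conf{K_m}(\bbr^N) \subset \conf{\Gamma}(\bbr^N)$. For each nonempty NBC forest $F$, Theorem \ref{generators} furnishes a toric embedding $\alpha_F : (S^{N-1})^{|F|} \hookrightarrow \conf{\Gamma}(\bbr^N)$ sending the torus basepoint to $q$ and whose top fundamental class realizes the basis element $[F]$. Applying the classical James-type splitting $\Sigma(S^{N-1})^{|F|}_+ \simeq \bigvee_{j=0}^{|F|} \bigl(S^{j(N-1)+1}\bigr)^{\vee \binom{|F|}{j}}$ --- valid because the iterated Whitehead products attaching the top cell of a product of spheres vanish after one suspension --- composing $\Sigma \alpha_F$ with inclusion of the top summand yields a based map $\widetilde{f}_F : S^{|F|(N-1)+1} \to \Sigma \conf{\Gamma}(\bbr^N)_+$ sending the fundamental class to $\Sigma[F]$. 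Wedging these maps over all $F \in \mathcal{F}^{nbc}(\Gamma)$ (with the empty forest contributing an $S^1$ from the basepoint inclusion $\{q\}_+ \hookrightarrow \conf{\Gamma}(\bbr^N)_+$) produces
$$f : W := \bigvee_{F \in \mathcal{F}^{nbc}(\Gamma)} S^{|F|(N-1)+1} \longrightarrow \Sigma \conf{\Gamma}(\bbr^N)_+.$$

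\emph{General case: conclusion via Whitehead and counting.} By Theorem \ref{generators}, $f$ is an isomorphism on integral homology. Splitting off the basepoint $S^1$ on both sides via the standard decomposition $\Sigma \conf{\Gamma}(\bbr^N)_+ \simeq \Sigma \conf{\Gamma}(\bbr^N) \vee S^1$, the residual map is between simply-connected finite CW complexes (the remaining spheres in $W$ all have dimension $\geq 2$), hence Whitehead's theorem upgrades it to a homotopy equivalence. To identify the wedge, Corollary \ref{coro} says there are $a_{m-j}(\Gamma)$ NBC forests with $j$ edges (equivalently, by Corollary \ref{poinc2}, the coefficient of $t^{j(N-1)}$ in $P_t(\conf{\Gamma}(\bbr^N))$). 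Reindexing via $i = m - |F|$ rewrites
$$W \simeq \bigvee_{i=1}^m \bigl( S^{(m-i)(N-1)+1} \bigr)^{\vee a_i(\Gamma)} \simeq \Sigma \bigvee_{i=1}^m \bigl( S^{(m-i)(N-1)} \bigr)^{\vee a_i(\Gamma)},$$
the $i = m$ term ($a_m = 1$) accounting for the basepoint $S^1$. This is precisely \eqref{formula}.

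\emph{Main obstacle.} The essential technical difficulty is the absence of an unstable top-cell map $S^{|F|(N-1)} \to (S^{N-1})^{|F|}$ realizing the torus fundamental class; the top cell of a product of spheres is attached by an iterated Whitehead product, which is only null-homotopic after one suspension. This forces the entire argument to take place after suspension, which is consistent with the unsuspended version being genuinely false in general (e.g.\ $\conf{3}(\bbr^2) \simeq S^1 \times (S^1 \vee S^1)$ is not a wedge of spheres). The tree case escapes the obstruction because the Fadell--Neuwirth fibrations are globally trivial product bundles, not merely stably so.
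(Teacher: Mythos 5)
Your proposal is correct and follows essentially the same route as the paper: the tree case via (iterated) trivialized Fadell--Neuwirth projections is equivalent to the paper's explicit homeomorphism $\conf{T}(\bbr^N)\cong\bbr^N\times(\bbr^N\setminus\{0\})^{m-1}$, and the general case --- suspend the toric embeddings of Theorem \ref{generators}, split the suspended tori to extract top-cell spherical representatives of $\Sigma[F]$, wedge over all NBC forests, and conclude by the Whitehead theorem with the count supplied by Corollary \ref{coro} --- is exactly the argument of \S11.2. Your explicit bookkeeping of the basepoint $S^1$ (the empty forest, $i=m$, $a_m=1$) and the remark on the unstable Whitehead-product obstruction are consistent with the paper's treatment.
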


The $+$ (disjoint basepoint) in the splitting comes from the fact that when $i=m$, we get the term $S^0$ on the righthand side, thus an extra basepoint $+$ must be accounted for on the lefthand side.

This result is proven in \S\ref{stable}. We must compare it with the literature. A general splitting result for $c$-arrangements with $c=N\geq 2$ is given in \cite{schaper}. The proof consists in analyzing the Ziegler-\v{Z}ivaljevic model \cite{zz} and applying Spanier-Whitehead duality. It is not explicit to the point of giving spheres, their dimensions, or the number of wedge summands. Our approach has intersection points with Shaper's work \cite{schaper}, but it is more explicit and follows directly from
the description of homology generators as toric classes.



\section{Literature review and Preliminary Results}\label{prelim}

We review the current literature on the chromatic configuration spaces. In \cite{radmila}, the authors addressed the problem of computing the homology of $\conf{\G}(X)$ for a general simple graph $\Gamma$, and constructed an explicit chain complex for that purpose. Based on methods of Totaro and Gorinov, \cite{zakharov} gives the rational cohomology ring for $\conf{\G}(X)$ when $X$ is a smooth compact algebraic variety. In \cite{bmp}, the authors gave an extensive study of $\conf{\Gamma}(X)$ and of its fundamental group when $X=\Sigma_g$ is a closed Riemann surface.
In the case $X=\bbr^1$, $\conf{\G}(\bbr)$ is a complement of a hyperplane arrangement which
 has been studied in \cite{orlik, rota} under the name of ``graphic arrangements''. In the case $X=\bbr^N$, the cohomology rings of our spaces are described explicitly in \cite{longueville} as an application of their general study of geometric lattices and complements of $c$-arrangements. A fairly aesthetic but different derivation of this same computation is given in \cite{bm}.  In both cases, the answer is given as a quotient algebra by an ideal of relations, without an explicit formula for the betti numbers.

Let $G = (V(\G), E(\G))$ and $H = (V(H), E(H))$ be two simple and undirected graphs. A graph homomorphism $f : \G \rightarrow H$
is a function from $V(\G)$ to $V(H)$ that takes edges to edges, and so induces a map $E(\G)\rightarrow E(H)$. Formally, $\{u,v\} \in E(\G)$ implies $\{f(u),f(v)\} \in E(H)$, for all pairs of vertices $u, v$ in $V(\G)$.
In particular, morphisms respect edge incidence. Note that given two graphs, there may not be any morphisms between them, for example we cannot map the triangle graph $K_3$ to the edge graph $K_2$.

Let's define $\bf\mathcal G$ to be the category whose objects are undirected  simple graphs that are vertex labeled and whose morphisms are the graph homomorphisms (see for example \cite{droz}).
Let $\Psi$ be the map taking a graph
$\G$ to $\conf{\Gamma} (X)$. To make it a functor, we need define $\Psi$ on homomorphisms so that $\Psi (f\circ g) = \Psi (f)\circ \Psi(g)$. We give such a construction and it is contravariant.

Let $f: \G\rightarrow H$ be a graph morphism, and let
$(x_1,\ldots, x_m)$ be an element of
$\conf{H} (X)$, $m = |V(H)|$.
Let $V(H) = \{v^H_1,\ldots, v^H_m\}$ and $V(\G) = \{v^\G_1,\ldots, v^\G_n\}$.
Define
$\Psi(f)(x_1,\ldots, x_m)$ to be the tuple
$(y_1,\ldots, y_n)\in X^{|V(\G)|}$ where
$y_j=x_i$ if $v^G_j\in f^{-1}(v^H_i)$.
It is fitting to look at the examples given in Figure \ref{morphisms}.

\begin{figure}[htb]
    \centering
    \includegraphics[scale=0.6]{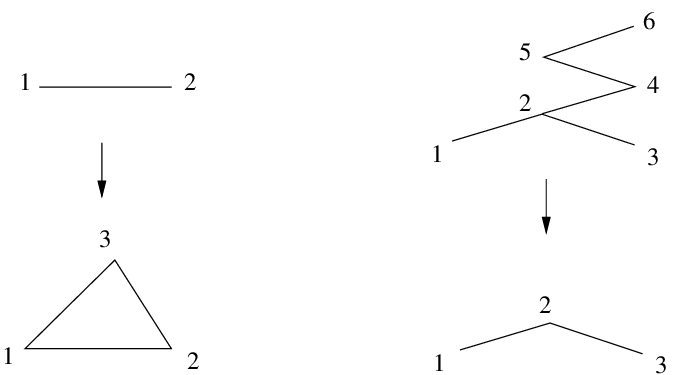}
    \caption{The first morphism $f_1$ is an inclusion of graphs. In this case $\Psi(f_1)(x_1,x_2,x_3) = (x_1,x_2)$. The second morphism $f_2$ maps all edges but $\{1,2\}$ to $\{2,3\}$. Here
    $\Psi (f_2)(x_1,x_2,x_3) = (x_1,x_2,x_3,x_3,x_2,x_3)$.}\label{morphisms}
\end{figure}

Alternatively, the better description of $\Psi$ is as follows. Using $X$-colorings, think of $(x_1,\ldots, x_m)\in\conf{H}(X)$ as a morphism from the vertex labeled $H$ into $X$ such that any two adjacent vertices take different values. Composition with $\G\rightarrow H$ defines the $X$-coloring $\G\rightarrow X$ which is taken to be the element
$\Psi(f)(x_1,\ldots, x_m)\in\conf{\Gamma} (X)$.

\ble The map $\Psi$ defines a contravariant functor
${\mathcal G}\lrar {\bf Top}$.
\ele

\begin{proof} Let $f: \G\rightarrow H$. The map $\Psi (f)$ is well-defined (i.e. its  image lies in $\conf{\Gamma}(X)$), and composition holds $\Psi (f\circ g) = \Psi (f)\circ \Psi (g)$. 
\end{proof}

\bre\label{mappi} (The Case of Subgraphs). The special case of $f: H\hookrightarrow \G$ a subgraph embedding yields an induced map
\begin{equation}\label{pigh}\pi_\G^H : \conf{\Gamma}(X)\rightarrow\conf{H}(X)
\end{equation}
This map is our induced functorial map after relabeling.  If $H$ is a spanning subgraph (i.e. if $V(H)=V(\G)$), then we can use the same vertex labeling for both $\G$ and $H$ and define $\pi_\G^H = \Psi (i)$ where $i: H\hookrightarrow\G$ is the subgraph inclusion. 
If $H$ has a proper vertex set $V(H)\neq V(G)$, then we must relabel the vertices of $H$ so they become
$v_1^H,v_2^H,\ldots, v_k^H$ consistently with the labeling of $\G$ in such a way that increasing indices of the new labeling from $1$ to $k$ corresponds to increasing indices of labeled vertices in $V(H)\subset V(\G)$. For example, if $V(\G) = \{v_1,v_2,\ldots, v_8\}$ and $V(H) = \{v_2,v_3,v_5,v_7\}\subset V(\G)$, then we relabel as follows
$$v_1^H=v_2\ ,\ 
v_2^H=v_3\ ,\ 
v_3^H=v_5\ ,\ 
v_4^H=v_7
$$
Once this is done, we can describe $f: H\hookrightarrow G$ by the map which on vertices is $v_1^H\rightarrow v_2$, $v_2^H\rightarrow v_3$, $v_3^H\rightarrow v_5$ and $v_4^H\rightarrow v_7$, and is the obvious map on edges. We then set $\pi_\G^H = \Psi(f)$ as in \eqref{pigh}.
\ere

\bre When $H$ is spanning in $G$, the map $\pi_\G^H$ in \eqref{pigh} is an open embedding. 
\ere

\vskip 5pt\noindent{2.1. {\bf Euler Characteristic.}}  The following section is at the origin of this work. The main observation is that the Euler characteristic of $\conf{\Gamma}(X)$ is related to the chromatic polynomial of $\Gamma$ \cite{eastwood, walid}. Let's write $\chi_\Gamma (t)$ the chromatic polynomial for the graph $\Gamma$. Recall that this polynomial is uniquely determined by the property that $\chi_\Gamma (k)$ is the number of proper colorings of $\Gamma$ using $k$ colors. The following result has been obtained in \cite{walid} using the additivity of the Euler characteristic with compact supports.

\begin{theorem}\label{chiconf} Let $\G$ be a simple graph. Then
$$\chi (\conf{\Gamma} (\bbr^N)) = (-1)^{N|V|}\chi_\Gamma ((-1)^N)$$
\end{theorem}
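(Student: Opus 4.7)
The plan is to prove the identity in two steps: first establish the corresponding identity for the compactly supported Euler characteristic via a deletion-contraction argument, then convert to ordinary Euler characteristic by Poincaré duality.

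First I would recall the basic properties of $\chi_c$ that make it the right invariant for configuration-space computations: (i) multiplicativity, $\chi_c(X \times Y) = \chi_c(X)\chi_c(Y)$; (ii) additivity on locally closed decompositions, $\chi_c(X) = \chi_c(U) + \chi_c(X \setminus U)$ for $U \subset X$ open; and (iii) the normalization $\chi_c(\bbr^N) = (-1)^N$. These imply that for the discrete graph $\overline{K}_m$ on $m$ vertices with no edges, $\conf{\overline{K}_m}(\bbr^N) = (\bbr^N)^m$ has $\chi_c = (-1)^{Nm}$, which agrees with $\chi_{\overline{K}_m}((-1)^N) = ((-1)^N)^m$ since $\chi_{\overline{K}_m}(\lambda) = \lambda^m$.

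Next I would run the deletion-contraction recursion. Fix an edge $e = \{i,j\}$ of $\Gamma$. The complement of $\conf{\Gamma}(\bbr^N)$ inside $\conf{\Gamma \setminus e}(\bbr^N)$ is precisely the closed subspace cut out by the equation $x_i = x_j$, while all other inequalities $x_k \neq x_\ell$ ($\{k,\ell\} \in E(\Gamma) \setminus\{e\}$) persist. The map that collapses the coordinate $x_j$ onto $x_i$ identifies this subspace homeomorphically with $\conf{\Gamma/e}(\bbr^N)$, where $\Gamma/e$ denotes the simple graph obtained from $\Gamma$ by contracting $e$ (and removing resulting multi-edges). Additivity of $\chi_c$ then yields
\begin{equation*}
\chi_c(\conf{\Gamma}(\bbr^N)) \;=\; \chi_c(\conf{\Gamma \setminus e}(\bbr^N)) \;-\; \chi_c(\conf{\Gamma/e}(\bbr^N)).
\end{equation*}
This is formally identical to the classical deletion-contraction recursion $\chi_{\Gamma}(\lambda) = \chi_{\Gamma\setminus e}(\lambda) - \chi_{\Gamma/e}(\lambda)$, evaluated at $\lambda = (-1)^N$. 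By induction on the number of edges of $\Gamma$, both invariants $\Gamma \mapsto \chi_c(\conf{\Gamma}(\bbr^N))$ and $\Gamma \mapsto \chi_{\Gamma}((-1)^N)$ agree on edgeless graphs and satisfy the same recursion, so they coincide:
\begin{equation*}
\chi_c(\conf{\Gamma}(\bbr^N)) \;=\; \chi_{\Gamma}\bigl((-1)^N\bigr).
\end{equation*}

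To convert to ordinary Euler characteristic, I would invoke Poincaré duality: $\conf{\Gamma}(\bbr^N)$ is an orientable topological manifold of dimension $N|V|$, so $H^k_c(\conf{\Gamma}(\bbr^N);\bbq) \cong H_{N|V|-k}(\conf{\Gamma}(\bbr^N);\bbq)$, which gives $\chi_c = (-1)^{N|V|} \chi$. Combining with the previous identity yields the claim
\begin{equation*}
\chi(\conf{\Gamma}(\bbr^N)) \;=\; (-1)^{N|V|}\,\chi_{\Gamma}\bigl((-1)^N\bigr).
\end{equation*}
The main subtlety in making the argument rigorous is verifying that contracting the edge really produces the configuration space of the simple graph $\Gamma/e$ (so that the induction stays within the class of simple graphs): if $i$ and $j$ have a common neighbor $k$ in $\Gamma$, then after contraction the two edges $\{i,k\}$ and $\{j,k\}$ become parallel, but both impose the same inequality $x_i \neq x_k$ on the diagonal $\{x_i = x_j\}$, so the resulting configuration space is unaffected by removing the duplicate edge. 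This makes deletion-contraction well defined on the level of configuration spaces and closes the argument.
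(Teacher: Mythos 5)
Your proof is correct. The paper itself does not reprove Theorem \ref{chiconf}: it attributes the result to the reference [Kallel--Taamallah] with the one-line description that it is obtained ``using the additivity of the Euler characteristic with compact supports,'' which is exactly your core tool. Where you differ is in how that additivity is exploited. You run a deletion--contraction induction on the number of edges, observing that $\conf{\Gamma\setminus e}(\bbr^N)$ decomposes as the open piece $\conf{\Gamma}(\bbr^N)$ and the closed piece $\{x_i=x_j\}\cong\conf{\Gamma/e}(\bbr^N)$, so that $\chi_c(\conf{\Gamma}(\bbr^N))$ satisfies the same recursion as $\chi_\Gamma(\lambda)$ at $\lambda=(-1)^N$; this is essentially the Eastwood--Huggett strategy transported from compact $M$ to $\bbr^N$ by replacing $\chi$ with $\chi_c$ and restoring $\chi$ at the end via Poincar\'e duality for the oriented open manifold $\conf{\Gamma}(\bbr^N)\subset\bbr^{N|V|}$. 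The cited proof instead proceeds by a direct stratification argument, and the paper's own machinery gives yet another derivation: specializing Theorem \ref{poinc} at $t=-1$ (itself proved via the Goresky--MacPherson formula and the bond lattice, with Rota's identification of the characteristic polynomial of $\Pi_\Gamma$ with $\chi_\Gamma$) recovers the same identity. Your route has the advantage of being completely elementary and self-contained --- it needs no arrangement theory or poset topology --- at the cost of being specific to the Euler characteristic, whereas the paper's heavier approach computes all the Betti numbers at once. Your closing remark about parallel edges after contraction imposing the same inequality, so that the induction stays within simple graphs, is exactly the right point to flag and is handled correctly.
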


\bex We can illustrate Theorem \ref{chiconf} with an example. Let $\Gamma=K_3$. Then $\conf{K_3}(\bbr^N)$
is the configuration space
$\conf{3} (\bbr^N) = \{(x,y,z)\in (\bbr^N)^3\ |\ x\neq y, y\neq z, x\neq z\}$. This fibers over $\conf{2} (\bbr^N)$ with fiber homotopic to $S^{N-1}\vee S^{N-1}$. Since $\conf{2} (\bbr^N)\simeq S^{N-1}$, we see that
$\chi (\conf{K_3} (\bbr^N)) = \chi (S^{N-1})\cdot \chi (S^{N-1}\vee S^{N-1}) = (1+(-1)^{N-1})(1+2(-1)^{N-1})$. This  coincides with our formula $$(-1)^{3|V|}\chi_{K_3}(t)=
(-1)^{3|V|}t(t-1)(t-2)$$
when $t = (-1)^N$ as asserted.
\eex
 
For fixed $X$, the homology groups
$H_*(\conf{\Gamma}(X);\bbz)$ form an invariant of the graph. When $M$ is a manifold, $H^*(M,R)$ is a projective $R$-module and $\Gamma$ is simple, \cite{radmila} construct a cohomology spectral sequence with $E_1$-term an explicit graph complex  converging, after a duality regrading, to the homology of
$\conf{\Gamma}(M)$ with coefficients in $R$.
In the rational case, most recent work of Zakharov \cite{zakharov} gives a generalization of
the Kriz-Totaro model for
$\conf{\G}(M)$ when $M$ a smooth proper algebraic variety over $\bbc$. Here also, the rational homotopy type has a cdga model given by the quotient of $H^*(M^{|V(G)|})\otimes \Lambda\langle \Delta_{i,j}\rangle$ by an explicit ideal of relations, where $\Delta_{(i,j)}$ are classes of dimension $2\dim_\bbc M-1$ over all ordered pairs $(i,j)$, with $\{i,j\}\in E(V)$. The ideal of relations is derived from the cohomology ring in the Euclidean case, that is for $\conf{\G}(\bbr^N)$, where $N=2\dim_\bbc M$.
This cohomology ring, have been computed explicitly by \cite{bm, longueville} as reproduced below.

\begin{theorem}\label{coho} (Corollary 5.6 of \cite{longueville}) Let $\mathcal A = \{H_1,\ldots, H_m\}$ be a
$c$-arrangement in a real vector space $W$, and $M_{\mathcal A}$ its complement. Then the integral cohomology ring of the complement
$M_{\mathcal A}$ has the presentation
$$0\lrar I\lrar \Lambda\mathbb Z^m\lrar H^*(M_{\mathcal A})\lrar 0$$
if $n$ is even, $\Lambda$ denoting the exterior algebra, and
$$\xymatrix{0\ar[r]& I\ar[r]& S{\mathbb Z^m} \ar[r]^{\pi}& H^*(M_{\mathcal A})\ar[r]& 0}$$
if $n$ is odd, $S$ denoting the symmetric algebra, and
$\pi (e_i)\in H^{c-1}(M_{\mathcal A})$ for the canonical basis
$\{e_1,\ldots, e_m\}$ of $\mathbb Z^m$. The ideal $I$ of relations is generated by
\begin{equation}\label{relation}\sum_{i=0}^k(-1)^i\epsilon (a_0,\ldots, \hat a_i,\ldots a_k)
e_{a_0}\wedge \cdots\wedge \hat e_{a_i}\wedge\cdots\wedge e_{a_k}
\end{equation}
for all minimal dependent sets $\{H_{a_0},\ldots, H_{a_k}\}$ and some well
determined sign $\epsilon$.
\end{theorem}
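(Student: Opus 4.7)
The plan is to construct the generators $e_i$ geometrically as pullbacks of the fundamental class of the sphere, verify the graded commutativity structure, identify the relations from minimal dependent sets, and finally prove completeness by a rank count. First, for each codimension-$c$ subspace $H_i$ of the arrangement, I would build a map $\varphi_i\colon M_\mathcal{A}\to S^{c-1}$ using a tubular neighborhood of $H_i$ and the normal sphere bundle projection, in the same spirit as the tautological classes $\alpha_{ij}$ of Part I, \S\ref{cohomology}. Setting $e_i := \varphi_i^*[S^{c-1}]\in H^{c-1}(M_\mathcal{A})$, graded commutativity forces $e_ie_j = (-1)^{(c-1)^2}e_je_i$: when $c$ is even the classes live in odd degree and anticommute, yielding a surjection from $\Lambda\mathbb{Z}^m$; when $c$ is odd they commute, yielding a surjection from $S\mathbb{Z}^m$. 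This accounts for the ``exterior vs.\ symmetric'' dichotomy of the theorem.

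The next step is to locate the ideal $I$. I would show that every minimal dependent (\emph{circuit}) set $\{H_{a_0},\ldots, H_{a_k}\}$, which has the property that $\mathrm{codim}(H_{a_0}\cap\cdots\cap H_{a_k}) < c(k+1)$ while every proper subset meets in the expected codimension $ck$, forces the alternating sum \eqref{relation} to vanish in cohomology. Geometrically, restrict the relevant $\varphi_{a_j}$'s to a tubular neighborhood of the codimension-deficient intersection; the resulting map into the product of spheres factors through a \emph{proper} subproduct because of the dependence, so one obtains a null-bordism of the Poincaré dual cycle realizing the $(k+1)$-fold product. The choice of sign $\epsilon$ amounts to picking a coherent orientation on the normal bundle of the circuit, precisely analogous to the sign choice producing the NBC basis in \S\ref{whitney1}. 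This yields the inclusion $I\subseteq\ker\pi$.

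The main obstacle will be the reverse inclusion — proving that the circuit relations generate \emph{all} relations. My approach is to argue by matching ranks. On the topological side I would invoke the Goresky--MacPherson formula, which expresses $H^*(M_\mathcal{A})$ additively as a direct sum indexed by elements of the intersection lattice $\mathcal{L}(\mathcal{A})$, each summand being the reduced homology of an order complex. On the algebraic side I would introduce a linear order on $\{H_1,\ldots,H_m\}$ and show that the \emph{no-broken-circuit} monomials span $\Lambda\mathbb{Z}^m/I$ (resp.\ $S\mathbb{Z}^m/I$) modulo the circuit relations: given any monomial containing a broken circuit, the corresponding circuit relation rewrites it in terms of monomials of smaller lexicographic order, giving a spanning set by induction. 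A Möbius-function computation, essentially Whitney's theorem for the geometric lattice $\mathcal{L}(\mathcal{A})$, shows the number of NBC monomials of each degree agrees with the rank predicted by Goresky--MacPherson, forcing surjectivity to be an isomorphism.

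Finally, for the integral statement (rather than rational), one must rule out hidden torsion. Because each Goresky--MacPherson summand is free abelian (being the top reduced homology of a shellable order complex of a geometric lattice, as established by Folkman/Björner), and because the NBC spanning set is visibly unimodular, the rank match upgrades directly from $\mathbb{Q}$ to $\mathbb{Z}$ and produces the short exact sequence $0\to I\to \Lambda\mathbb{Z}^m\to H^*(M_\mathcal{A})\to 0$ as stated, and analogously in the symmetric case.
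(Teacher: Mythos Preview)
The paper does not supply its own proof of this statement. Theorem~\ref{coho} is quoted verbatim as Corollary~5.6 of de~Longueville--Schultz \cite{longueville} and is placed in the literature-review section precisely to record a known computation against which the paper's new approach (via the bond lattice and the Goresky--MacPherson formula) can be contrasted. The sentence immediately following the theorem makes this explicit: the cited result is ``a satisfactory result of course, albeit not explicit when it comes to computing ranks. Our approach uncovers many new features not directly apparent from Theorem~\ref{coho}.'' There is therefore no in-paper proof to compare your proposal against.

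Your sketch is nonetheless a plausible outline of the original Orlik--Solomon-style argument as adapted to $c$-arrangements: defining generators via Gauss-type maps to $S^{c-1}$, deriving the circuit relations from codimension deficiency, and closing the argument by an NBC/M\"obius rank count matched against the Goresky--MacPherson formula with shellability supplying torsion-freeness. One caution: the step where you produce each $\varphi_i$ from a ``tubular neighborhood and normal sphere bundle projection'' needs more care, since $M_{\mathcal A}$ is not just the complement of a single $H_i$ and the retraction to the link sphere must be made compatible across the whole arrangement; in practice this is handled by restricting to a generic affine slice or by using the linking-duality description of the Goresky--MacPherson classes rather than a literal tubular-neighborhood construction. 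But since the paper itself defers the proof entirely to \cite{longueville}, the relevant comparison is simply that your proposal reconstructs the cited external argument rather than anything the present paper does.
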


The dimension of all the generators is $c-1$, and the ideal of relations comes from minimal dependent sets $\{H_0,..., H_r\}$. 
A minimal dependent set corresponds to a minimal cycle. The cohomology relation \eqref{relation} is dubbed the generalized Arnold relation, and it is phrased in terms of cycles in the graph \cite{bm}.
This is a satisfactory result of course, albeit not explicit when it comes to computing ranks. Our approach uncovers many new features not directly apparent from Theorem \ref{coho}.

\section{Three examples}\label{trees}

As a good starting point, we illustrate Theorem \ref{poinc} on three basic cases: trees,  complete graphs and  cyclic graphs. 

\vskip 5pt
\noindent{\bf 4.1. Trees}. It would be convenient here to introduce a partial ordering on the vertices of a rooted tree $T$ with vertices labeled $v_1,\ldots, v_k$, and root $v_1$. The partial ordering is defined by setting $v_i\prec v_j$ if the length of a path from $v_i$ to $v_1$ is smaller than the length of a path from $v_j$ to $v_1$. Note if $v_i\prec v_j$, then $v_j$ covers $v_i$ if and only if $v_j$ and $v_i$ are adjacent. Also, in this partial ordering, any vertex other than $v_1$ has a unique predecessor.

\bpr\label{treedecomposition} 
let $T$ be a tree on $m$ vertices, $m\geq 2$. There is a homeomorphism
\begin{equation}\label{homeo}
\conf{ T}(\bbr^N)\cong \bbr^N\times (\bbr^N\setminus \{0\})^{m-1}
\end{equation}
and consequently,
$\conf{T}(\bbr^N)\simeq (S^{N-1})^{|T|}$. In particular 
$$P_t(\conf{T}(\bbr^N)) = (1+t^{N-1})^{m-1}$$
\epr

\begin{proof}
This is verified directly as follows. We will root the tree at its lowest vertex $v_1$ (for ease, the vertices are labeled from $1$ to $m$) and partially order them as above. For every vertex labeled $m>1$, let $m^*$ be its (unique) predecessor in the partial ordering $\prec$. In other words, $v_{m^*}$ and $v_m$ are adjacent, and $v_{m^*}\prec v_m$. Consider the map
\begin{eqnarray}\Psi_T: \conf{ T}(\bbr^N)&\lrar& \bbr^N\times (\bbr^N - \{0\})^{m-1}\label{themapII}\\
(x_1,\ldots, x_{m})&\longmapsto&(x_1, x_2-x_{2^*}, \ldots, x_{m}-x_{m^*})\nonumber
\end{eqnarray}
This is a homeomorphism, with inverse
$\Psi^{-1}_T: (y_1,\ldots, y_m)\mapsto 
(y_1, y_2^*, y_3^*,\ldots, y_m^*)$, where
$$y_i^* = y_i + y_{i_1} + y_{i_2} + \cdots + y_{i_r}$$ if $v_1 \prec v_{i_1}\prec v_{i_2}\prec\ldots\prec v_{i_r} = v_{i}$ is the full chain from the root $v_1$ to $v_i$ (see Fig. \ref{homeograph}).
\end{proof}

\bex The homeomorphism \eqref{themapII} is given explicitly in the case of the line graph $L_m$, with edges of the form $\{i, i+1\}$, $0\leq i\leq m-1$, by 
$$(x_1,x_2,\ldots, x_m)\longmapsto 
(x_1,x_2-x_1,x_3-x_2,\ldots, x_m-x_{m-1})$$

Another example is the tree in Fig. \ref{homeograph}. Its chromatic configuration is homeomorphic to $\bbr^N\times (\bbr^N\setminus\{0\})^5$ via the map $(x_1,\ldots, x_6)\mapsto (x_1, x_2-x_6, x_3-x_6, x_4-x_2, x_5-x_2, x_6-x_1).$ The inverse $\bbr^N\times (\bbr^N\setminus\{0\})^5\rightarrow\conf{\Gamma}(\bbr^N)$
    sends
    $$(y_1,\ldots, y_6)\ \longmapsto\ (y_1, y_1+y_6+y_2, y_1+y_6+y_3, y_1+y_6+y_2+y_4, y_1+y_6+y_2+y_5, y_1+y_6).$$
    \begin{figure}[htb]
\centering
\includegraphics[scale=0.6]{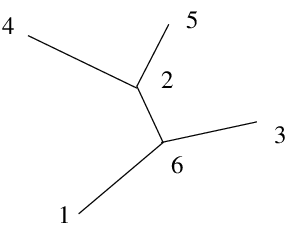}
    \caption{For this tree $T$, $\conf{T}(\bbr^N)\simeq (S^{N-1})^5$. 
    The partial ordering on the vertices is given by
    $v_1\prec v_6\prec v_3\ ,\ v_6\prec v_2\prec v_5\ ,\ v_2\prec v_4$.
    }\label{homeograph}
\end{figure}
\eex

\bre\label{betaT} (The map $\beta_T$). For a tree $T$ rooted at $v_1$, we write $\beta_T$ the composite of a toric embedding
$(S^{N-1})^{|T|}\hookrightarrow 
\bbr^N\times (\bbr^N)^{|T|}$ with 
the inverse of the map \eqref{themapII}
\begin{equation}\label{themapbetaT}
\xymatrix{\beta_T: (S^{N-1})^{|T|}\ar@{^(->}[r]&
\bbr^N\times (\bbr^N\setminus\{0\})^{|T|}\ar[r]^{\ \ \ \ \ \  \Psi_T^{-1}}& \conf{T}(\bbr^N)}\end{equation}
This map depends on the labeling of the vertices as described explicitly earlier.  Its image is a torus $C_T$ in $(\bbr^N)^{|T|}$ which is a deformation retract of $\conf{T}(\bbr^N)$. This image has the following pictorial description: the  elements of $C_T$ are of the form $(x_1=0,x_2,\ldots, x_{|T|})$, and under the partial ordering $\prec$ defined at the start of this section, if $v_j$ covers $v_i$, then in $C_T$, the entry $x_j$ lies in a sphere of radius $1$ centered at $x_i$.  
\ere

\bre The homology of $\conf{T}(\bbr^N)$ is a tensor product of $|T|$ copies of $H_*(S^{N-1})$. By adding an edge with a termination point, we add a new tensor factor, and there are no relations. The relations in homology come only from the cycles in the graph (see \S\ref{stable}).
\ere

\vskip 5pt
\noindent{\bf 4.2. Complete Graph}. The Poincar\'e series in this case is given in \eqref{poincseries}. This is recovered as follows.
The chromatic polynomial for $K_m$ is the falling factorial $\chi_{K_m}(t) = t(t-1)\cdots (t - m+1)$.
It is direct to check Theorem \ref{main} on this formula
\begin{eqnarray*}
(-1)^mt^{m(N-1)}\chi_\Gamma(-t^{(1-N)})&=&
(-1)^mt^{m(N-1)}\left(-t^{1-N}(-t^{1-N}-1)\cdots (-t^{1-N} - m+1)\right)\\
&=&(-1)^mt^{m(N-1)}(-1)^mt^{1-N}(t^{1-N}+1)\cdots (t^{1-N}+m-1)\\
&=&(1+t^{N-1})(2+t^{N-1})\cdots (1+(m-1)t^{N-1})\\
&=& \prod_{k=1}^{m-1}(1+kt^{N-1})\\
&=&
P_t(\conf{K_m}(\bbr^N)) 
\end{eqnarray*}

We can also verify the properties in Remark \ref{properties1}. We have
$\ds \beta_{N-1}(\conf{m} (\bbr^N))={m(m-1)\over 2}$ as expected, since there are these many edges. On the other hand,
\begin{eqnarray*}\label{calc2}
\beta_{2(N-1)}(\conf{m} (\bbr^N))&=&\begin{bmatrix} m\\ m-2\end{bmatrix}= {(3m-1)m(m-1)(m-2)\over 24}\nonumber
\end{eqnarray*}
and this is precisely $\ds {|K_m|\choose 2} = {{m\choose 2}\over 2} - s_3(K_m)$, with $\ds s_3(K_m) = {m\choose 3}$.

\vskip 5pt
\noindent{\bf 4.3. Cyclic graph}. Let $C_m$ be the cyclic graph on $m$ vertices. The so-called
\textit{cyclic configuration space}
$$\conf{C_m} (X) = \{(x_1,\ldots, x_m)\ |\
x_i\neq x_{i+1}, 1\leq i\leq m \ \hbox{and}\ x_{m+1}=x_1\}$$
has been extensively studied by M. Farber and S. Tabachnikov \cite{farber} in connection with the problem of finding upper bounds to the number of periodic
trajectories of high dimensional billiard problems (they denoted this space $G(X,m)$). Using  spectral sequences, they derived the Poincar\'e series (Proposition 2.2 of \cite{farber})
\begin{equation}\label{cyclic2} 
\displaystyle P_t(\conf{C_m}(\bbr^N)) = (t^{N-1}+1)^m - t^{(m-1)(N-1)} - t^{m(N-1)}
\end{equation}
This is however recovered immediately from Theorem \ref{poinc}. The chromatic polynomial for the cyclic graph is $\chi_{C_m}(\lambda) = (\lambda -1)^m + (-1)^m(\lambda -1)$, and a straightforward verification, as in \S4.2, confirms the Farber-Tabachnikov computation.


\section{Poset Topology}\label{poset}

Poset topology, developed by H. Whitney, G. Rota, A. Bjorner, R.P. Stanley, M. Wachs and many others, has had remarkable applications to algebraic topology, ranging from classical subjects like subspace arrangements to the more recent field of toplogical data analysis. In this section we collect some of the basic notions that we need and refer to \cite{bjorner, wachs} for details.

A poset $P$ is any finite set with a partial ordering $\leq$. A poset $(P,\leq)$ is bounded if there exist a top element $\hat 1$ and a bottom element $\hat 0$ such that
$\hat 0\leq x\leq \hat 1$ for all $x\in P$.
All posets in this work will be bounded. The proper part of a bounded poset is
$\mathring{P}=(\hat 0,\hat 1) = \{x\in P,\ \hat 0<x<\hat 1\}$, and so $P= \mathring{P}\cup \{\hat 0,\hat 1\}$. A lattice is a (bounded) poset where any two elements have a join (or greatest lower bound) and a meet (or least upper bound).

A bounded poset is
\textit{ranked} if all maximal chains have the same length.

Below, all posets $(P,\leq)$ will be written $P$ for short.

Every poset has a Mobius function $\mu (= \mu_P )$ defined recursively on closed intervals of $P$ as follows:
$\mu (x, x) = 1$, for all $x \in P$ and
$\mu (x, y) = -\sum_{x\leq z < y}\mu (x, z)$, for all
$x < y \in P$. For a bounded poset $P$, one defines the Mobius invariant (\cite{wachs}, \S1.2)
\begin{equation}\label{mobiustotal}\mu (P) := \mu_P (\hat 0, \hat 1)
\end{equation}

The ``order complex'' $\Delta (P)$ of the poset (also called ``flag complex'') is the abstract simplicial complex with vertices the elements of $P$, and a subset $S\subset P$ forms a face if and only if all elements of $S$ are pairwise comparable. Each simplex then corresponds to a chain (i.e. a totally ordered subset) of $P$, so that $\Delta(P)$ is the union of all chains
$\{ \{i_1,\ldots, i_k\},
i_1< i_2<\cdots < i_k\}$ in $P$. 

 The order complex construction defines a functor $\Delta : \hbox{\bf Pos}\ \lrar\ \hbox{\bf SCpx}$ from our category of posets to the category of simplicial complexes.
Using the notation of \cite{bjorner}, we will write $|P|$ for the geometric realization of $\Delta (P)$. Topological invariants of $P$ or $\Delta (P)$ (e.g. Euler characteristic, betti numbers, homotopy type) are meant to be those of $|P|$. The following is very relevant (\cite{wachs}, Proposition 1.2.6).

\bpr\label{hall} (Hall)
Let $P$ be a bounded poset with minimum $\hat 0$ and
maximum $\hat 1$. Then
$$\mu (\hat{0}, \hat{1}) = \tilde\chi (\Delta (P \setminus \{\hat 0, \hat 1\}))
= \tilde\chi (\mathring{P})$$
where $\tilde \chi$ denotes the reduced Euler characteristic.
\epr

The reduced Euler characteristic $\tilde\chi (\Delta )$ of a simplicial complex $\Delta$ is
$-1 + f_0 - f_1 + f_2 - + \cdots$,
where $f_i$ denotes the number of faces of $\Delta$ of dimension $i$
(i.e., faces with $i + 1$ elements/vertices).

\vskip 5pt\noindent{4.1. {\bf Shellability.}} 
 A $d$-dimensional simplicial complex is called \textit{pure} if its maximal simplices (i.e facets) all have dimension $d$. For instance, the order complex of a ranked poset is pure.  

A pure simplicial complex is \textit{shellable} if its facets can be ordered $(F_1,\ldots, F_k)$ (referred to as \textit{shelling order}) so that $(F_1\cup\cdots\cup F_{i-1})\cap F_i$ is a non-empty union of $(n-1)$-dimensional faces, for every $i$.

\bex Any $1$-dimensional connected simplicial complex is shellable.
\eex

Another very useful result in poset topology is that a shellable simplicial complex has the homotopy type of a wedge of spheres. This is standard, see \cite{wachs}, \cite{bjorner} (Theorem 7.9.1) or (\cite{delucchi}, Theorem 2). 

\begin{theorem}\label{shellable}
Let $P = \mathring{P}\cup \{\hat 0,\hat 1\}$ be a bounded and ranked poset ($\mathring{P}$ is its proper part), and suppose that $|\mathring{P}|$ is shellable. Then $|\mathring{P}|$ is a wedge of $(-1)^d\mu (P)$ spheres of dimension $d=rk(P)-2$.
\end{theorem}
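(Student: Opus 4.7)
The plan is to combine two standard ingredients: the topological consequence of shellability for pure complexes, and Hall's theorem (already available as Proposition \ref{hall} above).

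First I would verify the dimension. Since $P$ is bounded and ranked of rank $r = rk(P)$, every maximal chain in $P$ has length $r$, so every maximal chain in $\mathring{P}$ (obtained by removing $\hat 0$ and $\hat 1$) has length $r-2$. Thus the facets of $\Delta(\mathring{P})$ all have dimension $d = r-2$, and $\Delta(\mathring{P})$ is pure of dimension $d$. Purity is precisely what allows us to speak of a shelling in the sense defined in the excerpt.

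The main step is the classical wedge-of-spheres theorem for pure shellable complexes: \emph{a pure shellable $d$-dimensional simplicial complex is homotopy equivalent to a wedge of $d$-spheres, with one wedge summand per ``homology facet''}. I would run this as an induction along a shelling order $(F_1,\ldots,F_k)$. Let $\Delta_i = F_1\cup\cdots\cup F_i$, so $\Delta_0 = \emptyset$ and $\Delta_k = \Delta(\mathring{P})$. By the shelling condition, $F_i\cap \Delta_{i-1}$ is a nonempty union of $(d-1)$-dimensional faces of $\partial F_i$, i.e.\ a pure subcomplex of $\partial F_i\cong S^{d-1}$. There are two cases. If $F_i\cap\Delta_{i-1} \neq \partial F_i$, then $F_i\cap\Delta_{i-1}$ is a proper pure $(d-1)$-subcomplex of a simplex boundary, hence collapsible and in particular contractible; attaching the $d$-disk $F_i$ along a contractible subspace preserves the homotopy type, so $\Delta_i \simeq \Delta_{i-1}$. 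If $F_i\cap \Delta_{i-1} = \partial F_i$, call $F_i$ a \emph{homology facet}; attaching $F_i$ along $S^{d-1}$ adds one new wedge summand $S^d$ by the standard cofiber argument. Inductively, $\Delta(\mathring{P})$ is homotopy equivalent to a wedge of $h$ copies of $S^d$, where $h$ is the number of homology facets.

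Finally I would identify $h$ intrinsically. Since $\Delta(\mathring P)$ is homotopy equivalent to $\bigvee_h S^d$, its reduced homology is concentrated in degree $d$ with rank $h$, and therefore
\[
\tilde\chi(\mathring{P}) \;=\; (-1)^d\, h .
\]
Hall's theorem (Proposition \ref{hall}) then gives $\tilde\chi(\mathring{P}) = \mu(\hat 0,\hat 1) = \mu(P)$, so $h = (-1)^d \mu(P)$, proving the claim.

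The only genuinely nontrivial step is the shellability/wedge-of-spheres induction; everything else is bookkeeping. The subtle point to get right there is that for a non-homology facet $F_i$, the intersection $F_i\cap\Delta_{i-1}$ is a pure $(d-1)$-dimensional proper subcomplex of the boundary of a simplex, and one needs that such a subcomplex is contractible (it is even collapsible, as follows from a standard shelling of $\partial F_i$ itself). Once this local fact is in hand, the global conclusion follows formally from Hall's theorem.
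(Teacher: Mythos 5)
Your proof is correct and follows essentially the same route as the paper: invoke the fact that a pure shellable complex is a wedge of top-dimensional spheres, then count the spheres by computing $\tilde\chi(\mathring P)=(-1)^d h$ and applying Hall's theorem (Proposition \ref{hall}). The only difference is that you supply the standard shelling induction (homology facets vs.\ collapsible attachments) that the paper simply cites from \cite{wachs, bjorner, delucchi}.
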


\begin{proof}   Since $\Delta (\mathring{P})$ is shellable, it is a wedge of spheres of the same dimension. If $d$ is the dimension of the spheres, then the Euler characteristic of $\Delta (\mathring{P})$ must be $1+(-1)^dk$, where $k$ is the number of spheres in the wedge. By Hall's result (Proposition \ref{hall}), $\chi (| \mathring{P}|) = \mu (P) + 1$, so that
$\mu (P)= (-1)^dk$, as is claimed.
\end{proof}

The next statement is even more explicit and is needed for the proof of Lemma \ref{fromPtoL}.
In the shelling order, we call a facet $F_i$ a \textit{capping facet} if when added, it introduces a cycle. The following is explained in (\cite{kennedy}, Chapter 2).

\begin{proposition}\label{capping} If an $n$-dimensional pure simplicial complex is shellable, then it is homotopy equivalent to a wedge of $n$-spheres. The number of $n$-spheres is exactly the number of capping facets (if there is none, the complex is contractible).
\end{proposition}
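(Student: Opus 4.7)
The plan is to prove the statement by induction on the length of the shelling. Let $F_1, F_2, \ldots, F_k$ be a shelling order of the facets of the pure $n$-dimensional simplicial complex $\Delta$, and define the filtration $\Delta_i := F_1 \cup F_2 \cup \cdots \cup F_i$, so that $\Delta_k = \Delta$. I will show by induction on $i$ that $\Delta_i$ is homotopy equivalent to a wedge of $c_i$ copies of $S^n$, where $c_i$ denotes the number of capping facets among $\{F_1, \ldots, F_i\}$; the desired statement is then the case $i = k$. The base case $i = 1$ is immediate, since $\Delta_1 = F_1$ is a closed $n$-simplex, hence contractible ($c_1 = 0$, an empty wedge).

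For the inductive step, write $A_i := \Delta_i \cap F_{i+1}$, so that $\Delta_{i+1}$ fits in the pushout square with $A_i \hookrightarrow F_{i+1}$ and $A_i \hookrightarrow \Delta_i$. By the definition of a shelling, $A_i$ is a non-empty pure $(n-1)$-dimensional subcomplex of $\partial F_{i+1} \cong S^{n-1}$. The proof then splits into two cases according to whether $F_{i+1}$ is non-capping ($A_i \subsetneq \partial F_{i+1}$) or capping ($A_i = \partial F_{i+1}$).

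In the non-capping case, the key technical lemma is that any non-empty proper pure $(n-1)$-dimensional subcomplex of the boundary of an $n$-simplex is contractible; indeed, if $\sigma$ is a codimension-one face of $F_{i+1}$ missing from $A_i$, then $A_i$ deformation retracts onto the vertex of $F_{i+1}$ opposite to $\sigma$. Since both $A_i$ and $F_{i+1}$ are then contractible and $A_i \hookrightarrow F_{i+1}$ is a cofibration, the pushout $\Delta_{i+1}$ deformation retracts onto $\Delta_i$, giving $\Delta_{i+1} \simeq \Delta_i$ and $c_{i+1} = c_i$. In the capping case, $\Delta_{i+1}$ is obtained from $\Delta_i$ by attaching an $n$-cell along an attaching map $\phi : S^{n-1} = \partial F_{i+1} \hookrightarrow \Delta_i$, so to conclude $\Delta_{i+1} \simeq \Delta_i \vee S^n$ it suffices to check that $\phi$ is null-homotopic. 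This is where the inductive hypothesis is essential: by induction $\Delta_i \simeq \bigvee_{c_i} S^n$, and for $n \geq 2$ one has $\pi_{n-1}(\bigvee S^n) = 0$ by cellular approximation, while for $n = 1$ any map $S^0 \to \Delta_i$ is null-homotopic because shellable complexes are connected. Hence $c_{i+1} = c_i + 1$.

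The main obstacle is the technical lemma in the non-capping case, namely the contractibility of an arbitrary non-empty proper pure $(n-1)$-dimensional subcomplex $A$ of $\partial \Delta^n$; this is classical and follows from a simple deformation retraction, but is the only place where the combinatorial structure of a simplex is used. Iterating the inductive step through the shelling yields $\Delta \simeq \bigvee_{c_k} S^n$, where $c_k$ is the total number of capping facets; if $c_k = 0$, the complex $\Delta$ is contractible.
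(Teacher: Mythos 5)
Your proof is correct, and it is the standard shelling-induction argument for this classical fact (due essentially to Bj\"orner): filter by the shelling order, observe that a non-capping facet is glued along a contractible proper union of codimension-one faces (a cone on the vertex opposite a missing face) and so does not change the homotopy type, while a capping facet attaches an $n$-cell along a null-homotopic map into the $(n-1)$-connected wedge produced so far. The paper gives no proof of its own and simply cites \cite{kennedy}, Chapter 2, where exactly this induction is carried out, so your argument matches the intended one.
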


A primary example of a shellable complex is the order complex of a \textit{geometric} lattice (see \cite{stanley}, Definition 3.9, and discussion therein).  A special class of geometric arrangements are $c$-arrangements $\mathcal A= \{A_1,\ldots, A_m\}$ whose intersection poset $L(\mathcal A)$ has a unique minimal element $\hat 0 = \bbr^N$.  It is a semi-lattice, by construction, but becomes a lattice when $L(\mathcal A)$ has a maximal element $\hat 1$ (i.e. if it is central). The following result is stated and proved in (\cite{goreskymacpherson}, Part III, chapter 4).

\bpr\label{geometric} 
The intersection lattice of a central $c$-arrangement is geometric.
\epr

This fact has been used  in \cite{longueville} in their derivation of the cohomology of $c$-arrangements. It is well-known that geometric lattices are shellable (\cite{bjorner}, 7.6.3), and this is a result we will use everywhere in this chapter.

\section{The Bond Lattice}\label{bondlat}

We will fix throughout the Euclidean dimension to be $N$.
We view again $\conf{\Gamma} (\bbr^N)$ as a complement of an affine hyperplane arrangement $\mathcal A$ in $\bbr^{mN}$,
$|V(\Gamma )|=m$. This arrangement is made out of the linear subspaces
\begin{equation}\label{aij}
A_{ij} = \{(x_1,\ldots, x_m)\in \left(\bbr^N\right)^{m}\ |\ x_i= x_j\ \hbox{if}\ \{i,j\}\in E(\Gamma )\}\end{equation}
so that
$$\conf{\Gamma} (\bbr^N) = \bbr^{mN}-\bigcup_{\{i,j\}\in E(\Gamma)}A_{ij}$$
This is a $c$-arrangement with $c=N$, which is central since $\bigcap_{\{i,j\}\in E(\Gamma)} A_{ij}\neq\emptyset$.

When $G=K_m$ is the complete graph on $m$ vertices, $\conf{K_m} (X) = \conf{m} (X)$ is the classical configuration space of tuples of pairwise distinct points,
and we refer to the corresponding graph arrangement as the ``braid arrangement''. All graph arrangements are sub-arrangements of the braid arrangement. It is customary to write $V_{\mathcal A}:=\bigcup_{\{i,j\}\in E(\Gamma)} A_{ij}$, the so-called \textit{link} of the arrangement, which is the union of all its subspaces (more about this in \S\ref{naturality}).

The intersection lattice for $\mathcal A=\{A_{ij}\}_{\{i,j\}\in E(\Gamma)}$ in \eqref{aij} is written  $L_\G$. Its components are the various intersections of the $A_{ij}$'s, ordered by reverse inclusion, and the rank is given by codimension. 
This is a \textit{bounded} lattice with $\hat 0=\bbr^{mN}$ and $\hat 1 = \bigcap A_{ij}\neq\emptyset$. The ``atoms''
are the $A_{ij}$'s which are, by definition, the elements that cover $\hat 0$. 

\bde The lattice of intersections $L_\G$ of the graphic subspace arrangement associated to $\G$ is called the \textit{chromatic  lattice}. This turns out to correspond to the ``bond lattice of the graph'' as discussed next.
\ede

A fundamental object associated to a graph $\Gamma$ is its \textit{bond partition}  which is a sublattice of the partition lattice. Given a graph $\Gamma$ with vertex set $V(\Gamma)$ on $m$ vertices, a ``connected partition'' or ``a bond partition'' $B$ of $\Gamma$ is any set partition of $V(\Gamma )$, written $B=B_1|B_2|\cdots |B_k$, where the $B_i$'s are blocks assumed to be the vertices of a connected induced subgraph $\Gamma_i$  of $\Gamma$\footnote{A subgraph of $\G$ is induced if for every $i,j\in V(H)$, if $\{i,j\}\in E(\G)$, then $\{i,j\}\in E(H)$.}. For simplicity, we will view $B_i$ as both a block or a subset of $V(\Gamma)$, depending on the context, and write $B_i=V(\Gamma_i)$. 
The integer $k$, $1\leq k\leq |V(\Gamma )|$, is the \textit{length} of the partition which we write as $|B|=k$. In more technical jargon, the Bond Lattice is \textit{the upper ideal of the partition lattice generated by the edges of the graph}.

\bex\label{exampleline5}  Consider the line graph $L_5$ on $5$ vertices labeled $1,2,\ldots, 5$.
\begin{figure}[htb]
\begin{center}
\begin{tikzpicture}[scale=0.6]
\draw (-2,1)-- (0,1)-- (2,1)-- (4,1)--(6,1);
\draw [fill=black] (-2,1) circle (2.5pt);
\draw[color=black] (-2.03,0.41) node {$1$};
\draw [fill=black] (0,1) circle (2.5pt);
\draw[color=black] (-0.01,0.43) node {$2$};
\draw [fill=black] (2,1) circle (2.5pt);
\draw[color=black] (1.99,0.43) node {$3$};
\draw [fill=black] (4,1) circle (2.5pt);
\draw[color=black] (3.99,0.43) node {$4$};\draw [fill=black] (6,1) circle (2.5pt);
\draw[color=black] (5.99,0.43) node {$5$};
\end{tikzpicture}
\end{center}
\end{figure}

The bond partitions of length $3$ of $L_5$ are listed lexicographically as follows:
$$1|2|345\ ,\ 1|5|234\ ,\ 4|5|123 \ ,\ 1|23|45\ ,\  3|12|45\ ,\ 5|12|34
$$
For the stellar graph $St_5$, whose central vertex is labeled $v_1$, the bond partitions of length $3$ are given as follows
$$2|3|145\ ,\ 2|4|135\ ,\ 2|5|134 \ ,\ 3|5|124\ ,\  3|4|125\ ,\ 4|5|123
$$
The total number of connected partitions is the same but the structure of blocks is different (see Corollary \ref{combinatorial}).
\eex

The set of connected partitions
of $\Gamma$ having length $k$ is denoted by $\mathcal B_k(\Gamma)$, and we write $\mathcal B(\Gamma )=\bigcup_{1\leq k\leq m}{\mathcal B_k}(\Gamma)$. By coarsening the partitions we get a partial ordering, so that $\mathcal B$ becomes a poset, in fact, it forms a lattice. An explicit example for the square graph is depicted in Fig. \ref{bondC4}. 
This bond lattice also has many other names in the literature: the ``lattice of contractions'' or the ``lattice of connected components'' \cite{thatte}.
Notice that the bond lattice of
the complete graph is precisely the partition lattice since any induced subgraph of the complete
graph is connected. 

\begin{figure}[htb]
    \centering
\includegraphics[scale=0.7]{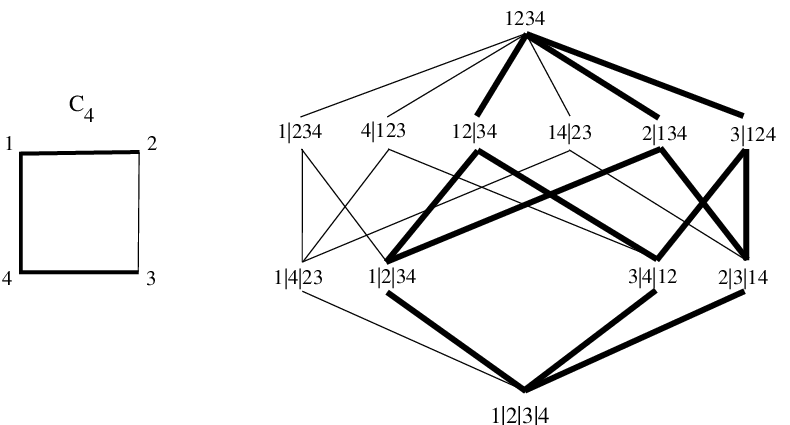}
    \caption{The bond lattice for $C_4$. The top element is $\hat 1 = (1234)$, which corresponds in the intersection lattice to the thin diagonal, and bottom element $\hat 0 = 1|2|3|4$ which corresponds to $\bbr^{4N}$. The subposet in bold is the  bond lattice of the three-edge subgraph (also shown in bold) of $C_4$, and it is a  sublattice.}\label{bondC4}
\end{figure}

The Bond lattice uniquely determines the graph, up to isomorphism. 
It has important combinatorial properties \cite{thatte}, and its characteristic polynomial is related to the chromatic
polynomial of the graph (see \eqref{rotaform}).

\ble\label{bondlat2} The bond lattice $\Pi_\Gamma$ of a simple graph $\Gamma$ is isomorphic to the intersection lattice of the associated graphic arrangement $ L_\G$ in $\bbr^N$ \eqref{aij}. It is a geometric lattice.
\ele

\begin{proof} This is straightforward. Let $m=|V(\Gamma )|$ as always. The isomorphism $\Pi_\Gamma\lrar  L(A)$ sends a bond partition 
$B_1|\ldots |B_k$, where
$B_j = \{j_1,\ldots, j_{r_j}\}$, $\bigsqcup B_i = \{1,\ldots,m\}$, to the subspace
$$\{(x_1,\ldots, x_n)\ |\ x_{j_1}=\ldots = x_{j_{r_j}}\ ,\ \forall j\}$$
In particular $1|2|\cdots |m\mapsto \hat 0 = (\bbr^N)^m$ and 
$12\cdots m\mapsto \{(x,\ldots, x), x\in\bbr^N\}$ the thin diagonal. This is a bijection between sets. Since reverse inclusion for the intersection semi-lattice corresponds to coarsering of partitions, the bijection preserves orders, and the posets are isomorphic. Finally, the chromatic arrangement is a $c$-arrangement, it is a lattice and it is geometric by Proposition \ref{geometric} (See also \cite{tsuchiya}, Corollary 3.2).
\end{proof}

Another key property of bond posets is stated next. This turns out to be essential in the homology analysis of Section \ref{proofmain}. We recall that a subinterval of an interval in a poset consists of $x<y$ in the interval and all elements in between. A proper interval of a bounded poset is an interval not containing $\hat 0$ and $\hat 1$. The following is simple inspection.

\ble\label{subinterval} Let $H$ be a spanning subgraph of a finite simple graph $\G$. Moreover, the length of a chain from $x$ to $y$ in $\Pi_H$ is the same as in $\Pi_\G$ (i.e. if $y$ covers $x$ in $\Pi_H$, it must also cover $x \in \Pi_\G$).
\ele

\begin{definition}\label{bondproperty}\rm Associated to $x\in\Pi_\Gamma$ are three positive values: the \textit{rank}, the 
\textit{length} and the \textit{dimension}.
\begin{enumerate}[label=(\roman*)]
\item The length $\ell (x)$ of $x\in\Pi_\Gamma$ is $k$ if   $x = B_1|\ldots |B_k\in \Pi_\Gamma$, with
$B_i\subset V(\Gamma)$.
\item  The rank $\rho (x)$ of $x\in\Pi_\Gamma$ is the length of a maximal chain from $\hat 0$ to $x$. Clearly $\rho (\hat 0)=0$, $\rho (\hat 1)=m-1$, where $m= |V(\Gamma )|$, and in general, if
$x = B_1|B_2|\cdots |B_k$,
\begin{equation}\label{length}
\rho (x) = m-k 
\end{equation}
\item The rank and the length are related by the equation
$\rho (x) = m-\ell (x)$.
\item We can also associate a dimension to a bond element $B$. By Lemma \ref{bondlat2}, $B$ corresponds to an affine subspace in the intersection lattice, and $\dim B$ is the dimension of that subspace. We have
$$\dim B =\ell (B)N$$
\end{enumerate}
\end{definition}


\vskip 5pt\noindent{5.1. {\bf Combinatorics}}  The cardinality of the bond partitions of length $k$, $\mathcal B_k(\Gamma )$, is not known in general and it is not clear how to write such a formula for a general graph. We give the computation for trees in Lemma \ref{combinatorial} and for the cyclic graphs in Lemma \ref{count}.
But we first indicate how to write a bond partition uniquely: label the vertices $1,2,\ldots, m$ of $\Gamma$.  The blocks of the partition are written from smaller to bigger size blocks, and blocks of the same size appear from left to right ordered by the size of the smallest element in each block.

In the case of trees, the cardinality of $\mathcal B_k(T)$ corresponds to the number of ways of splitting the tree into $k$-disjoint subtrees (i.e no common vertices). The following combinatorial count is an immediate consequence of \eqref{homeo}.

\begin{corollary}\label{combinatorial} Given a tree $T$ on $m$ vertices, the number of ways to divide it into $k$ disjoint subtrees is independent of the tree, and equals $\displaystyle |\mathcal B_k(T)| = {m-1\choose m-k}$. 
\end{corollary}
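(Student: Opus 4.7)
The plan is to establish a direct bijection between $\mathcal{B}_k(T)$ and the set of $(k-1)$-element subsets of $E(T)$, and then to cross-check the count against \eqref{homeo}. Given a length-$k$ bond partition $B_1 \mid \cdots \mid B_k$ of $T$, each block $B_i$ induces a connected subgraph of $T$; because a connected subgraph of a tree is itself a subtree, the blocks are precisely the vertex sets of $k$ pairwise disjoint subtrees whose union of vertex sets is $V(T)$. The edges of $T$ that join vertices lying in distinct blocks form a subset $S \subset E(T)$, and deleting $S$ from $T$ leaves exactly the $k$ subtrees in question. Since removing a single edge from a tree increases the number of connected components by exactly one, one must have $|S| = k-1$.

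Conversely, given any $S \subset E(T)$ with $|S| = k-1$, deletion of $S$ partitions $T$ into exactly $k$ subtrees whose vertex sets form a length-$k$ bond partition of $\Gamma = T$. These two assignments are mutually inverse, so
\[
|\mathcal{B}_k(T)| \;=\; \binom{|E(T)|}{k-1} \;=\; \binom{m-1}{k-1} \;=\; \binom{m-1}{m-k},
\]
a value independent of the shape of $T$.

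Finally, I would verify consistency with \eqref{homeo}: the homeomorphism there gives $\conf{T}(\bbr^N)\simeq (S^{N-1})^{m-1}$ and hence Poincaré polynomial $(1+t^{N-1})^{m-1}=\sum_{j}\binom{m-1}{j}t^{j(N-1)}$. Under Corollary~\ref{poinc2}, the coefficient of $t^{(m-k)(N-1)}$ is the NBC-forest count $a_k(T)$; but $T$ has no cycles, so every spanning forest is automatically NBC, and the same cut-set construction identifies spanning forests of $T$ with $k$ components with elements of $\mathcal{B}_k(T)$. The two routes agree, recovering $\binom{m-1}{m-k}$.

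The only mild subtlety, and the one I would be most careful about, is the passage from ``each block induces a connected subgraph'' to ``the blocks are exactly the vertex sets of the components left after cutting $k-1$ edges.'' This relies on two elementary but essential properties of trees — that induced connected subgraphs of a tree are subtrees, and that edge deletion strictly refines the component partition by one at a time — so there is no real obstacle.
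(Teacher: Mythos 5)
Your proof is correct, but it proceeds by a genuinely different route than the paper's. The paper deduces the count \emph{from the topology}: Proposition \ref{treedecomposition} gives the Poincar\'e polynomial $(1+t^{N-1})^{m-1}$, Corollary \ref{poinc2} (via Proposition \ref{aktoa1}) identifies the coefficient of $t^{(m-k)(N-1)}$ with $\sum_{B\in\mathcal B_k}a_1(B)$, and Eisenberg's theorem that $a_1=1$ for trees collapses that sum to $|\mathcal B_k(T)|$; comparing coefficients yields $\binom{m-1}{m-k}$. You instead give a direct, purely combinatorial bijection between length-$k$ bond partitions and $(k-1)$-element subsets of $E(T)$, using that every edge of a tree is a bridge so that deleting $j$ edges produces exactly $j+1$ components, and that the crossing edges of a bond partition recover the partition as the components of the cut tree. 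Your argument is self-contained and independent of any configuration-space input, which is arguably cleaner as a standalone combinatorial fact; the paper's version, by contrast, is deliberately framed as a \emph{consequence} of the homeomorphism \eqref{homeo}, reinforcing the theme that the topology of $\conf{\Gamma}(\bbr^N)$ encodes graph invariants. Your closing consistency check is essentially a restatement of the paper's proof, so the two routes do meet. One tiny caution: in your bijection you should note explicitly (as you essentially do) that the components of $T\setminus S$ are \emph{exactly} the blocks $B_i$ --- this needs both that no edge inside a block is cut and that no retained edge joins two blocks --- but both are immediate from the definition of $S$, so there is no gap.
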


\begin{proof}
Proposition \ref{treedecomposition} gives that the coefficient of $(t^{N-1})^{m-k}$ in the Poincar\'e series, for $1\leq k\leq m-1$, is $\displaystyle {m-1\choose m-k}$. This must correspond to $|\mathcal B_k|$ according to Corollary \ref{poinc2}, since $a_1(B)=1$ for  $B\in\mathcal B_k$. Indeed, each block of $B$ corresponds to a tree, and $a_1$ for trees is always $1$ \cite{eisenberg}. 
\end{proof}

For cyclic graphs, the count is given as follows.

\ble\label{count} Let $\mathcal B_k(C_m)$ be the set of all bond partitions of the cyclic graph $C_m$, $k\geq 2$. Then $\displaystyle |\mathcal B_k(C_m)| = {m\choose k}$.
\ele

\begin{proof}
Let's say that a bond partition $B_1|B_2|\cdots |B_k$ contains the block $D$ if $D\subset B_i$ for some $i$. So for example, $15$ is contained in $2|3|451$. When going from $L_m$, with vertices $\{1,2,\ldots, m\}$, to $C_m$, we add an edge from $m$ to $1$, and so we add some extra bond partitions containing a block having $1m$ in it (see \eqref{unique} for the case $m=5$). These extra partitions, which are not partitions for $L_m$, are all the bond partitions having the block $1m$, then the block $1m(m-1)$ or $12m$, etc.

Since the size of a block $D$ containing $1m$ is at least $2$ and at most $m-k+1$, we will have to count, for every $2\leq j\leq m-k+1$, the number of bond partitions of length $k$ containing a given block $D$ of size $j$. This number is $\displaystyle {m-j-1\choose k-2}$ since once we remove the block $D$, we are left with a path graph having $m-j$ vertices, and from those, we are taking all bond partitions of length $k-1$ (remember, we already counted the block $D$). Since there are exactly $j-1$ blocks $D$ of size $j$ containing $1m$, the final count must therefore be
\begin{eqnarray}
|\mathcal B_k(C_m)|&=&
{m-1\choose k-1} +
\sum_{j=2}^{m-k+1}(j-1) {m-j-1\choose k-2}\label{count1}
\end{eqnarray}
We can invoke the following identity between binomial coefficients
\begin{equation}\label{count2}
{m-1\choose k} = \sum_{j=2}^{m-k+1}(j-1) {m-j-1\choose k-2}
\end{equation}
which can be proved by induction on $m$, using the identity
$\displaystyle {m+1\choose k} ={m\choose k} + {m\choose k-1}$.
Replacing in \eqref{count1}, we get
$\displaystyle |\mathcal B_k(C_m)|=
{m-1\choose k-1} + {m-1\choose k} = {m\choose k}$, and this completes the proof.
\end{proof}

\bex
$L_5$ has $6={4\choose 2}$ bond partitions of length $3$. We add an extra edge $\{1,5\}$ to $L_5$ to get the cyclic graph $C_5$, and thus we get the extra length 3 bond partitions
\begin{equation}\label{unique}
2|15|34\ ,\ 4|15|23\ ,\ 2|3|145\ ,\ 3|4|125
\end{equation}
There are indeed in total $10 = {5\choose 3}$ such partitions for $C_5$.
\eex


\section{The Chromatic Polynomial and its Linear Term}\label{linearterm}

Following \cite{dong}, let $\Gamma$ be a graph, and $\lambda\in\bbn$. A mapping $f: V(G)\rightarrow \{1,2,\ldots,\lambda\}$ is called a $\lambda$-colouring of $\Gamma$ if $f(i)\neq f(j)$ whenever $\{i,j\}\in E(\Gamma )$. The number of distinct $\lambda$-colourings of $\Gamma$ is denoted by $\chi_\Gamma (\lambda)$, and this is a polynomial in $\lambda$ (the chromatic polynomial) of degree the number of vertices of the graph.
As is well-known, the chromatic polynomials are always alternating in sign.

\bex\label{chromatic} We will need the following examples :
\begin{itemize}
\item If $\Gamma = O_m$ is the trivial graph on $m$-vertices, then $\chi_{O_m}(\lambda) = \lambda^m$.
\item If $\Gamma = L_m$ is the line graph with $m$ vertices, then $\chi_{L_m}(\lambda ) = \lambda(\lambda -1)^{m-1}$.
\item If $\Gamma = K_m$ is the complete graph on $m$-vertices, then $\chi_{K_m}(\lambda)$ is the falling factorial
\begin{eqnarray*}
\chi_{K_m}(\lambda) &=& \lambda (\lambda - 1)\cdots (\lambda -m+1)
=\sum_{k=1}^ms(m,k)\lambda^k
\end{eqnarray*}
where $s(m,k)$ are the (signed) Stirling numbers of the first kind.
\item If $\Gamma = C_m$ is the cycle graph, then
$\chi_{C_m}(\lambda) = (\lambda -1)^m + (-1)^m(\lambda -1)$.
\item The wedge of two connected graphs $G_1\vee G_2$ is their union having only one vertex in common, and
\begin{equation}\label{wedge}
\chi_{G_1\vee G_2}(t) = {1\over t}\chi_{G_1}(t)\cdot\chi_{G_2}(t)
\end{equation}
\end{itemize}
\eex


Whitney's broken cycle theorem \cite{whitney} expresses the chromatic polynomial as an alternating polynomial
\begin{equation}\label{whitneyform}
\chi_\Gamma (\lambda) = \sum_{i=1}^m(-1)^{m-i}a_i(\Gamma )\lambda^i
\end{equation}
where the coefficient $a_i(\Gamma )\geq 0$ counts the number of spanning subgraphs of $\Gamma$ that have exactly $m-i$ edges and that contain no broken cycles. It is clear that $a_m=1$ and $a_{m-1}$ is the number of edges. The term $a_1(\Gamma )$ counts all such $m - 1$-subsets of edges which, being acyclic, correspond to spanning trees. In particular, we have the following characterization (see \cite{benson} for general discussion).

\bpr\label{whitney}\cite{whitney}
For every undirected, connected graph $\Gamma$, the coefficient of the $\lambda$-term in the chromatic polynomial $\chi_\Gamma(\lambda)$ corresponds, up to sign, to the number of NBC spanning trees.
\epr

An important property of $a_1$ is listed below.

\begin{theorem}\label{properties} (Eisenberg \cite{eisenberg})  $\Gamma$ is a tree if and only if $a_1(\Gamma )=1$.
\end{theorem}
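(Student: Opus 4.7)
I would split the proof into the two implications and treat the converse via a clean deletion--contraction recursion.

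The forward direction is immediate: if $T$ is a tree on $m$ vertices, Example \ref{chromatic} gives $\chi_T(\lambda)=\lambda(\lambda-1)^{m-1}$, whose coefficient of $\lambda$ is $(-1)^{m-1}$, so $a_1(T)=1$. Equivalently, the unique spanning subgraph of $T$ with $m-1$ edges is $T$ itself, and since $T$ has no cycles it has no broken circuits, so $T$ is trivially NBC.

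For the converse I would prove the contrapositive: every connected simple graph $\Gamma$ containing a cycle has $a_1(\Gamma)\geq 2$. The main tool is the standard chromatic deletion--contraction identity
\[
\chi_\Gamma(\lambda)=\chi_{\Gamma-e}(\lambda)-\chi_{\Gamma/e}(\lambda),
\]
valid for any edge $e$. Reading off the coefficient of $\lambda$ in Whitney's form \eqref{whitneyform} and using that $|V(\Gamma)|=|V(\Gamma-e)|=|V(\Gamma/e)|+1$, the signs match and one obtains the clean additive recursion
\[
a_1(\Gamma)=a_1(\Gamma-e)+a_1(\Gamma/e).
\]

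I would then induct on $|E(\Gamma)|$, proving the stronger simultaneous statement that every connected simple graph satisfies $a_1\geq 1$, with equality iff it is a tree. The base case (a single vertex) and the tree case are handled by the forward direction. For the inductive step, assume $\Gamma$ is connected and contains a cycle $C$; choose $e\in C$. Since $e$ lies on a cycle it is not a bridge, so $\Gamma-e$ remains connected, and $\Gamma/e$ is automatically connected. Both have strictly fewer edges than $\Gamma$, so the inductive hypothesis gives $a_1(\Gamma-e)\geq 1$ and $a_1(\Gamma/e)\geq 1$; the recursion then yields $a_1(\Gamma)\geq 2$, completing the proof.

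The main potential obstacle is making the induction non-circular: one must prove ``$a_1\geq 1$ for any connected $\Gamma$'' and ``equality iff tree'' together in a single induction, rather than trying to bootstrap one from the other. A second, minor technical point is verifying that the signs in the recursion really work out to a $+$ and not a $-$; this follows because the shift $m\mapsto m-1$ between $\Gamma$ and $\Gamma/e$ absorbs exactly the sign of the subtraction in the deletion--contraction identity, as recorded in Whitney's alternating form \eqref{whitneyform}.
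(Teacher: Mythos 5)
Your proof is correct. Note that the paper does not prove this statement at all --- it is quoted from Eisenberg's paper with a citation --- so any valid argument is necessarily "a different route," and yours is a clean, self-contained one. The forward direction is exactly as easy as you say (either from $\chi_T(\lambda)=\lambda(\lambda-1)^{m-1}$ or from the observation that the only spanning subgraph of $T$ with $m-1$ edges is $T$ itself, which is vacuously NBC). For the converse, your sign check is right: comparing coefficients of $\lambda$ in $\chi_\Gamma=\chi_{\Gamma-e}-\chi_{\Gamma/e}$ and using that $\Gamma/e$ has one fewer vertex gives $(-1)^{m-1}a_1(\Gamma)=(-1)^{m-1}a_1(\Gamma-e)-(-1)^{m-2}a_1(\Gamma/e)$, hence $a_1(\Gamma)=a_1(\Gamma-e)+a_1(\Gamma/e)$, and your choice to run the induction on the conjunction "$a_1\geq 1$ always, with equality iff tree" is exactly what makes the step $a_1(\Gamma)\geq 2$ go through without circularity. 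Two minor points worth making explicit: (i) $\Gamma/e$ should be understood as simplified (parallel edges merged), which is harmless since the chromatic polynomial is insensitive to multiplicities and no loops arise when contracting an edge of a simple graph; (ii) the base of the induction should include all edgeless/one-vertex graphs. For comparison, the paper's surrounding machinery offers two alternative one-line converses: $a_1(\Gamma)$ counts NBC spanning trees (Proposition \ref{whitney}), and a connected graph with a cycle has at least two NBC spanning trees; or, via the Greene--Zaslavsky interpretation quoted in the same section, $a_1(\Gamma)$ counts acyclic orientations with a unique prescribed source, of which a graph with a cycle admits more than one. Your deletion--contraction argument avoids both of these combinatorial inputs and is arguably the most elementary of the three.
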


Next, we describe the relationship of $a_1(\Gamma)$ to orientations of graphs. An orientation of a graph $\Gamma = (V,E)$ is an assignment of a direction (i.e. arrow) to each edge $\{i,j\}$, denoted by $i\rightarrow j$ or $j\rightarrow i$, as the case may be.
An orientation of $\Gamma$ is said to be acyclic if it has no directed cycles. A vertex $v_0$ of $\Gamma$ is a source of an acyclic orientation if all arrows emanate from $v_0$. Fix $v_0\in V(\Gamma)$ and write $A(\Gamma, v_0)$ the set of all acyclic orientations of $\Gamma$ whose \textit{unique} source is $v_0$. We write $a_1(\Gamma)$ the cardinality of $A(\Gamma, v_0)$. This number does not depend on the choice of $v_0$, nor does it depend on whether $v_0$ is a sink or a source.

\begin{theorem} (\cite{gz}, Theorem 7.3).
Let $|A(\Gamma, v_0)|$ be the number of all acyclic orientations with a unique sink (or source) $v_0$. Then
$|A(G, v_0)| = |a_1|$, and this number is independent of the choice of $v_0$.
\end{theorem}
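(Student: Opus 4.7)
The plan is to show that both sides of the asserted equality satisfy the same deletion–contraction recursion with matching initial data. Fix an edge $e=\{v_0,u\}$ incident to the distinguished vertex $v_0$ (the case $|V(\Gamma)|=1$ is a trivial base case with both quantities equal to $1$). I would aim to prove, for any such edge,
\begin{equation*}
|A(\Gamma,v_0)| \;=\; |A(\Gamma-e,v_0)| + |A(\Gamma/e,[v_0])| \quad\text{and}\quad a_1(\Gamma) \;=\; a_1(\Gamma-e) + a_1(\Gamma/e),
\end{equation*}
with the convention $a_1(H)=0$ when $H$ is disconnected (consistent with Whitney's form \eqref{whitneyform}). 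An induction on $|E(\Gamma)|$ then forces equality. Independence of the base point $v_0$ is automatic once the equality is established, since $a_1(\Gamma)$ has no reference to $v_0$.

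For the orientation side, the key observation is that in every $\omega\in A(\Gamma,v_0)$ the edge $e$ must be directed $v_0\to u$, since $v_0$ is the unique source. I would split $A(\Gamma,v_0)$ into two parts according to whether, after removing $e$, the vertex $u$ becomes an additional source or not. Those $\omega$ for which $v_0$ remains the unique source of $\omega|_{\Gamma-e}$ correspond bijectively to $A(\Gamma-e,v_0)$: restriction is clearly injective, and re-inserting $e$ as $v_0\to u$ creates no directed cycle because nothing points into $v_0$. Those $\omega$ for which $u$ becomes a source in $\omega|_{\Gamma-e}$ are exactly the orientations that descend, under the contraction identifying $v_0$ with $u$ to a single vertex $[v_0]$, to an acyclic orientation of $\Gamma/e$ with unique source $[v_0]$; conversely every such orientation of $\Gamma/e$ lifts uniquely by orienting $e$ as $v_0\to u$.

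For the chromatic side, I would invoke the standard deletion–contraction $\chi_\Gamma(\lambda)=\chi_{\Gamma-e}(\lambda)-\chi_{\Gamma/e}(\lambda)$ and read off the coefficient of $\lambda$ from Whitney's expansion \eqref{whitneyform}. When $\Gamma-e$ is connected the signs line up and give $a_1(\Gamma)=a_1(\Gamma-e)+a_1(\Gamma/e)$. When $e$ is a bridge, $\chi_{\Gamma-e}(\lambda)=\chi_{G_1}(\lambda)\chi_{G_2}(\lambda)$ with each factor divisible by $\lambda$, so the coefficient of $\lambda$ in $\chi_{\Gamma-e}$ vanishes and the same recursion holds with $a_1(\Gamma-e):=0$. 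This is consistent with the orientation side because, if $e$ is a bridge, the component of $\Gamma-e$ not containing $v_0$ forces an extra source, hence $|A(\Gamma-e,v_0)|=0$.

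The main obstacle I expect is the acyclicity check in the second part of the bijection: verifying that splitting the contracted vertex $[v_0]$ back into $v_0,u$ and orienting $e$ as $v_0\to u$ genuinely produces an acyclic orientation, not just one with $v_0$ as \emph{a} source. This reduces to observing that any putative directed cycle in $\Gamma$ either avoids $e$ (and then projects to a directed cycle through $[v_0]$ in $\Gamma/e$, which is excluded by acyclicity there) or uses $e$ (and would then require an arrow into $v_0$, which does not exist since $v_0$ is a source). With this point handled cleanly, the bijection and hence the whole argument is complete, and one obtains $|A(\Gamma,v_0)|=a_1(\Gamma)=|a_1|$, independent of $v_0$.
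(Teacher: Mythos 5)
Your argument is correct, but note that the paper offers no proof of this statement at all: it is quoted as Theorem 7.3 of Greene--Zaslavsky \cite{gz}, with the remark that \cite{gebhard} contains three independent proofs. Your deletion--contraction argument is essentially one of those standard proofs (it is the recursive proof in Gebhard--Sagan), so you have supplied a self-contained justification where the paper relies on a citation. The two recursions you state do match: writing $m=|V(\Gamma)|$ and comparing coefficients of $\lambda$ in $\chi_\Gamma=\chi_{\Gamma-e}-\chi_{\Gamma/e}$ under Whitney's signed form, the sign $(-1)^{m-1}$ for $\Gamma$ and $\Gamma-e$ versus $(-1)^{m-2}$ for the $(m-1)$-vertex graph $\Gamma/e$ indeed yields $a_1(\Gamma)=a_1(\Gamma-e)+a_1(\Gamma/e)$, and your bridge case is handled correctly on both sides. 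The orientation bijections also check out: the only vertex that can become a new source upon deleting $e=\{v_0,u\}$ is $u$, so your two cases exhaust $A(\Gamma,v_0)$, and your acyclicity verification for the lift from $\Gamma/e$ is the right one. Two small points you should make explicit: (i) contracting $e$ in a simple graph may create parallel edges at $[v_0]$, but in your second case both $v_0$ and $u$ are sources of $\omega|_{\Gamma-e}$, so any such pair of edges is oriented consistently outward and the descent to the \emph{simple} contraction (the one appearing in the chromatic recursion) is well defined; (ii) the statement also covers unique \emph{sinks}, which follows from your source count by reversing all orientations, an operation preserving acyclicity.
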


The reference \cite{gebhard} gives three independent proofs of this fact.
Theorem 4.1 of \cite{benson} lists all the several interpretations of $a_1(\Gamma)$.

\bex Let $C_4$ be the square graph. Its chromatic polynomial is $\chi_{C_4}(\lambda) = \lambda^4-4\lambda^3+6\lambda^2-3\lambda$, so
$a_1=3$. The acyclic orientations of $C_4$ with a single source are displayed in Fig. \ref{acyclicfig}
\begin{figure}[htb]
    \centering
\includegraphics[scale=0.6]{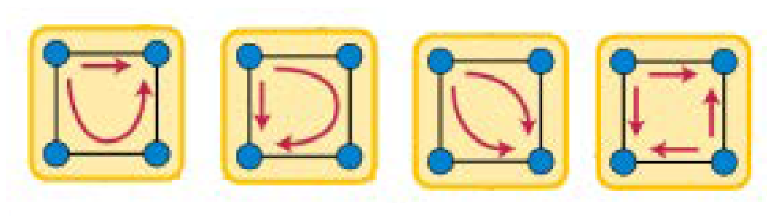}
    \caption{All the acyclic orientations of $C_4$ with the top left vertex $v_0$ being a source. There are only $a_1=3$ orientations where $v_0$ is the ``unique'' source. There are in total $14$ acyclic orientations of $C_4$. Figure extracted from Wikipedia.}
    \label{acyclicfig}
\end{figure}
\eex

The relevance of $a_1(\Gamma)$ to our work is embodied in the following Lemma which will be generalized in Proposition \ref{wedgebond} for all forests.

\ble\label{wedgespheres} Let $\Gamma$ be a simple graph with $m$ vertices, and let $\mathring\Pi_\Gamma$ be the proper part of the bond lattice.
Then
$\displaystyle |\mathring\Pi_\Gamma |\ \simeq \bigvee^{a_1(\Gamma)} S^{m-3}$.
\ele

\begin{proof} The pure dimension of $\Delta (\mathring\Pi_\Gamma )$ is $|\Pi_\Gamma|-2=m-3$, so this gives directly the dimension of the spheres.
Next, it is a beautiful result of Rota \cite{rota} that the characteristic polynomial of $\Pi_\Gamma $ coincides with the chromatic polynomial (for a short topological proof of this fact using stratifications, see \cite{walid}). More precisely, we have that
\begin{equation}\label{rotaform}
\chi_\Gamma (\lambda ) = \sum_{x\in\Pi_\Gamma} \mu (\hat 0,x)\lambda^{\ell (x)}
\end{equation}
where again $\mu = \mu_{\Pi_\G}$ is the Mobius function of the bond poset, and $\ell (x)$ is the number of blocks of $x$ (see \S\ref{poset}).  The extreme terms in this formula are when $x=\hat 0$, $\ell (\hat 0) =n$ (one block for every vertex), while $\mu (\hat 0,\hat 0)=1=a_m$. When $x$ is an atom, $\ell (x)=m-1$. There are as many atoms as there are edges, and so the coefficient of $\lambda^{m-1}$ in Rota's formula \eqref{rotaform} is $- m$, in agreement with \eqref{whitneyform}.

Since $\hat 1$ is the unique single block partition, one has that $\ell (\hat 1)=1$. By comparing Whitney's and Rota's formulas, we see that the Mobius invariant \eqref{mobiustotal}
\begin{equation}\label{key}
\mu (\Pi_\Gamma ) := \mu (\hat 0,\hat 1) = (-1)^{m-1}a_1(\Gamma )
\end{equation}
By Theorem \ref{shellable}, it follows that the number of spheres in the decomposition is
$$(-1)^{|\hat 1|} \mu (\Pi_\Gamma )
= (-1)^{m-1}(-1)^{m-1}a_1(\Gamma ) = a_1(\Gamma )$$
as claimed.
\end{proof}

\bex\label{partitiona1}  If $\Gamma = K_m$, then $\Pi_\Gamma$ is the partition lattice of $\{1,\ldots, n\}$, and it is an old result of Stanley (see \cite{wachs}, \S3.2.2) that $$|\mathring\Pi_\Gamma |\simeq\bigvee^{(m-1)!}S^{m-3}$$
The computation $(m-1)!$ above is a special case of the computation of the Mobius function of the partition lattice given as follows (see \cite{bender}): if $\sigma$ is a partition of $m$, with blocks $\sigma_i$, $1\leq i\leq k$, $\hat 0 = 1|2|\cdots |m$, then $\mu (\hat 0,\sigma) = (-1)^{k}(|\sigma_1|-1)\cdots (|\sigma_k|-1)!$ 
\eex


\vskip 5pt
\noindent{6.1. {\bf A relation between the Whitney coefficients.}}
Consider the Whitney coefficients $a_k(\Gamma)$ of the chromatic polynomial \eqref{whitneyform}, $1\leq k\leq m=|V(\Gamma)|$. If $B=B_1|B_2|\cdots |B_k\in\mathcal B_k(\Gamma)$ is a bond partition, and $\Gamma_i$ the induced (connected) subgraph whose vertex set is the block $B_i$, define the product $$a_1(B) := a_1(\Gamma_1)\ldots a_1(\Gamma_k)$$ The following is a key ingredient in the proof of Theorem \ref{main}.

\begin{proposition}\label{aktoa1}
For $1\leq k\leq m$,\ 
$\displaystyle a_k(\Gamma) = \sum_{B\in\mathcal B_k}a_1(B)$.
\end{proposition}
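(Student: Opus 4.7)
The plan is to compare two expressions for the coefficient of $\lambda^k$ in $\chi_\Gamma(\lambda)$, using Rota's formula \eqref{rotaform} on the one hand, and Whitney's formula \eqref{whitneyform} on the other. The bridge between them will be a product decomposition of the lower interval $[\hat 0, B] \subset \Pi_\Gamma$, which will allow the Möbius invariant at $B$ to be written as a product of $a_1$'s, one for each block of $B$.

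First, I would isolate the coefficient of $\lambda^k$ in Rota's formula. Since $\ell(x) = k$ selects exactly the bond partitions of length $k$, we get
\[
[\lambda^k]\chi_\Gamma(\lambda) \;=\; \sum_{B \in \mathcal{B}_k(\Gamma)} \mu(\hat 0, B).
\]
Comparing with Whitney's formula gives $(-1)^{m-k} a_k(\Gamma) = \sum_{B \in \mathcal{B}_k(\Gamma)} \mu(\hat 0, B)$, so the proposition will follow once we show $\mu(\hat 0, B) = (-1)^{m-k} a_1(B)$ for each $B \in \mathcal{B}_k(\Gamma)$.

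The key step is the product decomposition of the interval. Fix $B = B_1|B_2|\cdots|B_k \in \mathcal{B}_k(\Gamma)$ and let $\Gamma_i$ be the connected induced subgraph on $B_i$. An element $\sigma \leq B$ in $\Pi_\Gamma$ is a bond partition refining $B$; since refinements can only split blocks and blocks of a refinement are connected in $\Gamma$ iff they are connected in the induced $\Gamma_i$ containing them, any such $\sigma$ corresponds uniquely to a tuple $(\sigma_1, \ldots, \sigma_k)$ with $\sigma_i$ a bond partition of $\Gamma_i$. This defines a poset isomorphism
\[
[\hat 0, B]_{\Pi_\Gamma} \;\cong\; \Pi_{\Gamma_1} \times \Pi_{\Gamma_2} \times \cdots \times \Pi_{\Gamma_k}.
\]
By the standard product formula for Möbius functions, and by the identity \eqref{key} from Lemma \ref{wedgespheres} applied to each $\Gamma_i$,
\[
\mu_{\Pi_\Gamma}(\hat 0, B) \;=\; \prod_{i=1}^k \mu_{\Pi_{\Gamma_i}}(\hat 0, \hat 1) \;=\; \prod_{i=1}^k (-1)^{|B_i|-1} a_1(\Gamma_i) \;=\; (-1)^{m-k}\, a_1(B),
\]
where we used $\sum_i (|B_i|-1) = m-k$. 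Substituting back and cancelling the sign yields $a_k(\Gamma) = \sum_{B \in \mathcal{B}_k(\Gamma)} a_1(B)$, as required.

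The only potential obstacle I can foresee is verifying carefully that the interval $[\hat 0, B]$ really does split as a product of the $\Pi_{\Gamma_i}$ rather than some ambient partition lattices; but this is exactly the content of Lemma \ref{subinterval}, which guarantees that within each block $B_i$ the lattice of refinements that are bond partitions of $\Gamma$ agrees with $\Pi_{\Gamma_i}$. Everything else is a direct comparison of generating polynomials.
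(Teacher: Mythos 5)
Your proof is correct, but it takes a genuinely different route from the paper's. The paper argues purely combinatorially from Whitney's broken circuit theorem: $a_k(\Gamma)$ counts NBC spanning forests with $k$ components, each such forest $T_1\sqcup\cdots\sqcup T_k$ determines a bond partition $B$ via $B_i=V(T_i)$, and conversely the forests inducing a given $B$ are exactly the tuples of NBC spanning trees of the induced subgraphs $\Gamma_i$, of which there are $a_1(\Gamma_1)\cdots a_1(\Gamma_k)$ by Proposition \ref{whitney}. You instead extract the coefficient of $\lambda^k$ from Rota's formula \eqref{rotaform}, decompose the interval $[\hat 0,B]$ as $\Pi_{\Gamma_1}\times\cdots\times\Pi_{\Gamma_k}$, and use multiplicativity of the M\"obius function together with \eqref{key} to get $\mu(\hat 0,B)=(-1)^{m-k}a_1(B)$; comparison with Whitney's formula \eqref{whitneyform} then gives the identity. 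Your computation of $\mu(\hat 0,B)$ is precisely the one the paper carries out later, in the proof of Proposition \ref{wedgebond}, so your route effectively merges the combinatorial identity into the lattice-theoretic machinery of Section \ref{proofmain}, and it sidesteps a small point the paper's proof glosses over (that a forest is NBC in $\Gamma$ iff each component tree is NBC in its induced subgraph, which holds because a cycle is connected and the $\Gamma_i$ are induced). The paper's version buys elementarity and independence from the poset apparatus — it needs no M\"obius functions at all — which matters for the logical architecture, since the paper wants the identity in hand before running the Goresky--MacPherson computation. Two small corrections to your write-up: the product decomposition of $[\hat 0,B]$ is the content of the unnamed lemma preceding Proposition \ref{wedgebond}, not of Lemma \ref{subinterval} (which concerns spanning subgraphs), and there is no circularity in invoking \eqref{key}, since that identity is obtained by matching only the $\lambda^1$ coefficients and does not use the proposition you are proving.
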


\begin{proof} Remember that $a_k(\Gamma)$ counts the number of spanning forests of $\Gamma$ with size $m-k$ that do not contain any broken cycle with respect to the ordering \cite{whitney, erey}. 
A spanning forest has $m-k$ edges if and only if it has $k$ components. It follows that any such spanning forest is of the form $T_1\sqcup\cdots\sqcup T_k$, with $T_i\subset \Gamma$ a tree.   By setting $B_i=V(T_i)$, we get a set partition of $V(\Gamma)$ and a bond partition $B=B_1|\ldots |B_k\in\mathcal B_k(\Gamma)$. The subgraph $\Gamma_i\subset \Gamma$ is the induced subgraph on vertices $V(T_i)$. The collection of subgraphs $\Gamma_1,\ldots, \Gamma_k$ in uniquely determined by $B=B_1|\ldots |B_k$. The point is that any choice of disjoint trees $(T'_1,\ldots, T'_k)$, where $T'_i$ in an NBC spanning tree of $\Gamma_i$, produces an NBC spanning forest of $\Gamma$ of size $m-k$. We therefore have to count those
for every $B\in\mathcal B_k(\Gamma)$.
But the number of NBC spanning trees on $\Gamma_i$ is $a_1(\Gamma_i)$ as already asserted in Proposition \ref{whitney}, and so there is in total $a_1(B):=a_1(\Gamma_1)\cdots a_1(\Gamma_k)$ NBC spanning forests associated to every $B$ of length $k$. Summing over all $B$ we get $a_k(\Gamma)$.
\end{proof}


\section{The Poincar\'e Polynomial}\label{proofmain}

We derive the homology of $\conf{\Gamma} (\bbr^N)$ from its bond lattice. The method is standard using the Goresky-MacPherson formula but the details are specific. Let's again denote by $\mathcal A=\{A_{e}\}_{e\in E(\Gamma)}$ the graphical arrangement defining $\conf{\Gamma}(\bbr^N)$ (see \S\ref{bondlat}).
We need some notation: for $x$ in the lattice of intersections of $L(\mathcal A)$, we denote by $(\hat 0,x)$ the full subposet consisting of elements $\{y\in\mathcal L(\mathcal A)\ \ |\ \hat 0< y<x\}$ (i.e. the \textit{lower} open interval). An element $x\in L(\mathcal A)$ represents a linear subspace, of dimension $\dim x$ (a multiple of $N$). In this lattice, $\hat 0 = (\bbr^N)^{m}$, with $\dim \hat 0 = mN$, and $\hat 1$ is the intersecction of all $A_{ij}$.  

There is now a remarkable
formula of Goresky and MacPherson's relating  the lower intervals of the intersection lattice to the homology of the complement of the subspace arrangement (\cite{wachs}, Theorem 1.3.8, \cite{goreskymacpherson}). In the case of the complement of a graphic arrangement associated to a graph $\Gamma$, $|V(\Gamma )|=m$, this formula takes the form
\begin{equation}\label{gmformula}\tilde H^i(\conf{\Gamma} (\bbr^N);\bbz)\cong
\oplus_{x\in\mathcal L(\mathcal A)\setminus \{\hat 0\}}\tilde H_{mN-i-\dim x - 2}(\Delta\mathcal (\hat 0,x);\bbz)
\end{equation}
In this formula $\tilde H_{-1}(\emptyset)\cong\bbz$. Since we also need to establish a naturality result about this isomorphism, we review in the Appendix how a homotopy theoretc proof of this formula is obtained \cite{zz}.

\bex We can illustrate the formula \eqref{gmformula} for the complement
$\conf{2}(\bbr^N)= (\bbr^N)^2\setminus \{(x,x)\}$.
This space is of the homotopy type of $S^{N-1}$. The lattice of intersection consists of $\hat 0$ and a single atom (the diagonal) of codimension $N$. There is one only empty interval and
$$H^{N-1}(\conf{2} (\bbr^N))\cong \tilde H_{-1}(\emptyset)\cong\bbz$$
For $\conf{3}(\bbr^N)$ (and $\Gamma = K_3$), the lattice of intersections consists of $\hat 0 = \bbr^{3N}$, $\hat 1=\{(x,x,x)\}$ the diagonal, and three atoms. We use the formula
\eqref{gmformula} with $m=3$.
Each atom $x_i$ has dimension $2N$, while $\hat 1$ has dimension $N$. Each atom $x$ contributes an empty interval $(\hat 0,x)$, and so the group
$H_{3N-i-2N-2}(\emptyset)
= \tilde H_{-1}(\emptyset) = \bbz$ when $i=N-1$. Since there are $3$ atoms, we get that
$H^{N-1}(\conf{3}(\bbr^n))\cong\bbz^3$. The interval $(\hat 0,\hat 1)$ consists of three points (the atoms), and it contributes the group
$H^{2(N-1)}(\conf{3}(\bbr^N))\cong\tilde H_{3N-(2N-2)-N-2})(\hat 0,\hat 1) = {\tilde H}_0(\hat 0,\hat 1)\cong\bbz^2$. 
This is of course what we already know from \eqref{calc2}.
\eex

Our next objective is thus to understand the order complexes of the lower open intervals $(\hat 0,x)$ for the bond lattice of a graph.
Since $ L(A)$ is geometric,  all intervals are geometric, 
and by Theorem \ref{shellable}, the order complexes of all subintervals are wedges of spheres.
 As a first consequence of being geometric, all lower intervals of the bond lattice have the homology of wedges of spheres, and hence, by the Goresky-MacPherson formula, $H_*(\conf{\Gamma} (\bbr^N),\bbz )$ is torsion free.

As suggested by P. Blagovic, the next two results can be derived by using analog results for the partition lattice in (\cite{stanley2}, p.319) and a lemma on the homotopy type of general intervals of product posets (\cite{walker}, Theorem 6.1). We find it more convenient however, for us and for the reader, to give the direct argument. 

Since each $(\hat 0,x)$ is a wedge of spheres, we need to compute its number of summands, in terms of $x$. To do that, we prove a fundamental decomposition property. Recall that the product of two posets $(P_1,\leq)$ and $(P_2,\leq)$ is the
poset whose underlying set is $P_1\times P_2$ and partial ordering
$$(x_1,y_1)\leq (x_2,y_2)\ \Longleftrightarrow\ x_1\leq x_2\ \hbox{and}\ y_1\leq y_2$$
As before, if $B_i$ is a bond for $\G$, we will write $\Gamma_i$ the induced subgraph with vertices $B_i=V(\Gamma_i)$.

\ble If $x = B_1|B_2|\cdots |B_k\in\Pi_\Gamma$, and $\Gamma_i\subset\Gamma$ is the full connected subgraph of $\G$ with vertices $B_i$,
then the interval $[\hat 0,x]$ in $\Pi_\G$ is isomorphic to the product of the bond posets $[\hat 0,x]\cong  \Pi_{\Gamma_1}\times \cdots \times \Pi_{\Gamma_k}$.
\ele

\begin{proof} 
An element in $[\hat 0,x]$
is necessarily of the form 
\begin{equation}\label{element}\underbrace{B_{i^1_1}|\cdots |B_{i^1_{r_1}}} |\underbrace{B_{i^2_1}|\cdots |B_{i^2_{r_2}}} |\cdots\
\underbrace{|B_{i^k_1}|\cdots |B_{i^k_{r_k}}} 
\end{equation}
where each grouping  $|B_{i^j_1}|\cdots |B_{i^j_{r_j}}|$ is a sub-partition of $B_j$.
The reason for this is because no block can contain elements of two different subgraphs $\Gamma_i$ and $\Gamma_j$ simultaneously, since the graphs are disjoint. We then have a well-defined and bijective set map between the posets $[\hat 0,x]$ and $\Pi_{\G_1}\times\cdots\times\Pi_{\G_k}$ sending \eqref{element} to 
$\ds (B_{i^1_1}|\cdots |B_{i^1_{r_1}}) \times (B_{i^2_1}|\cdots |B_{i^2_{r_2}})\times\cdots\times
(B_{i^k_1}|\cdots |B_{i^k_{r_k}})$.
This is a poset map because the only way we can coarsen the partition in \eqref{element} is by coarsening each grouping individually.
\end{proof}

\bpr\label{wedgebond}
$\displaystyle 
\Delta (\hat 0,B_1|\cdots |B_k)\simeq
\bigvee^{a_1(\Gamma_1)\cdots a_1(\Gamma_k)}S^{m-k-2}$.
\epr  

\begin{proof} Write $\mu_\Gamma = \mu_{\Pi_\G}$ the Mobius function \eqref{mobiustotal}.
 By Proposition 1.2.1 of \cite{wachs}, $\mu$ is multiplicative on a product of posets, and by an iterate use of \eqref{key}, we obtain
\begin{equation*}\label{muinterval}\mu_\G (\hat 0,x) =
\prod \mu_{\G_i} (\hat 0, B_i) = \prod (-1^{k_i-1})a_1(\Gamma_i) = (-1)^{m-k}\prod a_1(\Gamma_i)
\end{equation*}
On the other hand, 
each $x=B_1|\ldots, |B_k$ has length $k$ and rank $m-k$ (Definition \ref{bondproperty}). According to  Theorem \ref{shellable}, the interval
$(\hat 0,x)$ breaks down into a wedge of spheres of dimension $m-k-2$.
Let $r$ be the number of such spheres $S^{m-k-2}$, then by the proof of that same theorem,
$\mu (\hat 0,x) = (-1)^{m-k-2}r = (-1)^{m-k}r$. By equating with the above expression for $\mu_\G (\hat 0,x)$, we get that $r = \prod a_1(\Gamma_i)$. This proves the claim.
\end{proof}

\begin{proof}(of Theorem \ref{poinc}).
We apply the Goresky-Macpherson formula now that we know the homology of every interval (Proposition \ref{wedgebond}). We recall that the bond poset is the intersection lattice of the graph arrangement (Lemma \ref{bondlat2}). If $x=B_1|\ldots |B_k$, then the corresponding linear subspace $x$ has dimension $kN$.
In the formula \eqref{gmformula}, a class in degree $m-k-2$ contributes to the homology of $\conf(\bbr^N,\Gamma)$ a class in degree $i$ (remember the shift of degrees in the formula) where upon replacing
$$m-k-2 = mN - i - kN -2\ \Longrightarrow\ i = mN-m -kN+k = (m-k)(N-1)$$
There are $\prod a_1(\Gamma_i)$ such classes, for each bond partition $x = B_1|\cdots |B_k\in\mathcal \mathcal B(\Gamma )$. 
The homology is then explicitly given as follows
\begin{eqnarray*}
P_t(\conf{\Gamma}(\bbr^N)) &=& \sum_{1\leq k\leq m}\left(\sum_{B\in\mathcal B_k}a_1(B)\right)t^{(m-k)(N-1)} \\
&=&\sum_{1\leq k\leq m} a_k(B) t^{(m-k)(N-1)}\ \ \ \ ,\ \ \ \hbox{by Proposition \ref{aktoa1}} 
\end{eqnarray*}
This proves Theorem \ref{poinc}
\end{proof}

\begin{proof} (of Formula \eqref{beta2}). We derive the rank $\ds \beta_{2(N-1)} = {|\Gamma|\choose 2} - s_3(\Gamma)$. The block partitions that contribute to $\beta_{2(N-1)}$ must have length $m-2$, $m = |V(\Gamma )|$. These partitions are of the form
$|ij|rs|-$, $\{i,j\}\cap \{r,s\}=\emptyset$, where $-$ means a sequence of one blocks, or
$|ijk|-$, where $\{i,j\},\{r,s\}\in E(\Gamma )$ are edges. Now, any pair of disjoint edges give a class in degree $2(N-1)$. Similarly, a pair of adjacent edges gives rise to at least one such class. Let $\Gamma_{ijk}$ be the subgraph whose vertices are $i,j,k$. If it is a line graph, then $a_1(\Gamma_{ijk})=1$, and the block $|ijk|-$ contributes a single class. If $\Gamma_{ijk}$ is a triangle, then $a_1(\Gamma_{ijk})=(3-1)!=2$.
Let's count a class for every pair of edges. If three edges make up a triangle, we get three such classes, but they only account for one block, the block $|ijk|-$, which only contributes $a_1(\Gamma_{ijk})=2$ classes, not $3$. There is one too many classes for every triangle. This gives the count.
\end{proof}

\bco For $N>2$, $\conf{\Gamma} (\bbr^N )$ is $N-2$-connected.
\eco

\begin{proof}
Since the first non-zero betti number in positive degree is $\beta_{N-1}$, and since $\conf{\Gamma} (\bbr^N)$ is of the homotopy type of a CW complex, it is then enough to show that this space is simply connected. But if $N > 2$, $\conf{\Gamma} (\bbr^N )$ is obtained from $\bbr^{mN}$ (a simply connected space) by removing closed submanifolds of codimension strictly greater than $2$. The fundamental group of the complement agrees with the fundamental group of $(\bbr^n)^{|V(\Gamma)}|$ which is trivial. 
\end{proof}


\section{Edge Labelings and Toric Classes}\label{labeling}

The main construction of \cite{paolo1} is to associate to every rooted tree in $K_m$, whose vertices have increasing labels along any path starting from the root (see \S 8.1),  a homology class in $\conf{m}(\bbr^N)$. We describe here a similar construction that works for any simple graph. This requires the labeling of edges.

\begin{theorem}\label{factorizing} Let $\Gamma$ be a simple graph on a vertex set $V(\G)=\{1,\ldots, m\}$, with a linear ordering on its edges. Let $T$ be a spanning tree of $\Gamma$. Then there is a map $\alpha_T$ and a homotopy commutative diagram
\begin{equation}\label{commutative}
\xymatrix{\conf{m}(\bbr^N)\ar@{^(->}[rr]^{\pi_K^\G}&&\conf{\Gamma}(\bbr^N)\ar[d]^{\pi_\G^T}\\
    (S^{N-1})^{|T|}\ar[rr]^{\beta_T} \ar[u]^{\alpha_T}&&\conf{T}(\bbr^N) }
\end{equation}
where $\beta_T$ is the homotopy equivalence of \eqref{themapbetaT},
and both $\pi_K^\G$ and $\pi_\G^T$ are the natural functorial morphisms (see Remark \ref{mappi}).
\end{theorem}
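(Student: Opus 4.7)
The plan is to construct $\alpha_T$ as a multi-scale variant of $\beta_T$ in which the sphere radii shrink exponentially along the tree, so that whole subtrees bunch up into tiny clusters and the resulting configuration has \emph{all} $m$ points pairwise distinct (not merely $T$-edge-distinct as for $\beta_T$).

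First, I would root $T$ at $v_1$ and fix a breadth-first enumeration $v_1, v_2, \ldots, v_m$, chosen so that every $v_j$ with $j \ge 2$ has its unique predecessor $v_{j^*}$ already enumerated (as in Proposition~\ref{treedecomposition}). For a parameter $\epsilon \in (0,1)$ to be chosen below, I define
$$\alpha_T^{(\epsilon)}\colon (S^{N-1})^{m-1} \lrar (\bbr^N)^m, \qquad y_1 = 0,\ \ y_j = y_{j^*} + \epsilon^{j-1}\zeta_j \ \ (j\ge 2).$$
To show that the image lies in $\conf{m}(\bbr^N)$ for $\epsilon$ small enough, pick $i < j$ and let $a \in T$ be the lowest common ancestor of $v_i$ and $v_j$. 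Then $y_j - y_i = \sum_{k \in P_{ij}} \pm\epsilon^{k-1}\zeta_k$, where $P_{ij}$ is the set of indices strictly above $a$ on the combined $a \to i$ and $a \to j$ paths. Because the two paths split at $a$ and the indices in $\{1,\ldots,m\}$ are all distinct, the smallest index $m^* \in P_{ij}$ lies on only one of the two paths, so the term $\pm\epsilon^{m^*-1}\zeta_{m^*}$ cannot be cancelled. The triangle inequality then gives
$$\|y_j - y_i\| \;\ge\; \epsilon^{m^*-1} - (m-1)\epsilon^{m^*} \;=\; \epsilon^{m^*-1}\bigl(1 - (m-1)\epsilon\bigr),$$
which is strictly positive once $\epsilon < 1/(m-1)$. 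Fixing such an $\epsilon$, I set $\alpha_T := \alpha_T^{(\epsilon)}$.

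Next I would establish the homotopy commutativity. The composite $\pi_\G^T \circ \pi_K^\G \circ \alpha_T$ is simply $\alpha_T$ reinterpreted as a map into the open superset $\conf{T}(\bbr^N)$ (via the open inclusions of Remark~\ref{mappi}), whose coordinate formula matches that of $\beta_T$ from Remark~\ref{betaT} \emph{except} that the radii $\epsilon^{j-1}$ appear in place of the unit radii. I would then interpolate the radii linearly by setting $r_j(t) = (1-t)\epsilon^{j-1} + t$ and defining $H_t(\zeta) = (y_1^t,\ldots,y_m^t)$ via $y_1^t = 0$ and $y_j^t = y_{j^*}^t + r_j(t)\zeta_j$. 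Each $r_j(t)$ is strictly positive for every $t \in [0,1]$, so for every edge $\{v_{j^*},v_j\}$ of $T$ the points $y_{j^*}^t$ and $y_j^t$ are distinct; hence $H_t$ lands in $\conf{T}(\bbr^N)$ throughout, yielding the desired homotopy from $\pi_\G^T \circ \pi_K^\G \circ \alpha_T$ at $t=0$ to $\beta_T$ at $t=1$.

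The hard part will be the uniform lower bound on $\|y_j - y_i\|$: one needs a single $\epsilon$ that works simultaneously for every $\zeta \in (S^{N-1})^{m-1}$, not merely generically. This is precisely where the BFS enumeration is used --- it forces the indices appearing in the telescoped sum for $y_j - y_i$ to be pairwise distinct, guaranteeing a unique leading term that cannot be cancelled by the strictly smaller lower-order corrections. Once this uniform estimate is in hand, the homotopy step is automatic because $\conf{T}(\bbr^N)$ imposes no constraint on non-$T$-adjacent pairs, so any positive-radius deformation of $\beta_T$ remains inside $\conf{T}(\bbr^N)$.
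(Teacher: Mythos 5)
Your proposal is correct and follows essentially the same route as the paper: build the planetary-system torus inside $\conf{m}(\bbr^N)$ by shrinking the sphere radii geometrically along the rooted tree, then homotope to $\beta_T$ by inflating all radii to $1$, observing that the homotopy stays in $\conf{T}(\bbr^N)$ because each tree-edge difference $y_j^t-y_{j^*}^t=r_j(t)\zeta_j$ never vanishes. Your explicit choice $\epsilon<1/(m-1)$ together with the leading-term estimate $\|y_j-y_i\|\geq \epsilon^{m^*-1}(1-(m-1)\epsilon)$ is a welcome quantitative justification of the pairwise-distinctness claim that the paper only asserts informally via its radii $1/2^{t-1}$.
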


As pointed out already, both maps $\pi_K^\G$ and $\pi_\G^T$ are open embeddings. The Theorem can be stated for non spanning trees, but it is necessary to keep track of the labeling of vertices in this case for $T$, and this an extra technicality that we do not need to consider for now. As indicated in Remark \ref{mappi}, in the case of spanning trees, the vertices in $T$ and $\G$
have the same labeling. The diagram says that we can construct a homotopy equivalent map to $\beta_T$ that factors through the configuration space once we fix a labeling of the edges. This section gives the construction and the proof of this fact. 

Fix a linear ordering on the edges of $\G$ and let $T$ be a spanning tree of length $|T|= m-1$, with edges 
$e_1 < e_2 < \ldots < e_{m-1}$. The only condition we add is that $e_1$ is an edge incident to the root $v_1$. 
Write $e^-$ and $e^+$ the vertices incident to an edge $e\in E(\Gamma)$, where $e^-, e^+\in \{1,\ldots, m\}$. The way the $+$ and the $-$ are assigned is as follows:
\begin{itemize}
    \item Start with the smallest edge $e_1$ and label its vertices $e_1^+,e_1^-$. The vertex $e_1^-$ is $v_1$ by construction.
    \item If $e_r$ is incident to $e_1$, then set $e_r^-$ the common vertex with $e_1$, that is $e_r^-$ is either $e_1^-$ or $e_1^+$. Obviously $e_r^+$ is the other vertex of $e_r$.
    \item Continue iteratively. Given $e_s$ with $s<r$ in the linear ordering, if $e_r$ is incident to $e_s$, then $e_r^-=e_s^-$ or $e_s^+$ at the point of incidence, and $e_r^+$ is the other vertex. Figure \ref{planetary1} explains the construction via an example.
\end{itemize}

The idea now is to use this setup to construct a toric embedding corresponding to this spanning tree.   Fig. \ref{planetary1} on the left illustrates the construction for that particular tree. We spell out the details.

\begin{figure}[htb]
    \centering    \includegraphics[scale=0.7]{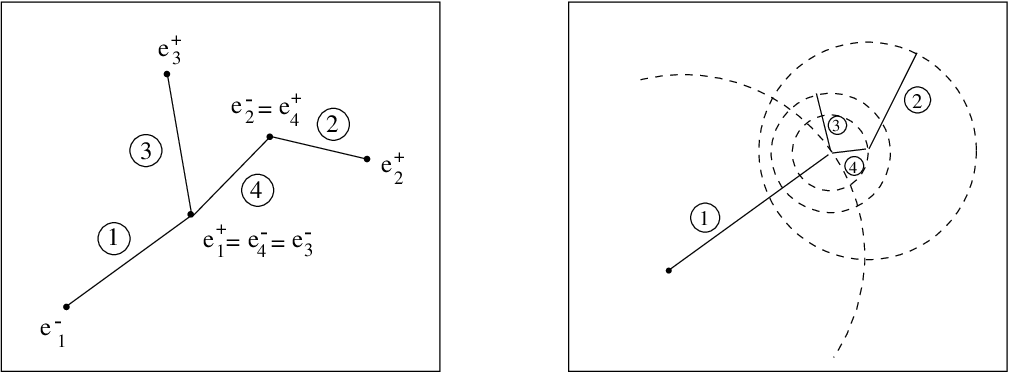}
    \caption{The tree $T$ has ordered edges $\raisebox{.5pt}{\textcircled{\raisebox{-.9pt} {$1$}}}, \ldots, 
    \raisebox{.5pt}{\textcircled{\raisebox{-.9pt} {$4$}}}
        $ (equivalently $e_1,e_2,e_3$ and $e_4$). The tore associated with this tree in $\conf{5}(\bbr^N)$ is drawn on the right. The tore is the product of $4$ spheres (the dashed circles give a snapshot of these spheres). The radii of the embedded spheres are $1, {1\over 2}, {1\over 4}$ and ${1\over 2^3} = {1\over 8}$. See Example \ref{planetexample}.}\label{planetary1}
\end{figure}

\vskip 5pt
\noindent{\sc Setup}: Note that edges have a linear ordering and vertices have labels and a partial ordering. Both will be used in the proof. Let $V(\Gamma) = \{1,\ldots, m\}$ and $e\in E(\G)$. If $e_t^\pm=v_i$ (or $i\in V(\Gamma)$), we write 
$\ell (e_t^\pm) = i$.  More explicitly, if vertex $v_i$ is incident to $e_t$, labeled  $e_t^+$ there, then $\ell (e_t^+)=i$. As above, we will fix $\ell (e_1^-)=1$, this means the lowest edge $e_1$ is rooted at $v_1$.

Given a spanning tree $T\subset\G$, we construct an embedded torus $C_T\subset \conf{m}(\bbr^N)$, $C_T\cong (S^{N-1})^{m-1}$, as the locus of all configurations $(x_1,\ldots, x_m)$ obtained as follows:
\begin{itemize}
\item $x_{\ell (e_1^-)}=x_1=0$ is at the origin. 
\item  $x_{\ell (e_1^+)}$ is any point on a sphere $S_1$ of radius $1$ centered at the origin.
\item Now suppose $e_t$ is adjacent to $e_1$, with $e_t^-=e_1^+$. Then $x_{\ell (e_t^+)}$ can take any position on a sphere centered at a point $x_{\ell (e_t^-)} = x_{\ell (e_1^+)}\in S_1$, of radius ${1\over 2^{t-1}}$. If $e_t$ is adjacent to $e_1$ so that $e_t^-=e_1^-$ (the root), then $x_{\ell (e_t^+)}$ is placed on a sphere centered at $e_1^-$ of radius $1\over {2^{t-1}}$.
\item Clearly, we proceed recursively. If the locus of $x_{\ell (e_s^+)}$ and $x_{\ell (e_s^-)}$ are determined, look at the next edge in the linear ordering $e_r$ that is adjacent to $e_s$. If $e_r^- = e_s^{\pm}$, then $x_{\ell (e_r^+)}$ must be on a sphere centered at some $x_{\ell (e_r^-)}$ with radius ${1\over 2^{r-1}}$.
\item We continue this construction edge by edge as we move along paths away from the origin, until we exhaust the tree.  
\item Once we exhaust the entire tree, the subspace $C_T$ of all such configurations
$(x_1=0,\ldots, x_m)$ that we constructed is in $\conf{m}(\bbr^N)$ and is homeomorphic to a product of spheres  $(S^{N-1})^{|T|}$. 
\end{itemize}

The next example gives an illustration of this construction. 

\bex\label{planetexample}
Figure \ref{planetary1} describes the torus obtained from a tree $T$ with $|T|=4$. The ordering of the edges is given by the encircled numbers. We can assume that as vertices of $T$, 
$$\ell (e_1^-)=1\ ,\ \ell (e_1^+)=\ell (e_3^-)=\ell (e_4^-)=3,\ \ \ell(e_3^+)=4\ ,\ \ell (e_2^-)=\ell (e_4^+)=2\ \ ,\ \ \ell (e_2^+)=5$$
Applying the construction described above, we obtain a product of $4$ spheres, the one with largest radius $1$ is centered at the origin $x_1=0$. Then $x_3$ is on that sphere, while $x_4$ is on any sphere of radius ${1\over 2^2}$ centered at some $x_3$,  $x_2$ is on any sphere of radius ${1\over 2^3}$ centered at $x_3$ as well, and finally $x_5$ is on any sphere centered at some $x_2$ of radius $1\over 2$. This product of spheres is of course lying in $\conf{5}(\bbr^N)$ which is itself lying inside
$$\conf{T}(\bbr^N) = \{(x_1,\ldots, x_5)\ | x_1\neq x_3, x_3\neq x_2, x_3\neq x_4, x_2\neq x_5\}$$
The picture on the right side of Figure \ref{planetary1} gives a snapshopt of $C_T$ which consists of all $(x_1=0,x_2,x_3,x_4,x_5)$ with each $x_i$ being on the dashes sphere corresponding to $\ell (e_t^+)=i$ for some $t$.
\eex

\begin{proof} (of Theorem \ref{factorizing}) The map $\alpha_T$ is of course the embedding we just constructed of $(S^{N-1})^{|T|}$ whose image is $C_T\subset \conf{m}(\bbr^N)$. We are left to explain why the diagram homotopy commutes. Both maps $\beta_T$ and $\pi\circ\pi_K^\G\circ\alpha_T$  are constructed similarly. The first one uses a labeling of the vertices and their partial ordering, and the second uses a labeling of the edges. These are related as follows: if $e$ is an edge, with vertices $e^-$ and $e^+$, and if $\ell (e^+)=i$, then $e^+ = v_i$ (by definition) and $e^-=v_{i^*}$ (notation in the proof of Proposition \ref{treedecomposition}). The map $\beta_T$ is homotopic to this composite via the homotopy which consists in opening up the radii of the embedded spheres that we carefully constructed for $\alpha_T$ from radii ${1\over 2^t}$ to unit radius. The homotopy can be done in steps by linearly expanding the radii as we move away from the root (this should be self-explanatory and no details will be provided). This homotopy is well-defined, since when the points of a configuration coalesce during homotopy, they are allowed to do so in $\conf{T}(\bbr^N)$. See Example \ref{homotopy2}.
\end{proof}

\bex\label{homotopy2} We go through the example of the linear graph $L_4$ in details to explain Theorem \ref{factorizing}. The idea is simple and attractive, and is illustrated in Fig. \ref{homotopy}. The vertices are labeled $v_1,v_2,v_3,v_4$ (or $1,2,3,4$ as shown), with partial ordering
$v_1\prec v_3\prec v_4\prec v_2$. The tree is rooted at $v_1$. Figure \ref{homotopy} on the left shows the embedding via $\alpha_T$ which factorizes through the configuration space $\conf{4}(\bbr^N)$. The spheres are of different radii so that the points in the configuration never meet. The linear ordering on the edges showing as $\raisebox{.5pt}{\textcircled{\raisebox{-.9pt} {$1$}}}, \raisebox{.5pt}{\textcircled{\raisebox{-.9pt} {$2$}}}, \raisebox{.5pt}{\textcircled{\raisebox{-.9pt} {$3$}}}$ in the figure determines the radii of the spheres which in this case are $1\over 2$ for the sphere where $x_4$ rotates around a fixed $x_3$, and a sphere of radius $1\over 4$ consisting of the locus of an $x_2$ rotating around a fixed $x_4$. Now ``inflate'' the spheres as shown, step by step as we move away from the root, so at the end all spheres have unit radius. The final result is depicted on the right, and this is the image of the map $\beta_T$. In this image, $x_1$ and $x_3$ cannot coincide, $x_3$ and $x_4$ cannot as well, and also $x_4$ and $x_2$, however $x_1$ and $x_4$ can touch, so can $x_2$ and $x_3$. This is consistent with the conditions for defining $\conf{T}(\bbr^N)$, and the homotopy lies in this subspace.
 \eex

\begin{figure}[htb]
    \centering    \includegraphics[scale=0.5]{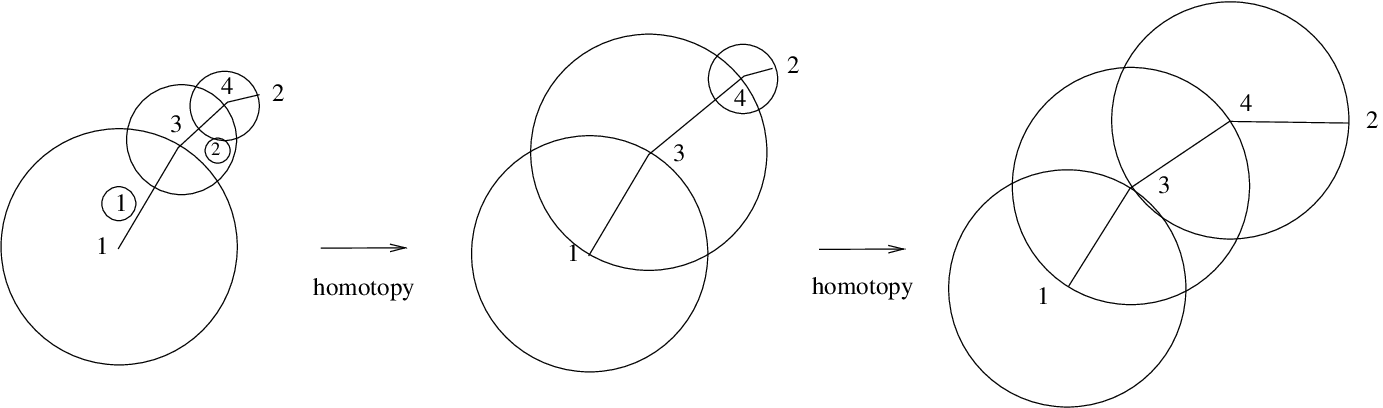}
    \caption{The image of $\gamma_T=\pi\circ\iota\circ\alpha_T$ in $(\bbr^N)^4$ is depicted on the left hand side, while the image of $\beta_T$ is on the right. These depictions show snapshots of spheres. The homotopy between both maps inflates the radii of the embedded spheres from radii $1
    \over 2$ and then $1\over 4$ to unit radius.}\label{homotopy}
\end{figure}

Similar embeddings of tori can be constructed from forests, and the analogue of Theorem \ref{factorizing} can be stated in this case as well.
Fix a graph $\Gamma$ with vertices labeled in $\{1,\ldots, m\}$ and let $\mathcal F$ be a forest on $k$ disjoint trees. Then  we have a
homeomorphism and a homotopy equivalence
\begin{eqnarray*}
    \conf{\mathcal F}(\bbr^N)&\cong&\conf{T_1}(\bbr^N)\times\cdots\times\conf{T_k}(\bbr^N)\\
&\simeq&(S^{N-1})^{|T_1|}\times \cdots\times (S^{N-1})^{|T_k|}
\end{eqnarray*}
For a spanning forest $\mathcal F$ in $\G$, $\conf{\G}(\bbr^N)$ is an open subset of $\conf{\mathcal F}(\bbr^N)$, and there is an embedded
torus $C_\mathcal F\cong (S^{N-1})^{|T_1|+\cdots +|T_k|} = (S^{N-1})^{m-k}$ inside $\conf{\G}(\bbr^N)$ which is a deformation retract of 
$\conf{\mathcal F}(\bbr^N)$.
In other words, there is a homotopy commutative diagram
\begin{equation}\label{diagramforest}
\xymatrix{ 
&\conf{|T_1|+\cdots + |T_k|+k}(\bbr^N)\ar[r]^{\ \ \ \ \ \ \pi^\G_K}&\conf{\Gamma}(\bbr^N)\ar[d] \\
&(S^{N-1})^{|T_1|}\times\cdots \times (S^{N-1})^{|T_k|}\ar[u]^{\alpha_{\mathcal F}}\ar@{..>}[r]&\conf{\mathcal F}(\bbr^N)}
\end{equation}
whose end result (the bottom map) is a homotopy equivalence.
Here $m=|T_1|+\cdots + |T_k|+k = |V( F)|$. The bottom map is very explicit, constructed tree by tree by embedding the tori for each tree far from each other by rooting them at different points at a distance of $m$ units from each other. Details are omitted.
Figure \ref{planetary2} gives an illustration.

\begin{figure}[htb]
    \centering    \includegraphics[scale=0.7]{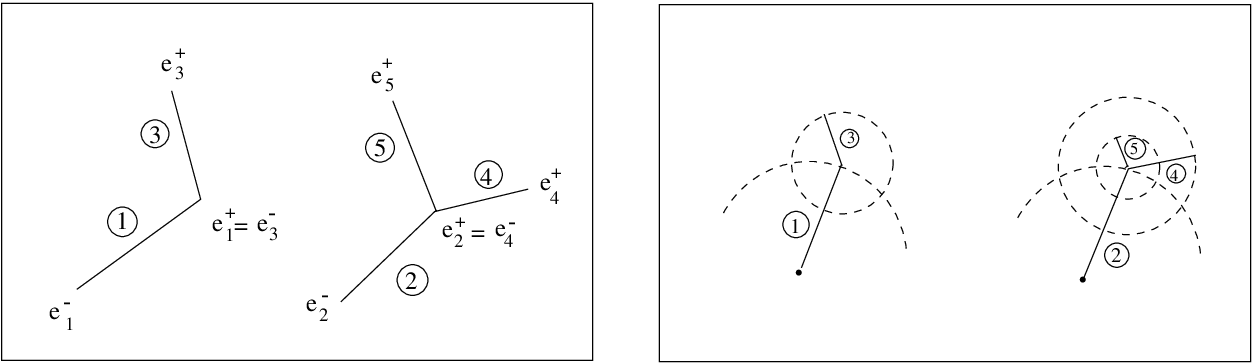}
    \caption{The forest $F $ with two components (left) and its associated torus in $\conf{7}(\bbr^N)$ ($7 = |T_1| + |T_2|+2 = |V(F )|$). One performs the construction for trees (which are rooted), for every component of the forest, with the modification that the roots are placed away from each other (not all are at the origin).}\label{planetary2}
\end{figure}

\vskip 4pt
\noindent{8.1 {\bf Complete Graphs and NBC forests}.}
The classical theory pertaining to when $\Gamma$ is complete does not use the language of edge labeling and broken cycles \cite{paolo1, dev}. The reason, as we uncover here, is that the bases constructed in these references turn out to be a byproduct of a canonical ordering of the edges, so basis generators are automatically NBC. 
We explain below the idea and, in so doing, explain a method to easily construct NBC forests for complete graphs.

As always, start with $\Gamma$ a simple graph on $m$ vertices labeled from $1$ to $m$, and now order the edges as follows. We list all the edges of $\Gamma$ as pairs 
$(i,j)$ with $i<j$. Assume there are $n$ such edges. We then assign the label $n$ to the edge $(i,j)$ with the smallest sum $i+j$ and largest $j$. The next smallest sum is assigned to the second largest label $n-1$, and so on. If two edges $(i_1,j_1)$ and $(i_2,j_2)$ have the same vertex sum $i_1+j_1 = i_2+j_2$, then we assign the smaller edge label to the edge $(i_1,j_1)$ if $j_1> j_2$. We call this labeling of the edges, obtained from a given ordering of the vertices, the NBC ordering.

\bex The square has $4$ edges $(1,2)$, $(2,3)$, $(3,4)$ and $(1,4)$. The labeling of the edges is shown below encircled
\begin{figure}[htb]
    \centering
    \includegraphics[scale=0.5]{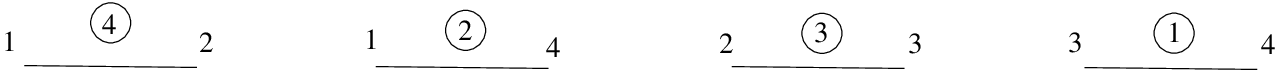}
\end{figure}
\eex

\bde A tree in $\Gamma$ is called {\it increasing} if it is rooted, the root is at its smallest valued vertex, and any path from the root to any of the leaves will go through increasing vertices. 
A $\Gamma$-forest is increasing if all of its trees are increasing. Below is an increasing forest, and next to edges are the NBC labelings.
\begin{figure}[htb]
    \centering
    \includegraphics[scale=0.5]{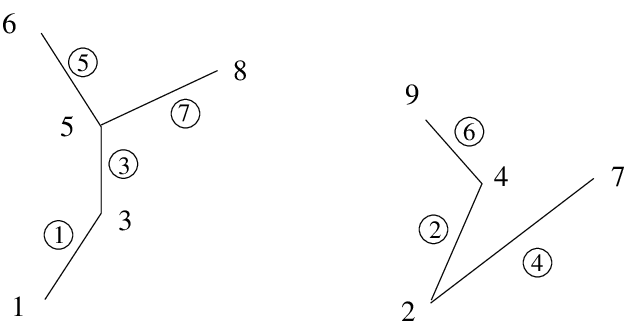}
\end{figure}
\ede

\ble Let $\Gamma$ be a simple graph on $m$ vertices and $n$ edges. We assume the vertices are labeled, and we give the edges the NBC labeling. Then any increasing $\Gamma$-forest must be NBC.
\ele

\begin{proof} Let $T$ be an increasing tree and add an edge there to close a cycle. 
The added edge is of the form $e=\{i_j,i_k\}$ say, with $i_j<i_k$ are two vertices in $T$.  The adjacent vertex $i_t$ in the path going from $i_j$ to $i_k$ cannot be $i_k$, so $i_t<i_k$ (note that we are not saying that $i_j<i_t$ which may or may not be true). Since $i_j+i_t< i_j+i_k$, the added edge $e$ cannot have maximal ordering in the cycle according ot the NBC-ordering, so the cycle is NBC. 
\end{proof}

It turns out that increasing forests give all NBC forests for complete graphs. 

\ble Give edges of $K_m$ the NBC labeling, and choose a tree rooted at its lowest labeled vertex. This tree is NBC if and only if it is increasing.
\ele

\begin{proof} We already know that increasing trees are NBC. Assume that such an NBC-tree is not increasing. So three consecutive vertices along the path from a root to a leaf will not be increasing. We distinguish two cases. (i) assume the ordered vertices (from root to leaf) are $(i_1,i_2,i_3)$ with $i_1<i_2>i_3$. In the complete graph, we can add the edge $\{i_3,i_1\}$ to make a cycle. Note that $i_2+i_1 > i_3+i_1$ and $i_2+i_3>i_1+i_3$ so the sum $i_3+i_1$, which means the NBC labeling of this edge is maximal, and the added edge $\{i_1,i_3\}$ breaks the cycle, which contradicts the fact that the tree is NBC. Similarly (ii) if $(i_1,i_2,i_3)$ are lined up along a path, with $i_1>i_2<i_3$, Here $i_1+i_3$ may not break the cycle made up of the vertices $i_1,i_2$ and $i_3$. This cannot consist of the entire tree, since otherwise it will be rooted at $i_2$ (the smallest labeled vertex). So there is a path in the tree from the root, say $i_k$ to $i_1$, $i_k<i_1$. We're back to the situation (i) for the triple $(i_k,i_1,i_2)$, so the edge $\{i_k,i_2\}$ breaks a cycle (not necessarily a triangular cycle, since between $i_k$ and $i_1$ there could be other vertices). In all cases, if the tree is not increasing, some edge will break some cycle, and the tree is not NBC as claimed.
\end{proof}

\bre \label{completecase}
In the case $\G = K_m$ is complete, and
$\conf{K_m}(\bbr^N)=\conf{m}(\bbr^N)$, a graded basis for the homology is given by the classes $[F ]$ of \textit{increasing spanning forests}. This is a reformulation of the main  computation in \cite{paolo1}. This is the same as giving a basis by NBC- spanning trees, in the induced NBC labeling. This fact will be needed in the proof of Theorem \ref{mainbis}.
\ere


\section{Naturality of the Goresky-MacPherson Isomorphism}\label{naturality}

We here analyze the embedding $\pi_\Gamma: \conf{m}(\bbr^N)\hookrightarrow\conf{\Gamma}(\bbr^N)$ in homology and use our analysis to derive the stable splitting in \S\ref{stable}. Let $\G$ be a simple connected graph on $m$-vertices, $\G\subset K_m$, and $\G$ is taken to be a spanning subgraph of the complete graph $K_m=K$. An interval $(\hat 0,B)$ of $\Pi_\Gamma$ is a sublattice of the corresponding interval in $\Pi_{K_m}$ (Lemma \ref{subinterval}). We will write $\Delta_\Gamma(\hat 0,B)$ the order complex of the interval $(\hat 0,B)$ of  $\Pi_\Gamma$, and similarly $\Delta_K(\hat 0,B)$ the order complex of this interval in $\Pi_{K_m}$. There is a full subcomplex inclusion
$$\tau^B:\xymatrix{\Delta_\Gamma (\hat 0,B)\ar@{^(->}[r]& \Delta_K(\hat 0,B)}$$ 
We know that each interval is a wedge of spheres (Proposition \eqref{wedgebond}). It turns out that $\tau^B$ respects this splitting as we now make precise. More precisely, if $G=K$ or $\G$, then $(\hat 0,B)_{\Pi(G)}$ is a bouquet of spheres of dimension $m-|B|-2$, and the number of such spheres is $a_1^G(B) = a_1(G_1)\cdots a_1(G_k)$, where $G_i$ is the induced graph on vertices $B_i$ in $G$. 

\ble\label{fromPtoL} Let $\Gamma$ be a spanning graph of $K=K_m$, and $B$ a bond partition of $\Gamma$. The inclusion $\tau^B$ is homotopic to an inclusion 
$\bigvee^{a_1^\Gamma(B)}S^{m-|B|-2}\hookrightarrow\bigvee^{a_1^K(B)}S^{m-|B|-2}$ as a wedge summand. In particular, the map $\tau^B_*$ in homology sends generators to generators.
\ele

\begin{proof}
The interval
$(\hat 0,B)$ in $\Pi(\Gamma)$ is obtained from the same interval in $K$ by removing atoms. The subcomplex 
$\Delta_\Gamma (\hat 0,B)$ is a full subcomplex of $\Delta_K(\hat 0,B)$, and moreover, maximal chains have the same length by Lemma \ref{subinterval}. This situation is well illustrated by Fig. \ref{bondC4} for example. This implies that capping facets of $\Delta_\Gamma (\hat 0,B)$ are necessarily capping facets of $\Delta_K(\hat 0,B)$, in particular they are of the same dimension. The claim now follows from Proposition \ref{capping}.
\end{proof}

\bre From the Lemma , $a_1^\Gamma (B)\leq a_1^K(B)$, for any bond partition $B$ of $\G$. It is interesting to see this in a different way. Write $B=B_1|B|\cdots |B_k$, with each $B_i$ corresponding to a subset of $V(\G)=V(K)$. By construction,
$a_1^\Gamma (B) = a_1(\G_1)\ldots a_1(\G_k)$, and $a_1^K(B) = a_1(H_1)\cdots a_1(H_k)$, where $\G_i$ (respectively $H_i)$ is the induced subgraph of $\G$ (respectively of $K$) with vertex set $B_i$. This means that $\G_i$ is a spanning subgraph of $H_i$.
But $a_1(\G_i)$ is the number of acyclic orientations of $\G_i$ with a unique source.  Adding more edges to a graph cannot increase the number of acyclic orientations with a unique source, so that $a_1(\G_i)\leq a_1(H_i)$. This holds for all $1\leq i\leq k$, so holds for the product and $a_1^\G(B)\leq a_1^K(B)$.
\ere

Next is the main result of this section on the functoriality of the Goresky-MacPherson (GM) isomorphism
in \eqref{gmformula}.

\begin{proposition}\label{dualityGM} Let $\G$ be a spanning subgraph of $K_m$. Then the following diagram commutes
$$\xymatrix{\tilde H^i(\conf{\Gamma} (\bbr^N))\ar[rr]^{(\pi_\Gamma^{K})^*}\ar[d]^\cong&&\tilde H^i(\conf{m}(\bbr^N))\ar[d]^\cong\\
\oplus_{B\in\Pi_\Gamma\setminus \{\hat 0\}}\tilde H_{mN-i-\dim B - 2}(\Delta_\Gamma\mathcal (\hat 0,B))\ar[rr]^{\oplus \tau_*^B}&&
\oplus_{B\in\Pi_K\setminus \{\hat 0\}}\tilde H_{mN-i-\dim B - 2}(\Delta_K\mathcal (\hat 0,B))
}$$
where the vertical maps are given by the GM-isomorphism formula.
\end{proposition}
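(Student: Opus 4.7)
(Plan) The strategy is to trace the Goresky--MacPherson isomorphism back to its homotopy-theoretic origin (the Ziegler--\v{Z}ivaljevi\'c decomposition of the link combined with Alexander duality, as reviewed in the appendix) and check that every step is natural with respect to the inclusion of sub-arrangements $\mathcal A_\Gamma \subset \mathcal A_K$.

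First, I would unpack the Goresky--MacPherson isomorphism of \eqref{gmformula} into its two building blocks. Since $\conf{G}(\bbr^N)$ is the complement of the closed link $V_{\mathcal A_G}$ in $V=(\bbr^N)^m$, Alexander duality in the sphere $S^{mN}=V^+$ identifies
$$\tilde H^i(\conf{G}(\bbr^N)) \;\cong\; \tilde H_{mN-i-1}\bigl(V_{\mathcal A_G}^+\bigr),$$
and the Ziegler--\v{Z}ivaljevi\'c homotopy decomposition gives
$$V_{\mathcal A_G}^+ \;\simeq\; \bigvee_{x \in L(\mathcal A_G),\,x>\hat 0} \Sigma^{\dim x + 1}\bigl(\Delta_G(\hat 0,x)\bigr)_+ ,$$
the wedge summand for $x$ being obtained from the join $\Delta_G(\hat 0,x) \ast S(x)$, where $S(x)\subset x$ is the unit sphere. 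Substituting the second into the first yields the Goresky--MacPherson formula, and the induced splitting on $\tilde H^i$ into pieces indexed by $x \in L(\mathcal A_G)\setminus\{\hat 0\}$ is exactly the isomorphism appearing as the vertical arrows of the diagram.

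Second, I would observe that the horizontal map $(\pi^K_\Gamma)^*$ dualises, under Alexander duality, to the inclusion of links $V_{\mathcal A_\Gamma}^+ \hookrightarrow V_{\mathcal A_K}^+$ coming from $\mathcal A_\Gamma \subset \mathcal A_K$ (larger arrangement, smaller complement). The key point, which is the real content of the Proposition, is that this inclusion of links respects the Ziegler--\v{Z}ivaljevi\'c splittings. This is where Lemma \ref{subinterval} plays its role: because $L(\mathcal A_\Gamma)$ sits inside $L(\mathcal A_K)$ as a sub-poset in which every lower interval $(\hat 0,B)_{\Pi_\Gamma}$ coincides with the sub-poset of $(\hat 0,B)_{\Pi_K}$ on the elements of $L(\mathcal A_\Gamma)$, the homotopy colimit diagram $x \mapsto S(x)$ used to construct $V_{\mathcal A_\Gamma}^+$ is \emph{literally} a sub-diagram of the one used for $V_{\mathcal A_K}^+$. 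By functoriality of the hocolim and of one-point compactification, the inclusion of links is then the wedge of (i) the inclusion of a wedge summand $\Sigma^{\dim x+1}(\Delta_\Gamma(\hat 0,x))_+ \hookrightarrow \Sigma^{\dim x+1}(\Delta_K(\hat 0,x))_+$, induced by suspending the subcomplex inclusion $\tau^B$, for each $x=B \in L(\mathcal A_\Gamma)$, and (ii) nothing (no preimage) for $x \in L(\mathcal A_K)\setminus L(\mathcal A_\Gamma)$.

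Third, I would translate this wedge-level statement back across Alexander duality. The pushforward $(V_{\mathcal A_\Gamma}^+ \hookrightarrow V_{\mathcal A_K}^+)_*$ in reduced homology becomes the direct sum, over $B \in L(\mathcal A_\Gamma)\setminus\{\hat 0\}$, of the maps $(\tau^B)_*$ on $\tilde H_{mN-i-\dim B -2}(\Delta(\hat 0,B))$, with zero components landing in the summands indexed by the ``new'' elements of $L(\mathcal A_K)$; and this is exactly the bottom arrow of the diagram. The commutativity of the square then follows by assembling these three naturality statements. The one step I expect to require real care is the second one, namely verifying that the Ziegler--\v{Z}ivaljevi\'c equivalence is natural in the precise hocolim-of-spheres sense above; once Lemma \ref{subinterval} is invoked to secure that lower intervals are preserved on the nose (not merely as abstract posets), the rest of the argument is formal.
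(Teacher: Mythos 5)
Your plan is correct and follows essentially the same route as the paper's proof: naturality of Alexander duality (the paper cites Clarke's Theorem 5.1 for exactly the square you describe in your third step) combined with naturality of the Ziegler--\v{Z}ivaljevi\'c wedge decomposition, established by viewing the diagram of compactified subspaces for $\mathcal A_\Gamma$ as a sub-diagram of that for $\mathcal A_K$ and using Lemma \ref{subinterval} to see that lower intervals are preserved on the nose. The step you flag as requiring real care is precisely the one the paper isolates into its appendix on the wedge lemma; only minor notational slips (the wedge summand is the join $\Delta(\hat 0,x)\ast S^{\dim x}$ with the \emph{compactified} subspace sphere, not $\Sigma^{\dim x+1}(\Delta)_+$ with a unit sphere) would need tidying, and they do not affect the degree bookkeeping or the argument.
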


\begin{proof} 
The proof follows from a naturality result for the Alexander duality isomorphism \cite{clarke}, and by inspection of the proof of the GM-formula given in \cite{zz} and discussed in the appendix.

Generally, given a subspace arrangement $\mathcal A = \{A_1,\ldots, A_n\}$, $A_i\subset\bbr^L$, write  $V_{\mathcal A}$ the union of all subspaces $\bigcup_{i=1}^nA_i$, and $M_{\mathcal A}$ its complement in $\bbr^L$ \cite{deshpande}. The compactification of these subspaces in $S^L=\bbr^L\cup\{\infty\}$ are denoted $\hat V_{\mathcal A}$ and $\hat M_{\mathcal A}$ respectively. 
Let $\mathcal B\subset \mathcal A$ be a subcollection of affine subspaces. There is an inclusion of compact sets
$\iota : \hat V_{\mathcal B}\hookrightarrow\hat V_{\mathcal A}$. By Theorem 5.1 of \cite{clarke}, there is a commutative diagram, where $D$ is the Alexander duality operator
\begin{equation}\label{diagram1}\xymatrix{\tilde H_i(\hat V_{\mathcal B})\ar[r]^{\iota^*}\ar[d]^{D}&\tilde H_i(\hat V_{\mathcal A})\ar[d]^{D}\\
\tilde H^{L-i-1}(M_{\mathcal B})\ar[r]&
\tilde H^{L-i-1}(M_{\mathcal A})}
\end{equation}
and the bottom map is the induced cohomology map from the other inclusion $M_{\mathcal A}\hookrightarrow M_{\mathcal B}$. 

Let $L_\infty (\mathcal A)$ be the intersection lattice $L(\mathcal A)$ adjoined an extra maximal element $\hat 1=\infty$, and let
$(\hat 0,\beta)_{\mathcal A}$ be an open interval in this lattice. Since $\mathcal B$ is a sublattice of $\mathcal A$, $\Delta_{\mathcal B}(\hat 0,\beta)$ is a subcomplex of $\Delta_{\mathcal A}(\hat 0,\beta)$, and we have natural maps $\Delta_{\mathcal B}*X\rightarrow \Delta_{\mathcal A}*X$ for any $X$. By the wedge lemma (\cite{zz}, Theorem 2.2,), and as explained in the appendix, we have a homotopy  commutative diagram
\begin{equation}\label{wedgelemma}
    \xymatrix{\hat V_{\mathcal B}\ar@{^(->}[r]\ar[d]^{\simeq}&\hat V_{\mathcal A}\ar[d]^{\simeq}\\
    \bigvee_{\beta\in L_\infty(\mathcal B)\setminus \{\hat 0\}} \Delta_{\mathcal B} (\hat 0,\beta)* S^{d(\beta)}\ar[r]&
    \bigvee_{\beta\in L_\infty(\mathcal A)\setminus \{\hat 0\}} \Delta_{\mathcal A} (\hat 0,\beta)* S^{d(\beta)} }
\end{equation}
where the vertical maps are homotopy equivalences (the wedge decompositions), $d(\beta)$ is the dimension of the subspace $A_\beta$, corresponding to $\beta$ in $L(\mathcal A)$, and $S^{d(\beta)}$ is of course its compactification in $S^N$. 
Note that for atoms $\beta$, $\Delta (\hat 0,\beta)=\emptyset$, and $\emptyset*X=X$.

Putting this all together, we have the extended version of diagram \eqref{diagram1} where all vertical maps are isomorphisms
$$\xymatrix{
\tilde H^{L-i-1}(M_{\mathcal B})\ar[r]\ar[d]^{D^{-1}}&
\tilde H^{L-i-1}(M_{\mathcal A})\ar[d]^{D^{-1}}\\
\tilde H_i(\hat V_{\mathcal B})\ar[r]^{\iota^*}\ar[d]^\cong&\tilde H_i(\hat V_{\mathcal A})\ar[d]^\cong\\
\tilde H_i(\bigvee_{\beta\in L_\infty(\mathcal B)\setminus \{\hat 0\}}(\Delta (\hat 0,\beta)*S^{d(\beta)}))\ar[r]&
\tilde H_i(\bigvee_{\beta\in L_\infty(\mathcal A)\setminus \{\hat 0\}}(\Delta (\hat 0,\beta)*S^{d(\beta)}))
}
$$
The vertical composites are precisely the GM-isomorphisms (see \cite{zz}, Discussion after Theorem 2.2) for the arrangements $\mathcal B$ and $\mathcal A$ respectively, once we write for $L'=L\setminus \{\hat 0\}$
$$\tilde H_i(\bigvee_{\beta\in L'}(\Delta (\hat 0,\beta)*S^{d(\beta)})\cong
\bigoplus_{\beta\in L'}
\tilde H_i(\Delta (\hat 0,\beta)*S^{d(\beta)})
\cong \bigoplus_{\beta\in L'}
\tilde H_{i-d(\beta)-1}(\Delta (\hat 0,\beta))
$$
In our case, we must replace 
$\mathcal A$ and $\mathcal B$ by the chromatic subspace arrangements of $K_m$ and $\Gamma$ respectively, 
$L(\mathcal B) = \Pi_{K_m}$,
$L(\mathcal A) = \Pi_\G$, also replace $L=mN$, $\beta=B$ and $d(\beta)=\dim B$. The claim now  follows immediately.
\end{proof}


\section{The Stable Splitting}\label{stable}

Let $\Gamma$ be a simple connected graph, and $B=B_1|\cdots |B_k$ be a bond partition of $\G$. We will define the set ${\mathcal F}_B$ of all spanning forests $F$ in $\G$ consisting of pairwise disjoint trees $T_1, T_2,\ldots, T_k$ such that $V(T_i)=B_i$. Note that $T_i$ need not be an induced tree on the set $B_i$, or in other words it could be any tree in $\G$ on the vertex set $B_i$.
There are many possible such forests. 

Choose any linear ordering of the edges of $\G$ and use it to construct for any such 
$F\in\mathcal F_B$ the map $\alpha_F$ in the diagram \eqref{diagramforest}. This gives that $\conf{ F}(\bbr^N)$ is a (homotopy) retract of $\conf{\Gamma}(\bbr^N)$, and
$H_*(\conf{F}(\bbr^N))$
is a retract of $H_*(\conf{\Gamma}(\bbr^N))$.  
As in \eqref{classF}, we denote by $[F]$ the image of the top homology generator of $\conf{F}(\bbr^N)$
in $H_{|F|(N-1)}(\conf{ \G}(\bbr^N))\cong\bbz$.

\begin{theorem}\label{mainbis} Choose a linear ordering of the edges of $\Gamma$, $m=|V(\G)|$, $1\leq k\leq m$. Then basis generators of $H_{(m-k)(N-1)}(\conf{\Gamma} (\bbr^N))$ can be chosen to be all classes $[F]$ of spanning forests $F\in\mathcal F_B$ having no-broken cycles, and $B$ any bond partition in $\mathcal B_k(\Gamma)$.
\end{theorem}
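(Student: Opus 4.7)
\medskip

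\noindent\emph{Proof proposal.} The plan is to show that the $|\mathcal F_k^{nbc}(\Gamma)|$ classes $[F]_\Gamma$, for $F\in\mathcal F_k^{nbc}(\Gamma)$, span $H_{(m-k)(N-1)}(\conf{\Gamma}(\bbr^N))$. Since Theorem \ref{poinc} combined with Proposition \ref{aktoa1} yields that the rank of this homology group equals $|\mathcal F_k^{nbc}(\Gamma)|$, spanning implies linear independence and hence the basis property. By Theorem \ref{factorizing} and its forest extension in diagram \eqref{diagramforest}, the toric map $\alpha_F$ factors through $\pi_K^\Gamma : \conf{m}(\bbr^N)\hookrightarrow\conf{\Gamma}(\bbr^N)$, so $[F]_\Gamma = (\pi_K^\Gamma)_\ast [F]_K$, where $[F]_K$ denotes the classical toric class in $\conf{m}(\bbr^N)$.

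The key trick is to extend the chosen linear ordering on $E(\Gamma)$ to one on $E(K_m)$ by declaring every edge of $E(K_m)\setminus E(\Gamma)$ to be smaller than every edge of $E(\Gamma)$. A short case analysis shows that, with this extension, a spanning subforest $F\subset \Gamma$ is NBC for $\Gamma$ if and only if it is NBC for $K_m$: in any $K_m$-cycle $C$ whose broken cycle $C\setminus\{e_{\max}\}$ is contained in $F\subset E(\Gamma)$, maximality forces $e_{\max}\in E(\Gamma)$, whence every edge of $C$ lies in $E(\Gamma)$ and $C$ is already a cycle of $\Gamma$. By the classical theory for complete graphs (Remark \ref{completecase}), the classes $\{[F']_K : F'\in\mathcal F_k^{nbc}(K_m)\}$ in this extended ordering form a basis of $H_{(m-k)(N-1)}(\conf{m}(\bbr^N))$.

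Next, by Proposition \ref{dualityGM} together with Lemma \ref{fromPtoL}, the pullback $(\pi_K^\Gamma)^\ast$ in cohomology agrees under the Goresky–MacPherson decomposition with the direct-summand inclusion $\oplus\tau^B_\ast$ and is therefore injective in each degree. Since the homology of both sides is torsion-free and of finite type (by Theorem \ref{poinc}), the pushforward $(\pi_K^\Gamma)_\ast : H_\ast(\conf{m}(\bbr^N))\to H_\ast(\conf{\Gamma}(\bbr^N))$ is surjective. Consequently, the images $(\pi_K^\Gamma)_\ast [F']_K$ for $F'\in\mathcal F_k^{nbc}(K_m)$ span $H_{(m-k)(N-1)}(\conf{\Gamma}(\bbr^N))$.

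The remaining step, and the main obstacle, is to show that $(\pi_K^\Gamma)_\ast [F']_K = 0$ whenever $F'\in\mathcal F_k^{nbc}(K_m)$ satisfies $F'\not\subset\Gamma$; together with the identification in the second paragraph, this will reduce the spanning set to $\{[F]_\Gamma : F\in\mathcal F_k^{nbc}(\Gamma)\}$ and finish the proof. Pick an edge $e=\{e^-,e^+\}\in F'\setminus E(\Gamma)$; in the planetary construction underlying $\alpha_{F'}$, the sphere factor associated with $e$ parametrizes $x_{e^+} = x_{e^-} + r_e u_e$ with $u_e\in S^{N-1}$ and $r_e$ small. Since $\{e^-,e^+\}\notin E(\Gamma)$, one can let $r_e$ shrink to $0$ inside $\conf{\Gamma}(\bbr^N)$, extending this factor to a disk and thus nullhomotoping the torus map. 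The delicate point is that during this deformation every $\Gamma$-distinctness constraint $x_i\neq x_j$, $\{i,j\}\in E(\Gamma)$, must remain satisfied; this is arranged by taking the radii in the planetary system to decay geometrically quickly enough that the descendants of $e^+$ in the collapsing tree remain confined to a ball around $x_{e^-}$ separated from every other vertex. An alternative route, avoiding the homotopy altogether, is to argue via the Goresky–MacPherson decomposition that the wedge-summand location of $[F']_K$ projects to zero on the $\Gamma$-side when $F'\not\subset\Gamma$; this gives a cleaner but more abstract proof of the vanishing.
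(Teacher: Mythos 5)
Your proposal follows the paper's strategy for most of its length: the same extension of the edge ordering (making $E(K_m)\setminus E(\Gamma)$ smallest, so that NBC for $\Gamma$ coincides with NBC for $K_m$ on subsets of $E(\Gamma)$), the same appeal to Proposition \ref{dualityGM} and Lemma \ref{fromPtoL} to get surjectivity of $(\pi_K^\Gamma)_*$, and the same rank count to upgrade spanning to a basis. The gap is your final step: the claim that $(\pi_K^\Gamma)_*[F']_K=0$ whenever $F'\in\mathcal F_k^{nbc}(K_m)$ contains an edge outside $\Gamma$ is false. Take $\Gamma=L_3$ with edges $\{1,2\},\{2,3\}$ inside $K_3$, ordered $\{1,3\}<\{1,2\}<\{2,3\}$. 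The broken cycle of the triangle is $\{12,13\}$, so $F'=\{\{1,3\},\{2,3\}\}$ is an NBC spanning tree of $K_3$ not contained in $\Gamma$. Its toric representative is $(u,v)\mapsto(0,\,u+\rho v,\,u)$ with $0<\rho<1$; composing with the homeomorphism $\conf{L_3}(\bbr^N)\cong\bbr^N\times(\bbr^N\setminus 0)^2$, $(x_1,x_2,x_3)\mapsto(x_1,x_2-x_1,x_3-x_2)$, gives $(u,v)\mapsto(u+\rho v,\,-\rho v)$, whose component degrees are $(1,0)$ and $(0,\pm1)$; hence the fundamental class of the torus maps to $\pm$ the generator of $H_{2(N-1)}(\conf{L_3}(\bbr^N))\cong\bbz$, not to zero. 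Your proposed nullhomotopy fails for exactly the reason you flag as delicate: as $x_3$ collapses onto $x_1$, its descendant $x_2$ (attached along the $\Gamma$-edge $\{2,3\}$, at fixed distance $\rho$) must sweep through $x_1$, and $\{1,2\}\in E(\Gamma)$ forbids this; no choice of decaying radii can help, because the class is genuinely nonzero. The ``alternative route'' fails too: here the bond partition of $F'$ is $\hat 1\in\Pi_\Gamma$, so the corresponding Goresky--MacPherson summand does not project to zero on the $\Gamma$-side.

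What is true, and what the argument actually needs, is only that the images of the non-$\Gamma$ NBC forests lie in the $\bbz$-span of the classes $[F]_\Gamma$, $F\in\mathcal F_k^{nbc}(\Gamma)$ (in the example above, $[F']_\Gamma=\pm[\{12,23\}]_\Gamma$); proving that is essentially the content of the theorem, so your reduction does not close the loop. One way to repair it without any vanishing claim is to pair against cohomology: $(\pi_K^\Gamma)^*$ carries the NBC monomial $\prod_{e\in F''}\alpha_e\in H^*(\conf{\Gamma}(\bbr^N))$ to the same monomial in $H^*(\conf{m}(\bbr^N))$, and the NBC forest classes of $K_m$ pair unitriangularly with the NBC monomials of $K_m$; hence the Gram matrix $\bigl(\langle[F]_\Gamma,\prod_{e\in F''}\alpha_e\rangle\bigr)_{F,F''\in\mathcal F_k^{nbc}(\Gamma)}$ is invertible over $\bbz$, giving independence of the $[F]_\Gamma$ directly, and the rank count from Theorem \ref{poinc} finishes the proof. (The paper instead deduces generation from the surjectivity of $(\pi_K^\Gamma)_*$ together with the wedge-summand statement of Lemma \ref{fromPtoL}, and then invokes the rank count; it does not assert, and does not need, the vanishing you propose.) A smaller point: injectivity of $(\pi_K^\Gamma)^*$ alone does not give surjectivity of $(\pi_K^\Gamma)_*$ even with torsion-free groups; you need the split injectivity that the wedge-summand conclusion of Lemma \ref{fromPtoL} provides.
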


\begin{proof}
Any simple $\G$ on $m$ vertices is a subgraph of  $K=K_m$. We make a series of deductions:
\begin{itemize}
    \item Assume $\G$ has $r$ edges. We know that $K_m$ has $\displaystyle {m\choose 2}$ edges. Order the edges of $K_m$ so that the edges of $\G$ are labeled the highest $\displaystyle {m\choose 2}, {m\choose 2}-1, \ldots, {m\choose 2}-r+1$. Extend this ordering to edges of $K$ by assigning the other values $\displaystyle 1,\ldots, {m\choose 2}-r$. The good thing about this ordering is that if a tree in $\G$ is NBC, then it is NBC in $K$ as well.  This is the linear ordering we use.
\item Lemma \ref{fromPtoL} shows that the bottom map in diagram \eqref{dualityGM} is injective, which means that $(\pi_K^\G)^*: H^*(\conf{\Gamma}(\bbr^N))\rightarrow H^*(\conf{m}(\bbr^N))$ is injective, and since all groups are torsion-free, the map $(\pi_K^\G)_*$ in homology is surjective.
\item The image $\alpha_{ F_*}([ F])$ in $\conf{m}(\bbr^N)$ is a generator, since this is a non-trivial class coming from a spanning NBC forest of $K$, and NBC forests of $K$ generate the homology of $\conf{K}(\bbr^N)$ \cite{paolo1}).
 It follows that $(\pi^\Gamma_{K})_*\circ \alpha_{ F_*}([ F])$ is a generator of $H_*(\conf{\Gamma}(\bbr^N))$, for every NBC spanning forest of $\G$.
\end{itemize}

We have just argued that the top homology classes of NBC-forests in $\G$ generate the  graded homology groups
$H_*(\conf{\Gamma}(\bbr^n))$. For  degree $*=(m-k)(N-1)$, these classes come from NBC spanning forests on $k$ components. It remains to show they are independent. But the rank of the homology in that degree is $$\beta_{(m-k)(N-1)}
=\sum_{B=B_1|\ldots |B_k}a_1(B) :=\sum_{B=B_1|\ldots |B_k}a_1(\Gamma_1)\ldots a_1(\Gamma_k)$$ 
where $\G_i$ is the induced subgraph of $\G$ on the vertex set $B_i$ (see \S6.1).
This count is precisely the count of all NBC-spanning forests on $k$ components (Proposition \ref{wedgebond}). These must form a basis and the proof is complete.
\end{proof}

\vskip 5pt\noindent{11.1. {\bf Relations.}}
Cycles $C_m$ inside $\Gamma$ give rise to relations in homology for a simple reason which is that a cycle has $m$ distinct spanning trees, but the number of spanning trees with no-broken cycles is $a_1(C_m)=m-1$. More precisely, the spanning trees in $C_m$ give rise to $m$ distinct generators in $H_{(m-1)(N-1)}(\conf{C_m} (\bbr^N),\bbz )\cong\bbz^{m-1}$, according to \eqref{homeo}, so there must be a relation between them.

\ble Label the edges of $C_m$ successively from $1$ to $m$, with $m\geq 3$. We write $T_{i}$ the tree with successive edges $i,i+1,\ldots, i+m-1$ (modulo $m$), and write $[T_i]\in H_{(m-1)(N-1)}(\conf{C_m} (\bbr^N),\bbz )$ the corresponding homology class. Then 
$$[T_1] + [T_2]+\cdots + [T_m] = 0\ \ \ \ \ \ \hbox{(generalized Jacobi relation)}$$
\ele

\begin{proof}
    By rank condition, there must be a relation $a_1[T_1]+\cdots + a_m[T_m]=0$. The cyclic group $\bbz_m$ with generator $\tau$ acts via $\tau [T_i] = [T_{i+1}]$ on this homology. This means that for all $i,j>0$, $a_1[T_i] + a_2[T_{i+j}]+\cdots +a_m[T_{i+j+m-1}]=0$ (indices are taken modulo $m$). This implies in turn that all of the $a_i$'s must be equal, thus our relation.
\end{proof}


\vskip 5pt\noindent{11.2. {\bf Proof of the stable splitting: Theorem 1.6.}}\label{theproof} 
 Let $(S^j)^n$ be a torus. Its suspension has the form $\Sigma (S^j)^n\simeq S^{jn+1}\vee W$, where $W$ is a wedge of spheres of dimension less than $jn-1$. In other words, the top homology class of the suspended torus is generated by a sphere (i.e. it is a spherical class).

Let $\Gamma$ be a simple graph on $m$ vertices. Every basis generator of $H_{*}(\conf{\Gamma} (\bbr^N))$ comes from the top orientation class $[{ F}]$ of an embedded torus,  $F\in\mathcal F_B$ and $B$ is some bond partititon. The suspension of \eqref{toric} is now a  map
$$\Sigma\alpha_{ F}: S^{i(N-1)+1}\vee W\longrightarrow \Sigma\conf{\Gamma} (\bbr^N)$$
 $i = \sum |T_i|$, $1\leq i\leq k-1$,
which induces a homology monomorphism, and for which the image of the orientation class
$\ds S^{i(N-1)+1}\longrightarrow \Sigma\conf{\Gamma} (\bbr^N)$
is a generator of 
$H_{i(N-1)+1}(\Sigma\conf{\Gamma}(\bbr^N))$. 
 This class is the suspended class $\Sigma [F]$. Write $\gamma_{F}$ the restriction of the map $\Sigma\alpha_{ F}$ to this top dimensional sphere. The wedge of all these maps, over all basis generators, that is over all NBC-spanning forests associated to every bond lattice, is a wedge of maps
$$\bigvee_{ F\in\mathcal F^{nbc}(\Gamma)\atop B\in \mathcal B(\Gamma)}\left(\gamma_{\mathcal F_B}: S^{(m-|B|)(N-1)+1}\longrightarrow\Sigma\conf{\Gamma} (\bbr^N)\right)$$
which induces, by construction, a homology isomorphism in all degrees. By the Whitehead theorem, since the source and target are simply connected for $N\geq 2$, this map is a homology equivalence. The wedge product on the left is precisely the wedge decomposition of Theorem \ref{main}, and the proof is complete.


\section{Application: Configurations with Obstacles}\label{obstacles}


This last section gives a useful application of Theorem \ref{main} and Corollary \ref{poinc2} to the so-called configuration spaces with obstacles. They are explicitly and geometrically defined as follows. 
Let $\{p_1,\ldots, p_r\}$ be pairwise distinct fixed points in a space $X$ which we assume to be, as before, locally compact Hausdorff and connected. These are called \textit{ obstacles}. 
As is common, we use the notation $[r] = \{1,\ldots, r\}$ and $\ds {[r]\choose k}$ all subsets of cardinality $k$ of $[r]$. Now consider the space of tuples $(x_1,\ldots, x_n)\in X^n$ such that some specific entries must be pairwise distinct, and must \textit{avoid} some other specified obstacles among the $p_j$'s.  We formulate those conditions as (*) below:
$$
\begin{array}{cl}
\begin{array}{c}
(*)
\end{array}
&\left\{
 \begin{array}{l}
x_i\neq x_j, \quad\text{ if  $\{i,j\}\in C$}\ \ ,\ \ \hbox{$``C''$ stands for constraints}\cr
 x_k\neq p_s, \quad\text{ if  $(k,s)\in O$\ \ , \ \ \hbox{$``O''$ stands for obstacles}}
 \end{array}
\right.
\end{array}
$$
Here $C$ is some subset of ${[n]\choose 2}$ and $O$ some subset of $[n]\times [r]$.
The space of all such configurations in $X^n$ is called a configuration space with obstacles and is written as $\conf{C,O}(X)$. Schematically, this is \textit{the space of $n$ moving objects in $X$, which may or may not collide, each moving to avoid a prescribed subset of $r$ given obstacles}. See Figure \ref{obstacles}.
\begin{figure}[htb]
    \centering
\includegraphics[scale=0.6]{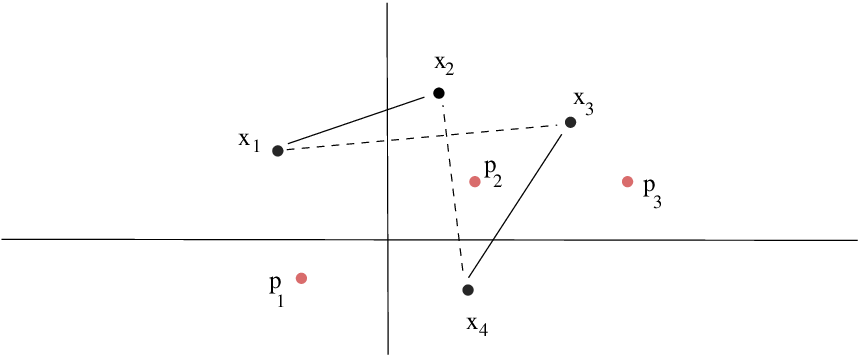}
\caption{A space of obstacles: elements are configurations $(x_1,x_2,x_3,x_4)$ such that all $x_i$'s must avoid the obstacles labeled $\{p_1,p_2,p_3\}$, and such that (in this case) $x_1\neq x_2$, $x_3\neq x_4$ (non-coincidence indicated by solid lines), while $x_2,x_4$, and $x_1,x_3$ can collide (broken lines).}\label{collisionfig}
\end{figure}

To better describe this configuration space, we build the following graph from the data (*): let $\G (n,r)$ be the graph with $V(\G (n,r))=[n+r]$ and edges
\begin{eqnarray*}
    &&E(\G (n,r))=\{\{i,j\}, \{k,s\}, \{n+a,n+b\}\},\\ &&\hbox{where}\ \{i,j\}\in C, (k,s)\in O\ \hbox{and}\ \forall a\neq b\in [r]\}
\end{eqnarray*}
This graph has $K_r$ as a subgraph whose vertices are associated with the obstacles. In $V(\G (n,r))$, the vertices $\{1,\ldots, n\}$ should be viewed as indices for the moving objects $x_i$, and   $\{n+1,\ldots, n+r\}$ are viewed as indices for the obstacles $p_j$. This section is to fully describe the homology of this space. This is given as follows.

\begin{theorem}\label{main2} With $C,O$ and $\G (n,r)$ as above, $N\geq 2$, the configuration space $\conf{C,O}(\bbr^N)$, describing $n\geq 1$ moving objects with $r\geq 1$ obstacles, has torsion free homology and Poincar\'e polynomial
$$P_t(\conf{C,O}(\bbr^N))=
{P_t(\conf{\Gamma(n,r)}(\bbr^N))\over P_t(\conf{K_r}(\bbr^N))}
$$
\end{theorem}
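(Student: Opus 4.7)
The plan is to construct a Fadell--Neuwirth type fibration and derive the identity from the multiplicativity of Poincar\'e polynomials under it.

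First, I would set up the fibration $\pi: \conf{\Gamma(n,r)}(\bbr^N)\to\conf{r}(\bbr^N)$ projecting onto the last $r$ coordinates, $(x_1,\ldots,x_n,y_1,\ldots,y_r)\mapsto(y_1,\ldots,y_r)$. This map is well defined because the obstacle vertices $\{n+1,\ldots,n+r\}$ form the complete subgraph $K_r$ inside $\Gamma(n,r)$, so the $y_i$'s are automatically pairwise distinct. By construction, $\pi^{-1}(p_1,\ldots,p_r)=\conf{C,O}(\bbr^N)$. Local triviality follows from the standard Fadell--Neuwirth argument (Part I, \S2.1.1): near any $(q_1,\ldots,q_r)\in\conf{r}(\bbr^N)$ one chooses a compactly supported ambient isotopy of $\bbr^N$ carrying a nearby obstacle configuration to $(q_1,\ldots,q_r)$, and this lifts to a trivialization of $\pi$ over a neighborhood.

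Next, I would prove that the Serre spectral sequence
\[
E_2^{p,q}=H^p(\conf{r}(\bbr^N);H^q(\conf{C,O}(\bbr^N)))\ \Longrightarrow\ H^{p+q}(\conf{\Gamma(n,r)}(\bbr^N))
\]
has trivial local coefficients and collapses at $E_2$. For $N\geq 3$ the base is simply connected (Part I, \S2.1.1); for $N=2$ the monodromy is shown trivial by direct geometric manipulation of the obstacle configurations (compactly supported isotopies of $\bbr^2$ respecting the obstacle labeling). Collapse would follow from a Leray--Hirsch type argument: the Arnold--Cohen generators $\alpha_{ij}$ of $H^*(\conf{r}(\bbr^N))$ are the pullbacks by $\pi$ of the well-defined classes $\alpha_{n+i,n+j}\in H^*(\conf{\Gamma(n,r)}(\bbr^N))$ (the edges $\{n+i,n+j\}$ being present in $\Gamma(n,r)$), so $\pi^*$ is split injective and $H^*(\conf{\Gamma(n,r)}(\bbr^N))$ becomes a free module over $H^*(\conf{r}(\bbr^N))$. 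A rank check via Theorem~\ref{poinc} applied to both $\Gamma(n,r)$ and $K_r$ then pins down the fiber's Poincar\'e polynomial and forces every higher differential to vanish.

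Granted the multiplicativity $P_t(\conf{\Gamma(n,r)}(\bbr^N))=P_t(\conf{r}(\bbr^N))\cdot P_t(\conf{C,O}(\bbr^N))$, the theorem is immediate upon dividing. Note the quotient is \emph{a priori} a polynomial with nonnegative coefficients: because $K_r\subset\Gamma(n,r)$, the chromatic polynomial $\chi_{K_r}(\lambda)$ divides $\chi_{\Gamma(n,r)}(\lambda)$ in $\bbz[\lambda]$ (its roots $0,1,\ldots,r-1$ are all roots of $\chi_{\Gamma(n,r)}$, since fewer than $r$ colors cannot properly color $K_r$), and Theorem~\ref{poinc} converts this divisibility into the claimed factorisation of Poincar\'e polynomials.

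The hard part will be the Leray--Hirsch step: producing enough classes on the total space to cohomologically trivialize the fibration, and cleanly handling the $N=2$ monodromy issue. A potentially cleaner alternative, avoiding spectral sequences altogether, is to compute $P_t(\conf{C,O}(\bbr^N))$ directly via the Goresky--MacPherson formula applied to the \emph{affine} subspace arrangement defining $\conf{C,O}(\bbr^N)$: its intersection poset is isomorphic to the subposet of $\Pi_{\Gamma(n,r)}$ consisting of bond partitions in which each obstacle vertex lies in its own distinct block, and a Mobius-function calculation along the lines of Proposition~\ref{aktoa1} then yields the formula.
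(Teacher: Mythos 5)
Your overall strategy is the paper's: view $\conf{\Gamma(n,r)}(\bbr^N)\to\conf{r}(\bbr^N)$ as a bundle (this is exactly Corollary \ref{bundletocomplete}, since $K_r$ is a complete, hence relatively complete, subgraph), identify the fiber over $(p_1,\ldots,p_r)$ with $\conf{C,O}(\bbr^N)$, and deduce multiplicativity of Poincar\'e polynomials from a collapsing Serre spectral sequence. However, your collapse argument has a genuine gap. The classes $\alpha_{n+i,n+j}=\pi^*(\alpha_{ij})$ are pulled back from the base, so they restrict \emph{trivially} to the fiber; they therefore cannot serve as Leray--Hirsch classes, which must restrict to a basis of $H^*(\mathrm{Fib})$. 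Split injectivity of $\pi^*$ (which does hold, via the section of Proposition \ref{fadellneuwirth}) does not by itself make $H^*$ of the total space a free module over $H^*$ of the base, and your ``rank check'' is circular: to compare ranks you would already need the fiber's Poincar\'e polynomial, which is the thing being computed. What is actually needed, and what the paper supplies, is the independent observation that the fiber is itself the complement of an affine $c$-arrangement whose intersection lattice is geometric, hence shellable, so $H^*(\conf{C,O}(\bbr^N))$ is torsion free and concentrated in degrees that are multiples of $N-1$; combined with the same concentration for the base, all differentials vanish for degree reasons when $N\geq 3$. Your closing ``alternative'' via Goresky--MacPherson on the affine arrangement is essentially this missing ingredient, but you leave it as a sketch rather than using it where it is needed.

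The second gap is the $N=2$ case. Asserting that the monodromy is ``shown trivial by direct geometric manipulation \dots compactly supported isotopies'' begs the question: those isotopies are exactly the point-pushing maps that generate the (generally nontrivial) monodromy of Fadell--Neuwirth fibrations, and one must prove they act trivially on $H_*(\conf{C,O}(\bbr^2))$. The paper does this by a comparison of bundles: it maps the classical fibration $\conf{n}(\bbr^2\setminus Q_r)\to\conf{K_{n+r}}(\bbr^2)\to\conf{r}(\bbr^2)$, whose coefficient system is known to be trivial by Cohen's work, onto the fibration at hand, checks that the induced map on fiber homology is surjective (using the surjectivity of $(\pi_K^\G)_*$ from Proposition \ref{dualityGM} together with the retraction onto the fiber), and concludes that $\pi_1(\conf{r}(\bbr^2))$ must act trivially on the quotient. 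Some argument of this kind is indispensable; without it the statement for $N=2$ is unproved.
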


\bex The case of the join of the graphs
$\G (n,r) = K_n*K_r = K_{n+r}$ corresponds to the configuration space $\conf{C,O}(\bbr^N) = \conf{n}(\bbr^N-Q_r)$, where $Q_r$ is a fixed set of $r$ points (see \S\ref{obstacles}). This is the set of all configurations that avoid all the obstacles. The case of the cartesian (or box) product $\Gamma (n,r) = K_2\Box K_r$ is also of particular interest and is discussed as a final example in this section. It describes the set of all configurations $(x_1,\ldots, x_n)$ such that the $x_i$'s are pairwise distinct, and $x_i\neq p_i$, $\forall i$ (note $x_i$ and $p_j$ can coincide if $i\neq j$).
\eex

\vskip 5pt
\noindent{12.1 {\bf Bundle Maps.}}
The starting point is to exhibit a necessary and sufficient condition for projection maps between chromatic configuration spaces to be bundle maps. This gives a sharper version of the well-known fibration theorem of Fadell and Neuwirth \cite{fn}. 

Let $H\subset\Gamma$ be a subgraph of $\Gamma$. There is always an induced map (see Remark \ref{mappi})
\begin{equation}\label{projection}
\begin{array}{cccl}
         \pi_\Gamma^H : & \conf{\Gamma} (M) & \longrightarrow & \conf{H}(M) 
    \end{array}
\end{equation}
This map is ``inclusion followed by projection''. The point we want to make is that it is not generally a bundle projection as the following simple example shows: consider the $L_3$ graph with the middle vertex being labeled $2$, and project onto the trivial subgraph $t_2$ on $\{1,3\}$, i.e. $\pi: \conf{L_3}(X)\rightarrow \conf{t_2}(X) = X^2$ (here we abbreviate notation $\pi = \pi_{L_3}^{t_2}$). Over a point $(x,y)$, $x\neq y$, $\pi^{-1}(x,y) = \{z\in\bbr^N\ |\ z\neq x, z\neq y\}\simeq S^{n-1}\vee S^{n-1}$. Over $(x,x)$,
$\pi^{-1}(x,x) = \{z\in\bbr^N, z\neq x\}\simeq S^{N-1}$. Since the homotopy type of inverse images is not constant, $\pi$ is not a bundle map. The precise issue is this: if a vertex $v$ is connected to two vertices of $H$, and there is no edge between these two vertices, then the projection fails to be a bundle map. This turns out to be the only issue.

\bde A subgraph $H\subset \Gamma$ is a \textit{relatively complete subgraph} of $\Gamma$ if
$\forall i\in V(\Gamma) - V(H)$, if $\{i,r\}, \{i,s\}\in E(\Gamma)$, $r,s\in V(H)$, 
    then $\{r,s\}\in E(H)$. In simple words, $H$ is relatively complete if whenever a vertex of $\Gamma$ is adjacent to two vertices in $H$, then these two vertices must be adjacent in $H$.
\ede

\begin{proposition}\label{fadellneuwirth}
Let $M$ be a connected topological manifold and  $\Gamma$ a simple graph. If $H$ is a relatively complete subgraph of $\Gamma$, then the projection map \eqref{projection} is a bundle map. If $M=\bbr^N$, this projection has a section.
\end{proposition}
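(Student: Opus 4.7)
The plan is to adapt the classical Fadell--Neuwirth argument, with the relatively complete hypothesis playing the role that ``forgetting coordinates'' plays in the original setting. First I would identify the fiber of $\pi_\Gamma^H$ over a point $x^0 = (x_{i_1}^0, \ldots, x_{i_k}^0) \in \conf{H}(M)$, where $V(H) = \{i_1, \ldots, i_k\}$. This fiber $F_{x^0}$ consists of tuples $(x_j)_{j \notin V(H)}$ satisfying $x_j \neq x_{j'}$ whenever $\{j, j'\} \in E(\Gamma)$ with $j, j' \notin V(H)$, and $x_j \neq x_r^0$ whenever $\{j, r\} \in E(\Gamma)$ with $r \in V(H)$. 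The role of the relatively complete hypothesis becomes clear here: for each $j \notin V(H)$, the set of obstacle points $\{x_r^0 : r \in V(H), \{j, r\} \in E(\Gamma)\}$ always consists of pairwise distinct points of $M$, since whenever $\{j, r\}, \{j, s\} \in E(\Gamma)$, the hypothesis forces $\{r, s\} \in E(H)$, so $x_r^0 \neq x_s^0$. Consequently $F_{x^0}$ is a chromatic configuration space with obstacles of constant cardinality, and its homeomorphism type is independent of $x^0$.

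Next, I would establish local triviality via the standard ``moving disks'' technique. Given $x^0 \in \conf{H}(M)$, I would choose pairwise disjoint coordinate charts $V_r \cong \bbr^N$ around each $x_{i_r}^0$, and a smaller neighborhood $U$ of $x^0$ inside $(\prod_r V_r) \cap \conf{H}(M)$. Using the fact that for a ball $B \subset \bbr^N$ centered at the origin, any point $y \in B$ can be realized as $\phi_y(0)$ for a continuously varying compactly supported homeomorphism $\phi_y$ of $\bbr^N$, I would construct a continuous family of homeomorphisms $\Phi_x : M \to M$ for $x \in U$ with $\Phi_x(x_{i_r}^0) = x_{i_r}$, each $\Phi_x$ being the identity outside $\bigsqcup_r V_r$. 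The local trivialization $\tau : (\pi_\Gamma^H)^{-1}(U) \to U \times F_{x^0}$ would then send $(y_1, \ldots, y_m)$ to $\bigl(x, (\Phi_x^{-1}(y_j))_{j \notin V(H)}\bigr)$ where $x = (y_{i_r})_r$. The key verification is that $\Phi_x^{-1}(y_j)$ satisfies the fiber conditions: distinctness among themselves follows from $\Phi_x^{-1}$ being a homeomorphism, while $\Phi_x^{-1}(y_j) \neq x_{i_r}^0$ reduces to $y_j \neq \Phi_x(x_{i_r}^0) = y_{i_r}$, which holds because $(y_1, \ldots, y_m) \in \conf{\Gamma}(M)$.

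For the section when $M = \bbr^N$, the idea is to place the remaining coordinates far away from the base configuration. Enumerate $V(\Gamma) \setminus V(H) = \{j_1, \ldots, j_{m-k}\}$ and define $s : \conf{H}(\bbr^N) \to \conf{\Gamma}(\bbr^N)$ by $s(x)_{i_r} = x_{i_r}$ and $s(x)_{j_\ell} = (R(x) + \ell)\, e_1$, where $R(x) = \max_r \|x_{i_r}\| + 1$ and $e_1$ is the first standard basis vector. The points $s(x)_{j_\ell}$ are pairwise distinct since their first coordinates differ, and they lie outside the closed ball of radius $R(x)$, so they avoid every $x_{i_r}$; hence $s(x)$ satisfies all the $\Gamma$-constraints and $s$ is evidently continuous.

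The main obstacle is really conceptual rather than technical: one must see that the relatively complete condition is precisely what prevents obstacle points from colliding in the base, which would otherwise cause the fibers to change topological type (as the counterexample $L_3 \to t_2$ in the paper illustrates). Once this is recognized, the fiber identification and the moving-disks construction are routine, and the section when $M = \bbr^N$ is almost trivial because Euclidean space has ``room at infinity'' for arbitrary auxiliary points.
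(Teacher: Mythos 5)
Your overall strategy is the intended one (the paper itself only defers to \cite{fn} and the appendix of \cite{eastwood}), and your section for $M=\bbr^N$, obtained by pushing the remaining points past the radius $R(x)$, is exactly the paper's argument. But the local‑triviality half has a genuine gap, located at the step ``choose pairwise disjoint coordinate charts $V_r$ around each $x_{i_r}^0$'' and at the preceding claim that the homeomorphism type of the fiber is independent of $x^0$. Both tacitly assume that the base configuration $x^0_{i_1},\dots,x^0_{i_k}$ consists of pairwise distinct points of $M$. Relative completeness does not give you this: it only forces $x^0_r\neq x^0_s$ when $r,s\in V(H)$ are both $\Gamma$-adjacent to a \emph{common} vertex $j\notin V(H)$, so that the obstacle set seen by each single outside vertex consists of distinct points. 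It says nothing about $r,s\in V(H)$ whose constraints are felt by two \emph{different} outside vertices that are themselves joined by an edge of $\Gamma$. When $x^0_r=x^0_s$ is permitted in $\conf{H}(M)$, no ambient homeomorphism $\Phi_x$ can carry the single point $x^0_r=x^0_s$ to two distinct nearby positions $x_r\neq x_s$, so the moving-disks trivialization cannot be run.

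This is not a repairable technicality: the statement itself fails under the stated hypotheses. Take $\Gamma$ to be the path $1-3-4-2$ and $H$ the edgeless subgraph on $\{1,2\}$. Each of $3,4$ has exactly one neighbour in $V(H)$, so $H$ is relatively complete, and $\conf{H}(\bbr^N)=(\bbr^N)^2$. The fiber of $\pi_\Gamma^H$ over $(a,b)$ is $\{(x_3,x_4):\ x_3\neq a,\ x_3\neq x_4,\ x_4\neq b\}$. For $a\neq b$ this is the space $\conf{2,2}(\bbr^N)$ computed at the end of \S\ref{obstacles}, with Poincar\'e polynomial $1+3t^{N-1}+3t^{2(N-1)}$, whereas for $a=b$ it is $\conf{2}(\bbr^N\setminus\{a\})$, with Poincar\'e polynomial $1+3t^{N-1}+2t^{2(N-1)}$. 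Since a bundle over a connected base has all fibers homeomorphic, the projection is not a bundle. Your argument (and the proposition) is correct whenever the relevant base points cannot collide — in particular when $H$ is a complete subgraph, which is the only case the paper actually uses (Corollary \ref{bundletocomplete} and the proof of Theorem \ref{main2}) — but in general relative completeness must be strengthened to also control pairs $r,s\in V(H)$ that are linked by a path through $\Gamma\setminus V(H)$.
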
 

\begin{proof}
The proof that this projection is locally trivial follows the same lines as \cite{fn}, and we refer to the Appendix of \cite{eastwood} for the details. The existence of a section in the case $M=\bbr^N$ follows immediately from the fact that we can continuously adjoin to a given $(x_1,\ldots, x_m)\in \conf{H}(\bbr^N), m=|V(H)|, n=|V(\Gamma)|$, $n-m$ distinct points lying outside the sphere of radius $1+\sum |x_i|$.
The existence of a section can be proven for more general $M$ of course.
\end{proof}

\bco\label{bundletocomplete} If $H$ is a complete subgraph of $\Gamma$, then $\pi_\Gamma^H$ is a bundle projection.
\eco

\vskip 5pt
\noindent{12.2 {\bf Proof of Theorem \ref{main2}.}}
We can now prove the main result of this section.
Write $K=K_r$, the complete graph. By Corollary \ref{bundletocomplete}, the projection $\pi: \conf{\Gamma (n,r)}(\bbr^N)\rightarrow\conf{K}(\bbr^N)$ is a bundle projection with fiber Fib$(\pi)$, where $$\hbox{Fib}(\pi) := \pi^{-1}(p_1,\ldots, p_r)$$ 
By direct inspection, and also by design, we have that
\begin{equation}\label{identify}
\conf{C,O}(\bbr^N) = Fib(\pi)
\end{equation}
This fiber is the complement of a $c$-arrangement in $(\bbr^N)^{|V(\Gamma)|}$. The associated intersection poset is geometric (Proposition \ref{geometric}), therefore shellable.  As in the proof of \S6,
the homology of Fib$(\pi)$ must be concentrated in degrees a multiple of $N-1$.
When $N\geq 3$, the base $\conf{K}(\bbr^N)$ is simply connected and its homology is concentrated in degrees that are a multiple of $N-1$ as well. Consequently, the Serre spectral sequence for the projection $\pi$ must collapse at the $E^2$-term in the simply connected case, i.e. when $N>2$, since there can be no differentials. In that case
$$H_*(\conf{\G (n,r)} (\bbr^N))\cong H_*(Fib(\pi))\otimes H_*(\conf{K}(\bbr^N))$$
The Poincar\'e series multiply so using the identification \eqref{identify}, we obtain the desired claim. We use the same argument for $N= 2$, only that this time the base is not simply connected anymore, and we need argue that the bundle has trivial coefficients. The graph $\G (n, r)$ is a subgraph of $K_{n+r}$, and there is a map of bundles
$$\xymatrix{\conf{n}(\bbr^2-Q_r)\ar[r]\ar[d]&\conf{C,O}(\bbr^2)\ar[d]\\
\conf{K_{n+r}}(\bbr^2)\ar[d]\ar[r]&\conf{\G (n,r)}(\bbr^2)\ar[d]\\
\conf{r}(\bbr^2)\ar[r]^=&\conf{r}(\bbr^2)
}$$
We gather three facts: (i) the middle map is a surjection in homology by Corollary \ref{dualityGM}, (ii) the inclusion of the fiber $\conf{C,O}(\bbr^N)\hookrightarrow \conf{\G (n,r)}(\bbr^N)$ has a retract, and (iii) the left-hand spectral sequence has trivial coefficients (\cite{fredbible}, \S6). Since the homology of the fibers survives to $E_\infty$, the map between their homologies must be surjective as well. The fundamental group $\pi_1(\conf{r}(\bbr^N))$ acts trivially on $H_*(\conf{n}(\bbr^2-Q_r))$, so forcibly must act trivially on any homology class in $\conf{C,O}(\bbr^N)$. The right-hand fibration has trivial coefficients as well, and since the Serre spectral sequence for the lefthand fibration collapses, it must also collapse for the fibration on the right at the indicated $E^2$-term. 
\hfill$\square$

The following corollary is known and recorded as Theorem 7.1 in \cite{vershinin}. 

\bco\label{config1}  Let $Q_r$ be a set of $r$ distinct points in $\bbr^N$. Then
\begin{eqnarray*}
    P_t(\conf{n}(\bbr^N\setminus Q_r)) 
    &=&
(1 + rt^{N-1})(1 + (r + 1)t^{N-1})\cdots (1 + (n + r - 1)t^{N-1})
\end{eqnarray*}
\eco

\begin{proof}
The space  $\conf{n}(\bbr^n\setminus Q_r)$
is a configuration space with obstacles corresponding to 
$$C=\{\{i,j\}, i\neq j, 1\leq i,j\leq n\}\ \ ,\ \ 
O = \{(i,s), i\neq s, 1\leq i\leq n, 1\leq s\leq r\}
$$
In words, these are all moving objects $x_i, 1\leq i\leq n$ which are pairwise distinct, and each $x_i$ avoids all of the $p_s\in Q_r, 1\leq s\leq r$. 
The graph $\Gamma(n,r)$ has all possible edges, and is by construction $K_n*K_r=K_{n+r}$, the join of two graphs. 
According to Theorem \ref{main2}
$$\ds P_t(\conf{n}(\bbr^N\setminus Q_r)) = {P_t(\conf{K_{n+r}}(\bbr^N))\over P_t(\conf{K_r}(\bbr^N))}$$
We now plug in the Poincar\'e series for 
$\conf{K}(\bbr^N)$ (see \S4.2) to obtain the desired claim.
\end{proof}

\vskip 5pt\noindent{10.1 {\bf  Main example.}} 
Let $\zeta = (p_1,\ldots, p_n)\in\conf{n}(X)$ and consider the following configuration space of points
\begin{eqnarray*}
\conf{n,n} (X) &:=&
\conf{n}(X)\cap (\bbr^N\setminus\{p_1\})\times\cdots\times (\bbr^N\setminus \{p_n\})\\
&=&\{(x_1,\ldots, x_n)\ |\ x_i\neq x_j, i\neq j\ \hbox{and}\ x_i\neq p_i, \ \forall\ 1\leq i\leq n\}
\end{eqnarray*}
This is the configuration space of points $(x_1,\ldots, x_n)$ (pairwise distinct) such that each $x_i$ avoids the obstacle $p_i$. In our description of configuration spaces with obstacles,
$\conf{n,n}(X)=\conf{C,O}(X)$ where
$$C=\{\{i,j\}, i\neq j, 1\leq i,j\leq n\}\ \ ,\ \ 
O = \{(i,i), 1\leq i\leq n\}
$$
The associated graph
$\Gamma (n,n)$ is the box product $ K_2\Box  K_n$ (or cartesian product) of the complete graph $K_n$ with $K_2$. This is the graph whose vertices $V(\Gamma ) = V(K_n)\times V(K_2)$ (i.e. $2n$ vertices), and edges between any two such vertices $(v_1,w_1)$ and $(v_2,w_2)$ if $w_1=w_2$ ($v_1\neq v_2$) or $v_1=v_2$ ($w_1\neq w_2$). This is like a ``doubling'' operation (see Fig. \ref{collisionfig} for $n=3$). Theorem \ref{main2} gives  
$$P_t(\conf{n,n} (\bbr^N))=
{P_t(\conf{K_2\Box K_n}(\bbr^N))\over P_t(\conf{K_n}(\bbr^N))}
$$
There is no general formula for the chromatic polynomial of the box products of two graphs, even when they are both complete. 

\bex The case $n=2$. Let $\zeta\in\conf{2}(\bbr^N)$, then $K_2\Box K_2=C_4$ is the square (cyclic) graph and 
\begin{eqnarray*}
P_t(\conf{2,2} (\bbr^N))=
{P_t(\conf{C_4}(\bbr^N))\over P_t(\conf{K_2}(\bbr^N))} &=& 
{3t^{3(N-1)} + 6t^{2(N-1)} + 4t^{N-1} + 1\over 1+t^{N-1}}\ \ \ \ \hbox{(by formula \eqref{cyclic2})}
\\
&=&3t^{2(N-1)} + 3t^{N-1}+1
\end{eqnarray*}
When $n=3$, a depiction of the box product graph is shown in Fig. \ref{K3boxK2}
\begin{figure}[htb]
    \centering
\includegraphics[scale=0.6]{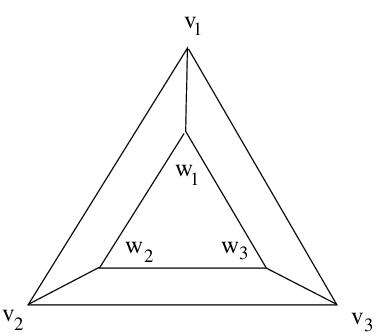}
\caption{The graph $K_2\Box K_3$.}\label{K3boxK2}
\end{figure}
The chromatic polynomial $\chi_{K_n\Box K_2}(\lambda)$ is computed in \cite{pfaff} (Theorem 2). When specialized to the case $n=3$, it gives
\begin{eqnarray*}
 \chi_{K_3\Box K_2}(\lambda ) &=& \lambda (\lambda - 1)^5-4\lambda (\lambda - 1)^4 + 8\lambda (\lambda - 1)^3-9\lambda (\lambda - 1)^2+4\lambda (\lambda - 1)\\
 &=&\lambda^6-9\lambda^5+34\lambda^4-67\lambda^3+67\lambda^2-26\lambda
\end{eqnarray*}
By theorem \ref{main}, with $m=|V(K_3\Box K_2)|=6$,
\begin{eqnarray*}
P_t(\conf{K_3\Box K_2}(\bbr^N)) &=&
t^{6(N-1)}\chi_\Gamma\left(-t^{(1-N)}\right)\\
&=& 
1 + 9t^{(N-1)} + 34t^{2(N-1)}+67t^{3(N-1)}+67t^{4(N-1)} + 26t^{5(N-1)}
\end{eqnarray*}
Note that the coefficient of $\lambda$ in the chromatic polynomial is  $-26$ and so $a_1(K_3\Box K_2)=26$, which is the top betti number $\beta_{5(N-1)}$ of $\conf{K_3\Box K_2}(\bbr^N)$ indeed.
This polynomial is divisible by 
$P_t(\conf{3}(\bbr^N)) = 1+ 3x+2x^2$,
where $x=t^{N-1}$. Using Wolfram-Alpha, we obtain the exact quotient and this is summarized below.
\eex 
\bco Set $x = t^{N-1}$, $N\geq 2$. Then
$$P_t(\conf{K_2\Box K_3}(\bbr^N)) = 1 + 9x + 34x^2+67x^3+67x^4 + 26x^5$$
and
$$P_t(\conf{3,3}(\bbr^N) = 1 + 6x + 14x^2+13x^3$$ 
\eco


\section{Appendix: the Wedge Lemma}

This is a compelling and very useful result in the theory of subspace complements. It is due to \cite{zz} (Theorem 2.2). We discuss it here and explain how it is used to derive diagram \eqref{wedgelemma}. Let $\mathcal A = \{A_1,\ldots, A_n\}$ be a finite affine subspace arrangement in $\bbr^N$ with intersection poset $L(\mathcal A)$ (including $\hat 0=\bbr^N$). It is easy to see that the link $V_{\mathcal A}:=\bigcup_{i\in I}A_i$ is contractible if the arrangement is central (i.e. $\bigcup A_i\neq\emptyset$), and otherwise it is 
$\Delta (P\setminus \{\hat 0\}))$ (note $\hat 1=\emptyset$ is not in this poset) (See \cite{zz}, Theorem 2.1).

One can now compactify all of the affine subspaces in $S^N=\bbr^N\cup\infty$ to obtain a spherical arrangement $\hat{\mathcal A}$ with link $\hat V_{\mathcal A}$. Consider the augmented poset $L_\infty$ which is $L$ adjointed a maximal element $\infty$\footnote{To avoid confusion, the notation $\hat 1=\cap A_i$ is reserved for the intersection of all the flats. In \cite{zz}, $\infty=\hat 1$}
We write $X*Y$ the join of $X$ and $Y$ (the identification space given by the union of all disjoint segments with one end in $X$ and the other in $Y$). The wedge lemma states that there is a homotopy splitting
$$\hat V_{\mathcal A}\simeq
\bigvee_{\beta\in L_\infty(\mathcal A)\setminus \{\hat 0\}} \Delta (\hat 0,\beta)* S^{d(\beta)} $$ 
where the sphere $S^{d(\beta)}$ is the compactification of $A_\beta$ having dimension $d(\beta)$. In particular, if $A_\beta$ is a point, then $S^{d(\beta)}=S^0$. It is worthwhile looking at an example.

\bex Let $\mathcal A = \{A_1,A_2\}$ where $A_1$ is the $x$-axis and $A_2$ the $y$-axis in $\bbr^N=\bbr^2$. They intersect at a point. The link in the sphere $\hat V_{\mathcal A} = \hat A_1\cup\hat A_2$ is the union of two circles with two diameter points in common. This has the homotopy type of $S^1\vee S^1\vee S^1$. Let's verify with the formula. The extended poset $L_\infty$ is depicted below. Note that $\Delta (\hat 0,A_1)=\emptyset = \Delta (\hat 0,A_2)$, while $\Delta (\hat 0,A_{12})=S^0$. We get the two copies of $S^1$ from $\emptyset*S^{\dim (A_i)}$ and the third $S^1$ from $\Delta (\hat 0,A_{12})*S^{\dim (A_{12})}= S^0*S^0 = S^1$.
\begin{figure}[htb]
    \centering
    \includegraphics[scale=0.8]{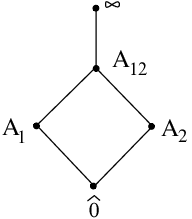}
    \caption{$P=L_\infty (\mathcal A)$ for the intersection poset of the subspace arrangement consisting of the two coordinate axes in $\bbr^2$}
\end{figure}
\eex

To derive the wedge lemma, we need to understand a key idea (and later apply it in our case). The link $\hat V_\mathcal A$ is the colimit of the diagram $D_{\mathcal A}$ made out of spaces $\hat A_p$ and maps $d_{pq}: \hat A_p\rightarrow \hat A_q$, with $p>q$. These maps are inclusions of spheres $S^{d(p)}$ into higher dimensional spheres $S^{d(q)}$. They are homotopically trivial maps. By the homotopy lemma (\cite{zz}, Lemma 1.7), it is possible to replace this diagram by keeping the same spaces but changing the maps, up to homotopy, to constant maps $d'_{pq}: S^
{d(p)}\rightarrow S^{d(q)}$. For this we need to have maps of diagrams as in
$$\xymatrix{\hat A_p\ar[d]^{d_{pq}}\ar[r]^{\alpha_p}&S^{d(p)}\ar[d]^{d'_{pq}}\\
\hat A_q\ar[r]^{\alpha_q}&S^{d(q)}}\ $$
with $p>q$, $\alpha_p$ and $\alpha_q$ homotopy equivalences. To do this, one chooses for every $p<\hat 1$ a point $c_p$ in $A_p\setminus \bigcup_{q>p}A_q$ and considers the map $\alpha_p$ which contracts the complement of a small disc around $c_p\in A_p$, avoiding $\bigcup_{q>p}A_q$, to $\infty$ (common to all spheres). The diagram of spheres $S^{d(p)}, p\in L_{\mathcal A}$ and constant maps $d'_{pq}$ is denoted by $C_{\mathcal A}$.

Now consider a subspace arrangement $\mathcal B\subset\mathcal A$, with intersection lattice $L_{\mathcal B}$. For a flat $A_p\in\mathcal A$, we write $A_p=B_p$ when it is also in $\mathcal B$, and $D_{\mathcal B}$ the smaller diagram of spaces $B_p\hookrightarrow B_  q$, $p>q$ in $L_{\mathcal B}$.
There is an inclusion $\hat V_{\mathcal B}\hookrightarrow\hat V_{\mathcal A}$, and the following diagrams commute
$$\xymatrix{
\hat V_{\mathcal B}\ar[r]&\hat V_{\mathcal A}\\
\hbox{hocolim} D_{\mathcal B}\ar[d]^\simeq\ar[u]_\simeq\ar[r]&\hbox{hocolim}D_{\mathcal A}\ar[d]^\simeq\ar[u]_\simeq\\
\hbox{hocolim} C_{\mathcal B}\ar[d]^\simeq\ar[r]&\hbox{hocolim}C_{\mathcal A}\ar[d]^\simeq\\
\bigvee_{\beta\in L_\infty(\mathcal B)\setminus \{\hat 0\}} \Delta (\hat 0,\beta)* S^{d(\beta)}\ar[r]&
\bigvee_{\beta\in L_\infty(\mathcal A)\setminus \{\hat 0\}} \Delta (\hat 0,\beta)* S^{d(\beta)} 
}$$
The top commuting diagram exists because we have a morphism of diagrams $D_{\mathcal B}\rightarrow D_{\mathcal A}$, and both spaces $\hat V_{\mathcal B}$ and $\hat V_{\mathcal A}$ are colimits of diagram homotopy equivalent to their hocolims. The second commuting diagram exists because the maps
$\beta_p:
\xymatrix{\hat B_p\ar[r]&\hat A_p\ar[r]&S^{d(p)}}$ are homotopy equivalences, $B_p=A_p$ $\alpha_p=\beta_p$, and there is an analog commuting diagram at the level of $D$-diagrams (before passing to hocolim). The middle vertical arrows are homotopy equivalences by the homotopy Lemma (\cite{zz}, Lemma 1.7). Finally we explain the bottom commuting square. The vertical maps are homotopy equivalences and going from the hocolim to the wedge is done via collapsing contractible spaces of the form $\Delta (P_{\leq p})$, for $p\neq\infty$, in the hocolim (see  page 22, \cite{zz}). The contractions for $P=L_\infty(\mathcal A)$ restrict to contractions for $P'=L_\infty (\mathcal B)$, which is the reason for the homotopy commutativity of the diagram.

\clearpage

\clearpage
.\vskip 200pt
\centerline{\bf\Huge Part III: References}
\clearpage

\part{References}

\bibliographystyle{plain[8pt]}

\begin{thebibliography}{doC}

\bibitem{andrade} R. Andrade,
\textit{From manifolds to invariants of En-algebras}, PhD thesis, Massachusetts
Institute of Technology, 2010.
\bibitem{abad} C. A. Abad, 
\textit{Introduction to representations of
braid groups}, 
Revista Colombiana Mat.
{\bf 49}:1 (2015), 1--38.
\bibitem{abrams} A. Abrams, \textit{Configuration spaces of colored graphs,} Geometriae Dedicata {\bf 92} (2002),
pp. 185--194.
\bibitem{abramsghrist}
A. Abrams, R. Ghrist, 
\textit{Finding topology in a factory: configuration spaces}, American Mathematical Monthly {\bf 109} (2002), Issue 2, 140--150.
\bibitem{alpert} H. Alpert, M. Kahle, R. MacPherson,
\textit{Configuration spaces of disks in an infinite strip}, J. Appl. Comput. Topol. {\bf 5} (2021), no.3, 357–390.
\bibitem{alshuler} D. Altschuler, L. Freidel, \textit{Vassiliev knot invariants and Chern-Simons perturbation theory to all orders}, Comm Math Phys {\bf 187} (1997), 261–287.
\bibitem{alvarez} D. Alvarez-Gavela, \textit{Universal Vassiliev invariants via integration on configuration spaces}, Talbot 2015: Applications of operads.
\bibitem{ak} M. Aouina, J.R. Klein, \textit{On the homotopy invariance of configuration spaces},
Algebraic and Geometric Topology {\bf 4}:2 (2004), 813--827. 
\bibitem{arabia} A. Arabia, \textit{Espaces de configuration généralisés.
Espaces topologiques i-acycliques.
Suites spectrales basiques}, M\'emoires de la Soci\'et\'e Math\'ematique de France {\bf 170} (2021).
\bibitem{arone} G. Arone, F. Sarcevic, \textit{The space of r-immersions of a union of discs in $\bbr^n$}. https://arxiv.org/abs/2212.09809
\bibitem{aroneturchin} G. Arone and V. Turchin, \textit{On the rational homology of high-dimensional analogues of spaces of long knots}, Geom. Topol. {\bf 18} (2014), 1261--1322.
\bibitem{artin0} E. Artin, \textit{Theorie der Z\"opfe},  Abh. Math. Semin. Univ. Hambg. {\bf 4} (1925), 47--72. 
\bibitem{artin} E. Artin, \textit{Theory of braids}, Annals of Math. {\bf 48}, No. 1, 101--126.
\bibitem{berceanuachraf} S. Ashraf, B. Berceanu, \textit{Cohomology of 3-points configuration spaces of complex projective spaces}, Advances in Geometry {\bf 14} (2014), 691--718.
\bibitem{axelsing}  S. Axelrod, I.M. Singer, \textit{Chern-Simons perturbation theory II},
J. Diff. Geom. {\bf 39} (1994), no. 1, 173--213.
\bibitem{af}  D. Ayala, J. Francis, \textit{Factorization homology of topological manifolds}, J. Topology {\bf 8}, Issue 4 (2015), 1045--1084.
\bibitem{ah} D. Ayala, R. Hepworth, \textit{Configuration spaces and $\Theta_n$}, Proceedings of the American Mathematical Society
{\bf 142}, No. 7 (2014), 2243--2254.
\bibitem{arnold} V. I. Arnol’d, \textit{The cohomology ring of the colored braid group}, Math. Notes {\bf 5} (1969), no. 2, 138--140.
\bibitem{aj} M. F. Atiyah, J. Jones, \textit{Topological aspects of Yang-Mills theory}, Commun. Math. Phy. {\bf 61} (1978), 97--118.
\bibitem{baez} 
J. C. Baez, M. S. Ody, and W. Richter, \textit{Topological aspects of spin and statistics in nonlinear sigma models}
Journal of Mathematical Physics {\bf 36} (1995) 247--257.
\bibitem{bahrab} A. Bahri, P.H. Rabinowitz, \textit{Periodic solutions of Hamiltonian systems of 3-body type}, Ann. Inst. H. Poincaré Anal. Non Linéaire {\bf 8} (1991), no. 6, 561--649.
\bibitem{bfsv} C. Balteanu, Z. Fiedorowicz, R. Schw\"anzl, and R. Vogt, \textit{Iterated monoidal categories}, Adv.
Math. {\bf 176}:2 (2003), 277--349.
\bibitem{radmila} V. Baranovsky, R. Sazdanovic, \textit{Graph homology and graph configuration spaces}, J. Homotopy Relat. Struct. {\bf 7} (2012), no. 2, 223--235.
\bibitem{bf} K. Barnett and M. Farber, \textit{Topology of Configuration Space of Two Particles on a Graph I}, Algebraic and Geometric Topology {\bf 9} (2009), Issue 1, 593--624.
\bibitem{barysh} Y. Baryshnikov, \textit{Euler Characteristics of Exotic Configuration Spaces}, Séminaire Lotharingien de Combinatoire 84B (2020) Proceedings of the 32nd Conference on Formal Power, Article 20, 12 pp.
\bibitem{bbk} Y. Baryshnikov, P. Bubenik, M. Kahle, \textit{Min-type Morse theory for configuration spaces of hard spheres}, Int. Math. Res. Not. (2014), 2577--2592. 
\bibitem{bgs} A. Beilinson, V. Ginzburg, V. Schechtman, \textit{Koszul duality}, JGP {\bf 5}:3 (1988), 317--350.
\bibitem{beidrin} A. Beilinson, V. Drinfeld, \textit{Chiral Algebras}, Colloquium Publications {\bf 51} (2004).
\bibitem{bellingeri} P. Bellingeri, \textit{On presentations of surface braid groups}, J. Algebra {\bf 274} (2004), 543--563.
\bibitem{benhamouda} W. Ben Hammouda, \textit{Homologie de l'espace des lacets des espaces de configurations de trois points dans $\bbr^n$ et $S^n$}, Topological Methods in Nonlinear  Analysis
Volume {\bf 42}, No. 1, 2013, 181--191.

\bibitem{bender} E. A. Bender and J. R. Goldman, \textit{On the applications of Moebius inversion in combinatorial analysis}, Amer. Math. Monthly {\bf 82}:8 (1975), 789--803.
\bibitem{bg} M. Bendersky, S. Gitler, \textit{The cohomology of certain function spaces}, AMS Transactions 
 {\bf 326} (1), (1991), 423--440. 
 \bibitem{benson} B. Benson, D. Chakrabarty, P. Tetali, \textit{G-parking functions, acyclic orientations and spanning trees},
Discrete Mathematics {\bf 310} (2010) 1340--1353.
\bibitem{bermarpap} B. Berceanu, M. Markl, S. Papadima, \textit{Multiplicative models for configuration spaces of algebraic varieties}, Topology {\bf 44}, Issue 2, (2005), 415--440.
\bibitem{bmp} B. Berceanu, D. A. Macinic, S. Papadima and C.D. Popescu, \textit{On the geometry and topology of partial configuration spaces of Riemann surfaces}, Algebraic \& Geometric Topology {\bf 17} (2017), 1163--1188.
\bibitem{berger} C. Berger, \textit{Combinatorial models for real configuration spaces
and $E_n$-operads}, Contemp. Math. {\bf 202}, Amer. Math. Soc. (1997), 
37--52.
\bibitem{berger2} C. Berger,
\textit{A cellular nerve for higher categories}, Adv. Math. {\bf 169} no.1 (2002), 118--175.
\bibitem{berglund} A. Berglund, \textit{Koszul spaces}, Trans. Amer. Math. Soc. {\bf 366}, n0.9 (2014), 4551--4569.
\bibitem{bezru} R. Bezrukavnikov, \textit{Koszul DG-algebras arising from configuration spaces}, Geometric and Functional Analysis {\bf 4}, no.2 (1994), 119--135.
\bibitem{bianchi} A. Bianchi, \textit{Moduli spaces of Riemann surfaces as Hurwitz spaces}, Adv. Math. {\bf 430} (2023) 62 pages, Article ID 109217.
\bibitem{bibby} C. Bibby and N. Gadish,
\textit{Combinatorics of orbit configuration spaces}, S\'em. Lothar. Combin. 80B (2018), Art. 72.
\bibitem{bl} J. S. Birman, X-S. Lin, \textit{Knot polynomials and Vassiliev’s invariants}, Invent. Math. {\bf 111}(2) (1993), 225--270.
\bibitem{bjorner} A. Bj\"orner, \textit{Homology and shellability of matroids and geometric complexes},  A chapter in ``Matroid Applications'', ed. N. White (1990).

\bibitem{bjorner2} A. Bj\"orner, \textit{On the homology of geometric lattices}, Algebra Universalis {\bf 14} (1982), 107--128.

\bibitem{bjowel} A. Bj\"rner, V. Welker, \textit{The homology of ”k-equal” manifolds and related partition lattices}, Adv. Math.
{\bf 110} (1995), no. 2, 277--313.
\bibitem{blagozieg} P. V. M. Blagojević, G. M. Ziegler, \textit{Convex equipartitions via equivariant obstruction theory}, Israel Journal of Math. {\bf 200} (2014), 49--77.
\bibitem{blz} P. V. M. Blagojević, W. L\"uck, G. M. Ziegler, \textit{Equivariant topology of configuration
spaces}, J. Topology {\bf 8}:2 (2015), 414--456.
\bibitem{blago2} P. Blagojevi\'c, F. R. Cohen, M. C. Crabb, W. L\"uck, G. M. Ziegler,
\textit{Equivariant cohomology of configuration spaces mod 2}, Lecture Notes in Mathematics {\bf 2282}, Springer, Cham, (2021).
\bibitem{bbs} F.J. Bloore, I. Bratley, J.M. Selig, \textit{$SU(n)$ bundle over the configuration space of three identical particles moving on $\bbr^3$}, J. Phys. A. Math. Gen. {\bf 16} (1983), 729--736.
\bibitem{boavida} P. Boavida, M. Weiss,
\textit{Spaces of smooth embeddings and configuration categories},
J. Topol. {\bf 11} (2018), no. 1, 65--143.
\bibitem{bvogt} J. M. Boardman, R.M. Vogt, \textit{Homotopy invariant algebraic structures on topological spaces},
Springer Lecture Notes in Math. {\bf 347} (1973).
\bibitem{bm} M. B\"okstedt, E. Minuz, \textit{Cohomology of generalised configuration spaces of points on $\bbr^r$}. ArXiv:2004.08370.

\bibitem{bodig} C.F. B\"odigheimer, \textit{Stable splittings of mapping spaces},
Springer Lecture Notes in Math. {\bf 1286},  (1987) 174-187.
\bibitem{bodig2} C.F. Bodigheimer, \textit{Gefarbte Konfigurationen: Modelle fur die stabile homotopie
von Eilenberg-MacLane-Raumen}, PhD thesis, University of Heidelberg, 1984.
\bibitem{bc} C.F. Bodigheimer, F. Cohen, \textit{Rational cohomology of configuration spaces}, in: Algebraic
Topology and Transformation Groups, Springer Lectures Notes in Math. {\bf 1361}, 
(1988), 7--13.
\bibitem{bcm} C.F. Bodigheimer, F.R. Cohen, R.J. Milgram,
\textit{Truncated symmetric products and configuration spaces},
Math. Zeit., {\bf 214} (1993), 179--216.
\bibitem{bct} C. F. Bodigheimer, F. Cohen, and L. Taylor, \textit{On the homology of configuration spaces}, Topology {\bf 28} (1) (1989), 111--123.
\bibitem{madsenbodig}  C.F. Bodigheimer, I. Madsen, \textit{Homotopy quotients of mapping spaces and their stable splittings}, Quarterly J. Math. {\bf 39}:4 (1988), 401--409.
\bibitem{bohnenblust} F. Bohnenblust, \textit{The algebraical braid group}, Ann. Math. {\bf 48} (1947), 127--136.
\bibitem{bt} R. Bott, Taubes, \textit{On the self-linking of knots}, J. Math. Phys. {\bf 35}:10 (1994), 5247--5287.
\bibitem{bott} R. Bott, \textit{Configuration spaces and imbedding invariants}, Turkish J. Math.
{\bf 20}, no. 1  (1996), 1--17.
\bibitem{bhmm} C.P. Boyer, J.C. Hurtubise, B.M. Mann, R.J. Milgram, \textit{
The topology of instanton moduli spaces,
I: The Atiyah-Jones conjecture}, Annals of Math. {\bf 137} (1993), 561--609.
\bibitem{ryan} R.D. Budney, \textit{Representations of mapping class groups via topological constructions}, Ph.D. Dissertation,  Cornell University (2002).
\bibitem{browder} W. Browder. \textit{Homology operations and loop spaces}, Illinois J.
Math. {\bf 4} (1960), 347--357.
\bibitem{bw} R.F. Brown, J.H. White, \textit{Homology and Morse theory of third configuration space}, Indiana University Mathematics J. {\bf 30}, no. 4 (1981), 501--512.
Algebr. Geom. Topol. {\bf 19} (1) (2019), 1--30.
\bibitem{calimici} G. Calimici, \textit{The configuration space of the three dimensional Lens space $L(7,2)$ and its model}, Graduate J. Math {\bf 2}, Issue 1 (2017), 29--36.
\bibitem{callegaro} F. Callegaro, \textit{Salvetti complex, spectral sequences and cohomology of Artin groups}, Annales de la facult\'e des sciences de Toulouse,
Tome XXIII, {\bf 2} (2014), 267--296.
\bibitem{filippo}  F. Callegaro and M. Salvetti, \textit{Homology of the family of hyperelliptic curves}, Isra\"el J. of Math.
{\bf 230} (2019), 653--692.
\bibitem{camposwill} 
R. Campos and T. Willwacher, \textit{A model for configuration spaces of points},
Algebraic and Geometric Topology {\bf 23} (2023), 2029--2106.
\bibitem{campos2} 
R. Campos, J. Ducoulombier, N. Idrissi, T. Willwacher, \textit{A model for framed configuration spaces of points}, https://arxiv.org/pdf/1807.08319.pdf
\bibitem{martin}  F Cantero, M Palmer, \textit{On homological stability for configuration spaces on closed background manifolds}, Documenta Mathematica {\bf 20}, 753--805.
\bibitem{mccleary} J. Cantarella, E. Denne, J. McCleary, \textit{Configuration spaces, multijet transversality, and the square-peg problem}, Illinois J. Math (2021).
\bibitem{carlsson} G. Carlsson, J. Gorham, M. Kahle, J. Mason, \textit{Computational topology for configuration spaces of hard disks}, Physical Review E 85, 011303 (2012).
\bibitem{cattaneo}  A.S. Cattaneo, P. Cotta-Ramusino, R. Longoni, \textit{Configuration spaces and Vassiliev classes in any dimension}, Geometry \& Topology {\bf 6} (2002), 949--1000.
\bibitem{church} T. Church, \textit{Homological stability for configuration spaces of manifolds},
{\bf 188} (2012), 465--504.
\bibitem{churchfarb} T. Church, B. Farb, \textit{Representation theory and homological stability}, Advances in Math. {\bf 245} (2013), 250--314.
\bibitem{clarke} B. Clarke, \textit{A note on Alexander’s duality}, Mathematika {\bf 3} (1956), 41--46.
\bibitem{cnossen} B. Cnossen, \textit{Configuration spaces as partial abelian monoids}, Master's thesis, Friedrich-Wilhelms-U. Bonn (2019).
\bibitem{simplicial}
A. A. Cooper, Vin de Silva, R. Sazdanovic,
\textit{On configuration spaces and simplicial complexes}, New York J. Math. {\bf 25} (2019) 723–744.

\bibitem{fredbible} F.R. Cohen, \textit{ The homology of $C_{n+1}$-spaces, $n\geq 0$}; in F. Cohen, T. Lada, P. May: The homology of iterated loop spaces; Springer Lecture Notes Math {\bf 533} (1976), 207--351.
\bibitem{fred2} F.R. Cohen, \textit{Introduction to configuration spaces and their applications}, Lecture Notes Series, Institute for Mathematical Sciences, National University of Singapore, 183--261.
\bibitem{fredanthology} F. R. Cohen, \textit{An anthology of configuration spaces: I and II}, (2012).
\bibitem{fred5} F.R. Cohen, \textit{On configuration spaces, their homology,
and Lie algebras}, Journal of Pure and Applied Algebra {\bf 100} (1995) 19--42.
\bibitem{fredsplitting} F. R. Cohen, \textit{The unstable decomposition and its applications}, Math. Z. {\bf 182} (1983), 553--568.
\bibitem{fredmaytaylor} F.R. Cohen, J.P. May, L.R. Taylor, \textit{Splitting of some more spaces}, Math. Proc. Cambridge Philos. Soc. {\bf 86} (2) (1979), 227--236.
\bibitem{fredrep} F. R. Cohen and L. R. Taylor, \textit{On the representation theory associated to the cohomology of configuration spaces}, In: Algebraic topology (Oaxtepec, 1991). Vol. 146. Contemp. Math. Amer. Math. Soc., Providence (1993), 91--109.
\bibitem{c2m2} F.R. Cohen, R.L. Cohen, B.M. Mann, R.J. Milgram,
\textit{The topology of rational functions and divisors of surfaces},
Acta Math. {\bf 166} (1991), 163--221
\bibitem{cg} F. R. Cohen and S. Gitler, \textit{On loop spaces of configuration spaces}, Trans. AMS {\bf 354} (2002), 1705--1748.
\bibitem{cohtay} F.R. Cohen, L. Taylor, \textit{Computations of Gelfand-Fuks cohomology, the cohomology of function spaces, and the cohomology of configuration spaces},  in Geometric Applications of Homotopy Theory I, Lecture Notes in Math. {\bf 657} (1978), 106-143.
\bibitem{fredlusk} F.R. Cohen, E.L. Lusk, \textit{Coincidence point results for Spaces with Free $\bbz_p$ actions}, Proceedings AMS {\bf 49} (1) (1975), 245--252.
\bibitem{fredwu} F.R. Cohen, J. Wu, \textit{On braid groups and homotopy groups}, Geometry and Topology Monographs {\bf 13} (2008), 169--193.
\bibitem{ralph} R.L. Cohen, \textit{Stable proofs of stable splittings}, Math. Proc. Camb. Phil. Soc. {\bf 88} (1980), 149--151.
\bibitem{copeland} A.H. Copeland, \textit{Deleted products with prescribed homotopy types}, Proceedings AMS {\bf 19}:5 (1968), 1109--1114.
\bibitem{cordova} H. Cordova Bulens, \textit{Rational model of the configuration space of two points in a simply connected closed manifold}, 
Proceedings AMS {\bf 143}, No. 12 (2015), 5437--5453
\bibitem{gaiffi} G. D'Antonio, G. Gaiffi, \textit{Symmetric group actions on the cohomology of configurations in $\Bbb R^d$},
Atti Accad. Naz. Lincei Rend. Lincei Mat. Appl. 21 (2010), no. 3, 235--250.
\bibitem{delucchi}  E. Delucchi, N. Girard, G. Paolini, \textit{Shellability of posets of labeled partitions and arrangements defined by root systems}, Electronic J. Combinatorics {\bf 26}, Issue 4 (2019).
\bibitem{deshpande} P. Deshpande, \textit{The Goresky-MacPherson formula for toric arrangements},  	arXiv:1808.06054.
\bibitem{diacu} F. Diacu, \textit{The solution of the n-body problem},  Mathematical Intelligencer {\bf 18} (1996), 66--70.
\bibitem{dobrinskaya} N. È. Dobrinskaya 
\textit{Configuration spaces of labeled particles and finite Eilenberg-MacLane complexes}, 
 Proceedings of the Steklov Institute of Mathematics {\bf 252} (2006), 30--46.
  \bibitem{natalya} N. Dobrinskaya, \textit{Configuration spaces with labels and loop spaces on $K$-products},
Russian Math. Surveys {\bf 63}:6 (2008), 1141–1143.
 \bibitem{dobtur} N, Dobrinskaya, V Turchin, \textit{Homology of non k-overlapping discs},  Homology, Homotopy \& Appl. {\bf 17} (2015) 261--290.
\bibitem{dold} A.  Dold, \textit{Homology of symmetric products and other functors of complexes}, Annals of Math. {\bf 68} No. 1 (1958), 54--80.
\bibitem{dominguez} C. Domínguez, J. González, P. Landweber, \textit{The integral cohomology of configuration spaces of pairs of points in real projective spaces}, Forum Math. {\bf 25} (2013), 1217--1248.
\bibitem{dong} F. Dong, K.M. Kong, K.L. Teo, \textit{Chromatic polynomials and chromaticity of graphs}, World Scientific (2005).
\bibitem{dror} Dror Bar-Natan, \textit{Vassiliev and Quantum Invariants of Braids}, Geom. Topol. Monogr. {\bf 4} (2002) 143--160.
\bibitem{droz} J-M Droz, \textit{Quillen model structures on the category of graphs}, Homology, homotopy and application {\bf 14} (2) 20212, 265--284.
\bibitem{dung} N. V. Dung, \textit{Homotopy of configuration spaces}, Vietnam Journal of Mathematics {\bf 3O}:1 (2002), 97--102.
\bibitem{eastwood} M. Eastwood, S. Huggett,
\textit{Euler characteristics and chromatic polynomials},  European J. Combin. {\bf 28} (2007), no. 6, 1553--1560.

\bibitem{eisenberg} B. Eisenberg, \textit{Characterization of a tree by means of coefficients of the chromatic polynomial}, Trans. New York Acad. Sci. {\bf 34} (1972), 146--153.

\bibitem{ellenberg} J. S. Ellenberg, A. Venkatesh and C. Westerland, \textit{Homological stability for Hurwitz spaces
and the Cohen-Lenstra conjecture over
function fields}, Ann. of Math. (2) {\bf 183}:3 (2016), 729--786.
\bibitem{etw} J.S.  Ellenberg, T. Tran, C. Westerland, 
\textit{Fox-Neuwirth-Fuks cells, quantum shuffle algebras, and Malle's conjecture for function fields}, https://arxiv.org/abs/1701.04541.

\bibitem{erey} A. Erey, \textit{A broken cycle theorem for the restrained chromatic function}, Turkish J. Math {\bf 43}(1) (2019), 355-360.

\bibitem{ericok} O. B. Ericok, J. K. Mason,
\textit{Quotient maps and configuration spaces of hard disks},  Granular Matter {\bf 24}, number 76 (2022) 24:76.
\bibitem{fn} E.R. Fadell, L. Neuwirth, \textit{Configuration spaces}, Mathematica Scandinavica {\bf 10} (1962), 111--118.
\bibitem{fh} E.R. Fadell, S.Y. Husseini, \textit{Geometry and topology of configuration spaces}, Springer Monographs in Math. (2001).
\bibitem{fh2} E.R. Fadell, S.Y. Husseini, \textit{Configuration spaces on punctured manifolds},  Journal of the Juliusz Schauder Center {\bf 20} (2002), 25--42.
\bibitem{falk} M. Falk, and R. Randell, \textit{The lower central series of a fiber-type arrangement}, Inventiones {\bf 82} (1985), 77--88.
\bibitem{farb} B. Farb, \textit{Representation stability}, Proc. International Congress of Mathematicians, Seoul 2014,Vol. II, 1173-1196.
\bibitem{fww} B. Farb, J. Wolfson, M. Wood, 
\textit{Coincidences between homological densities, predicted by arithmetic}, Advances in Math. {\bf 352} (2019), 670--716.
\bibitem{farber} M. Farber, S. Tabachnikov, \textit{Topology of cyclic configuration spaces and periodic trajectories of multi-dimensional billiards}, Topology {\bf 41}, Issue 3 (2002), 553--589.
\bibitem{fz} E. M. Feichtner, G. M. Ziegler, \textit{The integral cohomology algebras of ordered configuration spaces of spheres}, Documenta Math. {\bf 5} (2000), 115--139.
\bibitem{feltan} Y. F\'elix, D. Tanr\'e,  \textit{The cohomology algebra of unordered configuration spaces}, J. London Math. Soc. {\bf 72}, Issue 2 (2005), 525--544.
\bibitem{ft} Y. F\'elix, J.C. Thomas, \textit{Rational Betti numbers of configuration spaces}, Topology and its Applications {\bf 102} (2000) 139--149.
\bibitem{ft2} Y. F\'elix, J.C. Thomas, \textit{Configuration spaces and Massey products}, Int. Math. Res. Not. {\bf 2004}, no. 33, 1685--1702.
\bibitem{fox} R.H. Fox, L.P. Neuwirth, \textit{The braid groups}, Math. Scand. {\bf 10} (1962), 119--126.
\bibitem{fresse} B. Fresse, \textit{Homotopy of Operads and Grothendieck–Teichmüller Groups: Parts 1 and 2},  Mathematical Surveys and Monographs {\bf 217} (2017).
\bibitem{freedteich}
M.H. Freedman, V.S. Krushkal, P. Teichner, \textit{Van Kampen's embedding obstruction is incomplete for $2$-complexes in $\bbr^4$},
Math. Res. Lett. {\bf 1} (1994), no. 2, 167--176.
\bibitem{fm} W. Fulton, R. MacPherson, \textit{Compactification of configuration spaces}, Annals of Math. {\bf 139} (1994), 183--225.
\bibitem{fuks} D. B Fuks, \textit{Cohomology of the braid group mod2}, 
English translation in Funct. Anal. Appl. {\bf 4} (1970) 143--151.
\bibitem{gadishhainaut} N. Gadish, L. Hainaut, \textit{Configuration spaces on a wedge of spheres and
Hochschild–Pirashvili homology}, 
Ann. H. Lebesgue 7 (2024), 841--902.
\bibitem{gal} S. R. Gal, \textit{Euler characteristic of the configuration space
of a complex}, Colloquium Math. {\bf 89}:1 (2001), 61--67.
\bibitem{getzler} E. Getzler, \textit{Mixed Hodge structures of configuration spaces},
arXiv:alg-geom/9510018 (1995).
\bibitem{gebhard} D. D. Gebhard,
\textit{Sinks in Acyclic orientations of graphs}, Journal of Combinatorial Theory, Series B 80 (2000), 130--146.

\bibitem{getjon} E. Getzler and J. D. S. Jones, \textit{Operads, homotopy algebra and iterated integrals for
double loop spaces} (1994). arXiv:hep-th/9403055
\bibitem{ghrist1} R. Ghrist, \textit{Configuration spaces and braid groups on graphs in robotics}, Knots, braids, and mapping class groups, AMS/IP Stud. Adv. Math. {\bf 24} (2001), 29--40.
\bibitem{ginot} G. Ginot, \textit{Notes on factorization algebras, factorization
homology and applications}, Winter School in Mathematical Physics: Mathematical Aspects of Quantum Field Theory : Les Houches, France. Springer (2015), 429--552.
\bibitem{gs} C. Giusti, D. Sinha, \textit{Fox-Neuwirth cell structures and the cohomology of symmetric groups}, in Configuration Spaces (Geometry, Combinatorics and Topology), Centro di Ricerca Matematica Ermio De Giorgi {\bf 14}, Springer (2012).
\bibitem{gss} C. Giusti, P. Salvatore and D.P. Sinha, \textit{The mod 2 cohomology of symmetric groups as a hopf ring over the steenrod algebra}, J. Topology {\bf 5}1 (2012), 169-198.
\bibitem{meneses} J. González-Meneses, \textit{Basic results on braid groups}, Annales math\'ematiques Blaise Pascal {\bf 18} (2011), 15--59.
\bibitem{goreskymacpherson} 
        M. Goresky, R. MacPherson, \textit{Stratified Morse theory}, Ergebnisse der Mathematik Grenzgebiete. 3. Folge {\bf 14} (1988).
\bibitem{gorjunov} V. V. Gorjunov, \textit{Cohomology of braid groups of series C and D}, Trudy Moskov. Mat.
Obshch. {\bf 42} (1981), 234--242.
\bibitem{guada} E. Guadagnini, M. Martellini and M. Mintchev, Nucl. Phys. {\bf B330} (1990) 575.
\bibitem{gt} J. Grbic, S. Theriault, \textit{The homotopy type of the complement of a coordinate subspace arrangement}, Topology {\bf 46} (2007), 357--396.
\bibitem{gz} C. Greene, T. Zaslavsky, \textit{On the interpretation of Whitney numbers through arrangements of hyperplanes, zonotopes, non-radon partitions, and orientations of graphs}, Transactions AMS {\bf 280} (1983),  97–-126.
\bibitem{guest} 
M. Guest, \textit{The topology of the space of rational curves on a toric variety}, Acta Math.
{\bf 174} (1995), 119–145.
\bibitem{haefliger} A. Haefliger, \textit{Plongements diff\'erentiables dans le domaine stable}, Comment. Math. Helv. {\bf 37} (1962), 155--176.
\bibitem{hainaut} L. Hainaut, \textit{The Euler characteristic of configuration spaces}, Bulletin of the Belgian Math. Soc. Simon Stevin {\bf 29}:1 (2022), 87--97.
\bibitem{handel} D. Handel, \textit{An embedding theorem for real projective spaces}, Topdogy {\bf 7}, (1968), 125--130.
\bibitem{hatcher} A. Hatcher, \textit{Algebraic Topology}, Cambridge University Press, Cambridge (2002).
\bibitem{hersh} P. Hersh, V. Reiner, \textit{Representation stability for cohomology of configuration spaces in $\bbr^d$}, Int. Math. Res. Not. IMRN (2017), no. 5, 1433--1486.
\bibitem{hu} S.T. Hu, \textit{Isotopy invariants of topological spaces}, Proc. Roy. Soc. London. Ser. A {\bf 255} (1960), 314--421.
\bibitem{huang} Y. Huang, \textit{Cohomology of configuration spaces on punctured
varieties}, arXiv:2011.07153 (2020).
\bibitem{hyde} T. Hyde, \textit{Polynomial Factorization Statistics and point configurations in $\bbr^3$}, International Math. Research Notices {\bf 24} (2020), 10154–10179.
\bibitem{idrissi} N. Idrissi, \textit{The Lambrechts–Stanley model of configuration spaces},  Invent. Math. {\bf 216} (2019), 1--68.
\bibitem{idrissi2} N. Idrissi, \textit{Real homotopy of configuration spaces}, Peccot Lecture, Coll\`ege de France (2020). Springer International Publishing, {\bf 2303}, 2022, Lecture Notes in Mathematics,
\bibitem{imbo} T. D. Imbo, C.S. Imbo , E.C.G. Sudarshan, \textit{Identical particles, exotic statistics and braid groups}, Physics letters B 4 {\bf 234} (1990) no. 1,2.
\bibitem{jelic} M. Jelić, \textit{Methods of equivariant topology in two nice discrete geometry problems}, Graduate J. Math. {\bf 111} (2016), 18--27.
\bibitem{jenssen} M. Jenssen, V. Patel, G. Regts, \textit{Improved bounds for the zeros of the chromatic polynomial via Whitney's Broken Circuit Theorem}, J. Combinatorial Theory, Series B
{\bf 169} (2024), 233--252.
\bibitem{malchiodi}  A. Jevnikar, S. Kallel. A. Malchiodi, \textit{A topological join construction and the Toda system on compact surfaces of arbitrary genus}, Anal. PDE {\bf 8} no.8 (2015), 1963 - 2027.
\bibitem{joyal} A. Joyal, \textit{Quasi-categories and Kan complexes}, J. Pure Appl. Algebra {\bf 175} 
    (2002), 207--222.
\bibitem{ramon} D. Kahrobaei, R. Flores, M. Noce, M.E. Habeeb, C. Battarbee, \textit{Applications of group theory in cryptography—post-quantum group-based cryptography}, 
AMS Math. Surveys Monogr., {\bf 278} (2024) 2024, xvii+141 pp. 
\bibitem{kallel1} S. Kallel, \textit{Symmetric products, duality and homological dimension of configuration spaces}, Geometry \& Topology Monographs {\bf 13} (2008), 499--527.
\bibitem{kallel2} S. Kallel, \textit{Spaces of particles on manifolds and generalized Poincaré Dualities},  Quarterly J. Math. {\bf 52}:1 (2001), 45--70.
\bibitem{kalsai} S. Kallel, I. Saihi, \textit{Homotopy Groups of Diagonal Complements}, Algebr. Geom. Topol.{\bf 16} (2016) 2949--2980.
\bibitem{walid} S. Kallel, W. Taamallah, \textit{Combinatorial invariants of stratifiable spaces}, preprint.
\bibitem{kashi} T. Kashiwabara, \textit{On the homotopy type of configuration complexes}, AMS Contemp. Math. {\bf 146}  (1993), 159--170.
\bibitem{kha} R. Karasev, A. Hubard, B. Aronov, \textit{Convex equipartitions: the spicy chicken theorem}, Geometriae Dedicata {\bf 170}:1 (2014), 263--279.
\bibitem{kennedy} C. Kennedy, \textit{The directed forest complex of Cayley graphs}, Masters thesis, Boise state university (2020).
\bibitem{knudsen} B. Knudsen, \textit{Configuration spaces in algebraic topology}, arXiv:1803.11165.
\bibitem{knudsen2} B. Knudsen, \textit{Betti numbers and stability for configuration spaces via factorization homology}, Algebr. Geom. Topol. {\bf 17} (2017), no. 5, 3137--3187.
\bibitem{kohno1} T. Kohno, \textit{Loop spaces of configuration spaces and finite type invariants}, Invariants of knots and 3-manifolds (Kyoto, 2001), Geom. Topol. Monogr. {\bf 4} (2002), 143--160.
\bibitem{kohno2} T. Kohno, \textit{S\'erie de Poincar\'e-Koszul associée aux groupes de tresses pures}, Inventiones mathematicae {\bf 82} (1985), 57--76.
\bibitem{kontsevich} M. Kontsevich, \textit{Operads and motives in deformation quantization}, Lett. Math. Phys. {\bf 48} (1999), 35–72.
\bibitem{kontsevich2} M. Kontsevich, \textit{Feynman diagrams and low-dimensional topology}, First European Congress of Mathematics, Vol. II (Paris, 1992), Progr. Math. {\bf 120} (1994), 97--121.
\bibitem{kontsevich3} M. Kontsevich, \textit{Deformation quantization of Poisson manifold}, Lett. Math. Phys. {\bf 66} (2003) 157–216.
\bibitem{kosar} N. Kosar, \textit{Cohomology of polychromatic configuration Spaces of Euclidean Space}, https://arxiv.org/abs/1612.02773.
\bibitem{koytcheff} R. Koytcheff. \textit{A homotopy-theoretic view of Bott–Taubes integrals and knot spaces}, Algebr. Geom. Topol. {\bf 9}(3) (2009), 1467--1501.
\bibitem{koshorke} U. Koshorke, \textit{Higher order invariants for higher dimensional link maps}, Springer Lecture Notes in Math. {\bf 1172} (1984), 116--128.
\bibitem{kriz} I. Kriz, \textit{On the rational homotopy type of configuration spaces}, Ann. of Math. {\bf 139} (1994), 227--237.
\bibitem{pascal} P. Lambrechts and I. Volic, \textit{Formality of the little N-disks operad}, Memoirs AMS vol.{\bf 230}, no. 1079 (2014).
\bibitem{lamstan} P. Lambrechts, D. Stanley, \textit{A remarkable DG-module model for configuration spaces}, Algeb. Geom. Topol. {\bf 8} no.2 (2008), 1191--1222.
\bibitem{lehrer} G.I. Lehrer, L. Solomon, \textit{On the action of the symmetric group on the cohomology of the complement of its reflecting hyperplanes},
J. Algebra {\bf 104} (1986), no. 2, 410--424.
\bibitem{leinmyr} J.M. Leinaas, J. Myrheim, \textit{On the theory of identical particles}, Nuovo Cim. 37B, 1 (1977).
\bibitem{lehseg} G.I. Lehrer, G.B. Segal, \textit{Homology stability for classical regular semisimple varieties},  Math Z. {\bf 236} (2001), 251--290.
\bibitem{lescop} C. Lescop, \textit{Invariants of links and 3–manifolds from graph
configurations}, https://arxiv.org/pdf/2001.09929
\bibitem{levitt} N. Levitt, \textit{Spaces of arcs and configuration spaces of manifolds}, Topology {\bf 34} (1995), 217--230.
\bibitem{lv} J.L. Loday, B. Vallette, \textit{Algebraic operads}, Grundlehren der Math. Wiss.  {\bf 346}, Springer.
\bibitem{ml} P. Loffler, J. Milgram, \textit{The structure of deleted symmetric products}, Contemp. Math. {\bf 78} (1988), 415--424.
\bibitem{longueville} M. de Longueville, C.A. Shultz,
\textit{The cohomology rings of complements of subspace arrangements}, Math. Ann. {\bf 319} (2001), 625–64
\bibitem{ls} R. Longoni, P. Salvatore, \textit{Configuration spaces are not homotopy invariant}, Topology {\bf 44} (2005) 375--380.
\bibitem{loojenga}  
E. Looijenga, \textit{Torelli group action on the configuration space of a surface}, Journal of Topology and Analysis {\bf 15} (2023), No. 01,  215--222. 
\bibitem{lowen} H. Lowen, \textit{Fun with hard spheres}, Statistical physics and spatial statistics {\bf 554} (1999), Springer Lecture Notes in Phys. 295--331.
\bibitem{lurie} J. Lurie, \textit{Higher algebra}, September 2017 version.
\bibitem{malin} C. Malin, \textit{An elementary proof of the homotopy invariance of stabilized configuration spaces}, Proc. Amer. Math. Soc. {\bf 151}:8 (2023), 3635--3644.
\bibitem{sawiki} T. Maciazek, A. Sawicki, \textit{Non-abelian Quantum Statistics on Graphs},
Communications in Mathematical Physics {\bf 371}, Issue 3, 921--973.
\bibitem{massey1} W.S. Massey, \textit{The homotopy type of certain configuration spaces}, Bol. Sociedad Mat. Mexicana {\bf 37} (1992), 355--365.
\bibitem{massey} W.S. Massey, \textit{Homotopy classification of $3$-component links of codimension greater than $2$}, Topology and its Applications {\bf 34} (1990) 269--300.
\bibitem{may1} J.P. May, \textit{The geometry of iterated loop spaces}, Springer lectures notes in Math. {\bf 271} (1972).
\bibitem{may2} J.P. May, \textit{Infinite loop space theory}, Bulletin of the AMS 
{\bf 83}, Number 4 (1977), 456--494.
\bibitem{markl} M. Markl, \textit{ A compactification of the real configuration space as an operadic completion}, Journal of Algebra {\bf 215}:1 (1999), 185--204.
\bibitem{mcduff} D. McDuff, \textit{Configuration spaces of positive and negative particles}, Topology {\bf 14} (1975), 91--107.
\bibitem{segaldusa} 
D. McDuff, G. Segal, \textit{Homology fibrations and the group-completion theorem},  Inventiones Math. {\bf 31} (1976), 279--284.
\bibitem{cluresmith} J. McClure and J. Smith, \textit{Multivariable cochain operations and little n-cubes}, J. Amer. Math.
Soc. {\bf 16} (2003), 681--704.
\bibitem{mcmullen} C.T. McMullen, \textit{Braid groups and Hodge theory}, Math. Ann. {\bf 355}, no. 3 (2013),  893--946.
\bibitem{paolomedina}  A. M. Medina-Mardones, A. Pizzi, P. Salvatore,  \textit{Multisimplicial chains and configuration spaces}, Journal of Homotopy and Related Structures
{\bf 19} (2024), 275–296. 
\bibitem{merkulov} S.A. Merkulov, \textit{operads, configuration spaces and quantization},  Bulletin of the Brazilian Mathematical Society, New Series {\bf 42} (2011), 683--781.
\bibitem{miller} J. Miller,
\textit{Nonabelian Poincaré duality after stabilizing}, Trans. Amer. Math. Soc. {\bf 367} (2015), 1969--1991.  
\bibitem{milnor} J. Milnor, \textit{Link groups}, Ann. of Math. {\bf 59} (1954) 177--195.
\bibitem{morgan} J.W. Morgan, \textit{The algebraic topology of smooth, algebraic varieties}, Publ. Math. I.H.E.S. {\bf 48} (1978), 177-204.
\bibitem{morton} H.R. Morton, \textit{Symmetric products of the circle}, Proc. Cambridge Philos. Soc. {\bf 63} (1967), 349--352.
\bibitem{muiagt} J. Hubbuck, N. H. V. Hung, L. Schwartz, Preface, proceedings of the School and Conference in Algebraic Topology (Hanoi, 9–20 August 2004).  Geometry \& Topology Monographs {\bf 11} (2007). 
\bibitem{myershishurs} 
D.J. Myers, H. Sati, U. Schreiber, \textit{Topological Quantum Gates in Homotopy Type Theory}, https://arxiv.org/abs/2303.02382
\bibitem{nakamura} T. Nakamura, \textit{On Cohomology operations}, Japanese J. of Math {\bf 33}, 93--145. 
\bibitem{nakaoka} M. Nakaoka, \textit{Homology of the infinite symmetric group}, Ann. of Math. (2) {\bf 73} (1961), 229--257. 
\bibitem{nrr} R. Nandakumar, N. Ramana Rao, \textit{`Fair’ partitions of polygons – an introduction}, arXiv:0812.2241 (2008).
\bibitem{napolitano} F. Napolitano, \textit{On the cohomology of configuration spaces on surfaces}, J. Lond. Math. Soc. (2) {\bf 68} (2003), 477--492.
\bibitem{nlabconfigs} https://ncatlab.org/nlab/show/configuration+space+of+points
\bibitem{nlabgraphs} https://ncatlab.org/nlab/show/graph+complex
\bibitem{okuyama} S Okuyama, \textit{The space of intervals in a Euclidean space}, Algebr. Geom. Topol. {\bf 5} (2005) 1555--1572.
\bibitem{orlik} P. Orlik,
L. Solomon,  \textit{Combinatorics and topology of complements of
hyperplanes}, Invent. Math. {\bf 56} (1980), no. 2, 167--189.
\bibitem{orlikterao} P. Orlik, Terao, \textit{Arrangements of hyperplanes}, Springer Grundlehren der mathematischen Wissenschaften (1992). 
\bibitem{petersen} D. Petersen, \textit{Cohomology of generalized configuration spaces}, Compositio Math, {\bf 156}, Issue 2 (2020), 251 - 298.
\bibitem{pfaff}  T. Pfaff, J. Walker
\textit{The chromatic polynomial of $P_2\times P_n$ and $C_3\times P_n$}, Missouri J. of Math. Sciences (2008), 169--177.
\bibitem{priddy} S.B. Priddy, \textit{Koszul resolutions}, Trans. Amer. Math. Soc. {\bf 152} (1970) 39--60.
\bibitem{oscar} O. Randall-Williams, Appendix by Quoc P. Ho,  \textit{Configuration spaces as commutative monoids}, ArXiv:2306.02345.
\bibitem{randell} R. Randell, \textit{The fundamental group of the complement of a union of complex hyperplanes}, Invent. Math. {\bf 69} (1982) 103--108.

\bibitem{read} R.C. Read, \textit{An introduction to chromatic polynomials}, J. Combin. Theory {\bf 4} (1968), 52--71.

\bibitem{riahi} H. Riahi, \textit{Periodic orbits of n-body type problems: the fixed period case}, AMS Transactions {\bf 347} (1995), 4663--4685.
\bibitem{rivera} M. Rivera, \textit{Adams’ cobar construction revisited}, Documenta Mathematica {\bf 27} (2022) 1213--1223.

\bibitem{rota} G. Rota, \textit{On the Foundations of Combinatorial Theory: I. Theory of M\"obius Inversion.} Z. Wahrscheinlichkeitstheorie {\bf 2} (1964), 340--368.

\bibitem{roth} F. Roth, 
\textit{On the Category of Euclidean Configuration Spaces
and associated Fibrations}, Geometry \& Topology Monographs {\bf 13} (2008), 447-461.
\bibitem{salter} N. Salter, B. Tshishiku, \textit{Surface bundles in topology, algebraic geometry, and group theory}. Notices Amer. Math. Soc. {\bf 67} (2020), no. 2, 146--154.
\bibitem{paolo1} P. Salvatore, \textit{The homotopy type of Euclidean configuration spaces}, 
Proceedings of the 20th Winter School "Geometry and Physics", Publisher: Circolo Matematico di Palermo(Palermo) Serie II {\bf 66} (2001), 161--164.
\bibitem{paolo0} P. Salvatore, \textit{Configuration spaces with summable labels}, Progress in Mathematics {\bf 196}
(2001), Birkhauser.
\bibitem{paolo2} P. Salvatore,  \textit{Non-formality of planar configuration spaces in characteristic 2},
International Math. Research Notices {\bf 10} (2020), 3100--3129.
\bibitem{paolo3} P. Salvatore \textit{Configuration spaces on the sphere and higher loop spaces}, Math. Zeitschrift  {\bf 248} (2004), 527–540.
\bibitem{samelson} H. Samelson, \textit{A Connection Between the Whitehead and the Pontryagin Product}, American J. of Math. {\bf 75}: 4 (1953), 744--752.
\bibitem{ss} H. Sati, U. Schreiber, \textit{Differential cohomotopy implies intersecting brane observables via configuration spaces and chord diagrams}, Advances in Theoretical and Mathematical Physics
Vol. {\bf 26}, No 4 (2022), 957--1051.
\bibitem{ss2} H. Sati, U. Schreiber, \textit{Anyonic defect branes and conformal blocks in twisted equivariant differential (TED) K-theory},
 Reviews in Mathematical Physics {\bf 35}, No. 06 (2023).
\bibitem{schaper} Ch. Schaper, \textit{Suspensions of affine arrangements}, Math. Ann. {\bf 309} (1997), 463--473.
\bibitem{schiessl} C. Schiessl, \textit{Integral cohomology of configuration spaces of the sphere}, Homology, Homotopy and Applications {\bf 21} (1) (2019), 283--302.
\bibitem{ssv} 
B. Schreiner, F. Šarčević, I. Volić, \textit{Low stages of the Taylor tower for r-immersions}, Involve {\bf 13}:1 (2020), 51–75 
\bibitem{segal} G. Segal, \textit{Configuration-spaces and iterated loop-spaces}, Inventiones {\bf 21} (1973), 213--221.
\bibitem{segalacta} G. Segal, \textit{The topology of rational functions}, Acta Math. 
{\bf 143} (1979), 39--72.
\bibitem{shimakawa} K. Shimakawa,
\textit{Configuration spaces with partially summable labels and homology theories},
Math. J. Okayama Univ. {\bf 43} (2001), 43--72.
\bibitem{shimakawa2} K Shimakawa, \textit{Labeled configuration spaces and group completions},  Forum Math. {\bf 19}:2 (2007), 353-364.
\bibitem{dev} D. Sinha, \textit{The non-equivariant homology of the little disks operad}, SMF Séminaires et Congrès {\bf 26} (2013).
\bibitem{dev2} D. Sinha, \textit{Manifold-theoretic compactifications of configuration spaces}, Selecta math. {\bf 10} (2004), 391 – 428.
\bibitem{dev3} D. Sinha, \textit{Operads and knot spaces},
Journal of the American Mathematical Society {\bf 19}:2 (2006), 461--486. 
\bibitem{snaith} V. Snaith, \textit{A stable decomposition of $\Omega^n\Sigma^n X$}, Journal of the London Mathematical Soc. {\bf 7} (1974), 577-- 583.
\bibitem{sohail} T. Sohail, \textit{Cohomology of configuration spaces of complex projective spaces}, Czech. Math. J. {\bf 60}(135) (2010), 411--422.
\bibitem{souriau} J.M. Souriau, \textit{Structure des syst\`emes dynamiques} 1970 (Paris: Dunod), 383--92.
\bibitem{stanley} R. P. Stanley, \textit{An Introduction to hyperplane arrangements}, Geometric combinatorics, IAS/Park City Math. Ser. {\bf 13}, American Math. Soc., Providence, RI (2007), 389--496.
\bibitem{stanley2} R. P. Stanley, \textit{Enumerative combinatorics, Volume I}, 2nd ed, Cambridge Studies in Advanced Mathematics {\bf 49}, Cambridge University Press (2012). 
\bibitem{st} N.P.S. Strickland, P. R. Turner, \textit{Rational Morava E-theory
and $DS^0$}, Topology {\bf 36} (1997), no. 1, 137--151.
\bibitem{tamaki} D. Tamaki, \textit{Cellular stratified spaces},
 \textit{Combinatorial and toric homotopy}, 305–435, Lect. Notes Ser. Inst. Math. Sci. Natl. Univ. Singap., {\bf 35}, World Sci. Publ., Hackensack, NJ, (2018).
 \bibitem{taylor} L. R. Taylor, \textit{Fibrations, cofibrations and related results}, online notes, January 13, 2012.
 \bibitem{thatte} B.D. Thatte, \textit{
The connected partition lattice of a graph and the reconstruction conjecture}, J Graph Theory (2019) 1--22.
\bibitem{totaro} B. Totaro, \textit{Configuration spaces of algebraic varieties}, Topology {\bf 35} (1996), 1057--1067.
\bibitem{totaro2} B. Totaro, \textit{The integral cohomology of the Hilbert scheme of two points}, Forum Math. Sigma {\bf 4} (2016), Paper No. e8, 20 pp.

\bibitem{tsuchiya} M. Tsuchiya, \textit{On bond lattices of graphs}, Chinese Journal of Mathematics {\bf 20}:3,  287--299.

\bibitem{ummel} B.R. Ummel, \textit{Some examples relating the deleted product ot imbeddability}, Proceeedings AMS {\bf 31}:1 (1972), 307--311.
\bibitem{vainshtein} F. V. Vainshtein, \textit{The cohomology of braid groups}, Funktsional. Anal. i Prilozhen. {\bf 12}:2 (1978), 72--73.
\bibitem{vassiliev} V.A. Vassiliev, \textit{Complements of discriminants of smooth maps: topology and
applications}, translated from the Russian by B. Goldfarb. Translations of Mathematical
Monographs {\bf 98}, AMS (1992).
\bibitem{vershinin} V. Vershinin, \textit{Braid groups and loop spaces}, Russian Mathematical Surveys {\bf 54}:2 (1999), 273--350.
\bibitem{volic} I. Volić, \textit{Configuration space integrals and the topology of knot and link spaces}, Morfismos {\bf 17}:2, 2013, 1--56.
\bibitem{volic2} I. Volić, \textit{A survey of Bott-Taubes integration}, J. Knot Theory Ramifications {\bf 16}:1 (2007), 1--42.
\bibitem{voronov} S. Voronov, \textit{Rational homotopy theory}, Encyclopedia of Mathematical Physics, Ed.2.
\bibitem{wachs} M.L. Wachs, \textit{Poset topology: tools and applications}, Geometric combinatorics, IAS/Park City Math. Ser.13, Amer. Math. Soc. (2007), 497--615.
\bibitem{wada} M. Wada, \textit{Group invariants of links}, Topology {\bf 31} no 2 (1992), 399--406.
\bibitem{walker} J.W. Walker, \textit{Canonical homeomorphisms of posets}, European Journal of Combinatorics {\bf 9}, Issue 2, 97--107. 
\bibitem{wang} J.H. Wang, \textit{On the braid groups for $\bbr P^2$}, J. Pure and Applied Algebra {\bf 166} (2002) 203--227.
\bibitem{weiss} M. Weiss, \textit{Calculus of embeddings}, Bulletin AMS {\bf 33}:2 (1996), 177--187.
\bibitem{westerland} C. Westerland, \textit{Configuration spaces in topology and geometry}, Australian Math. Soc. Gazette {\bf 38}:5 (2011),  279--283.
\bibitem{westerland2} C. Westerland, \textit{Stable splittings of surface mapping spaces}, {\bf 153}, Issue 15 (2006), 2834--2865.
\bibitem{whitney}  H. Whitney, \textit{A logical expansion in mathematics}, Bull. Amer. Math. Soc. {\bf 38} (1932), 572--579.
\bibitem{Whitney2} H. Whitney, \emph{Congruent graphs and the connectivity of graphs}. American J. Math. {\bf 54} (1): 150–168, (1932)
\bibitem{wilshire} J.D. Wiltshire-Gordon, \textit{
Models for configuration space in a simplicial complex}, Colloq. Math. {\bf 155} (2019), no. 1, 127--139.
\bibitem{wilson} J. Wilson, \textit{A brief introduction to representation stability}, OberwolfachWorkshop Workshop (Jan 2018).
\bibitem{witten} E. Witten, \textit{Quantum field theory and the Jones polynomial}, Commun. Math.
Phys. {\bf 121} (1989), 351–399.
\bibitem{wu}  Jie Wu, \textit{On the homology of configuration spaces $C((M, M_0)\times \bbr^n; X)$}, Math. Z. {\bf 22} (1998), 235--248.
\bibitem{xico} M.A. Xicotencatl, \textit{Orbit configuration spaces, infinitesimal braid relations in homology and equivariant loop spaces}, Ph.D. Thesis, University of Rochester (1997).
\bibitem{yasui}  T. Yasui, \textit{The reduced symmetric product of a complex projective space and the embedding problem}, Hiroshima Math. J. {\bf 1} (1971), 27--40.
\bibitem{stylian} S. Zanos, \textit{Méthodes de scindements homologiques en topologie et en géométrie}, Th\`ese Universit\'e de Lille 2009.
\bibitem{zariski} O. Zariski, \textit{The topological discriminant of Riemann surface of genus p}, American J. of Math. {\bf 59} (1937), 335--358.
\bibitem{zakharov} A.S. Zakharov, \textit{Rational Homotopy type of complements of submanifold arrangements}, https://arxiv.org/abs/2211.05033
\bibitem{zhang} A. Y. Zhang, \textit{Quillen homology of spectral Lie algebras with application to mod p homology of labeled configuration spaces}, https://arxiv.org/abs/2110.08428
\bibitem{zz} G. M. Ziegler and R. T. Živaljevic, \textit{Homotopy types of subspace arrangements via diagrams of spaces},
Math. Ann. {\bf 295}:3 (1993), 527--54.
\bibitem{zou} F. Zou, \textit{A geometric approach to equivariant factorization homology and nonabelian Poincare duality}, Math. Z. {\bf 303}:4 (2023), publication number 98. 

\end{thebibliography}

\end{document}